\newcommand{\R}{\mathbb{R}}
\newcommand{\Sp}{\mathbb{S}}
\newcommand{\T}{\mathcal{T}}
\newcommand{\N}{\mathds{N}}
\newcommand\inter[2]{\llbracket #1,#2 \rrbracket}
\newcommand{\control}[2]{ ($#2+#1$) .. #1}
\newcommand\modulo[3]{\pgfmathparse{mod (#1,#2)}\pgfmathtruncatemacro{#3}{\pgfmathresult}}
\newcommand{\arrowIn}{\tikz \draw[arrows = {-Stealth[scale width=1]}] (-1pt,0) -- (1pt,0);}
\newcommand{\arrowOut}{\tikz \draw[arrows = {-Stealth[scale width=1]}] (1pt,0) -- (-1pt,0);}
\newcommand{\arrowIns}[1]{\tikz \draw[arrows = {-Stealth[scale width=#1]}] (-1pt,0) -- (1pt,0);}
\newcommand{\arrowOuts}[1]{\tikz \draw[arrows = {-Stealth[scale width=#1]}] (1pt,0) -- (-1pt,0);}
\newcommand{\emphdef}[1]{\textbf{#1}}
\newcommand{\wid}{\text{w}}
\newcommand{\cwid}{\text{cw}}
\newcommand{\bwid}{\text{bw}}
\newcommand{\mcut}{\text{m-cut}}
\newcommand{\dirmed}{\text{DM}}
\newtheorem{theorem}{Theorem}[section]
\newtheorem{definition}[theorem]{Definition}
\newtheorem{proposition}[theorem]{Proposition}
\newtheorem{corollary}[theorem]{Corollary}
\newtheorem{lemma}[theorem]{Lemma}
\definecolor{dred}{rgb}{0.6,0,0}
\definecolor{dblue}{rgb}{0.3,0.3,0.8}
\definecolor{dgreen}{rgb}{0.01,0.50,0.32}
\newcommand{\etc}{\textit{etc}}
\newcommand{\ie}{\textit{i.e.}}
\newcommand{\eg}{\textit{e.g.}}
\title{Well-quasi-orders on embedded planar graphs}  
\author{Corentin Lunel\thanks{Charles University, Prague, Czechia, corentin.lunel@kam.mff.cuni.cz} \and Cl\'ement Maria\thanks{INRIA UniCA, clement.maria@inria.fr}}
\date{}
\begin{document}

\maketitle

\begin{abstract}
The central theorem of topological graph theory states that the graph minor relation is a well-quasi-order on graphs. It has far-reaching consequences, in particular in the study of graph structures and the design of (parameterized) algorithms. In this article, we study two embedded versions of classical minor relations from structural graph theory and prove that they are also well-quasi-orders on general or restricted classes of embedded planar graphs. These embedded minor relations appear naturally for intrinsically embedded objects, such as knot diagrams and surfaces in $\mathbb{R}^3$. 

Handling the extra topological constraints of the embeddings requires careful analysis and extensions of classical methods for the more constrained embedded minor relations. We prove that the embedded version of immersion induces a well-quasi-order on bounded carving-width plane graphs by exhibiting particularly well-structured tree-decompositions and leveraging a classical argument on well-quasi-orders on forests. We deduce that the embedded graph minor relation defines a well-quasi-order on plane graphs via their directed medial graphs, when their branch-width is bounded. We conclude that the embedded graph minor relation is a well-quasi-order on all plane graphs, using classical grids theorems in the unbounded branch-width case.
\end{abstract}

\section{Introduction}

An abstract graph $H$ is a \emphdef{minor} of an abstract graph $G$ if a graph isomorphic to $H$ can be obtained from $G$ by a sequence of edge contractions and edge or vertex deletions. Equivalently, $H$ is a minor of $G$ if the vertices of $H$ can be sent to disjoint subgraphs of $G$, and the edges of $H$ sent to \textit{vertex-disjoint} paths connecting the images of their endpoints. In a series of twenty papers, written over the course of twenty years, Robertson and Seymour proved the \textit{graph minor theorem}, stating that the minor relation is a \textit{well-quasi-order} on the set of graphs. It implies that any class of graph that is closed under taking minors can be characterized by a finite set of excluded minors. 
This exceptional achievement has had vast consequences in mathematics and computer science, and in particular for our understanding of graph structures and the design of algorithms. 

The theory has close ties with topology. Notably, for a graph to be embeddable on a surface of genus $g$ is a property closed under taking minors, which implies that the family of genus $g$ graphs is characterized by a finite set of excluded minors. For example, planar graphs are exactly those graphs which exclude the clique $K_5$ and the complete bipartite graph $K_{3,3}$ as minors. In a somehow opposite direction, a byproduct of Robertson-Seymour theory is the \textit{graph structure theorem}, which roughly states that any graph excluding a fixed minor $H$ can be reconstructed by combining pieces nearly embeddable on a surface where $H$ cannot be embedded.

The theory concerns \textit{abstract} graphs. For example, the characterization of planar graphs with excluded minors concerns graphs that \textit{can be embedded} in the plane, as opposed to \textit{plane} graphs, which are planar graphs together with one (of possibly many distinct) planar embedding. Two plane graphs are \textit{equivalent} if there is a self-homeomorphism of the plane taking the image of one graph onto the image of the other. This is a stronger notion of equivalence than graph isomorphism ; see Figure~\ref{pic_ex_non_equi}. 

The notion of a minor naturally adapts to the embedded context. More precisely, edge and vertex deletions can be defined on embedded graphs, and an \textit{embedded contraction} of an embedded edge $e$ consists of taking a closed disk $D$ in the plane containing $e$ and not intersecting the graph otherwise, and contracting $D$ into a point. In consequence, a plane graph $H$ is an \textit{embedded minor} of a plane graph $G$ if a plane graph equivalent to $H$ can be obtained from $G$ by a sequence of embedded edge contractions and edge and vertex deletions. This notion is not equivalent to the (abstract) minor relation, as illustrated in Figure~\ref{pic_ex_non_equi}.

\begin{figure}[t]
\begin{center}
\begin{tikzpicture}
\def\e{0.05}

\coordinate (c0) at (90:1);
\coordinate (c1) at (-150:1);
\coordinate (c2) at (-30:1);
\coordinate (c3) at (0,0);
\coordinate (c4) at ($(c0)+(-10:1)$);

\draw (c0) -- (c1) -- (c2);
\draw (c0) -- (c3);
\draw (c0) -- (c4);
\draw [black!70] (c0) -- (c2);
\node [black!70] at ($(c0)!0.5!(c2)+(50:0.2)$) {$e$};

\foreach \i in {0,...,4}{\fill (c\i) circle (\e cm);};
\node at (-90:1) {$G$};

\begin{scope}[xshift = 3.5cm]
\coordinate (c0) at (90:1);
\coordinate (c1) at (-150:1);
\coordinate (c2) at (-30:1);
\coordinate (c3) at (0,0);
\coordinate (c4) at ($(c0)+(-10:1)$);

\draw (c0) -- (c1) .. controls +(0:1.5) and \control{(c0)}{(-60:1.5)};
\draw (c0) -- (c3);
\draw (c0) -- (c4);

\foreach \i in {0,1,3,4}{\fill (c\i) circle (\e cm);};
\node at (-90:1) {$H$};
\end{scope}

\begin{scope}[xshift = 7cm]
\coordinate (c0) at (90:1);
\coordinate (c1) at (-150:1);
\coordinate (c2) at (-30:1);
\coordinate (c3) at ($(c0)+(-105:1)$);
\coordinate (c4) at ($(c0)+(-75:1)$);

\draw (c0) -- (c1) -- (c2) --cycle;
\draw (c0) -- (c3);
\draw (c0) -- (c4);

\foreach \i in {0,...,4}{\fill (c\i) circle (\e cm);};
\node at (-90:1) {$G'$};
\end{scope}
\end{tikzpicture}
\caption{Two plane graphs $G$ and $G'$ that are isomorphic but not equivalent as plane graphs. The graph $H$ is a minor of both $G$ and $G'$, an embedded minor of $G$ obtained by contracting the embedding of $e$. But it is not an embedded minor of $G'$: no embedded edge contraction or deletion in $G'$ results in a plane graph with $2$ faces, one of which contains edges in its inside.}
\label{pic_ex_non_equi}
\end{center}
\end{figure}
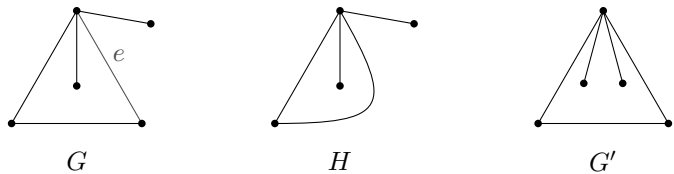

The concept of \textit{embedded minor} is a natural one, and notably the question whether the embedded minor relation defines a well-quasi-order on embedded graphs. This \textit{embedded} version of the graph minor theorem has already found applications for intrinsically embedded objects found in knot theory~\cite{Medina_wqolink} or surface theory~\cite{baader2012minor}. However, the embedded graph minor theorem has an ambiguous status. While some consider it folklore, as a natural extension of the tools designed by Robertson and Seymour in their vast Graph Minors series, proofs have also been claimed~\cite{baader2012minor,Medina_wqolink} then withdrawn and/or contain gaps. To the best of our knowledge, we are not aware of a complete written proof of the result.

\subparagraph*{Our results}
The goal of this paper is to prove that \textit{embedded} counterparts of the minor relation, and related concepts such as \textit{immersion}, do define well-quasi-orders on graphs \textit{embedded} in the plane. We strive to offer a self-contained and accessible proof for the bounded width case which is the main contribution of this work. Our main result is the following general theorem:

\begin{theorem}\label{th_embedded_gm}
Graphs embedded in $\R^2$ are well-quasi-ordered by embedded minors.
\end{theorem}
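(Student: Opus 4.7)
The plan is to dichotomize on branch-width and reduce the general case to the bounded-width setting, where the real work lies. First, for an infinite sequence $G_1,G_2,\dots$ of plane graphs, if the branch-widths $\bwid(G_i)$ are unbounded I would invoke the planar grid theorem: any plane graph of sufficiently large branch-width contains an arbitrarily large $r\times r$ grid as a (planar) minor, and moreover this grid can be realized as an \emph{embedded} minor since the witnessing branch-decomposition is geometric in the plane. Because larger grids embeddedly contain smaller ones and the sequence $\bwid(G_i)$ has no finite bound, one can extract an index $j$ with $\bwid(G_j)$ large enough that a grid containing $G_1$ (or any $G_i$ of small width) appears as an embedded minor of $G_j$. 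This produces an embedded minor comparison $G_i\preceq G_j$ for some $i<j$ and handles the unbounded-branch-width contribution.

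It therefore suffices to show that, for every fixed $k$, plane graphs of branch-width at most $k$ are well-quasi-ordered by embedded minors. The second step is to translate this problem through the \emph{directed medial graph} construction $G\mapsto\dirmed(G)$. The directed medial graph is a $4$-regular plane graph that records the local rotations and face structure of $G$; the two key properties I would establish are (i) $H$ is an embedded minor of $G$ if and only if $\dirmed(H)$ is an embedded immersion of $\dirmed(G)$, and (ii) branch-width of $G$ and carving-width of $\dirmed(G)$ are equivalent up to constants. Granting this, well-quasi-ordering of bounded-branch-width plane graphs by embedded minors reduces to well-quasi-ordering of bounded-carving-width $4$-regular plane graphs by embedded immersions.

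The heart of the argument is then the bounded-carving-width case, attacked in a Kruskal style. From a carving decomposition of width at most $k$ I extract a rooted tree-decomposition whose bags have bounded size, and enhance each bag with the finite combinatorial data needed to record the cyclic order around separator vertices, the local embedding of the bag's edges, and the face-incidence pattern at the interface with the parent bag. Because bounded bags admit only finitely many such embedded enhancements, labels live in a (trivially) well-quasi-ordered finite alphabet. Applying the classical well-quasi-order on labelled rooted forests (Kruskal's theorem), any infinite sequence of decompositions yields two decompositions comparable by a label-preserving topological embedding of forests; the goal is to show that such a forest embedding can be assembled, bag by bag, into a global embedded immersion between the underlying plane graphs.

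The principal obstacle is this last assembly step. In the abstract minor setting one glues pieces along separators with arbitrary identifications, but here each piece carries a planar embedding and the glueings must respect the cyclic order around separator vertices as well as the incident face structure; a careless label would allow forest-comparability without producing any plane immersion. The hard part will therefore be designing the bag labels with just enough embedded data so that Kruskal-style comparability genuinely forces an ambient embedded immersion, while keeping the label alphabet finite. Verifying that $\dirmed(\cdot)$ really does exchange embedded minors with embedded immersions, and branch-width with carving-width, is the other delicate ingredient that the technical body of the paper will need to pin down.
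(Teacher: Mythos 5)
Your high-level architecture matches the paper exactly: dichotomize on branch-width, handle the unbounded side with planar grid theorems realized as embedded minors, and transfer the bounded side through the directed medial graph $\dirmed$, using that $H$ is an embedded minor of $G$ iff $\dirmed(H)$ is an embedded directed immersion of $\dirmed(G)$, and that $\bwid(G)$ and $\cwid(\dirmed(G))$ agree up to a factor. Where your plan diverges, and where the gap lies, is the bounded-carving-width core.

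You propose to extract a tree-decomposition with bounded bags, attach a finite alphabet of embedded-interface labels, and apply Kruskal's theorem on finite-alphabet labelled rooted forests. This is not what the paper does, and the obstacle you yourself flag is not a side issue but the reason the finite-alphabet route is unlikely to close. In a carving-decomposition of width $k$, only the \emph{cut} at each tree edge is bounded; the displayed subgraph $G^e$ hanging off an edge $e$ is not bounded at all. The paper accordingly defines a quasi-order $\preccurlyeq$ directly on tree edges by comparing entire \emph{leaving graphs} $\tilde G^e$ (the displayed subgraph together with boundary attachment points, embedded on a disc $D_e$) under embedded immersion. These are unbounded objects, and the quasi-order on them is precisely what is being established, not something assumed to be wqo. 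Consequently Kruskal's theorem for wqo-labelled forests is not what applies; what applies is the Nash-Williams ``lemma on trees'' (Lemma~\ref{lem_lem_trees}), a structural statement about minimal bad antichains that takes as its hypothesis ``ancestry implies $\preccurlyeq$'' rather than any wqo assumption on labels.

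That hypothesis is where two decomposition properties, absent from your plan, are doing the real work. The \emph{linked} property (Theorem~\ref{th_bond_link_decomp}) is what guarantees that if $a$ is an ancestor of $b$ then the minimum cut between the two displayed sets equals the minimum width along the tree path, so Menger's theorem (plus the path-swapping of Lemma~\ref{lem_path_swaps} to make the resulting paths tangent) produces the embedded immersion $\tilde G^b \hookrightarrow \tilde G^a$; this is Lemma~\ref{lem_ancestry_cut}, and it is exactly the input Lemma~\ref{lem_lem_trees} needs. The \emph{disc} property (Proposition~\ref{prop_disc_decomp}) is what gives the Jordan curve $\gamma_e$ cutting out $G^e$ on a disc, so that the leaving graphs are well-defined embedded objects and the final assembly step --- gluing the two children's leaving graphs along a $\theta$-curve to reconstitute the parent's leaving graph --- makes sense. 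Your proposal, by contrast, relies on a finite label to certify an embedded immersion of arbitrarily large interior pieces that a Kruskal-style topological embedding of trees is free to skip over; finite boundary data cannot control that interior, which is why ancestry must be upgraded to comparability via the linked property rather than packaged into a finite alphabet. So the obstacle you identified is a genuine gap in the route as written, and the bond-linked/disc machinery is precisely what the paper builds to fill it.
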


To prove it, we first study the relation of embedded immersion on plane graphs. The concept of \textit{immersion} is related to the minor relation. An immersion of a graph $H$ in a graph $G$ is a function that sends vertices of $H$ to vertices of $G$ injectively, and edges of $H$ to \textit{edge-disjoint} paths in $G$ connecting the images of their endpoints. It is related to the \textit{carving-width} parameter of a graph, which measures the size of optimal edge-cuts, organized in a \textit{tree-decomposition}, decomposing a graph. We study the embedded version of graph immersion. We prove that:

\begin{theorem}\label{th_emb_immersion_bounded}
Graphs embedded in $\R^2$ with bounded carving-width are well-quasi-ordered by embedded immersion.
\end{theorem}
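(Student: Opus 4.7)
The plan is to follow the Kruskal-style strategy that yields well-quasi-ordering by immersion for graphs of bounded carving-width in the abstract setting, but to enrich every piece of the decomposition with enough planar data so that the quasi-order produced is actually \emph{embedded} immersion, not merely abstract immersion. I would first show that any plane graph $G$ with $\cwid(G) \le k$ admits a \emph{plane carving decomposition}: a cubic tree $T$ in bijection with $V(G)$ on leaves, together with, for every edge of $T$, a simple closed curve (a noose) in $\R^2$ intersecting the image of $G$ transversally in exactly the edges of the corresponding cut, and avoiding the vertices. This is the carving analogue of a sphere-cut branch-decomposition and should be extracted from any optimal carving decomposition by a standard planar uncrossing argument. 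Because $k$ is fixed, the local datum at each internal node, namely the three cut sizes, the three cyclic orders of cut edges read from the embedding, and the pairing recording which edges of the three cuts are traversed by a common edge of $G$, lives in a finite alphabet $\Sigma_k$; leaves are labelled by (a bounded amount of) rotation data around their vertex.

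Second, I would root $T$ canonically, for instance at the outer face of $G$, producing a finite rooted $\Sigma_k$-labeled tree $(T,\ell)$ that depends only on the equivalence class of the plane graph $G$. To an infinite sequence $(G_i)_{i \in \N}$ of plane graphs with $\cwid(G_i) \le k$, I associate the sequence of labeled rooted trees $(T_i,\ell_i)$ and apply Kruskal's tree theorem: there exist indices $i<j$ and a topological embedding $\varphi \colon T_i \hookrightarrow T_j$ preserving the root--child relation and comparing labels according to a relation $\le_{\Sigma_k}$ chosen so that $\sigma \le_{\Sigma_k} \sigma'$ encodes ``the three noncrossing pairings prescribed by $\sigma$ can be realized by concatenating noncrossing pairings along a path of internal nodes of $\sigma'$''. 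Since $\Sigma_k$ is finite, $\le_{\Sigma_k}$ is trivially a well-quasi-order and Kruskal's theorem applies in its labeled form.

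The last step is to read off $\varphi$ an actual embedded immersion $G_i \hookrightarrow G_j$: leaves of $T_i$ are sent to leaves of $T_j$, and for each edge $uv$ of $G_i$, walking along the $u$--$v$ path in $T_i$ and chasing the pairings through the corresponding path in $T_j$ produces a concrete edge-disjoint walk in $G_j$. The noose-based, noncrossing nature of the pairings ensures that the walks obtained for distinct edges of $G_i$ use disjoint edges of $G_j$ \emph{and} can be drawn without crossings compatible with the planar embedding, hence form an embedded immersion. The main obstacle, and the technical heart of the argument, is the joint design of $\Sigma_k$ and $\le_{\Sigma_k}$: it has to be coarse enough to be a well-quasi-order yet fine enough that local Kruskal-style label compatibility at every node composes into globally consistent edge-disjoint and noncrossing routings in $G_j$. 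Controlling the interaction between the planar noncrossing constraint and the tree embedding is precisely what distinguishes the embedded case from Robertson--Seymour's immersion theorem in the bounded-width setting, and is where planarity is used essentially.
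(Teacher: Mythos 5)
Your proposal diverges from the paper's route in a way that hides a genuine gap. The paper does \emph{not} encode each node of the carving tree into a finite alphabet and apply Kruskal's tree theorem. Instead, it works with a quasi-order $\preccurlyeq$ on the \emph{edges} of a rooted carving tree, where $e \preccurlyeq f$ means the ``leaving graph'' $\tilde{G}^e$ (the portion of the graph on one side of the noose, together with the marked boundary points) embeds as an embedded immersion in $\tilde{G}^f$ respecting the boundary. This is not a priori a well-quasi-order, and the leaving graphs are of unbounded size, so finiteness of an alphabet never enters; the argument proceeds by contradiction using a Nash-Williams-style lemma on trees (Lemma~\ref{lem_lem_trees}, taken from Geelen--Gerards--Whittle), which requires only that ``ancestry implies $\preccurlyeq$.'' Establishing that ancestry implication (Lemma~\ref{lem_ancestry_cut}) is where the substance lies: it uses Menger's theorem to obtain $k$ edge-disjoint paths through the intermediate region, then uncrosses them via path-swaps. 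Crucially, Menger applies only because the decomposition is \emph{linked}: the widths of all intermediate tree-edges on the path from $a$ to $b$ are at least $\wid(a)=\wid(b)$, which is precisely the definition of ancestry, and the linked property guarantees the minimum cut between the displayed sets equals that value. Getting a carving decomposition that is simultaneously linked and ``disc'' (noose-based) is itself nontrivial and occupies Sections~\ref{sec_bond_linked}--\ref{sec:disc_linked_carving_dec}.

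The gap in your plan is concrete: Kruskal's tree theorem, in the labeled form, gives a homeomorphic embedding $\varphi \colon T_i \hookrightarrow T_j$ that compares labels \emph{only at the images of the nodes of $T_i$}. It says nothing about the labels or widths of the interior nodes and edges of the paths in $T_j$ onto which edges of $T_i$ are dilated. In particular, an edge of $T_i$ carrying a cut of size $k$ may be mapped to a path in $T_j$ passing through tree-edges whose cuts have size $<k$. Then no routing of $k$ edge-disjoint walks across that stretch of $T_j$ exists, and the scheme of ``chasing local pairings'' breaks down. Your alphabet $\Sigma_k$ records per-node data (cut sizes, cyclic orders, local pairings) but this information cannot be composed along a Kruskal path unless the intermediate widths are controlled, and no finite per-node label can force that. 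Making this compose is exactly what linkedness plus the ancestry condition buys the paper, and what is enforced by replacing Kruskal's theorem with Lemma~\ref{lem_lem_trees}: the antichain it produces consists of edges whose induced subtrees have the required width profile. If you want to stick with a Kruskal-like statement, you would need a gap-condition variant that restricts labels along the dilated paths, which is a substantially harder tool; the more economical repair is to import the linked/disc decompositions and the ancestry lemma, at which point you have essentially rediscovered the paper's proof rather than a distinct one.
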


This theorem relies on the construction of particular well-structured tree decompositions for (abstract) graphs, that are of independent interest. A \textit{bond} carving-decomposition of a graph $G$ is a carving-decomposition whose system of cuts split $G$ into \textit{connected} subgraphs. A carving-decomposition is \textit{linked} if the cut-sets carried by the edges of the tree decomposition satisfy a global minimality condition. We prove the existence of carving-decompositions that are both bond and linked:

\begin{theorem}\label{th_bond_link_decomp}
Let $G$ be a $2$-vertex-connected graph. Then, there exists a bond-linked carving-decomposition of $G$ of width $\cwid(G)$.
\end{theorem}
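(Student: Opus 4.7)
The approach is to adapt Thomas's exchange argument for linked tree-decompositions to the carving-width setting, while simultaneously enforcing the bond property via a refinement enabled by the 2-vertex-connectivity of $G$.

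\textbf{Setup.} Let $\mathcal{D}$ be the (finite) set of carving-decompositions of $G$ of width exactly $\cwid(G)$. For $(T, \tau) \in \mathcal{D}$, we sort the multiset of cut-widths $\{w(e) : e \in E(T)\}$ in non-increasing order and denote the resulting tuple $\sigma(T, \tau)$. We order $\mathcal{D}$ lexicographically by $\sigma$ and let $(T^*, \tau^*)$ be a minimum. The goal is to show that $(T^*, \tau^*)$ is simultaneously bond and linked.

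\textbf{Bond step.} Suppose some edge $e$ of $T^*$ induces a non-bond cut $(A, B)$ with $G[A]$ split into components $A_1, \dots, A_k$, $k \geq 2$. We modify the subtree on the $A$-side of $e$ into a caterpillar that peels off each $A_i$ in succession, then recursively insert a bond sub-carving-decomposition of each $G[A_i]$. Since there are no $G$-edges between distinct components, $\sum_i |E(A_i, B)| = |E(A, B)|$, and each new top cut has width $|E(A_i, B)| < |E(A, B)|$ strictly; with care, no interior cut of the new subtree exceeds $\cwid(G)$. Hence $\sigma$ strictly decreases, contradicting the choice of $(T^*, \tau^*)$.

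\textbf{Linked step.} Suppose there exist edges $e_1, e_2$ on a path $P$ in $T^*$ with away-facing sides $V_1 = V_1(e_1)$, $V_2 = V_2(e_2)$ and a $G$-cut $(X, V \setminus X)$ separating $V_1 \subseteq X$ from $V_2 \subseteq V \setminus X$ of size $s < \min(w(e_1), w(e_2))$. We rebuild the path $P$ by reattaching each subtree hanging off $P$ to the side of $V_1$ or $V_2$ according to a majority rule derived from $X$. Submodularity of the edge-cut function $A \mapsto |E(A, V \setminus A)|$ ensures that the new cuts along $P$ have width at most the maximum original width on $P$, and one cut has width exactly $s$, strictly smaller than some original cut. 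Applying the bond step locally to restore connectivity yields a decomposition in $\mathcal{D}$ with strictly smaller $\sigma$, a contradiction.

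\textbf{Main obstacle.} The subtle interaction between the two steps is the crux: the rerouting in the linked step may introduce non-bond cuts whose bond repair must not reintroduce cuts of maximum width. Controlling this requires careful accounting — submodularity in the uncrossing, and the disjoint decomposition of disconnected sides into strictly smaller sub-cuts — and it crucially uses 2-vertex-connectivity of $G$ to guarantee that each component $A_i$ in the bond refinement has enough edges to $B$ for the recursive sub-decompositions of $G[A_i]$ to be built without exceeding $\cwid(G)$.
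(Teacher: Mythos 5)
You identify the right high-level template — minimize a potential over width-$\cwid(G)$ decompositions and argue that a minimum must be simultaneously bond and linked — but your single potential $\sigma$, the caterpillar bond-repair, and the majority-rule exchange are all genuinely different from what the paper does, and each one leaves a real gap. The point you flag at the end as the ``main obstacle'' is precisely the part the paper turns into a separate lemma, and without it the argument does not close.

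The bond step, as written, does not decrease $\sigma$. The edge $e$ is still in the tree with width $|E(A,B)|$, and the caterpillar's structural edges, with widths $\sum_{j\ge i}|E(A_j,B)|<|E(A,B)|$, say nothing about the widths produced \emph{inside} the blocks $G[A_i]$: an edge there displays some $S\subsetneq A_i$ and has width $|E_{G}(S)|=|E_{G[A_i]}(S)|+|E(S,B)|$, which is not controlled by, and need not be smaller than, the widths of the old $A$-side subtree. Moreover ``recursively insert a bond sub-carving-decomposition of each $G[A_i]$'' is circular: the existence of bond decompositions is what you are trying to prove, and $G[A_i]$ need not be $2$-connected, so the theorem cannot be invoked on it. The paper sidesteps all of this with a purely local \emph{rebranching} at one internal edge — only a single tree-edge changes its displayed set — and tracks a separate integer potential $\mu$ (summing, over inner tree vertices with an empty side-cut, the size of the opposite displayed set), which it proves strictly decreases under the move.

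The linked step needs a finer order than yours. The sorted-width multiset compared lexicographically is only the coarse part of the Geelen--Gerards--Whittle order; the exchange lemma (Lemma~\ref{lem_linked_min} in the paper) can produce a new decomposition with the \emph{same} width multiset, and the proof makes progress only via a tie-breaker counting connected components of the subforest of edges of each width class. Your $\sigma$ has no tie-breaker, so the ``strictly smaller $\sigma$'' claim after the majority-rule reattachment is not established.

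Most importantly, the interaction between the two repairs is not an obstacle you can leave open; it is the theorem. The paper's proof is layered: first obtain a width-$\cwid(G)$ decomposition that is $<_w$-minimal (hence linked), then among those minimize $\mu$, and prove separately (Lemma~\ref{lem_bl}) that the bond-repair rebranching preserves linkedness whenever $w(s)<w(s')$ — a condition the double minimality then forces. Without an analogue of that preservation lemma, alternating your two repairs has no termination argument. Finally, your stated use of $2$-connectivity is not where the hypothesis enters: it is needed only so that leaf edges are bond, since a cut-vertex label $v$ would already make the leaf edge display a disconnected complement and no internal rebranching could repair that.
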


We also introduce a topological relaxation of bond decompositions, called \textit{disc}, and prove there existence on general plane graphs:

\begin{proposition}\label{prop_disc_decomp}
Let $G$ be a graph embedded in $\Sp^2$. Then, there exists a disc carving-decomposition of $G$ of width $\cwid (G)$.
\end{proposition}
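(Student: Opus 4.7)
The plan is to reduce to the $2$-vertex-connected case handled by Theorem \ref{th_bond_link_decomp}. If $G$ has several connected components $C_1,\dots,C_k$, I first decompose each $C_i$ separately inside its own region of $\Sp^2$ and combine the per-component decompositions by any ternary tree on top; the inter-component cuts contain no edges of $G$, hence are trivially disc cuts. It therefore suffices to treat connected $G$.

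For a connected plane graph $G$, I would work with the block-cut tree, whose nodes are the $2$-vertex-connected blocks (and bridges) of $G$ attached at the cut vertices. For each $2$-connected block $B$, Theorem \ref{th_bond_link_decomp} yields a bond-linked carving-decomposition of $B$ of width $\cwid(B)$. Because $B$ is $2$-connected and plane, planar duality identifies each bond of $B$ with a simple cycle of the dual graph $B^{*}$, and such a cycle is realized by a simple closed curve in $\Sp^2$ separating the sphere into two discs whose interiors receive the two sides of the cut. So within each block the decomposition is already a disc decomposition in $\Sp^2$.

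These per-block decompositions are then stitched together following the block-cut tree, rooted at an arbitrary block. At each cut vertex $v$ shared by two neighboring block-subtrees, I graft the child-side tree onto the leaf of the parent-side tree labeled $v$. The new internal edges of the combined tree correspond to cuts whose edge set consists of the edges of $G$ incident to $v$ on the child side. Since in any plane embedding the edges of a single block at a shared cut vertex appear consecutively in the cyclic rotation at $v$, a small loop around $v$ in the appropriate rotational sector realizes this cut as a simple closed curve bounding a disc in $\Sp^2$ that contains exactly the child-side edges. Hence all merge cuts are disc cuts.

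The width is controlled as follows. Block-internal cuts have size at most $\cwid(B)$, and $\cwid(B) \le \cwid(G)$ because restricting any optimal carving of $G$ to $V(B)$ and suppressing the other leaves yields a carving of $B$ whose cuts are subsets of the original ones. Merge cuts at a cut vertex $v$ have size at most $\deg_G(v) \le \cwid(G)$, the latter inequality following from the fact that any carving of $G$ contains the cut of size $\deg_G(v)$ separating the leaf $v$ from everything else. The main obstacle is the topological verification that block-merge cuts are genuinely disc-separating, i.e.\ that the blocks meeting at $v$ are rotationally consecutive around $v$; this is a standard property of plane embeddings of graphs with cut vertices that I would invoke explicitly to close the argument.
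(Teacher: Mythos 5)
Your outline follows the paper's overall plan (2-connected blocks via Theorem~\ref{th_bond_link_decomp}, cut-cycle duality for the disc property, stitching along the block-cut structure), but it omits the part that carries most of the difficulty. Recall that a \emph{disc carving-decomposition}, as defined in Section~\ref{sec:disc_linked_carving_dec}, is a decomposition that is \emph{both} linked and has the disc property. You verify that the stitched cuts are disc cuts, but you never address whether the stitched decomposition remains \emph{linked}. Linkedness is a global minimality condition across all pairs of tree edges, and it does not propagate for free under grafting: a pair of edges $a,b$ lying in two different block-subtrees must realize $\mcut(A,B)$ on the path between them, and the path now routes through the new bridge edges at cut vertices. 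The paper's Lemma~\ref{lem_disc_decomp_merge} devotes the bulk of its proof to exactly this verification (three separate cases, with explicit identification of the minimal cut and the edge achieving it). Without that analysis your proposal establishes only the disc property, not the statement.

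Two further points need attention. First, your treatment of disconnected $G$ by ``combining per-component decompositions by any ternary tree on top'' does not yield disc cuts in general: if components are nested in the embedding of $\Sp^2$ (say $C_3$ lies in a face of $C_2$, which lies in a face of $C_1$), then a tree putting $C_1$ and $C_3$ on the same side of a cut against $C_2$ produces a cut that no Jordan curve can realize, since $C_2$ topologically separates $C_1$ from $C_3$. The merging must be done iteratively along shared faces, as in the paper's Proof of Proposition~\ref{prop_disc_decomp}. Second, the block-cut tree contains bridges (and possibly self-loops at isolated-looking vertices), which are not $2$-vertex-connected, so Theorem~\ref{th_bond_link_decomp} does not apply to them directly; the paper introduces Lemma~\ref{lem_tree_linked} precisely to handle these pieces. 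You gesture at bridges but give no argument for them.
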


Finally, by leveraging on a classical argument by Nash-Williams~\cite{nash-williams_kruskal} for well-quasi-orders on trees, we prove Theorem~\ref{th_emb_immersion_bounded}.

By exploiting the correspondence between embedded minors in a plane graph $G$, and embedded (directed) immersions in its \textit{medial graph} $DM(G)$, we prove that plane graphs with bounded branch-width are well-quasi-ordered by embedded minors. The idea of exploiting the graph-medial graph correspondence in this context was used in~\cite{Medina_wqolink}, but the correspondence minor-immersion was not proved in the embedded context. We prove that a plane graph $H$ is an embedded minor of a plane graph $G$ iff the (plane) medial digraph of $H$ is an embedded immersion of the (plane) medial digraph of $G$. Finally, with a similar approach as for Theorem~\ref{th_emb_immersion_bounded}, we prove that plane medial digraphs of plane graphs are well-quasi-ordered by embedded immersion, which conclude the proof of Theorem~\ref{th_embedded_gm} in the case of bounded branch-width plane graphs. Finally, we conclude the proof of Theorem~\ref{th_embedded_gm} by proving embedded versions of the classical grid theorems for unbounded branch-width graphs ; these constructions are adaptations of existing results from the literature.

\subparagraph*{Related work and proof techniques}

The Nash-Williams conjecture is the counterpart of the graph minor theorem for immersions (on abstract graphs), and was established using the tools developed in the graph minor series in~\cite{Graph_Minors_XXIII}. In this article, they mention that a stronger notion of immersion is left unproven: \textit{strong immersion}, where vertices in the image of edges should be distinct from the images of vertices. Since then, establishing structural results about strong immersions has been of interest~\cite{Devos2014,Dvok2014,Dvok2015,doi:10.1137/130924056}. The following is a direct corollary of Theorem~\ref{th_emb_immersion_bounded}, which can be of independent interest:

\begin{corollary}
Planar graphs of bounded carving-width are well-quasi-ordered by strong immersions.
\end{corollary}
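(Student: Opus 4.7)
The plan is to reduce the corollary to Theorem~\ref{th_emb_immersion_bounded} by equipping each input planar graph with an arbitrary planar embedding, and then observing that an embedded immersion between the resulting plane graphs is automatically a strong immersion at the level of the underlying abstract graphs.

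Concretely, let $(G_i)_{i \in \N}$ be a sequence of planar graphs of carving-width at most a fixed constant $k$. For each $i$, I would fix an arbitrary planar embedding of $G_i$ to obtain a plane graph $\tilde G_i$; since carving-width is an invariant of the abstract graph, one has $\cwid(\tilde G_i) \leq k$ for every $i$. Applying Theorem~\ref{th_emb_immersion_bounded} to the sequence $(\tilde G_i)_{i \in \N}$ yields indices $i < j$ such that $\tilde G_i$ is an embedded immersion of $\tilde G_j$. Note that no coherence between the embeddings chosen at different indices is required; any choice works for each $i$ independently.

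It then remains to verify that, after forgetting the embedding, an embedded immersion of plane graphs is a strong immersion of the underlying abstract graphs. In the embedded setting, the images in $\tilde G_j$ of the vertices of $\tilde G_i$ are isolated by pairwise disjoint closed disks compatible with the rotation system, and the edges of $\tilde G_i$ are realised by edge-disjoint paths in $\tilde G_j$ whose interiors lie in the complement of the interiors of those disks, meeting each disk boundary only at the prescribed attachment points. Consequently, no internal vertex of an edge-path can coincide with an image of a vertex of $\tilde G_i$, which is precisely the definition of a strong immersion. Hence $G_i$ is strongly immersed in $G_j$, as required.

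The only genuinely substantive point is the one in the previous paragraph, namely confirming that the embedding-respecting definition of embedded immersion inherited from Theorem~\ref{th_emb_immersion_bounded} forbids edge-paths from visiting other branch vertices internally. Once this small piece of topological bookkeeping is granted, the corollary follows immediately from Theorem~\ref{th_emb_immersion_bounded} with no additional combinatorial work, and in particular without any need to revisit the tree-decomposition machinery underlying that theorem.
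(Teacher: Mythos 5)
Your overall strategy is the right one and matches the role the paper assigns to this corollary: fix an arbitrary planar embedding for each graph, invoke Theorem~\ref{th_emb_immersion_bounded} for the resulting sequence of plane graphs, and argue that the resulting embedded immersion is in particular a strong immersion of the underlying abstract graphs. The key reduction step and the observation that the embeddings need not be coherent across the sequence are both correct.

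Two remarks on the crucial step, namely that embedded immersion forbids images of vertices from appearing internally on edge-paths. First, the disk-isolation picture you describe is an informal gloss rather than the paper's actual mechanism; the precise statement is Lemma~\ref{lem_equiv_plan_imm}, which characterizes embedded immersion as an ordered immersion whose edge-paths form a \emph{tangent} family, and tangency is defined so that no inner vertex of a path is the endpoint of another. The paper even records explicitly, immediately after introducing tangency, that this condition ``corresponds to the hypothesis of strong immersion.'' Citing that characterization is the cleanest way to make your paragraph rigorous; the ``disks compatible with the rotation system'' are not objects the definitions of Section~\ref{sec_minor_def} actually produce. Second, there is a small but genuine gap concerning isolated vertices: tangency rules out $\phi(w)$ being internal to a path only because $\phi(w)$ is the endpoint of some other path, which requires $w$ to have at least one incident edge in $H$. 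If $w$ is isolated in $H$, its image could in principle sit internally on a path without violating tangency, so an embedded immersion need not be a strong immersion in that case. This is easily repaired (for instance, first pass to a subsequence along which the number of isolated vertices is non-decreasing, strip the isolated vertices, apply the argument, and then re-insert them by mapping them to spare isolated vertices of the larger graph, which trivially satisfies the strong condition), but as written your final paragraph asserts the implication unconditionally, which is not literally true. The paper glosses over the same point, so this is a small emendation rather than a flaw in the approach.
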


The structure of graphs embeddable on surfaces is a wide field of research ; see for example~\cite{Kawarabayashi_immer_embedable} for recent results on immersion and well-quasi-orders in this context. However, enforcing the embedding is relatively new, and may be put in perspective with the growing interest for structural approaches in low dimensional topology~\cite{baader2012minor,DBLP:conf/compgeom/HuszarS23,Lunel_defect_out,Lunel_spherewidth_out,DBLP:conf/compgeom/HuszarS018,mariapurcell,Medina_wqolink}.

We claim to fix former contributions pertaining to computational topology~\cite{baader2012minor,Medina_wqolink}.

\paragraph*{Organization of this paper} We first provide wide standard background on graphs, topology, and orders in Section~\ref{sec_prelim}. We then provide formal definitions for the embedded minors relation in Section~\ref{sec_minor_def}. Then, we establish the existence of bond-linked carving decompositions, Theorem~\ref{th_bond_link_decomp}, in Section~\ref{sec_bond_linked}, and disc-linked carving decomposition, Theorem~\ref{prop_disc_decomp}, in Section~\ref{sec:disc_linked_carving_dec}. We then use these decompositions to prove Theorem~\ref{th_emb_immersion_bounded} in the bounded carving-width case in Section~\ref{sec_bounded_cw}, by building on an argument of Nash-Williams for well-quasi-orders on trees (Section~\ref{subsec_order_edge} and~\ref{sec:nash_williams_argument}). We then prove that plane graphs are well-quasi-ordered by embedded minor in Section~\ref{sec_em_gb} by first applying our methods to medial graphs when initial graphs have bounded branch-width, in Section~\ref{subsec_im_gm}. The case of unbounded branch-width is then treated by studying embedded versions of grid theorems, in Section~\ref{subsec_high_bw}.

\section{Preliminaries}
\label{sec_prelim}

To ensure accessibility to both graph theory and topology communities, we introduce basic and more advanced notions from both fields. We invite the reader to skip any non necessary background section.

\subsection{Background on graphs}
\label{subsec_backg_graph}

We refer the reader to \cite{Diestel_Graph_Theory} for concepts in graph theory that we do not reintroduce here.

\subparagraph*{Graphs.}
Let $G$ be a graph, whose set of vertices is denoted by $V(G)$, and multi-set of edges denoted by $E(G)$. Graphs may have \emphdef{self-loops} (edges with identical endpoints) and \emphdef{multi-edges} (a set with several identical edges, we do not give here a formal framework for this as they will be introduced in the relevant sections). 

The \emphdef{neighbors} $N_G(v)$ of a vertex $v$ of $G$ is the subset of vertices of $G$ sharing an edge with $v$, \ie, $u \in N_G(v) \Leftrightarrow (u,v) \in E(G)$. 
Let $A,B \subseteq V(G)$ be two disjoint subsets of vertices of $G$ ; the \emphdef{cut} between $A$ and $B$ in $G$, denoted by $E_G(A,B)$, is the set of edges with exactly one endpoint in $A$ and one endpoint in $B$, \ie, $E_G(A,B) \colon= \{ (u,v) \in E(G) | u \in A,~  v \in B\}$. When $B = V(G) \smallsetminus A$, we simply write $E_G(A)=E_G(A,B)$ for the cut induced by $A$. 

The \emphdef{subgraph induced} by $A \subseteq V(G)$, denoted by $G[A]$, is the subgraph of $G$ on vertices $A$ where only edges with both endpoints in $A$ remain: $G[A] = (A, \{ e \in E(G) \cap A \times A \})$. The cut between two vertex-disjoint subgraphs is defined to be the cut between their vertex sets. A graph is said to be \emphdef{$k$-vertex-connected} if removing any $k-1$ vertices leaves the graph connected. A connected graph which is not $2$-vertex-connected admits at least a \emphdef{cut-vertex}: a vertex whose removal renders $G$ disconnected. The \emphdef{degree} of a vertex is the number of edges incident to it, counted with multiplicity: self-loops count as two-edges. A graph is said to be \emphdef{$k$-regular} if all vertices have degree $k$.

When there is no ambiguity, we drop $G$ from the notations, and write $V$ and $E$ for the vertices and edges, $E(A,B)$ for a cut between $A$ and $B$, \etc.

A \emphdef{path} $P$ of size $k \leq 1$ in $G$, between vertices $u,v \in V(G)$, is a sequence $v_0,e_0,v_1,e_1, \ldots, e_{k-1},v_k$, where $u=v_0$,  $v_k = v$, and for each $i \in \inter{0}{k-1}$, $v_i$ and $v_{i+1}$ are endpoints of $e_i$ and the $v_i$'s are distinct. A \emphdef{cycle} of $G$ is a path of $G$, except on its endpoints which are the same. 

A \emphdef{diagraph} or directed graph is a graph where edges have a direction, \ie, a head and a tail. A path in a diagraph has the additional restriction that the head of $e_i$ must be tail of $e_{i+1}$.

An \emphdef{edge subdivision} in $G$ is the operation of removing edge $(u,v)$ from $G$ and replacing it by $(u,w)$ and $(w,v)$ where $w$ is a new vertex. If the edge is embedded, it can be realized by choosing a point in the relative interior of the edge to be the image of the new vertex $w$.

\subparagraph*{Width invariants}

Introduced by Seymour and Thomas in 1994~\cite{Seymour_Ratcatcher}, the carving-width is another graph parameter aiming at quantifying how close a graph is to a tree. 

A tree is \emphdef{binary} if its vertices have either degree $3$, called \emphdef{inner vertices}, or degree $1$, called \emphdef{leaves}. Denote by $L(T)$ the set of leaves of a tree $T$. Let $G$ be a graph. A \emphdef{carving-decomposition} of $G$ is a pair $\mathcal (T,\phi)$ where $T$ is a binary tree and $\phi$ is an injective map from the vertices of $G$ to the leaves of $T$: $\phi : V(G) \rightarrow L(T)$. Note that in this definition, some leaves of $T$ may be unlabelled. However it is always possible to remove these edges by recursively removing unlabelled leaves, and merging edges incident to a vertex of degree $2$. On the contrary, one can add an unlabelled leaf by subdividing any edge into two, and adding an edge incident to an unlabelled leaf at the vertex of degree $2$.

Let $a \in E(T)$ be an edge in the tree $T$ of a tree decomposition of $G$. Removing $a$ from $T$ yields two subtrees $T_1^a,T_2^a$ of $T$. In this paper, to ease notations, we will indicate substructures obtained from a tree edge $a$ by using the notation $\cdot^a$, while sets of vertices and edges related to $a$ will be denoted using $\cdot(a)$. 

The sets of vertices \emphdef{displayed} by $a$ are the vertices falling on either side of $a$ in $T$, \ie, $V_1(a) = \phi^{-1} (L(T_1^a))$ and $V_2(a)= \phi^{-1} (L(T_2^a))$, and the graph cut associated to $a$ is defined by $T(a) = E(V_1 (a), V_2 (a))$. We also call the subgraphs $G[V_1(a)]$ and $G[V_2(a)]$ of $G$ the subgraphs \emphdef{displayed} by $a$, and write for short $G_1^a$ and $G_2^a$. 
The \emphdef{width} $\wid (a)$ of $a$ is the size of the associated cut $|T(a)|$, and the width of a carving-decomposition $D = (T,\phi)$ is the maximal width of an edge of $T$: $\wid(D) = \max_{a \in E(T)} \wid (a)$. Finally, the \emphdef{carving-width} $\cwid (G)$ of $G$ is the minimal width of all of its carving-decompositions: 
\[\displaystyle\cwid (G) = \min_{\substack{(T,\phi)\\ \text{ carving-dec.}}} \max_{a \in E(T)} |E_G(V_1 (a), V_2 (a))|.\] 

A \emphdef{branch-decomposition} a similar concept, where edges of a graph $G$ are injectively sent to the leaves of a binary tree. In this context, any edge of the tree partition $E(G) = E_1 \sqcup E_2$ into two parts ; the width of such tree edge is the number of vertices appearing as endpoints of both edges in $E_1$ and $E_2$. The \emphdef{branch-width} is the minimum width of such a decomposition.

\subparagraph*{Minor relations}

To avoid redundancy with Section~\ref{sec_minor_def}, where we introduce formally the embedded minor relation and the embedded immersion relation we study in this article, we only give informal notions of the more classical minor and immersion relations. We refer to~\cite[Chapter 1.7, 12]{Diestel_Graph_Theory} for a thorough introduction. An abstract graph $H$ is a \emphdef{minor} of an abstract graph $G$, if a graph equivalent to $H$ can be obtained from $G$ by a sequence of edge contractions, and vertex and edge deletions. An abstract graph $H$ is an \emphdef{immersion minor} of an abstract graph $G$, if a graph equivalent to $H$ can be obtained from $G$ by a sequence of \emphdef{lifts}, and vertex and edge deletions. A lift in a graph $G$ consists of replacing a pair of incident edges $\{(u,v),(v,w)\}$ by the single edge $\{(u,w)\}$.

\subsection{Background on embedded graphs}

We refer to \cite[Chapter 4]{Diestel_Graph_Theory} for a broad presentation of planar graphs and related concepts.

Let $\Sigma$ be a surface, a \emphdef{Jordan curve} of $\Sigma$ is the image $\gamma$ of a continuous injective map from $\Sp^1 \rightarrow \Sigma$. By the Jordan–Schoenflies theorem \cite{Jordan_lecture}, if $Sigma = \Sp^2$ such a curve splits $\Sp^2$ into two connected components which are homeomorphic to open discs.

Let $G$ be a graph. An embedding of $G$ into $\Sigma$, if it exists, is a drawing of $G$ on $\Sigma$ without crossings. More precisely, for each $e \in E(G)$ define $I_e = [0,1]$, then an embedding of $G$ on $\Sigma$ is a PL map $f : V(G) \cup (\bigcup_{e \in E(G)} I_e) \rightarrow \Sigma$ such that: 
\begin{itemize}
\item $f$ restricted to $V(G) \cup (\bigcup_{e \in E(G)} \overset{\circ}{I_e} )$ is injective.
\item $f$ restricted to each $I_e$ is continuous
\item For each $e = (u,v)$, on $I_e$, $f \big|_{I_e} (0) = f(u)$, and $f \big|_{I_e} (1) = f(v)$.
\end{itemize} 

Two embedded graphs $G_1$ and $G_2$ are \emphdef{equivalent} if there exists an ambient isotopy, \ie, a continuous family of homeomorphism $f_t \colon \Sigma \to \Sigma$, taking the embedding of $G_1$ to the embedding of $G_2$.  We consider embedded graphs up to this equivalence.

This formally indicates that one can draw $G$ on $\Sigma$ such that each edge is drawn continuously and no two drawn edges cross in their interior (they can meet at the end if they share the same endpoint). In the following sections, we will consider embedded graphs. However, to ease notations, we will denote by $G$ both the embedded graph and the abstract graph (where we forget about the embedding). A \emphdef{plane graph} is a graph together with an embedding in $\Sp^2$.

For $G$ an embedded graph, the set of \emphdef{faces} $F(G)$ are the connected components of $\Sigma \setminus G$. The \emphdef{dual} $G*$ of $G$ is the abstract graph with set of vertices $F(G)$, and two vertices $F,F'$ of $G*$ are adjacent if they share an edge of $G$. If $G$ is planar, there is a unique embedding of $G*$ in $\Sp^2$, however this embedding is not unique in $\R^2$. The \emphdef{cut-cycle duality} is a bijection between the minimal cuts of $G$ and cycles of $G*$ which can then be realized as Jordan curves.

\subsection{Background on order theory}
A \emphdef{quasi-order} on a set $X$ is a binary relation $R$ satisfying reflexivity ($\forall x \in X, x R x$) and transitivity ($\forall x,y,z \in X^3, x R y$ and $y R z \Rightarrow x R z$). The relation $R$ is an \emphdef{order} if it additionally satisfies antisymmetry ($\forall x,y \in X^2, x R y$ and $y R x \Rightarrow x = y$). A \emphdef{strict quasi-order} satisfies irreflexivity ($\forall x \in X, \neg x R x$), asymmetry ($\forall x,y \in X^2, x R y \Rightarrow \neg y R x$) and transitivity. An order $R$ is said to be \emphdef{linear} or \emphdef{total} if all pairs of elements are comparable ($\forall x,y \in X^2, x R y$ or $y R x$). A \emphdef{minimal element} for a quasi-order or strict quasi-order is an element $m$ with no other element of $X$ in relation with it: $\forall x \in X, \neg x R m$.

A \emphdef{circular order} $C$ on $X$ is a ternary relation satisfying cyclicity ($\forall x,y,z \in X^3, C(x,y,z) \Rightarrow C(z,x,y)$), asymmetry ($\forall x,y,z \in X^3, C(x,y,z) \Rightarrow \neg C(x,z,y)$), transitivity ($\forall w,x,y,z \in X^3, C(x,y,z)$ and $C(x,z,w) \Rightarrow C(x,y,w)$), and connectedness ($\forall x,y,z \in X^3, x \neq y \neq z \Rightarrow C(x,y,z) \text{ or } C(x,z,y)$). One can think of $C(x,y,z)$ as $x$ meets $y$ before $z$. Any linear order $\leq$ induces a circular one defined as $C_\leq (x,y,z)$ if one of $x \leq y \leq z$, $z \leq x \leq y$, $y \leq z \leq x$ is satisfied. Similarly, choosing any $x \in X$ which is circularly ordered by $C$ defines a linear order $\leq_x$ defined as $y \leq_x z$ if $C(x,y,z)$, and $x \leq_x y$ and $y \leq_x y$ for all $y \in X$. Notice that the circular order induced by $\leq_x$ is then $C$.

A \emphdef{well-quasi-order} $R$ or \emphdef{w.q.o.} on $X$ is a quasi-order for which every infinite sequence of elements admits an increasing pair, \ie, for all $(x_n)_{n \in \N} \in X^\N$, there exist $i < j$ such that $x_i R x_j$.

\section{Embedded relations}
\label{sec_minor_def}

In this section, we define the embedded minor relations as well as a characterization that we will often use for the embedded immersion. Embedded graph minor has been defined in \cite{Medina_wqolink}, and to the best of our knowledge, we are not aware of a definition of embedded immersion or our characterization in the literature. 

\subsection{Embedded graph minor}

In this section, we define the embedded minor relation, which has already been defined in \cite{Medina_wqolink}. This extension is quite natural, the abstract operations are turned into their embedded equivalents.

Let $H$ and $G$ be two plane graphs. The graph $H$ is said to be a plane minor of $G$ if $H$ can be obtained from $G$ by a sequence of vertex deletions, edge deletions, and embedded edge contractions. Deleting an edge is done on the embedding by removing its interior and formally by restricting the embedding to $V(G) \bigcup_{e \in E(G) \smallsetminus \{ e \}} I_e$. Deleting the vertex $v$ on the embedding is performed by removing the vertex and all its incident edges; formally the initial embedding is restricted to $(V(G) \smallsetminus \{ v \}) \bigcup_{e \in E(G) \smallsetminus \{ v \times V(G) \}} I_e$. 

Abstractly, contracting the edge $e$ of $G$ consists in removing an edge and identifying its endpoints, which yields the graph $G / e$. To perform an embedded edge contraction on $e$ which is not a self-loop, choose a disc $D_e$ of $\Sp^2$ intersecting $G$ exactly on $e$ and contract this disc to a point by a homotopy $\psi : \Sp^2 \rightarrow \Sp^2$. It follows that $\psi \circ f|$ is a continuous map from $V(G)$ to $\Sp^2$ which sends $I_e$ to a single point as well as its endpoints and is injective otherwise. Hence, $\psi \circ f = f'$ naturally translates to an embedding of $G / e$. These embedded operations are depicted on Figure~\ref{pic_def_emb_op}.

\begin{figure}[ht]
\begin{center}
\begin{tikzpicture}[scale = 0.85]
\def\e{0.05}
\coordinate (c1) at (0,0);
\coordinate (c2) at (1.5,0);
\coordinate (c3) at (0.75, -1);
\coordinate (c4) at (-1,0.5);
\coordinate (c5) at (-1,-0.75);
\coordinate (c6) at (2.5,0.7);

\draw [red!50!black, thick] (c1) -- (c2);
\draw (c1) -- (c3) -- (c2);
\draw (c1) -- (c4);
\draw (c1) -- (c5);
\draw (c2) -- (c6);
\draw (c1) .. controls +(90:0.75) and \control{(c2)}{(90:0.75)};
\draw (c2) .. controls +(10:0.5) and \control{(c2)}{(-40:0.5)};

\foreach \i in {1,...,6}{\fill (c\i) circle (\e cm);};
\node at (0.75,0) [below, red!50!black] {$e$};

\draw [-Stealth, black!50] (2.35,-0.25) -- ++(1,0);

\begin{scope}[xshift = 4.25cm]
\coordinate (c1) at (0,0);
\coordinate (c2) at (1.5,0);
\coordinate (c3) at (0.75, -1);
\coordinate (c4) at (-1,0.5);
\coordinate (c5) at (-1,-0.75);
\coordinate (c6) at (2.5,0.7);

\draw (c1) -- (c3) -- (c2);
\draw (c1) -- (c4);
\draw (c1) -- (c5);
\draw (c2) -- (c6);
\draw (c1) .. controls +(90:0.75) and \control{(c2)}{(90:0.75)};
\draw (c2) .. controls +(10:0.5) and \control{(c2)}{(-40:0.5)};

\foreach \i in {1,...,6}{\fill (c\i) circle (\e cm);};
\end{scope}

\begin{scope}[xshift = 9.5cm]
\coordinate (c1) at (0,0);
\coordinate (c2) at (1.5,0);
\coordinate (c3) at (0.75, -1);
\coordinate (c4) at (-1,0.5);
\coordinate (c5) at (-1,-0.75);
\coordinate (c6) at (2.5,0.7);

\draw (c1) -- (c2);
\draw (c1) -- (c3) -- (c2);
\draw (c1) -- (c4);
\draw (c1) -- (c5);
\draw (c2) -- (c6);
\draw (c1) .. controls +(90:0.75) and \control{(c2)}{(90:0.75)};
\draw (c2) .. controls +(10:0.5) and \control{(c2)}{(-40:0.5)};

\foreach \i in {1,...,6}{\fill (c\i) circle (\e cm);};
\fill [red!50!black] (c2) circle (\e*1.25 cm);
\node [red!50!black, below] at (c2) {$v$};

\draw [-Stealth, black!50] (2.35,-0.25) -- ++(1,0);

\begin{scope}[xshift = 4.25cm]
\coordinate (c1) at (0,0);
\coordinate (c2) at (1.5,0);
\coordinate (c3) at (0.75, -1);
\coordinate (c4) at (-1,0.5);
\coordinate (c5) at (-1,-0.75);
\coordinate (c6) at (2.5,0.7);

\draw (c1) -- (c3);
\draw (c1) -- (c4);
\draw (c1) -- (c5);

\fill (c1) circle (\e cm);
\foreach \i in {3,...,6}{\fill (c\i) circle (\e cm);};
\end{scope}
\end{scope}

\begin{scope}[yshift = -2.75cm, xshift = 2.75cm]
\coordinate (c1) at (0,0);
\coordinate (c2) at (1.5,0);
\coordinate (c3) at (0.75, -1);
\coordinate (c4) at (-1,0.5);
\coordinate (c5) at (-1,-0.75);
\coordinate (c6) at (2.5,0.7);

\draw [red!50!black, thick] (c1) -- (c2);
\draw (c1) -- (c3) -- (c2);
\draw (c1) -- (c4);
\draw (c1) -- (c5);
\draw (c2) -- (c6);
\draw (c1) .. controls +(90:0.75) and \control{(c2)}{(90:0.75)};
\draw (c2) .. controls +(10:0.5) and \control{(c2)}{(-40:0.5)};

\foreach \i in {1,...,6}{\fill (c\i) circle (\e cm);};
\node at (0.75,0) [below, red!50!black] {$e$};

\draw [-Stealth, black!50] (2.35,-0.25) -- ++(1,0);

\begin{scope}[xshift = 4.5cm]
\coordinate (c1) at (0,0);
\coordinate (c2) at (1.5,0);
\coordinate (c3) at (0.75, -1);
\coordinate (c4) at (-1,0.5);
\coordinate (c5) at (-1,-0.75);
\coordinate (c6) at (2.5,0.7);

\draw [red!50!black, thick] (c1) -- (c2);
\draw (c1) -- (c3) -- (c2);
\draw (c1) -- (c4);
\draw (c1) -- (c5);
\draw (c2) -- (c6);
\draw (c1) .. controls +(90:0.75) and \control{(c2)}{(90:0.75)};
\draw (c2) .. controls +(10:0.5) and \control{(c2)}{(-40:0.5)};

\foreach \i in {1,...,6}{\fill (c\i) circle (\e cm);};
\filldraw [fill opacity = 0.1, red!75!black] (c1) .. controls +(30:0.5) and \control{(c2)}{(150:0.5)} .. controls + (-150:0.5) and \control{(c1)}{(-30:0.5)};
\node at (0.75,-0.15) [below, red!75!black] {\footnotesize $D_e$};

\draw [-Stealth, black!50] (2.35,-0.25) -- ++(1,0);
\end{scope}

\begin{scope}[xshift = 9cm]
\coordinate (c1) at (0.75,0);
\coordinate (c2) at (c1);
\coordinate (c3) at (0.75, -1);
\coordinate (c4) at (-1,0.5);
\coordinate (c5) at (-1,-0.75);
\coordinate (c6) at (2.5,0.7);

\draw [red!50!black, thick] (c1) -- (c2);
\draw (c1) .. controls +(-135:0.25) and \control{(c3)}{(135:0.25)};
\draw (c1) .. controls +(-45:0.25) and \control{(c3)}{(45:0.25)};
\draw (c1) -- (c4);
\draw (c1) -- (c5);
\draw (c2) -- (c6);
\draw (c1) .. controls +(135:0.75) and \control{(c2)}{(45:0.75)};
\draw (c2) .. controls +(10:0.5) and \control{(c2)}{(-40:0.5)};

\foreach \i in {1,...,6}{\fill (c\i) circle (\e cm);};
\fill [red!50!black] (c2) circle (\e*1.25 cm);
\end{scope}
\end{scope}
\end{tikzpicture}
\caption{Left: edge deletion of $e$. Right: vertex deletion of $v$. Below: edge contraction of $e$.}
\label{pic_def_emb_op}
\end{center}
\end{figure}
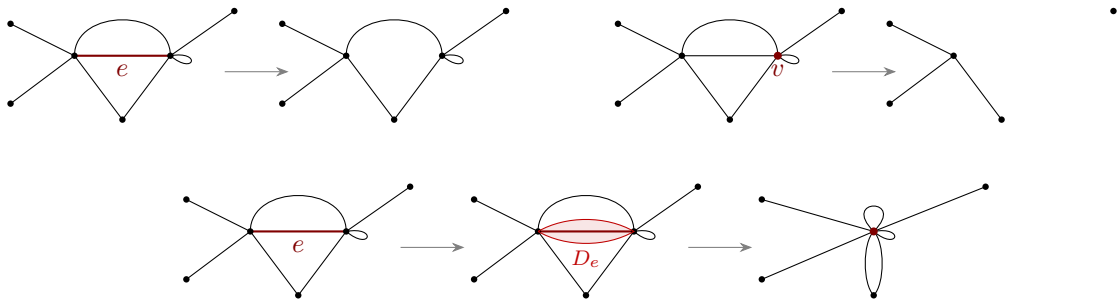

\subsection{Embedded immersion}

Let $G$ be a plane graph. An \emphdef{embedded lift} of the embedded edges $e = (u,v)$ and $ e' = (v,w)$ of $G$ at $v$ in $\Sigma$, is the operation consisting of choosing a closed disc $D$ of $\Sigma$ such that $D \cap V(G) = \{ v \}$, and $\forall e \in N(v) : |\{ e \cap \partial(D) \}| = 1$ and $0$ if $e \not \in N(v)$. Then, replacing ${D} \cap (e \cup e')$ by one of the two arcs of $\partial D$ between $e \cap \partial D$ and $e' \cap \partial D$ such that the obtained graph is still embedded (see Figure~\ref{pic_def_emb_lift} for example). If $v$ is of degree $2$, the remaining vertex is deleted since we consider connected graphs and the choice of arcs does not matter since they are isotopic. Otherwise, there is at least one additional incident edge to $v$ which will intersect $\partial D$ and determine the proper arc for the operation. An \emphdef{embedded immersion} is an immersion that can be realized by a sequence of vertex deletions, edge deletions, and embedded lifts. 

\begin{figure}[ht]
\begin{center}
\begin{tikzpicture}[scale = 0.8]
\begin{scope}[xshift = -0.75cm]
\draw [red!50!purple] (60: 1) -- (0,0);
\draw [red!50!purple] (-60: 1) -- (0,0);
\fill (0,0) circle (0.1 cm);
\filldraw [fill opacity = 0.1, blue!80!black] (0,0) circle (0.4);
\node at (-160:0.65) [blue!80!black] {$D$};
\node at (130:0.225) {$v$};
\node [red!50!purple, left] at (60: 1) {$e$};
\node [red!50!purple, left] at (-60: 1) {$e'$};
\end{scope}

\draw [-Stealth, gray] (0,0) -- (1.5,0) .. controls +(1.5,0) and \control{(3.5,1.25)}{(180:1)};
\draw [-Stealth, gray] (0,0) -- (1.5,0) .. controls +(1.5,0) and \control{(3.5,-1.25)}{(180:1)};
\node at (0.85,0) [align = center] {embedded \\ lift};

\begin{scope}[xshift = 4cm, yshift = 1.25cm]
\draw [red!50!purple] (60: 1) -- (60:0.4) arc (60:-60:0.4) -- (-60:1);
\fill (0,0) circle (0.1 cm);
\node [red!50!purple, right] at (-60: 1) {$e''$};
\end{scope}

\begin{scope}[xshift = 4cm, yshift = -1.25cm]
\draw [red!50!purple] (60: 1) -- (60:0.4) arc (60:300:0.4) -- (-60:1);
\fill (0,0) circle (0.1 cm);
\node [red!50!purple, right] at (60: 1) {$e''$};
\end{scope}

\begin{scope}[xshift = 7.5cm]
\draw (130:1) -- (0,0) -- (190:1);
\draw (-150:1) -- (0,0);
\draw [red!50!purple] (60: 1) -- (0,0) -- (-60:1);
\fill (0,0) circle (0.1 cm);

\draw [-Stealth, gray] (0.75,0) -- +(0.8,0);

\begin{scope}[xshift = 3cm]
\draw (130:1) -- (0,0) -- (190:1);
\draw (-150:1) -- (0,0);
\draw [red!50!purple] (60: 1) -- (60:0.4) arc (60:-60:0.4) -- (-60:1);
\fill (0,0) circle (0.1 cm);

\draw [-Stealth, gray] (0.75,0) -- +(0.8,0);
\node at (1.05,0.2) [gray] {$\sim$};
\end{scope}

\begin{scope}[xshift = 6 cm]
\draw (130:1) -- (0,0) -- (190:1);
\draw (-150:1) -- (0,0);
\draw [red!50!purple] (60: 1) ..controls +(-120:0.75) and \control{(-60:1)}{(120:0.75)};
\fill (0,0) circle (0.1 cm);
\end{scope}
\end{scope}
\end{tikzpicture}
\caption{Left: illustration for an embedded lift of two edges $e,e'$ incident to $v$ yielding a merged edge $e''$. Right: Other edges determine which arc is valid for a planar embedding.}
\label{pic_def_emb_lift}
\end{center}
\end{figure}
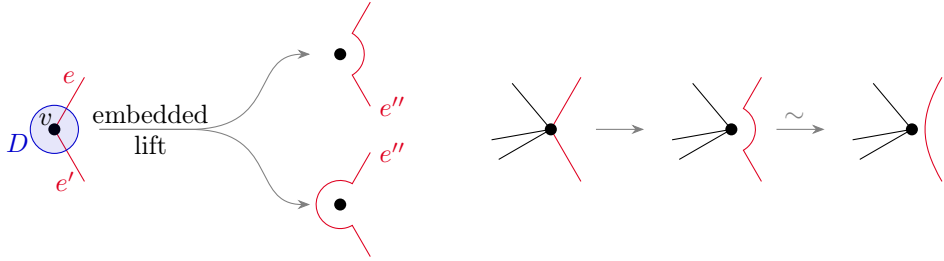

We will work with an equivalent more combinatorial notion, ordered immersion and tangent paths. An \emphdef{ordered graph} $G$ is the data of a set of vertices $V(G)$, an edge set $E(G)$, and for each vertex $v$, a circular order $C_v$ on the set of edges incident to $v$.

\begin{definition}\label{def_order}
Let $H$ and $G$ be two ordered graphs. An \emphdef{ordered immersion} of $H$ into $G$, also denoted $H$ is an ordered immersion of $G$, is a map $\phi$ defined on $V(H)$ and $E(H)$ such that: 
\begin{enumerate}
\item $\phi |_{V(H)}$ is injective
\item for each pair of distinct edges $e,e'$, the paths $\phi(e)$ and $\phi(e')$ are edge-disjoint.
\item for each vertex $v \in V(H)$: for each edge $e \in E(H)$ incident to $v$ denote by $e'$ the edge of $\phi(e)$ incident to $\phi(v)$, then for all $e_1,e_2,e_3$ incident to $v$, $C_v(e_1,e_2,e_3) \Rightarrow C_{\phi(v)} (e'_1, e'_2, e'_3)$.
\end{enumerate}
\end{definition}

Applying properly Theorem 1.6 of \cite{Graph_Minors_XXIII} allows to prove that ordered immersion forms a well-quasi-order on ordered graphs of bounded degree. However, ordered immersion minor does not imply embedded immersion minor. Indeed, Figure~\ref{pic_ord_vs_emb} depicts two embedded graphs $G_1$, $G_2$ where $G_1$ is an ordered minor of $G_2$ (orders around vertices are induced by the embedding) but is not an embedded minor of $G_2$. Indeed, one can see $G_2$ as a drawing of $G_1$ in the plane except that each edge of two pairs of edges intersects. The ordered immersion is the natural identity on the whole graph, except on the two aforementioned pairs of parallel edges, which are mapped to $4$ transverse paths which cross at the intersections. This immersion pattern is unique, and it enforces transversal paths that cannot be realized by embedded lifts in an embedded immersion.

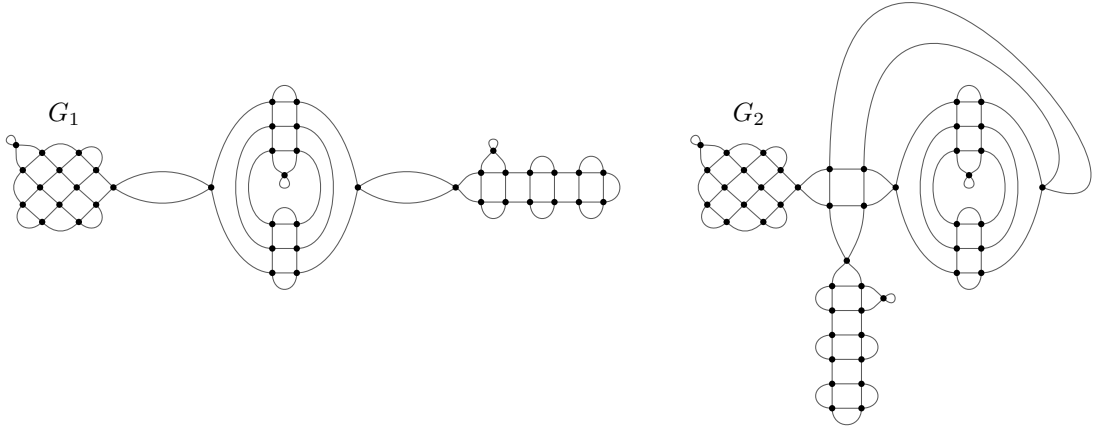
\begin{figure}[ht]
\begin{center}
\begin{tikzpicture}[scale = 0.65]
\clip (-2.2,-5) rectangle (20.1,4);
\def\e{0.5} \def\se{0.35} \def\me{0.45} \def\lme{1}  \def\le{1.25}
\def\r{0.065}

\coordinate (a0) at (0,0);
\coordinate (a1) at ($(a0)+(135:\e)$);
\coordinate (a2) at ($(a1)+(135:\e)$);
\coordinate (a3) at ($(a0)+(-135:\e)$);
\coordinate (a4) at ($(a3)+(-135:\e)$);
\coordinate (a5) at (-0.75,0);
\coordinate (a6) at ($(a5)+(135:\e)$);
\coordinate (a7) at ($(a6)+(135:\e)$);
\coordinate (a7') at ($(-1.125,0)+(135:2.45*\e)$);
\coordinate (a8) at ($(a5)+(-135:\e)$);
\coordinate (a9) at ($(a8)+(-135:\e)$);
\coordinate (a10) at (-1.5,0);
\coordinate (a11) at ($(a10)+(135:\e)$);
\coordinate (a12) at ($(a10)+(-135:\e)$);
\foreach \i in {0,...,12}{\fill (a\i) circle (\r cm);};
\fill (a7') circle (\r cm);

\coordinate (b0) at (2,0);
\coordinate (b1) at (3.25,0.75);
\coordinate (b2) at (3.25,1.25);
\coordinate (b3) at (3.25,1.75);
\coordinate (b4) at (3.75,0.75);
\coordinate (b5) at (3.75,1.25);
\coordinate (b6) at (3.75,1.75);
\coordinate (b6') at (3.5,0.25);
\coordinate (b7) at (3.25,-0.75);
\coordinate (b8) at (3.25,-1.25);
\coordinate (b9) at (3.25,-1.75);
\coordinate (b10) at (3.75,-0.75);
\coordinate (b11) at (3.75,-1.25);
\coordinate (b12) at (3.75,-1.75);
\coordinate (b13) at (5,0);
\foreach \i in {0,...,13}{\fill (b\i) circle (\r cm);};
\fill (b6') circle (\r cm);

\coordinate (c0) at (7,0);
\coordinate (c1) at ($(c0)+(30:0.6)$);
\coordinate (c1') at ($(c1)+(\e*0.5,0.45)$);
\coordinate (c2) at ($(c1)+(0:\e)$);
\coordinate (c3) at ($(c2)+(0:\e)$);
\coordinate (c4) at ($(c3)+(0:\e)$);
\coordinate (c5) at ($(c4)+(0:\e)$);
\coordinate (c6) at ($(c5)+(0:\e)$);
\coordinate (c7) at ($(c0)+(-30:0.6)$);
\coordinate (c8) at ($(c7)+(0:\e)$);
\coordinate (c9) at ($(c8)+(0:\e)$);
\coordinate (c10) at ($(c9)+(0:\e)$);
\coordinate (c11) at ($(c10)+(0:\e)$);
\coordinate (c12) at ($(c11)+(0:\e)$);
\foreach \i in {0,...,12}{\fill (c\i) circle (\r cm);};
\fill (c1') circle (\r cm);

\node at (-1,1.5) {$G_1$};

\begin{scope}[opacity=0.65]
\draw (a0) -- (a1) -- (a2) .. controls +(135:\se) and \control{(a7)}{(45:\se)} -- (a11) .. controls +(-135:\se) and \control{(a12)}{(135:\se)} -- (a9) .. controls +(-45:\se) and \control{(a4)}{(-135:\se)} -- (a3) -- cycle; 
\draw (a5) -- (a6) -- (a7) .. controls +(135:0.2) and \control{(a7')}{(0:\se)} .. controls +(180:\me) and \control{(a7')}{(90:\me)} .. controls +(-90:\se) and \control{(a11)}{(135:0.2)} -- (a10) -- (a8) -- (a4) .. controls +(-45:\me) and \control{(a3)}{(-45:\me)} -- cycle;
\draw (a5) -- (a8) -- (a9) .. controls +(-135:\me) and \control{(a12)}{(-135:\me)} -- (a10) -- (a6) -- (a2) .. controls +(45:\me) and \control{(a1)}{(45:\me)} -- cycle;

\draw (a0) .. controls +(35:0.75) and \control{(b0)}{(145:0.75)};
\draw (a0) .. controls +(-35:0.75) and \control{(b0)}{(-145:0.75)};

\draw (b0) .. controls +(80:\le) and \control{(b3)}{(180:\me)} -- (b6) .. controls +(0:\me) and \control{(b13)}{(100:\le)};
\draw (b0) .. controls +(-80:\le) and \control{(b9)}{(180:\me)} -- (b12) .. controls +(0:\me) and \control{(b13)}{(-100:\le)};
\draw (b2) -- (b5) .. controls +(0:\lme) and \control{(b11)}{(0:\lme)} -- (b8) .. controls +(180:\lme) and \control{(b2)}{(180:\lme)};
\draw (b1) -- (b4) .. controls +(0:0.65) and \control{(b10)}{(0:0.65)} -- (b7) .. controls +(180:0.65) and \control{(b1)}{(180:0.65)};
\draw (b1) -- (b2) -- (b3) .. controls +(90:\me) and \control{(b6)}{(90:\me)} -- (b5) -- (b4) .. controls +(-90:\me) and \control{(b6')}{(45:0.2)} .. controls +(-135:\e) and \control{(b6')}{(-45:\e)} .. controls +(135:0.2) and \control{(b1)}{(-90:\me)};
\draw (b7) -- (b8) -- (b9) .. controls +(-90:\me) and \control{(b12)}{(-90:\me)} -- (b11) -- (b10) .. controls +(90:\me) and \control{(b7)}{(90:\me)};

\draw (b13) .. controls +(35:0.75) and \control{(c0)}{(145:0.75)};
\draw (b13) .. controls +(-35:0.75) and \control{(c0)}{(-145:0.75)};

\draw (c0) .. controls +(30:0.2) and \control{(c1)}{(180:\se)} -- (c1)  -- (c2)  -- (c3)  -- (c4)  -- (c5)  -- (c6) .. controls +(0:\me) and \control{(c12)}{(0:\me)}  -- (c11)  -- (c10)  -- (c9)  -- (c8)  -- (c7) .. controls +(180:\se) and \control{(c0)}{(-30:0.2)};
\draw (c1) .. controls +(90:0.2) and \control{(c1')}{(-135:\se)} .. controls +(45:\me) and \control{(c1')}{(135:\me)} .. controls +(-45:\se) and \control{(c2)}{(90:0.2)} -- (c8) .. controls +(-90:\me) and \control{(c7)}{(-90:\me)} -- (c1);
\draw (c3) .. controls +(90:\me) and \control{(c4)}{(90:\me)} -- (c10) .. controls +(-90:\me) and \control{(c9)}{(-90:\me)} -- (c3);
\draw (c5) .. controls +(90:\me) and \control{(c6)}{(90:\me)} -- (c12) .. controls +(-90:\me) and \control{(c11)}{(-90:\me)} -- (c5);
\end{scope}

\begin{scope}[xshift=14cm]
\coordinate (a0) at (0,0);
\coordinate (a1) at ($(a0)+(135:\e)$);
\coordinate (a2) at ($(a1)+(135:\e)$);
\coordinate (a3) at ($(a0)+(-135:\e)$);
\coordinate (a4) at ($(a3)+(-135:\e)$);
\coordinate (a5) at (-0.75,0);
\coordinate (a6) at ($(a5)+(135:\e)$);
\coordinate (a7) at ($(a6)+(135:\e)$);
\coordinate (a7') at ($(-1.125,0)+(135:2.45*\e)$);
\coordinate (a8) at ($(a5)+(-135:\e)$);
\coordinate (a9) at ($(a8)+(-135:\e)$);
\coordinate (a10) at (-1.5,0);
\coordinate (a11) at ($(a10)+(135:\e)$);
\coordinate (a12) at ($(a10)+(-135:\e)$);
\foreach \i in {0,...,12}{\fill (a\i) circle (\r cm);};
\fill (a7') circle (\r cm);

\coordinate (b0) at (2,0);
\coordinate (b1) at (3.25,0.75);
\coordinate (b2) at (3.25,1.25);
\coordinate (b3) at (3.25,1.75);
\coordinate (b4) at (3.75,0.75);
\coordinate (b5) at (3.75,1.25);
\coordinate (b6) at (3.75,1.75);
\coordinate (b6') at (3.5,0.25);
\coordinate (b7) at (3.25,-0.75);
\coordinate (b8) at (3.25,-1.25);
\coordinate (b9) at (3.25,-1.75);
\coordinate (b10) at (3.75,-0.75);
\coordinate (b11) at (3.75,-1.25);
\coordinate (b12) at (3.75,-1.75);
\coordinate (b13) at (5,0);
\foreach \i in {0,...,13}{\fill (b\i) circle (\r cm);};
\fill (b6') circle (\r cm);

\coordinate (c0) at (1,-1.5);
\coordinate (c1) at ($(c0)+(-60:0.6)$);
\coordinate (c1') at ($(c1)+(0.45,-\e*0.5)$);
\coordinate (c2) at ($(c1)+(-90:\e)$);
\coordinate (c3) at ($(c2)+(-90:\e)$);
\coordinate (c4) at ($(c3)+(-90:\e)$);
\coordinate (c5) at ($(c4)+(-90:\e)$);
\coordinate (c6) at ($(c5)+(-90:\e)$);
\coordinate (c7) at ($(c0)+(-120:0.6)$);
\coordinate (c8) at ($(c7)+(-90:\e)$);
\coordinate (c9) at ($(c8)+(-90:\e)$);
\coordinate (c10) at ($(c9)+(-90:\e)$);
\coordinate (c11) at ($(c10)+(-90:\e)$);
\coordinate (c12) at ($(c11)+(-90:\e)$);
\foreach \i in {0,...,12}{\fill (c\i) circle (\r cm);};
\fill (c1') circle (\r cm);

\coordinate (i1) at ($(a0)+(30:0.75)$);
\coordinate (i2) at ($(b0)+(150:0.75)$);
\coordinate (i3) at ($(a0)+(-30:0.75)$);
\coordinate (i4) at ($(b0)+(-150:0.75)$);
\foreach \i in {1,...,4}{\fill (i\i) circle (\r cm);};

\node at (-1,1.5) {$G_2$};

\begin{scope}[opacity=0.65]
\draw (a0) -- (a1) -- (a2) .. controls +(135:\se) and \control{(a7)}{(45:\se)} -- (a11) .. controls +(-135:\se) and \control{(a12)}{(135:\se)} -- (a9) .. controls +(-45:\se) and \control{(a4)}{(-135:\se)} -- (a3) -- cycle; 
\draw (a5) -- (a6) -- (a7) .. controls +(135:0.2) and \control{(a7')}{(0:\se)} .. controls +(180:\me) and \control{(a7')}{(90:\me)} .. controls +(-90:\se) and \control{(a11)}{(135:0.2)} -- (a10) -- (a8) -- (a4) .. controls +(-45:\me) and \control{(a3)}{(-45:\me)} -- cycle;
\draw (a5) -- (a8) -- (a9) .. controls +(-135:\me) and \control{(a12)}{(-135:\me)} -- (a10) -- (a6) -- (a2) .. controls +(45:\me) and \control{(a1)}{(45:\me)} -- cycle;

\draw (a0) .. controls +(30:0.2) and \control{(i1)}{(180:\se)} -- (i2) .. controls +(0:\se) and \control{(b0)}{(150:0.2)};
\draw (a0) .. controls +(-30:0.2) and \control{(i3)}{(180:\se)} -- (i4) .. controls +(0:\se) and \control{(b0)}{(-150:0.2)};

\draw (b0) .. controls +(80:\le) and \control{(b3)}{(180:\me)} -- (b6) .. controls +(0:\me) and \control{(b13)}{(100:\le)};
\draw (b0) .. controls +(-80:\le) and \control{(b9)}{(180:\me)} -- (b12) .. controls +(0:\me) and \control{(b13)}{(-100:\le)};
\draw (b2) -- (b5) .. controls +(0:\lme) and \control{(b11)}{(0:\lme)} -- (b8) .. controls +(180:\lme) and \control{(b2)}{(180:\lme)};
\draw (b1) -- (b4) .. controls +(0:0.65) and \control{(b10)}{(0:0.65)} -- (b7) .. controls +(180:0.65) and \control{(b1)}{(180:0.65)};
\draw (b1) -- (b2) -- (b3) .. controls +(90:\me) and \control{(b6)}{(90:\me)} -- (b5) -- (b4) .. controls +(-90:\me) and \control{(b6')}{(45:0.2)} .. controls +(-135:\e) and \control{(b6')}{(-45:\e)} .. controls +(135:0.2) and \control{(b1)}{(-90:\me)};
\draw (b7) -- (b8) -- (b9) .. controls +(-90:\me) and \control{(b12)}{(-90:\me)} -- (b11) -- (b10) .. controls +(90:\me) and \control{(b7)}{(90:\me)};

\draw (b13) .. controls +(-20:4) and \control{(i1)}{(90:8.5)} -- (i3) .. controls +(-90:\me) and \control{(c0)}{(120:\me)};
\draw (b13) .. controls +(30:2) and \control{(i2)}{(90:5.5)} -- (i4) .. controls +(-90:\me) and \control{(c0)}{(60:\me)};

\draw (c0) .. controls +(-60:0.2) and \control{(c1)}{(90:\se)} -- (c1)  -- (c2)  -- (c3)  -- (c4)  -- (c5)  -- (c6) .. controls +(-90:\me) and \control{(c12)}{(-90:\me)}  -- (c11)  -- (c10)  -- (c9)  -- (c8)  -- (c7) .. controls +(90:\se) and \control{(c0)}{(-120:0.2)};
\draw (c1) .. controls +(0:0.2) and \control{(c1')}{(135:\se)} .. controls +(-45:\me) and \control{(c1')}{(45:\me)} .. controls +(-135:\se) and \control{(c2)}{(0:0.2)} -- (c8) .. controls +(180:\me) and \control{(c7)}{(180:\me)} -- (c1);
\draw (c3) .. controls +(0:\me) and \control{(c4)}{(0:\me)} -- (c10) .. controls +(180:\me) and \control{(c9)}{(180:\me)} -- (c3);
\draw (c5) .. controls +(0:\me) and \control{(c6)}{(0:\me)} -- (c12) .. controls +(180:\me) and \control{(c11)}{(180:\me)} -- (c5);
\end{scope}
\end{scope}
\end{tikzpicture}
\caption{Two embedded graphs, the order around each vertex is induced by the embedding. $G_1$ is an ordered minor of $G_2$ but not an embedded minor of $G_2$.}
\label{pic_ord_vs_emb}
\end{center}
\end{figure}

This phenomenon motivates the following definition: a family of paths $(c_i)_{i \in I}$ in an ordered graph $G$ is said to be \emphdef{transverse} if there exists a vertex $v \in V(G)$ such that $v$ is an inner vertex of a path $p_i$ and an endpoint of another, or if the incident edges to $v$ are \emphdef{interlaced}, \ie, if around $v$ edges $e,e'$ of a path $p_i$ and edges $f,f'$ of path $p_j$ are met in the order $efe'f'$ for $i \neq j$. In that case $v$ is called a \emphdef{transverse vertex}. A family of paths is \emphdef{tangent} if all of its vertices are tangent vertices. For such a family, the embedded immersion can be realized by embedded lifts as illustrated by Figure~\ref{pic_ex_transverse}. Note that forbidding inner vertex from a path to be the starting point of another correspond to the hypothesis of strong immersion as defined on abstract graphs.

\begin{figure}[ht]
\begin{center}
\begin{tikzpicture}[scale=0.85]
\def\r{0.065}

\draw [blue!80!black] (0,0) -- +(-110:1);
\draw [blue!80!black] (0,0) -- +(120:1);
\draw [purple!50!red] (0,0) -- +(10:1);
\draw [green!70!purple] (0,0) -- +(-70:1);
\fill (0,0) circle (\r cm);

\draw [-Stealth] (0,-1.25) -- +(0,-0.75); 

\begin{scope}[yshift = -3cm]
\draw [blue!80!black] ($(0,0)+(-110:1)$) .. controls +(70:0.75) and \control{(120:1)}{(-60:0.75)};
\draw [purple!50!red] (0,0) -- +(10:1);
\draw [green!70!purple] (0,0) -- +(-70:1);
\fill (0,0) circle (\r cm);
\end{scope}

\begin{scope}[xshift = 2.75cm]
\draw [blue!80!black] (0,0) -- +(135:1);
\draw [blue!80!black] (0,0) -- +(-135:1);
\draw [purple!50!red] (0,0) -- +(-45:1);
\draw [purple!50!red] (0,0) -- +(45:1);
\draw [green!70!purple] (0,0) -- +(65:1);
\draw [green!70!purple] (0,0) -- +(-65:1);
\draw [opacity = 0.6] (0,0) -- +(10:1);
\fill (0,0) circle (\r cm);

\draw [-Stealth] (0,-1.25) -- +(0,-0.75); 

\begin{scope}[yshift = -3cm]
\draw [blue!80!black] ($(0,0)+(-135:1)$) .. controls +(45:0.75) and \control{(135:1)}{(-45:0.75)};
\draw [purple!50!red] ($(0,0)+(-45:1)$) .. controls +(135:0.75) and \control{(45:1)}{(-135:0.75)};
\draw [green!70!purple] ($(0,0)+(-60:1)$) .. controls +(120:1) and \control{(60:1)}{(-120:1)};

\draw [opacity = 0.6] (10:1) -- ++(-170:0.3) coordinate (c);
\fill (c) circle (\r cm);
\end{scope}
\end{scope}

\begin{scope}[xshift = 7.5cm]
\draw [blue!80!black] (0,0) -- +(-110:1);
\draw [blue!80!black] (0,0) -- +(120:1);
\draw [purple!50!red] (0,0) -- +(10:1);
\draw [green!70!purple] (0,0) -- +(-170:1);
\fill (0,0) circle (\r cm);

\draw [-Stealth] (0,-1.25) -- +(0,-0.75); 

\begin{scope}[yshift = -3cm]
\draw [blue!80!black] ($(0,0)+(-110:1)$) .. controls +(70:0.75) and \control{(120:1)}{(-60:0.75)};
\draw [purple!50!red] (0,0) -- +(10:1);
\draw [green!70!purple] (0,0) -- +(-170:1);
\fill (0,0) circle (\r cm);
\end{scope}
\end{scope}

\begin{scope}[xshift = 10.25cm]
\draw [purple!50!red] (0,0) -- +(-45:1);
\draw [purple!50!red] (0,0) -- +(45:1);
\draw [green!70!purple] (0,0) -- +(65:1);
\draw [green!70!purple] (0,0) -- +(10:1);
\fill (0,0) circle (\r cm);

\draw [-Stealth] (0,-1.25) -- +(0,-0.75); 

\begin{scope}[yshift = -3cm]
\draw [purple!50!red] ($(0,0)+(-45:1)$) .. controls +(135:0.75) and \control{(45:1)}{(-135:0.75)};
\draw [green!70!purple] ($(0,0)+(10:1)$) .. controls +(-170:0.75) and \control{(65:1)}{(-115:0.75)};
\end{scope}
\end{scope}

\end{tikzpicture}
\caption{Diverse paths of a family of paths (red, blue, and green paths) around a vertex. Left: tangent paths around a vertex and the associated embedded lift. Right: transverse paths around a vertex and the associated lift which is not an embedded lift.}
\label{pic_ex_transverse}
\end{center}
\end{figure}
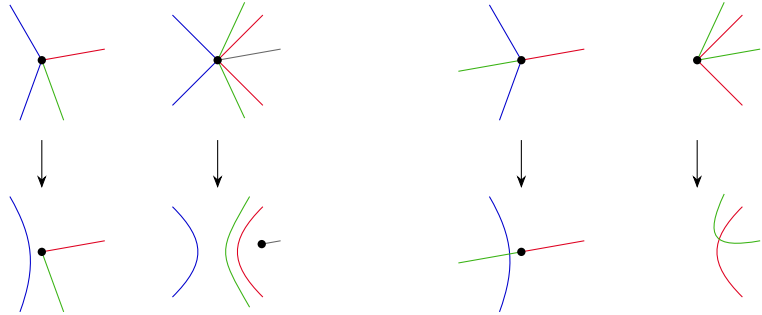

In the planar framework, embedded immersions are exactly captured by ordered immersion for which images of edges are tangent paths, and where the order around vertices is induced by the embedding.

\begin{lemma}\label{lem_equiv_plan_imm}
Let $H$ and $G$ be two plane graphs. Then $H$ is an embedded immersion of $G$ if and only if $H$ is an ordered immersion of $G$ for the circular order induced by the embedding on the plane sphere, with the additional property that the family of paths defined by images of edges is tangent.
\end{lemma}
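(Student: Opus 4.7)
The plan is to prove the two implications separately, using induction on the length of a sequence of operations realizing the embedded immersion in one direction, and induction on the number of \emph{internal passage vertices} of $\phi$ (vertices of $G$ lying in the interior of some image path $\phi(e)$ but not in $\phi(V(H))$) in the other. The role of tangency is precisely to bridge the gap between combinatorial and topological data: it forbids exactly the obstructions to realizing an immersion by a sequence of embedded lifts.

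For the forward direction, vertex and edge deletions clearly preserve both the cyclic order at surviving vertices and the tangency of the surviving image paths, so the only non-trivial step is the embedded lift. A lift at a vertex $w$ of two edges $e=(u,w)$ and $e'=(w,x)$ replaces $(e \cup e') \cap D$ by an arc on $\partial D$, where $D$ is a small disc around $w$ chosen so that the resulting graph is still embedded. By construction, the new edge leaves $u$ and $x$ in the same local directions as $e$ and $e'$ did, so the cyclic orders at $u$ and $x$ are preserved, and the arc bypasses $w$ on a single side of $\partial D$ dictated by planarity. Tracking the immersion map $\phi$ backwards through the sequence of operations and invoking this observation at each lift, an inductive argument shows that $\phi$ satisfies Definition~\ref{def_order} and that the family of image paths is tangent at every vertex of $G$: an internal passage vertex cannot simultaneously be an endpoint of another path because a lifted-away vertex is never in $\phi(V(H))$, and the pairings induced by successive lifts are non-interlaced because each individual lift is realized via a disc.

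For the backward direction, given an ordered immersion $\phi$ whose image paths are tangent, I would first delete every edge of $G$ not used by any image path, and every vertex of $G$ not in $\phi(V(H))$ whose incident edges have all been removed. At each remaining internal passage vertex $v$, since $v \notin \phi(V(H))$, the tangency hypothesis forces all edges at $v$ used by paths to come in $k$ incoming/outgoing pairs, one for each of the $k$ paths through $v$, and the non-interlaced condition is exactly the statement that these pairs form a non-crossing chord diagram on the cyclic arrangement of edges at $v$. Such a diagram always admits a pair of cyclically consecutive edges belonging to the same path, which can be merged by an embedded lift using a sufficiently small disc $D$ around $v$. The main obstacle is to verify that after such a lift the remaining family of paths is still tangent at every vertex, including the new endpoints just created; this holds because removing a cyclically consecutive pair does not alter the relative cyclic order of the remaining pairs, and the disc $D$ can be chosen small enough that no other edge or path is disturbed. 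Iterating this local procedure strictly decreases the number of paths through $v$ (and hence globally exhausts all internal passage vertices), yielding the desired sequence of embedded operations realizing $\phi$.
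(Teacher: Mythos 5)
Your proof is correct and follows essentially the same strategy as the paper's. In the backward direction both arguments rest on the same key observation: at an internal passage vertex, tangency (equivalently, your non-crossing chord-diagram structure) guarantees a cyclically consecutive pair of edges belonging to one path, which can be lifted away by an embedded lift in a small disc without disturbing the tangency of the rest of the family; the forward direction in both cases checks that each embedded lift preserves the cyclic order at the surviving endpoints and cannot create interlacings or endpoint/inner-vertex clashes because it is realized inside a disc around a vertex that is never in the image of $V(H)$.
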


\begin{proof}
Suppose that there exists an embedded immersion $\phi$ of $H$ in $G$. Without loss of generality, we first perform all edge deletions. Afterwards, each embedded lift of two edges $e$ and $e'$ at a common vertex $v$ yields a new embedded edge $e''$, where the embedded edges $\phi |_{I_e}$ and $\phi |_{I_{e}'}$ have been replaced by their merger $\phi |_{I_{e}''}$, \ie their concatenation with a small perturbation near $v$; see Figure~\ref{pic_def_emb_lift}. The set $P$ of edges of $G$ which take part in embedded lifts is a set of edge-disjoint paths and cycles (for lifts creating a self-loop). By definition of embedded lift, $P$ is a family of tangent paths. In addition, the edges at the beginning and end of these paths are unchanged at their starts and ends, respectively, so that the ordering condition around vertices is satisfied and we have an ordered immersion from $H$ in $G$. 

Reciprocally, if $H$ is an ordered immersion of $G$ where the family of paths $P$, which are images of edges, is a tangent family of paths, we first delete all edges of $H$ not part of $P$ and follow by deleting all isolated vertices. Let $v$ be an inner vertex of a path of $P$. At this point, all edges around $v$ are edges of $P$. Since $P$ is a family of tangent paths, there exist two edges $e,e'$ of the same path incident to $v$ and on the boundary of a common face $F$ (hence they are consecutive around $v$). An embedded lift can be performed at $v$ on $e,e'$ by choosing the arc on $F$ so that $e,e'$ are replaced by their merger in $P$. We perform this lift and repeat as long as there remain inner vertices in $P$. We suppress the remaining isolated vertex, by construction, the resulting graph is $H$ and was obtained by the sequence described above.
\end{proof}

In Section~\ref{subsec_im_gm}, we consider embedded directed regular graphs, a plane digraph $H$ is an embedded immersion of a plane digraph $G$ if this is an embedded immersion with the additional constraint that path that are images of edges as expressed in Lemma~\ref{lem_equiv_plan_imm} are directed paths. In term of embedded lift it means that we can only merge edges that follow themselves: the head of the first one is the tail of the second one as illustrated in Figure~\ref{pic_dir_lifts}.

\begin{figure}[ht]
\begin{center}
\begin{tikzpicture}
\def\e{0.045} \def\eb{0.075}

\begin{scope}
\draw (0.5,0) -- ($(0.5,0)+(135:0.75)$) node [midway, sloped] {$\arrowIns{1.5}$};
\draw (0.5,0) -- ($(0.5,0)+(-135:0.75)$) node [midway, sloped] {$\arrowOuts{1.5}$};
\draw (0.5,0) -- ($(0.5,0)+(45:0.75)$) node [midway, sloped] {$\arrowIns{1.5}$};
\draw (0.5,0) -- ($(0.5,0)+(-45:0.75)$) node [midway, sloped] {$\arrowOuts{1.5}$};
\draw [dotted] ($(0.5,0)+(135:0.75)$) -- ++(135:0.25);
\draw [dotted] ($(0.5,0)+(-135:0.75)$) -- ++(-135:0.25);
\draw [dotted] ($(0.5,0)+(45:0.75)$) -- ++(45:0.25);
\draw [dotted] ($(0.5,0)+(-45:0.75)$) -- ++(-45:0.25);

\fill (0.5,0) circle (\eb cm);
\end{scope}

\draw [-Stealth] (1.5,0) -- ++(0.75,0);
\draw [-Stealth] (-0.5,0) -- ++(-0.75,0);

\begin{scope}[xshift = 3cm]
\draw ($(0.5,0)+(135:0.75)$) .. controls +(-45:0.65) and \control{($(0.5,0)+(-135:0.75)$)}{(45:0.65)} node [pos = 0.5, sloped] {$\arrowIns{1.25}$};
\draw [dotted] ($(0.5,0)+(135:0.75)$) -- ++(135:0.25);
\draw [dotted] ($(0.5,0)+(-135:0.75)$) -- ++(-135:0.25);
\draw [dotted] ($(0.5,0)+(45:0.75)$) -- ++(45:0.25);
\draw [dotted] ($(0.5,0)+(-45:0.75)$) -- ++(-45:0.25);
\draw ($(0.5,0)+(45:0.75)$) .. controls +(-135:0.65) and \control{($(0.5,0)+(-45:0.75)$)}{(135:0.65)} node [pos = 0.5, sloped] {$\arrowOuts{1.25}$};
\end{scope}

\begin{scope}[xshift = -3cm]
\draw ($(0.5,0)+(135:0.75)$) .. controls +(-45:0.65) and \control{($(0.5,0)+(45:0.75)$)}{(-135:0.65)} node [pos = 0.5, sloped] {$\arrowIns{1.25}$};
\draw [dotted] ($(0.5,0)+(135:0.75)$) -- ++(135:0.25);
\draw [dotted] ($(0.5,0)+(-135:0.75)$) -- ++(-135:0.25);
\draw [dotted] ($(0.5,0)+(45:0.75)$) -- ++(45:0.25);
\draw [dotted] ($(0.5,0)+(-45:0.75)$) -- ++(-45:0.25);
\draw ($(0.5,0)+(-45:0.75)$) .. controls +(135:0.65) and \control{($(0.5,0)+(-135:0.75)$)}{(45:0.65)} node [pos = 0.5, sloped] {$\arrowOuts{1.25}$};
\end{scope}
\end{tikzpicture}
\caption{Embedded directed lifts.}
\label{pic_dir_lifts}
\end{center}
\end{figure}
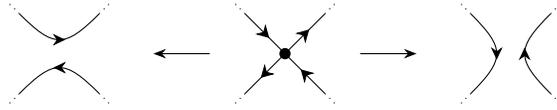 

The same proof as Lemma~\ref{lem_equiv_plan_imm} with directed paths allows us to prove its equivalent in the directed setting: 

\begin{lemma}\label{lem_equiv_plan_imm_dir}
Let $H$ and $G$ be two plane digraphs. Then $H$ is an embedded directed immersion of $G$ if and only if $H$ is an ordered immersion of $G$ for the circular order induced by the embedding on the plane sphere, with the additional property that the family of directed paths defined by images of edges is tangent.
\end{lemma}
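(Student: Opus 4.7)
The plan is to mimic the proof of Lemma~\ref{lem_equiv_plan_imm} almost verbatim, tracking the extra directed-edge bookkeeping throughout. The key observation is that an embedded directed lift is just an embedded lift of two edges $e=(u,v)$ and $e'=(v,w)$ with the compatibility that the head of $e$ equals the tail of $e'$; equivalently, the two edges glued by the lift form a directed sub-path of length $2$. So the whole argument factors cleanly: tangency is a purely topological/ordering condition about how paths behave around vertices, while directedness is an orthogonal condition about orientations of the edges on each such path. I would treat the two directions independently, as in the undirected lemma, and insert the word ``directed'' in front of every occurrence of ``path'' or ``lift''.

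For the forward direction, I would start with a sequence of vertex deletions, edge deletions, and embedded directed lifts realizing $H$ as an embedded directed immersion of $G$. As in Lemma~\ref{lem_equiv_plan_imm}, I would first perform all the deletions and then concentrate on the lifts. Each embedded directed lift at a vertex $v$ merges two consecutive incident edges $e=(u,v)$ and $e'=(v,w)$ whose head and tail match at $v$ into a new edge $e''=(u,w)$, so the image of each edge of $H$ is a directed walk in $G$, and edge-disjointness follows from the fact that each lift consumes two edges of $G$ and produces a new one. Since each individual lift is performed along an arc of a small disc avoiding the other edges around $v$, the resulting family of directed paths is tangent, and the ordering around endpoints is preserved.

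For the reverse direction, I would assume $H$ is an ordered immersion of $G$ whose image is a tangent family $P$ of edge-disjoint \emph{directed} paths. First delete all edges of $G$ not appearing in $P$ and all resulting isolated vertices not in $\phi(V(H))$. Let $v$ be an inner vertex of some directed path of $P$. Since $P$ is tangent at $v$, two incident edges $e,e'$ belonging to a single path of $P$ are consecutive in the circular order around $v$, and lie on the boundary of a common face $F$. Because they belong to the same directed path and $v$ is interior to that path, the head of one is $v$ and the tail of the other is $v$, so performing the embedded lift along the arc in $F$ is an embedded \emph{directed} lift, merging them into a single directed edge of the path. Repeating this process reduces the number of inner vertices of $P$ by one at each step; when no inner vertex remains we delete any vertex of $G$ no longer hit by $\phi$, and what is left is exactly $H$.

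The only step that warrants any care is the alignment of orientations at inner vertices of a path in $P$ when performing the lift, but this is built into the definition of a directed path: at an interior vertex the two incident edges are necessarily oriented head-then-tail along the direction of the path, so the tangent pair selected by the face $F$ is automatically of the form required for an embedded directed lift. No new obstacle appears compared to the undirected case, and the lemma follows.
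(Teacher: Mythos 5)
Your proposal is correct and matches the paper's approach: the paper's own ``proof'' of Lemma~\ref{lem_equiv_plan_imm_dir} is the single-sentence remark that it follows from the same argument as Lemma~\ref{lem_equiv_plan_imm} with directed paths, and you have spelled out exactly what that means. The observation you isolate at the end --- that at an interior vertex of a directed path in the tangent family the two incident edges automatically have the head/tail alignment required for an embedded directed lift, so the tangent pair supplied by the face is eligible for a directed lift with no extra condition --- is the entire substance that the paper leaves implicit.
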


\section{Bond-linked decomposition}
\label{sec_bond_linked}

The aim of this section is to establish the existence of well-structured decompositions on $2$-vertex connected abstract graphs:
These decompositions will be used as primary blocks for decompositions used in Section~\ref{sec_bounded_cw}: the bond property is very useful for embeddings, while the linked property is a cornerstone of Nash-Williams' argument. The existence of decompositions satisfying the linked property has been established in the graph minor series from Robertson and Seymour~\cite{Graph_Minors_IV} and improvements to the methods used there were later developed by Geelen, Gerards, and Whittle~\cite{Geelen_wqo_branchwidth}. The existence of bond carving-decomposition was established in \cite{Seymour_Ratcatcher}. The novelty of this section is to enforce both properties at the same time. To do so we first consider a linked carving-decomposition, and then refine it into being bond by following a direct version of the existence proof of a bond decomposition from~\cite{Seymour_Ratcatcher} and checking that the linked property of the decomposition is preserved. The existence of such decompositions might be of independent interest.

\subsection{Linked decomposition.}
\label{subsec_linked}

Let $A,B \subseteq V(G)$ be two disjoint sets of vertices of $G$; denote $\mcut (A,B) = \min_{A \subset X \subset B^c} |E(X)|$ as the minimum size of a cut separating $A$ and $B$. Let $(T, \phi)$ be a carving-decomposition of a graph $G$. Let $a,b$ be two edges of $T$, and let $P$ be the minimal path in $T$ containing both $a$ and $b$. Denote by ${T}^a_i$ the subtree of $T \smallsetminus a$ that does not contain $b$, and denote by $A \subseteq V(G)$ the vertices of $G$ displayed by $a$ and associated to ${T}^a_i$. Denote similarly the set of vertices $B \subseteq V(G)$ displayed by $b$ and associated to the tree of $T \smallsetminus b$ not containing $a$; see Figure~\ref{pic_def_FG_linked}. The edges $a$ and $b$ are said to be \emphdef{linked} if the minimal width of an edge of $P$ is equal to the minimal cut separating $A$ and $B$, \ie, if $\mcut (A,B) = \min_{e \in P} w(e)$. A carving decomposition is said to be \emphdef{linked} if all its pairs of edges are linked. Note that any edge on $P$ displays a set of vertices separating $A$ and $B$, so that by definition $\mcut (A,B) > \min_{e \in P} w(e)$ is impossible

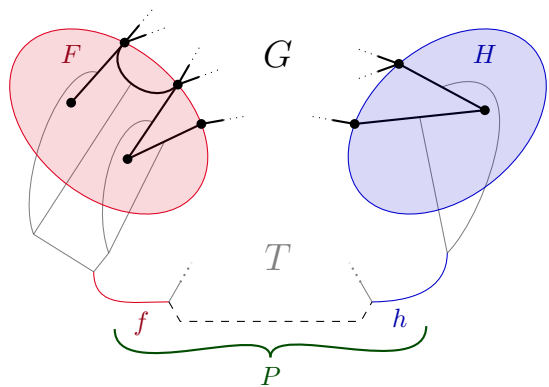
\begin{figure}[ht]
\begin{center}
\begin{tikzpicture}
\def\e{0.065} \def\op{0.15}

\path [rotate around={-40:(0,0)}] (1.5,0) arc (0:50:1.5 and 1) coordinate (c1);
\path [rotate around={-40:(0,0)}] (1.5,0) arc (0:75:1.5 and 1) coordinate (c2);
\path [rotate around={-40:(0,0)}] (1.5,0) arc (0:110:1.5 and 1) coordinate (c3);
\coordinate (c4) at (0.25,-0.5);
\coordinate (c5) at (-0.5,0.25);
\coordinate (c6) at (5,0.15);
\path [rotate around={40:(4.5,0)}] (6,0) arc (0:130:1.5 and 1) coordinate (c7);
\path [rotate around={40:(4.5,0)}] (6,0) arc (0:90:1.5 and 1) coordinate (c8);

\draw [thick] (c4) -- (c1) coordinate [pos = 0.5] (n1);
\draw [thick] (c4) -- (c2) coordinate [pos = 0.5] (n2);
\draw [thick] (c5) -- (c3) coordinate [pos = 0.5] (n3);
\draw [thick] (c3) .. controls +(-135:0.5) and \control{(c2)}{(-135:0.5)} coordinate [pos = 0.5] (n4);
\draw [thick] (c6) -- (c7) coordinate [pos = 0.5] (n5);
\draw [thick] (c6) -- (c8) coordinate [pos = 0.5] (n6);
\draw [thick] (c3) -- ++(50:0.3);
\draw [dotted] ($(c3)+(50:0.3)$) -- ++(50:0.3);
\draw [thick] (c3) -- ++(20:0.3);
\draw [dotted] ($(c3)+(20:0.3)$) -- ++(20:0.3);
\draw [thick] (c2) -- ++(60:0.3);
\draw [dotted] ($(c2)+(60:0.3)$) -- ++(60:0.3);
\draw [thick] (c2) -- ++(20:0.3);
\draw [dotted] ($(c2)+(20:0.3)$) -- ++(20:0.3);
\draw [thick] (c1) -- ++(10:0.3);
\draw [dotted] ($(c1)+(10:0.3)$) -- ++(10:0.3);
\draw [thick] (c7) -- ++(170:0.3);
\draw [dotted] ($(c7)+(170:0.3)$) -- ++(170:0.3);
\draw [thick] (c8) -- ++(160:0.3);
\draw [dotted] ($(c8)+(160:0.3)$) -- ++(160:0.3);
\draw [thick] (c8) -- ++(200:0.3);
\draw [dotted] ($(c8)+(200:0.3)$) -- ++(200:0.3);

\begin{scope}[black, opacity = 0.5]
\draw (n1) -- (0,-1.75);
\draw (n2) .. controls +(160:0.5) and \control{(0,-1.75)}{(110:0.75)};
\draw (n4) -- (-1, -1.5);
\draw (n3) .. controls +(160:0.5) and \control{(-1, -1.5)}{(110:0.75)};
\draw (-1, -1.5) -- (-0.2,-2);
\draw (0,-1.75) -- (-0.2,-2);
\draw (n6) .. controls +(20:1.45) and \control{(4.5,-1.75)}{(45:0.75)};
\draw  (4.5,-1.75) -- (n5);
\draw (0.8,-2.4) -- ++(60:0.3);
\draw [dotted, thick] ($(0.8,-2.4)+(60:0.3)$) -- ++(60:0.3);
\draw (3.5,-2.4) -- ++(120:0.3);
\draw [dotted, thick] ($(3.5,-2.4)+(120:0.3)$) -- ++(120:0.3);
\end{scope}
\draw [red!50!purple] (-0.2,-2) .. controls +(-90:0.5) and \control{(0.8,-2.4)}{(180:0.5)} coordinate [pos = 0.75] (m1);
\draw [blue!80!black] (4.5,-1.75) .. controls +(-90:0.5) and \control{(3.5,-2.4)}{(0:0.5)} coordinate [pos = 0.75] (m2);

\draw [dashed] (0.8,-2.4) -- ++(-60:0.3) -- ($(3.5,-2.4)+(-120:0.3)$) -- (3.5,-2.4);

\filldraw [rotate around={-40:(0,0)}, purple!50!red, fill opacity = \op] (0,0) circle (1.5 and 1);
\filldraw [rotate around={40:(4.5,0)}, blue!80!black, fill opacity = \op] (4.5,0) circle (1.5 and 1);

\foreach \i in {1,...,8}{\fill (c\i) circle (\e cm);};
\node at (-0.5, 0.9) [color={rgb: red!50!purple,2; black,1}] {$F$}; 
\node at (m1) [below, color={rgb: red!50!purple,2; black,1}] {$f$}; 
\node at (m2) [below, color={rgb: blue,2; black,1}] {$h$}; 
\node at (5, 0.9) [color={rgb: blue,2; black,1}] {$H$}; 
\node at (2.25, 0.9) {\Large $G$};
\node at (2.25,-1.8) [opacity = 0.5] {\Large $T$}; 
\draw [thick, green!30!black] ($(m1)+(-135:0.5)$) .. controls +(-90:0.75) and \control{($(m1)!0.5!(m2)+(0,-0.75)$)} {(90:0.5)};
\draw [thick, green!30!black] ($(m2)+(-45:0.5)$) .. controls +(-90:0.75) and \control{($(m1)!0.5!(m2)+(0,-0.75)$)} {(90:0.5)};
\node [below, green!30!black] at ($(m1)!0.5!(m2)+(0,-0.75)$) {$P$}; 
\end{tikzpicture}
\caption{Definition of the sets $F$ and $G$ from the edges $f,h$, and the path $P$.}
\label{pic_def_FG_linked}
\end{center}
\end{figure}

The existence of linked carving-decomposition is non-trivial; in fact, this was a critical point of~\cite{Graph_Minors_IV} and the following Theorem~\ref{th_link_decomp} is mentioned as one of the main contributions of~\cite{Geelen_wqo_branchwidth} (described as linked branch decomposition there, as they work in the broader framework of matroids and submodular functions). For the sake of completeness we include their proof here.

\begin{theorem}{\cite[Theorem 2.1]{Geelen_wqo_branchwidth}}
\label{th_link_decomp}
Let $G$ be a graph of carving-width $k \in \N$. Then, there exists a linked carving-decomposition of $G$ of width $k$.
\end{theorem}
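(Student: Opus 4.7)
The plan is to argue by lexicographic minimality of the width profile. Among all carving-decompositions of $G$ of width $k$, I pick $(T,\phi)$ that lexicographically minimizes the vector $(n_k, n_{k-1}, \ldots, n_1)$, where $n_i$ is the number of tree edges of width exactly $i$; such a minimizer exists since the profile takes finitely many values for a fixed $G$. Suppose for contradiction it is not linked, so there are tree edges $a$ and $b$ with associated displayed sets $A, B \subseteq V(G)$, whose $T$-path $P$ satisfies
\[ \mcut(A, B) < m := \min_{e \in P} \wid(e). \]

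Choose $c \in P$ with $\wid(c) = m$, inducing the partition $(A^c, B^c)$ where $A \subseteq A^c$ and $B \subseteq B^c$, and pick a minimum $(A,B)$-cut $X$, so $A \subseteq X \subseteq V(G) \setminus B$ and $|E(X)| < m$. The submodular inequality gives
\[ |E(X \cap A^c)| + |E(X \cup A^c)| \le |E(X)| + |E(A^c)| < 2m. \]
Both $X \cap A^c$ and $X \cup A^c$ still separate $A$ from $B$; after possibly swapping the roles of $A$ with $B$ and of $X$ with its complement, this produces a set $Y$ with $A \subseteq Y \subsetneq A^c$ and $|E(Y)| < m$.

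I then modify $(T, \phi)$ by \emph{uncrossing} along $X$: inside the subtree of $T$ on the $A^c$ side of $c$, I separate the leaves labelled by $Y$ from those labelled by $A^c \setminus Y$, and move the latter group across $c$ to attach inside the $B^c$ side; after re-binarizing, the resulting decomposition $T'$ carries an edge whose associated cut is $E(Y)$, of width strictly less than $m$. Using submodularity on each internal edge of the rearranged subtrees, each new cut is either an old cut, or an intersection/union of an old cut with $X$ or $X^c$; a careful bookkeeping (choosing the $\cap$- versus $\cup$-variant consistently along the tree) shows that no new edge has width exceeding $k$, that every $n_i$ for $i > m$ is unchanged, and that $n_m$ strictly decreases. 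This contradicts the lex-minimality of $(T, \phi)$.

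The main obstacle is exactly this bookkeeping step: after uncrossing along $X$, every edge of $T'$ corresponds to a cut obtained from an old cut by a single operation with $X$ (or $X^c$), and one must verify that these choices can be propagated coherently through every internal edge of the two affected subtrees so that the resulting widths stay bounded by the maximum of the old widths and $|E(X)|$. This is the delicate submodular argument underlying Geelen--Gerards--Whittle's proof, and it is the technical heart of the theorem; the global strategy (lex-minimise, then uncross at a violating pair) is otherwise quite robust and is what I would carry out.
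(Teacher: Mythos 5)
Your global strategy — fix a decomposition that minimizes a width-profile potential, assume a non-linked pair $(a,b)$, and uncross along a small $(A,B)$-cut $X$ to contradict minimality — is the same strategy the paper uses (following Geelen–Gerards–Whittle). However there are three concrete gaps, and the first one is not merely bookkeeping.

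\textbf{The potential is too weak.} Your lex-order on $(n_k,\ldots,n_1)$ is equivalent to ordering the cumulative edge counts $|E(T_i)|$ (number of tree-edges of width at least $i$) colexicographically. The paper's order $<_w$ refines this by the number of connected components $|C(T_i)|$ of the subforest $T_i$, and this refinement is essential: after the paper's rebranching, one cannot in general guarantee that $|E(\hat{T}_i)|$ strictly drops at some $i > \mcut(A,B)$. The equality case is resolved only by showing that if every $|E(\hat{T}_i)|$ is unchanged then some $|C(\hat{T}_i)|$ strictly increases (otherwise the component of $\hat{T}_j$ containing the copy of the path $P$ would contain copies of both $a$ and $b$, contradicting that the new joining edge has width $j-1$). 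Your claim that ``$n_m$ strictly decreases'' is precisely the assertion that this equality case never occurs, and that assertion is unproved and, as the paper's argument indicates, not generally valid. So even granting all the bookkeeping, the lex-minimality of your simpler profile does not obviously give a contradiction.

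\textbf{A missing extremal choice of $X$.} The paper selects $X$ not only of minimum cut size but also so that it \emph{splits} the fewest displayed vertex sets $V_i(c)$ among all minimum $(A,B)$-cuts. This secondary minimality is what makes the submodularity argument close: it is used to show that whenever a copied edge $\hat{e}$ satisfies $w(\hat{e}) = w(e)$, the second copy of $e$ (if present) has width at most $\mcut(A,B)$. Your proposal doesn't impose any such condition on $X$, and the ``consistent choice of $\cap$- versus $\cup$-variant along the tree'' you allude to is exactly what this extremal choice buys.

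\textbf{The construction is underspecified.} The paper's new tree $\hat{T}$ is concrete: copies of the subtrees $T_2(a)$ and $T_2(b)$ joined at their degree-2 vertices by a new edge, with $\hat\phi$ keeping labels inside $X$ on one copy and $X^c$ on the other. Every new edge is then a copy of an old one (possibly two copies), so submodularity bounds its width directly against the original. Your ``separate the leaves labelled by $Y$ from $A^c\setminus Y$, move the latter group across $c$, and re-binarize'' leaves the internal structure of the rearranged subtrees entirely open, and it is precisely the widths of those internal edges that the theorem is about. You acknowledge the bookkeeping is undone, but note that the choice of construction and the choice of potential are intertwined: a vague re-binarization gives you no purchase on where the widths go, and the missing component-count refinement is what the paper leans on when the edge counts stall.
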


Intuitively, a linked carving-decomposition is a carving-decomposition where an edge should have a width as low as possible since for each path joining two edges, an edge should achieve the minimal possible cut. Hence, to find one, it is natural to try to minimise the width of each edge of a decomposition.

Let $(T,\phi)$ be a carving-decomposition of $G$. We denote by $T_k$ the subforest of $T$ induced by the union of edges of width $\geq k$, and write $C(T_k)$ for the set of connected components of $T_k$. Let us consider a strict partial order $<_w$ on carving-decompositions of $G$, such that $(T, \phi) <_w (T',\phi')$ if there exists $k \in \N$ such that: 
\[
  \begin{array}{l}
    \text{at weight} \ k: \left\{\begin{array}{ll}
          \text{either} & \ |E(T_k)| < |E(T'_k)|\\
          \text{or} & \ |E(T_k)| = |E(T'_k)| \ \text{and} \ |C(T_k)| > |C(T_k')|
          \end{array}\right.\\[10pt]
    \text{and for all }\ell > k:  |E(T_\ell)| = |E(T'_\ell)| \ \text{and} \ |C(T_\ell)| = |C(T'_\ell)|.\\
  \end{array}
\]

When there is no ambiguity, we simply write $T <_w T'$. On a high level, $<_w$ is a refinement of a colexicographic order on the tuples $(|w^{-1}(k)|)_{0 \leq k \leq |E(G)|}$.

\begin{lemma}\label{lem_linked_min}
Let $G$ be a graph and $(T,\phi)$ be a carving-decomposition of $G$. If $(T,\phi)$ is not linked, then there exists a carving-decomposition $(\hat{T}, \hat{\phi})$ of $G$ such that $\hat{T} <_w T$.
\end{lemma}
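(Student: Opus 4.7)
The plan is to construct $(\hat T, \hat\phi)$ by a local surgery on $T$ at a witness of non-linkedness, driven by submodularity of the edge-cut function.

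First, since $(T,\phi)$ is not linked, I pick edges $a,b \in E(T)$ with displayed sets $A,B$ satisfying $m := \mcut(A,B) < k := \min_{e \in P} \wid(e)$, where $P$ is the $a$--$b$ path in $T$. Among all such witnesses I choose one where $k$ is as large as possible. Let $e^* \in P$ achieve this minimum, with displayed cut $(V_1,V_2)$ of size $k$ and $A \subseteq V_1$, $B \subseteq V_2$. Let $(Y,Y^c)$ be any minimum cut separating $A$ and $B$, so $A \subseteq Y$, $B \subseteq Y^c$, and $|E(Y,Y^c)| = m$. Submodularity of the edge-cut function gives
\[
k + m \;=\; |E(V_1)| + |E(Y)| \;\geq\; |E(V_1 \cap Y)| + |E(V_1 \cup Y)|,
\]
and since both $V_1 \cap Y$ and $V_1 \cup Y$ separate $A$ from $B$, each summand on the right is at least $m$. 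Hence at least one of them is strictly less than $k$ (otherwise the sum would be at least $2k > k+m$). Up to swapping $V_1 \leftrightarrow V_2$ and $Y \leftrightarrow Y^c$, I may therefore assume $Y \subsetneq V_1$ with $|E(Y,Y^c)| < k$.

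Next, I rewire $T$ so that $(Y,Y^c)$ appears at a single tree edge in place of the oversized cut at $e^*$. The edges of $T_1^{e^*}$ display cuts whose $V_1$-parts form a laminar family of subsets of $V_1$; iteratively uncrossing $Y$ against each member of this family (again via submodularity) produces a new binary tree $T_1'$ on leaves $V_1$ whose edges display a modified laminar family containing $Y$, and each of whose edges has width at most that of the edge of $T_1^{e^*}$ it replaces. I then define $\hat T$ by grafting $T_1'$ onto $T_2^{e^*}$ through a new edge $\hat e$ replacing $e^*$, whose two sides display exactly $Y$ and $Y^c$; the labelling $\hat \phi$ is inherited from $\phi$.

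Finally, to see that $\hat T <_w T$: the edge $\hat e$ has width $|E(Y,Y^c)| < k$, no edge of $\hat T$ has width greater than that of the edge of $T$ it replaces, and the edge $e^*$ of width $k$ has been removed. Consequently $|E(\hat T_\ell)| \leq |E(T_\ell)|$ for every $\ell$, with equality for $\ell > k$ and strict inequality at $\ell = k$, yielding $\hat T <_w T$. The main obstacle is the middle step: one must verify that the iterated uncrossings of $Y$ against the laminar family of cuts in $T_1^{e^*}$ can be assembled into a valid binary tree in which $Y$ becomes a single edge-cut, while globally controlling every new edge-width. Coordinating these uncrossings consistently --- so that each replacement cut stays nested with the next and the resulting widths never exceed the originals --- is the principal technical hurdle, and the maximality of $k$ in the witness selection should be what prevents unavoidable growth past $k$.
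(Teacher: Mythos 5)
Your overall strategy — find a crossing cut of smaller width, realize it as a tree edge, argue the potential drops — is the right shape, and your use of submodularity at the start is sound. But the proposal has two genuine gaps, one of which you flag yourself.

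\textbf{The uncrossing step is not a proof.} You assert that iteratively uncrossing $Y$ against the laminar family displayed by $T_1^{e^*}$ yields a new binary tree $T_1'$ in which (i) $Y$ appears as a displayed set, (ii) every edge has width no larger than ``the edge it replaces,'' and (iii) the result glues to $T_2^{e^*}$ to form a valid carving-decomposition. None of (i)--(iii) is immediate, and you acknowledge (ii) is ``the principal technical hurdle.'' There is no canonical bijection between edges of two different binary trees on the same leaf set, so ``the edge it replaces'' is undefined until you specify the rebuilding, and a single pairwise uncrossing only controls one of $|E(W\cap Y)|,\,|E(W\cup Y)|$; coordinating dozens of such choices so that they nest into a single tree while the widths stay dominated is precisely the content of the lemma. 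The paper sidesteps this entirely: instead of restructuring $T_1^{e^*}$, it takes copies of the two \emph{far} subtrees $T_2(a)$ and $T_2(b)$, joins them by one new edge, and uses a \emph{partial labelling} ($\hat\phi(v)$ defined only when $\phi(v)\in X$ on one side, $\phi(v)\in X^c$ on the other). Every new cut is then of the simple form $W\cap X$ (or $W\cap X^c$), and submodularity plus $\mcut(A,B)\le |E(X\cup W)|$ gives the width bound $\wid(\hat e)\le\wid(e)$ in one line, with no uncrossing or laminar-family bookkeeping. Your ``maximality of $k$'' selection plays no role in making the uncrossing work; the paper instead chooses $X$ to split a minimum number of displayed sets, which is what the later case analysis actually needs.

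\textbf{The $<_w$ conclusion ignores the component count.} The order $<_w$ requires finding a weight $k'$ with strictly fewer edges \emph{or} equal edges and strictly more connected components in $\hat T_{k'}$, together with equality of \emph{both} $|E(\cdot_\ell)|$ and $|C(\cdot_\ell)|$ for all $\ell>k'$. You only argue about $|E(\hat T_\ell)|$. Even granting your width claim, monotone non-increase of widths does not force $|C(\hat T_\ell)|=|C(T_\ell)|$ above the chosen threshold, and the possibility $|E(\hat T_\ell)|=|E(T_\ell)|$ with $|C(\hat T_\ell)|\ne |C(T_\ell)|$ must be handled. The paper devotes a full paragraph to exactly this: it shows that if for all $i>\mcut(A,B)$ both counts are unchanged, then the component of $T_j$ (with $j=\mcut(A,B)+1$) containing the path $P$ is copied edge-for-edge into $\hat T_j$, which would force the copies of $a$ and $b$ to still lie in one component of $\hat T_j$ — impossible since the new middle edge has width $\mcut(A,B)<j$. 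That contradiction is load-bearing and is missing from your argument.

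A minor point: after deducing that one of $|E(V_1\cap Y)|,\,|E(V_1\cup Y)|$ is below $k$, you ``may assume $Y\subsetneq V_1$.'' The replacement cut need not be a \emph{minimum} $A$--$B$ cut, only one of size $<k$; your later steps don't actually use minimality of $Y$, so this is fine, but it should be stated as ``a cut of size $<k$'' rather than implying $Y$ remains a minimum cut.
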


\begin{proof}
Let $(T,\phi)$ be a carving decomposition of $G$ that is not linked, and let $a$ and $b$ be edges of $T$ which are not linked. Let $A$ be the set of displayed vertices by the tree of $T \smallsetminus a$ not containing $b$, $B$ be the set of displayed vertices by the tree of $T \smallsetminus b$ not containing $a$, and $P$ be the minimal path of $T$ containing both $a$ and $b$. Up to re-indexing, write $A = V_1(a) = \phi^{-1} (L(T_1 (a)))$ and $B = V_1 (b)$.

For two sets of vertices $X,Y \subseteq V(G)$, we say that $X$ \emphdef{splits} $Y$ if $Y \smallsetminus X \neq \varnothing$ and $X \cap Y \neq \varnothing$ (this relation is not reflexive). Pick $X \subseteq V(G)$ such that $A \subset X \subset B^c$, $|E(X)| = \mcut (A,B)$, and such that $X$ that splits a minimum number of sets of vertices displayed by edges of $T$, \ie, that splits a minimum number of sets $V_i(c)$ for $c \in E(T)$ and $i \in \{ 1,2 \}$. 

We now define a new carving-decomposition $(\hat{T}, \hat{\phi})$ of $G$. The subtrees $T_2(b)$ and $T_2(a)$ both have exactly $1$ vertex of degree $2$, the ones incident to $b$ and $a$ in $T$ respectively ; see Figure~\ref{pic_proof_linked_decomp}. We define $\hat{T}$ by taking a copy of $T_2(b)$ and a copy of $T_2(a)$, and connecting their vertices of degree $2$ by a new edge. Then, we (partially) define the function $\hat{\phi}$ on $L(\hat{T})$ by $\hat{\phi}(v) = \phi(v)$ if $v$ belongs to the copy of $T_2(b)$ and $\phi(v) \in X$, or if $v$ belongs to the copy of $T_2(a)$ and $\phi(v) \in X^c$, and $\hat{\phi}(v)$ is undefined otherwise (see Figure~\ref{pic_proof_linked_decomp}).

\begin{figure}[ht]
\begin{center}
\begin{tikzpicture}[scale = 1.20]
\def\e{0.05} \def\el{0.2} \def\exl{0.3} \def\se{0.5} \def\ses{0.33}
\def\op{1} \def\ei{0.03} 

\clip (-2.15,-1.7) rectangle (10.05,1.72);

\coordinate (d0) at (0,0);
\coordinate (d1) at (30:\se);
\coordinate (d2) at (150:\se);
\coordinate (d3) at (-90:\ses);
\coordinate (d4) at ($(d1)+(90:\se)$);
\coordinate (d5) at ($(d1)+(-30:\se)$);
\coordinate (d6) at ($(d2)+(90:\se)$);
\coordinate (d7) at ($(d2)+(-150:\se)$);
\coordinate (d8) at ($(d7)+(-90:\se)$);
\coordinate (d9) at ($(d7)+(150:\se)$);
\coordinate (d10) at ($(d4)+(30:\se)$);
\coordinate (d11) at ($(d4)+(150:\ses)$);
\coordinate (d12) at ($(d8)+(-30:\ses)$);
\coordinate (d13) at ($(d8)+(-150:\ses)$);
\coordinate (d14) at ($(d10)+(90:\ses)$);
\coordinate (d15) at ($(d10)+(-30:\ses)$);
\coordinate (d16) at ($(d5)+(30:\se)$);
\coordinate (d17) at ($(d5)+(-90:\se)$);
\coordinate (d18) at ($(d17)+(-150:\ses)$);
\coordinate (d19) at ($(d17)+(-30:\ses)$);
\coordinate (d20) at ($(d18)+(150:\ses)$);
\coordinate (d21) at ($(d18)+(-90:\ses)$);
\coordinate (d22) at ($(d16)+(90:\ses)$);
\coordinate (d23) at ($(d16)+(-30:\se)$);
\coordinate (d24) at ($(d23)+(30:\ses)$);
\coordinate (d25) at ($(d23)+(-90:\ses)$);

\draw (d0) -- (d1); 
\draw (d0) -- (d2); 
\draw (d0) -- (d3); 
\draw (d1) -- (d4); 
\draw (d1) -- (d5); 
\draw (d2) -- (d6); 
\draw [blue!80!black, thick] (d2) -- (d7); 
\draw (d7) -- (d8); 
\draw (d7) -- (d9); 
\draw (d4) -- (d10); 
\draw (d4) -- (d11); 
\draw (d8) -- (d12); 
\draw (d8) -- (d13); 
\draw (d10) -- (d14);
\draw (d10) -- (d15);
\draw (d5) -- (d16);
\draw (d5) -- (d17);
\draw (d17) -- (d18);
\draw (d17) -- (d19);
\draw (d18) -- (d20);
\draw (d18) -- (d21);
\draw (d16) -- (d22);
\draw [red!70!black, thick] (d16) -- (d23);
\draw (d23) -- (d24);
\draw (d23) -- (d25);

\foreach \i in {0,1,2,4,5,7,8,10,16,17,18,23}{\fill (d\i) circle (\ei cm);};
\foreach \i in {3,6,9,11,12,13,14,15,19,20,21,22,24,25}{\fill (d\i) circle (\e cm);};
\foreach \i in {3,6,9,12,13,19,22}{\fill (d\i) [green!70!purple] circle (\e*0.75 cm);};

\draw [blue!80!black, smooth cycle, tension = 0.6, opacity = \op] plot coordinates {($(d9)+(120:\el)$) ($(d13)+(-150:\el*0.5)$) ($(d12)+(-30:\el)$) ($(d7)$)};
\draw [blue!80!black, smooth cycle, tension = 0.8, opacity = \op] plot coordinates {($(d6)+(90:\exl)$) ($(d14)+(90:\exl)$) ($(d24)+(0:\exl)$) (2.4,-0.75) ($(d21)+(-60:\exl)$) ($(d3)+(-90:\el)$) ($(d2)$)};
\draw [red!70!black, smooth cycle, tension = 0.8, opacity = \op] plot coordinates {($(d23)$) ($(d25)+(-105:\el)$) ($(d24)+(45:\el)$)};
\draw [red!70!black, smooth cycle, tension = 0.5, opacity = \op] plot coordinates {($(d16)$) ($(d22)+(30:\el)$) ($(d14)+(90:\el*0.5)$) ($(d9)+(150:\exl)$) (-2.1,-0.25) ($(d13)+(-150:\exl)$) ($(d21)+(-90:\el*0.5)$) ($(d19)+(-30:\el)$)};

\node at ($(d2)+(-150:\se*0.5)+(120:0.15)$) [blue!50!black] {$a$};
\node at (-1.75,-0.25) [blue!80!black] {\footnotesize $T_1(a)$};
\node at (2.15,1) [blue!80!black] {\footnotesize $T_2(a)$};
\node at ($(d16)+(-30:\se*0.5)+(60:0.2)$) [red!70!black] {$b$};
\node at (2.25,-0.3) [red!70!black] {\footnotesize $T_1(b)$};
\node at (-1.1,0.95) [red!70!black] {\footnotesize $T_2(b)$};
\node at (0,-1.5) {$T$};
 
\draw [-Stealth] (2.5,0.5) -- +(1,0);

\begin{scope}[xshift = 5cm]
\coordinate (d0) at (0,0);
\coordinate (d1) at (30:\se);
\coordinate (d2) at (150:\se);
\coordinate (d3) at (-90:\ses);
\coordinate (d4) at ($(d1)+(90:\se)$);
\coordinate (d5) at ($(d1)+(-30:\se)$);
\coordinate (d6) at ($(d2)+(90:\se)$);
\coordinate (d7) at ($(d2)+(-150:\se)$);
\coordinate (d8) at ($(d7)+(-90:\se)$);
\coordinate (d9) at ($(d7)+(150:\se)$);
\coordinate (d10) at ($(d4)+(30:\se)$);
\coordinate (d11) at ($(d4)+(150:\ses)$);
\coordinate (d12) at ($(d8)+(-30:\ses)$);
\coordinate (d13) at ($(d8)+(-150:\ses)$);
\coordinate (d14) at ($(d10)+(90:\ses)$);
\coordinate (d15) at ($(d10)+(-30:\ses)$);
\coordinate (d16) at ($(d5)+(30:\se)$);
\coordinate (d17) at ($(d5)+(-90:\se)$);
\coordinate (d18) at ($(d17)+(-150:\ses)$);
\coordinate (d19) at ($(d17)+(-30:\ses)$);
\coordinate (d20) at ($(d18)+(150:\ses)$);
\coordinate (d21) at ($(d18)+(-90:\ses)$);
\coordinate (d22) at ($(d16)+(90:\ses)$);

\draw (d0) -- (d1); 
\draw (d0) -- (d2); 
\draw (d0) -- (d3); 
\draw (d1) -- (d4); 
\draw (d1) -- (d5); 
\draw (d2) -- (d6); 
\draw [blue!80!black, thick] (d2) -- (d7); 
\draw (d7) -- (d8); 
\draw (d7) -- (d9); 
\draw (d4) -- (d10); 
\draw (d4) -- (d11); 
\draw (d8) -- (d12); 
\draw (d8) -- (d13); 
\draw (d10) -- (d14);
\draw (d10) -- (d15);
\draw (d5) -- (d16);
\draw (d5) -- (d17);
\draw (d17) -- (d18);
\draw (d17) -- (d19);
\draw (d18) -- (d20);
\draw (d18) -- (d21);
\draw (d16) -- (d22);

\foreach \i in {0,1,2,4,5,7,8,10,16,17,18,23}{\fill (d\i) circle (\ei cm);};
\foreach \i in {3,6,9,12,13,19,22}{\fill (d\i) circle (\e cm);};
\foreach \i in {3,6,9,12,13,19,22}{\fill (d\i) [green!70!purple] circle (\e*0.75 cm);};

\draw [red!70!black, smooth cycle, tension = 0.5, opacity = \op] plot coordinates {($(d16)$) ($(d22)+(30:\el)$) ($(d14)+(90:\el*0.5)$) ($(d9)+(150:\el)$) ($(d13)+(-150:\el)$) ($(d21)+(-90:\el*0.5)$) ($(d19)+(-30:\el)$)};

\node at ($(d2)+(-150:\se*0.5)+(120:0.15)$) [blue!50!black] {$a$};
\node at (-1.1,0.95) [red!70!black] {\footnotesize $T_2(b)$};
\draw ($(2.4,0)+(150:\se)$) -- (d16);
\node at (1.75,-1.5) {$\hat{T}$};

\begin{scope}[xshift = 2.4cm]
\coordinate (d0) at (0,0);
\coordinate (d1) at (30:\se);
\coordinate (d2) at (150:\se);
\coordinate (d3) at (-90:\ses);
\coordinate (d4) at ($(d1)+(90:\se)$);
\coordinate (d5) at ($(d1)+(-30:\se)$);
\coordinate (d6) at ($(d2)+(90:\se)$);
\coordinate (d10) at ($(d4)+(30:\se)$);
\coordinate (d11) at ($(d4)+(150:\ses)$);
\coordinate (d14) at ($(d10)+(90:\ses)$);
\coordinate (d15) at ($(d10)+(-30:\ses)$);
\coordinate (d16) at ($(d5)+(30:\se)$);
\coordinate (d17) at ($(d5)+(-90:\se)$);
\coordinate (d18) at ($(d17)+(-150:\ses)$);
\coordinate (d19) at ($(d17)+(-30:\ses)$);
\coordinate (d20) at ($(d18)+(150:\ses)$);
\coordinate (d21) at ($(d18)+(-90:\ses)$);
\coordinate (d22) at ($(d16)+(90:\ses)$);
\coordinate (d23) at ($(d16)+(-30:\se)$);
\coordinate (d24) at ($(d23)+(30:\ses)$);
\coordinate (d25) at ($(d23)+(-90:\ses)$);

\foreach \i in {0,1,2,4,5,10,16,17,18,23}{\fill (d\i) circle (\ei cm);};
\foreach \i in {11,14,15,20,21,24,25}{\fill (d\i) circle (\e cm);};

\draw (d0) -- (d1); 
\draw (d0) -- (d2); 
\draw (d0) -- (d3); 
\draw (d1) -- (d4); 
\draw (d1) -- (d5); 
\draw (d2) -- (d6);
\draw (d4) -- (d10); 
\draw (d4) -- (d11); 
\draw (d10) -- (d14);
\draw (d10) -- (d15);
\draw (d5) -- (d16);
\draw (d5) -- (d17);
\draw (d17) -- (d18);
\draw (d17) -- (d19);
\draw (d18) -- (d20);
\draw (d18) -- (d21);
\draw (d16) -- (d22);
\draw [red!70!black, thick] (d16) -- (d23);
\draw (d23) -- (d24);
\draw (d23) -- (d25);

\node at ($(d16)+(-30:\se*0.5)+(60:0.2)$) [red!70!black] {$b$};
\node at (2.15,1) [blue!80!black] {\footnotesize $T_2(a)$};
\draw [blue!80!black, smooth cycle, tension = 0.8, opacity = \op] plot coordinates {($(d6)+(90:\exl)$) ($(d14)+(90:\exl)$) ($(d24)+(0:\exl)$) (2.4,-0.75) ($(d21)+(-60:\exl)$) ($(d3)+(-90:\el)$) ($(d2)$)};
\end{scope}
\end{scope}
\end{tikzpicture}
\caption{Construction of $\hat{T}$ from $T$. Green leaves are labelled by elements of $X$ while black ones belong to $X^c$. Leaves with no points are unlabelled.}
\label{pic_proof_linked_decomp}
\end{center}
\end{figure}
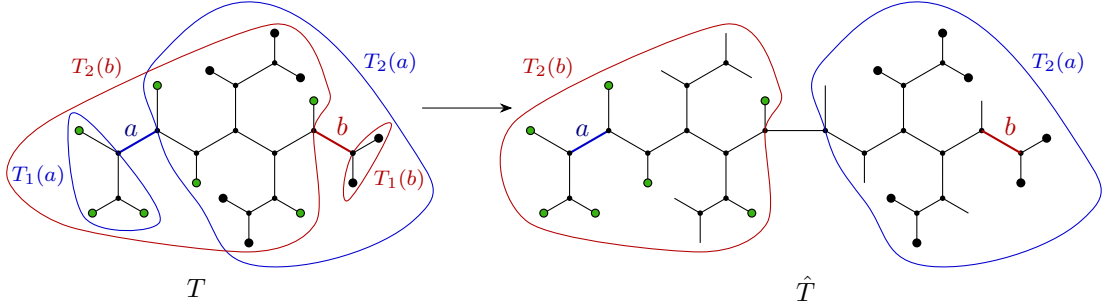

Let $\hat{e}$ be an edge of $\hat{T}$ that is the copy of an edge $e$ in $T$ (note that there might be a second copy of $e$ in $\hat{T}$). If $\hat{e} \in T_2(b)$, let us assume by symmetry that $W = V_1(e)$ is the set of vertices of $G$ displayed by the tree of $T \smallsetminus e$ not containing $b$. By definition, $\wid (e) = |E(W)|$ and $\wid (\hat{e}) = | E( X \cap W ) |$. Note that for any set of vertices $X, Y$ of $V(G)$, a double counting argument on $E(X)$ and $E(Y)$ yields $|E(X \cap Y)| + |E(X \cup Y)| = |E(X)| + |E(Y)|$ (submodularity of $|E(\cdot)|$ is in fact enough). Then, $\wid(\hat{e}) + |E(X \cup W)| = |E(X \cap W)| + |E(X \cup W)| = |E(X)| + |E(W)| = \mcut (A,B) + \wid(e)$. Since $W \subset B^c$ by definition of $W$, $A \subset X \cup W \subset B^c$ so that $\mcut (A,B) \leq |E(X \cup W)|$. Hence, $\wid(\hat{e}) \leq \wid(e)$ with equality if and only if $\mcut (A,B) = |E(X \cup W)|$.

Now let us assume that $w(\hat{e}) = w(e)$ and that there is a second copy $e^*$ of $e$ in $T_2(a)$. We saw that $\mcut (A,B) = |E(X \cup W)|$ so that by definition of $X$, $W \cup X$ splits at least as many displayed sets as $X$. Let us assume by contradiction that $X$ splits $W$. Let $Y$ be a set of vertices of $G$ displayed by some edge of $T$. If $W \cup X$ splits $Y$, then $Y \smallsetminus (W \cup X) \neq \varnothing$, which implies that $Y \smallsetminus X \neq \varnothing$. It also follows that $Y$ is not contained in $X$. Since displayed sets are either disjoint or satisfy a containment relation, either $W \subset Y$, which involves that $\varnothing \varsubsetneq W \cap X \subset Y$, or $W$ and $Y$ are disjoint, which involves that $Y \cap (W \cup X) = Y \cap X \neq \varnothing$. In both cases, $X$ splits $Y$ and $X$ splits all displayed sets split by $W$, but this is absurd since $X$ splits in addition $W$. Hence $X$ does not split $W$: either $W \smallsetminus X = \varnothing$ or $W \cap X = \varnothing$, the former is true since $w(\hat{e}) = |E(W \cap X)| = w(e)$. If $e$ is in $P$, then $w(e^*) = |E(W \cup X)| = \mcut (A,B)$, otherwise $w(e^*) = |E(W \smallsetminus X)| = |E(\varnothing)| = 0$. Hence, at most one copy of $e$ has width greater than $\mcut (A,B)$ if $w(\hat{e}) = w(e)$.

If $\hat{e} \in T_2(a)$, the arguments above apply to conclude that $\wid(\hat{e}) \leq \wid(e)$ if and only if $\mcut (A,B) = |E(X^c \cup W)|$ where $W = V_2(e)$. Then, noticing that $Y \smallsetminus X \neq \varnothing \Leftrightarrow Y \cap X^c \neq \varnothing$ and $Y \cap X \neq \varnothing \Leftrightarrow Y \smallsetminus X^c \neq \varnothing$, $X^c$ splits as many sets as $X$, and we reach the same conclusion: $w(\hat{e}) \leq w(e)$ and at most one copy of $e$ has width greater than $\mcut (A,B)$ if $w(\hat{e}) = w(e)$. 

The results above imply that for $i > \mcut(A,B)$, $|E(\hat{T}_i)| < |E(T_i)|$. By construction of $\hat{T}$, each connected component of $T_i$ is copied as at least a component of $\hat{T}$, which can be separated into several ones in $\hat{T}_i$ since the weight of these edges cannot increase (or disappear entirely if all edges have small enough weight). At this point either there exists $i > \mcut (A,B)$ allowing us to conclude that $\hat{T} <_w T$ or for all $i > \mcut (A,B)$, $|E(\hat{T}_i)| = |E(T_i)|$ and $|C(\hat{T}_i)| = |C(T_i)|$. Let us see that this latter option leads to a contradiction. Set $j = \mcut(A,B) +1$, we know that $|C(\hat{T}_j)| = |C(T_j)|$ and $|E(\hat{T}_j)| = |E(T_j)|$. Hence, every component $C_j \in C(T_j)$ has a copy $\hat{C}_j$ in $\hat{T}$, where each edge of $C_j \in C(T_j)$ is present in $\hat{C_j}$. In particular the component $C$ of $T_j$ containing $P$ is copied as $\hat{C}$ in $\hat{T}_j$, which contains the copy of $P$ and the copies of $a$ and $b$. This is absurd: $w(n) = j-1$, the copies of $a$ and $b$ cannot lie in the same component of $\hat{T}_j$. Hence, $\hat{T} <_w T$.
\end{proof}

\begin{proof}[Proof of Theorem~\ref{th_link_decomp}]
By definition of $<_w$, if $T' <_w T$, then $w(T') \leq w(T)$. As mentioned in Section~\ref{subsec_backg_graph}, one can remove unlabelled edges of a carving-decomposition $T$. Doing so yields a carving-decomposition $T'$ with less edges than $T$ without increasing the weight of each edge so that $T' <_w T$ if any edge is removed. Furthermore all carving-decompositions with no unlabelled edges have the same number of edges, there is a finite number of such decomposition.
Consequently, applying iteratively Lemma~\ref{lem_linked_min} and suppressing unlabelled edges to a carving-decomposition of $G$ of minimal width $\cwid(G)$ eventually yields a minimal element of $<_w$. This minimal element has width $\cwid(G)$, and is linked by Lemma~\ref{lem_linked_min}.
\end{proof}

\subsection{Bond decomposition.}
\label{subsec_bond}
Let $G$ be a connected graph, a carving-decomposition $(T, \phi)$ of $G$ is \emphdef{bond} if for all edges $a \in E(T)$, the subgraphs of $G$ displayed by $a$, $G_1^a$ and $G_2^a$, are connected. An edge not satisfying this condition in a carving-decomposition is said to \emphdef{produce a non-bond cut}. Hence, we need to restrict ourselves to $2$-vertex-connected graphs since edges labeling cut-vertices produce non-bond-cuts.

In the following, we consider only carving decompositions with no unlabelled edges, this is always possible as explained in Section~\ref{subsec_backg_graph}. We will constructively modify carving-decompositions one edge at a time, while updating the labeling $\phi$. We keep the notation $T$ for all updated trees. 

Notice that, for two disjoint sets of vertices $A,B \subseteq(G)$ inducing connected subgraphs $G[A]$ and $G[B]$, the subgraph $G[A\cup B]$ induced by $A \cup B$ is connected if and only if the cut between $A$ and $B$ is non-empty, \ie, $E(A,B) \neq \varnothing$. 

Let $v$ be an inner vertex of $T$, and denote by $(T^v_i)_{i \in \{1,2,3\}}$ the subtrees of $T$ rooted at the $3$ neighbors of $v$. These subtrees partition vertices of $G$ into $3$ sets $V^v_i = \phi^{-1}(L(T_i))$, and splits $G$ into $3$ subgraphs $G_i^v = G[V^v_i]$ for $i \in \{1,2,3\}$; see Figure~\ref{pic_cut_scheme}.

\begin{figure}[ht]
\begin{center}
\begin{tikzpicture}[scale = 0.75]
\filldraw [green!70!purple, fill opacity = 0.2] (0,1.9) circle (2 and 1);
\node [green!70!purple] at (0,1.8) {$G_2^v$};
\draw (0,0) -- (0,1) -- ++(30:1);
\draw (0,1) -- ++(150:1);
\begin{scope}[rotate = 120]
\filldraw [purple!50!red, fill opacity = 0.2] (0,1.9) circle (2 and 1);
\node [purple!50!red] at (0,1.8) {$G_1^v$};
\draw (0,0) -- (0,1) -- ++(30:1);
\draw (0,1) -- ++(150:1);
\end{scope}
\begin{scope}[rotate = -120]
\filldraw [blue, fill opacity = 0.2] (0,1.9) circle (2 and 1);
\node [blue] at (0,1.8) {$G_3^v$};
\draw (0,0) -- (0,1) -- ++(30:1);
\draw (0,1) -- ++(150:1);
\end{scope}
\fill (0,0) circle (2pt);
\node [above left] at (0,0) {v};
\node at (-30: 0.75) [above] {$T$};

\begin{scope}[xshift = 7cm]
\draw (120:1) -- (0,0) -- (1,0) node [midway, above] {$e$} -- ++(60:1);
\draw (-120:1) -- (0,0);
\draw (1,0) -- ++(-60:1);
\filldraw [blue, fill opacity = 0.2] (-0.9, 0) circle (1 and 2);
\node [blue] at (-1.1,0) {$G_2^e$};
\node [red!50!purple] at (2.1,0) {$G_1^e$};
\filldraw [purple!50!red, fill opacity = 0.2] (1.9, 0) circle (1 and 2);
\node at ($(1,0)+(-60:1)$) [right] {$T$};
\end{scope}
\end{tikzpicture}
\caption{Subgraphs of $G$ induced by edges and vertices of $T$.}
\label{pic_cut_scheme}
\end{center}
\end{figure}
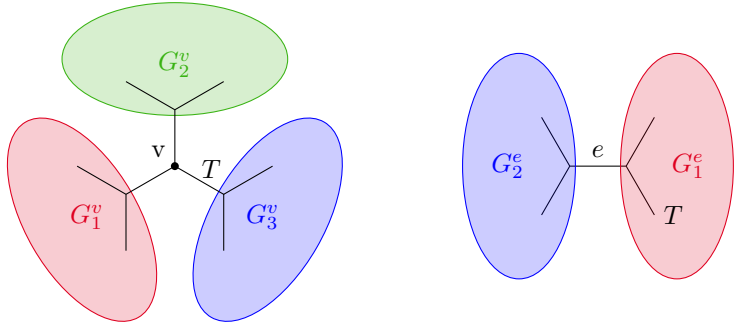

Because $G$ is connected, we have the property that, for all $v \in V(T)$, at most one of the cuts $E(G_1^v,G_2^v)$, $E(G_2^v,G_3^v)$, and $E(G_1^v,G_3^v)$ is empty. Otherwise, one of the $G_i^v$'s would have $E (G_i) = \varnothing$ and thus be a disconnected component of $G$ (notice that none of the $G_i^v$'s are empty since there is at least a leaf in each $T_i^v$). This implies that the following quantity is well defined for all inner vertex $v$ of $T$ where $\{X,Y,Z\} = \{ V^v_1, V^v_2, V^v_3 \}$:

\begin{equation*}
\mu (v) = \left\lbrace 
\begin{array}{cl}
|X| - 1 & \text{ where } E(Y,Z) = \varnothing\\
0 & \text{ otherwise.}
\end{array}
\right.
\end{equation*}

The map $\mu$ defines a potential on the whole carving-decomposition by summing over all inner vertices of $T$: $\mu(T) = \sum_{v \in V(T) \smallsetminus L(T)} \mu(v).$

First, let us show that for a carving-decomposition of $G$, in which there is some edge yielding a non-bond cut of $T$, it is possible to decrease the potential of the carving-decomposition by modifying the branching of $T$ if $G$ is connected enough. Such an edge cut can only take place on edges not incident to leaves of $T$, as a single edge is always bond.

We describe the following local \emphdef{rebranching} operation in a tree $T$ (see Figure \ref{pic_rebranching} for a visual description). Let $a = (u,v)$ be a tree-edge connecting two inner vertices $u$ and $v$.

\begin{figure}[ht]
\begin{center}
\begin{tikzpicture}
\def\e{0.05}
\draw (120:1) -- (0,0) -- (1,0) node [midway, above] {$a$} -- ++(60:1);
\draw (-120:1) -- (0,0);
\draw (1,0) -- ++(-60:1);
\fill (0,0) circle (\e cm);
\fill (1,0) circle (\e cm);
\fill (120:1) circle (\e cm);
\fill (-120:1) circle (\e cm);
\fill ($(1,0)+(60:1)$) circle (\e cm);
\fill ($(1,0)+(-60:1)$) circle (\e cm);

\filldraw [rotate around={30:(120:1.5)}, purple!50!red, fill opacity = 0.15] (120:1.5) circle (1 and 0.75);
\filldraw [rotate around={-30:(-120:1.5)}, orange!70!brown, fill opacity = 0.15] (-120:1.5) circle (1 and 0.75);
\filldraw [rotate around={-30:($(1,0)+(60:1.5)$)}, blue!80!black, fill opacity = 0.15] ($(1,0)+(60:1.5)$) circle (1 and 0.75);
\filldraw [rotate around={30:($(1,0)+(-60:1.5)$)}, green!50!blue, fill opacity = 0.15] ($(1,0)+(-60:1.5)$) circle (1 and 0.75);

\path [rotate around={30:(120:1.5)}] ($(120:1.5)+(1,0)$) arc (0:-120: 1 and 0.75) coordinate (A1);
\path [rotate around={-30:(-120:1.5)}] ($(-120:1.5)+(1,0)$) arc (0:120: 1 and 0.75) coordinate (A2);
\draw [purple!50!orange, densely dashed] (A1.south) -- (A2.north) node [midway, black] {$\varnothing$};

\node at (0,0) [left] {$u$};
\node at (1,0) [right] {$v$};
\node at (120:1) [left] {$u_1$};
\node at (-120:1) [left] {$u_2$};
\node at ($(60:1)+(1,0)$) [right] {$v_1$};
\node at ($(-60:1)+(1,0)$) [right] {$v_2$};
\node [color={rgb: red!50!purple,2; black,1}] at (120:1.5) {\footnotesize $T^{u_1} \mapsto A_1$};
\node [color={rgb: orange!70!brown,2; black,1.5}] at (-120:1.5) {\footnotesize $T^{u_2} \mapsto A_2$};
\node [color={rgb: blue!80!black,2; black,1}] at ($(1,0)+(60:1.5)$) {\footnotesize $T^{v_1} \mapsto B_1$};
\node [color={rgb: green!50!blue,2; black,1}] at ($(1,0)+(-60:1.5)$) {\footnotesize $T^{v_1} \mapsto B_1$};
\node at (0.5,-0.6) {\Large $T$};
\draw [-Stealth] (3,0) -- ++(1,0);

\begin{scope}[xshift = 6cm]
\coordinate (a1) at (120:1);
\coordinate (a2) at (-120:1);
\coordinate (b1) at ($(1,0)+(60:1)$);
\coordinate (b2) at ($(1,0)+(-60:1)$);
\coordinate (u) at (0.5,0.5);
\coordinate (v) at (0.5,-0.5);

\draw (a1) -- (u) -- (v) node [midway, right] {$a'$} -- (a2);
\draw (b1) -- (u);
\draw (b2) -- (v);
\fill (u) circle (\e cm);
\fill (v) circle (\e cm);
\fill (120:1) circle (\e cm);
\fill (-120:1) circle (\e cm);
\fill ($(1,0)+(60:1)$) circle (\e cm);
\fill ($(1,0)+(-60:1)$) circle (\e cm);

\filldraw [rotate around={30:(120:1.5)}, purple!50!red, fill opacity = 0.15] (120:1.5) circle (1 and 0.75);
\filldraw [rotate around={-30:(-120:1.5)}, orange!70!brown, fill opacity = 0.15] (-120:1.5) circle (1 and 0.75);
\filldraw [rotate around={-30:($(1,0)+(60:1.5)$)}, blue!80!black, fill opacity = 0.15] ($(1,0)+(60:1.5)$) circle (1 and 0.75);
\filldraw [rotate around={30:($(1,0)+(-60:1.5)$)}, green!50!blue, fill opacity = 0.15] ($(1,0)+(-60:1.5)$) circle (1 and 0.75);

\node at (u) [above] {$u'$};
\node at (v) [below] {$v'$};
\node at (120:1) [left] {$u_1$};
\node at (-120:1) [left] {$u_2$};
\node at ($(60:1)+(1,0)$) [right] {$v_1$};
\node at ($(-60:1)+(1,0)$) [right] {$v_2$};
\node [color={rgb: red!50!purple,2; black,1}] at (120:1.5) {\footnotesize $T^{u_1} \mapsto A_1$};
\node [color={rgb: orange!70!brown,2; black,1.5}] at (-120:1.5) {\footnotesize $T^{u_2} \mapsto A_2$};
\node [color={rgb: blue!80!black,2; black,1}] at ($(1,0)+(60:1.5)$) {\footnotesize $T^{v_1} \mapsto B_1$};
\node [color={rgb: green!50!blue,2; black,1}] at ($(1,0)+(-60:1.5)$) {\footnotesize $T^{v_1} \mapsto B_1$};
\node at (-0.6,0) {\Large $T'$};
\end{scope}
\end{tikzpicture}
\end{center}
\caption{Rebranching on $a$ where $E(A_1, A_2) = \varnothing$.}
\label{pic_rebranching}
\end{figure}

Aside from $a$, the vertex $u$ is incident to two edges $(u,u_1),(u,u_2)$, and the vertex $v$ is incident to two edges $(v,v_1),(v,v_2)$. Let $A_1,A_2,B_1,B_2 \subseteq V(G)$ be the vertices associated to the subtrees $T^{u_1},T^{u_2},T^{v_1},T^{v_2}$ rooted at $u_1,u_2,v_1,v_2$ respectively, and not containing the edge $a$. A rebranching of $T$ at $a$, as pictured in Figure~\ref{pic_rebranching}, consists of connecting trees $T^{u_1}$ and $T^{v_1}$ with a new vertex $u'$ adjacent to $u_1$ and $v_1$, connecting trees $T^{u_2}$ and $T^{v_2}$ with a new vertex $v'$ adjacent to $u_2$ and $v_2$, and finally connecting $u'$ and $v'$ with an edge $a'$. If $(T,\phi)$ is a carving-decomposition of a graph $G$, and $T'$ is a tree obtained by rebranching an edge of $T$, then $(T',\phi)$ is a carving-decomposition of $G$ (using the same $\phi$ and the direct bijection between the leaves of $T$ and the leaves of $T'$). We now use rebranching to decrease the potential $\mu$ of a tree decomposition.

\begin{lemma}{\cite[(5.3)]{Seymour_Ratcatcher}}\label{lem_branching}
Let $G$ be a connected graph, $D=(T,\phi)$ a carving-decomposition of $G$, and $a$ an edge partitioning $V(G)$ into $A_1,A_2,B_1,B_2$, as in Figure~\ref{pic_rebranching}, such that: $E(A_1,A_2) = \varnothing$ and both $E(A_1,B_1)$ and $E(A_2,B_2)$ are not empty. Then $\mu(T') < \mu (T)$ where $T'$ is a rebranching of $T$ at $a$.
\end{lemma}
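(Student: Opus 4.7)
My plan is to prove $\mu(T')<\mu(T)$ by showing that the only inner vertices whose local potential changes when passing from $T$ to $T'$ are the vertices $u,v$ of the rebranched edge $a$, which are replaced by $u',v'$; so it suffices to prove the strict inequality $\mu(u')+\mu(v')<\mu(u)+\mu(v)$. For the first point, any inner vertex $w\notin\{u,v\}$ of $T$ lies strictly inside one of the four unchanged subtrees $T^{u_1},T^{u_2},T^{v_1},T^{v_2}$, and the three-way partition of $V(G)$ it induces consists of the two sub-subtrees below $w$ together with the complement in $V(G)$ of the subtree containing $w$. The rebranching modifies only the attachment point of this containing subtree, not its vertex set, so the three-set partition at $w$ is literally identical in $T$ and $T'$; hence $\mu(w)$ is unchanged.

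I then compute $\mu(u)$ and $\mu(v)$ directly. At $u$, the three parts are $A_1,A_2,B_1\cup B_2$; by hypothesis $E(A_1,A_2)=\varnothing$, and $E(A_i,B_1\cup B_2)\supseteq E(A_i,B_i)\neq\varnothing$ for $i=1,2$, so the unique empty pairwise cut is $E(A_1,A_2)$, giving $\mu(u)=|B_1|+|B_2|-1$. At $v$, the three parts are $B_1,B_2,A_1\cup A_2$; the cuts $E(B_i,A_1\cup A_2)$ contain $E(A_i,B_i)\neq\varnothing$, so the only possible empty cut is $E(B_1,B_2)$, and $\mu(v)=|A_1|+|A_2|-1$ if $E(B_1,B_2)=\varnothing$ and $\mu(v)=0$ otherwise. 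For $u'$ the three parts are $A_1,B_1,A_2\cup B_2$; using $E(A_1,A_2)=\varnothing$ we have $E(A_1,A_2\cup B_2)=E(A_1,B_2)$ and $E(B_1,A_2\cup B_2)=E(A_2,B_1)\cup E(B_1,B_2)$, while $E(A_1,B_1)\neq\varnothing$ by hypothesis. A symmetric formula holds at $v'$.

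The remaining work is a short case split on $E(B_1,B_2)$. If $E(B_1,B_2)\neq\varnothing$, then $\mu(u)+\mu(v)=|B_1|+|B_2|-1$; at $u'$ the cut $E(B_1,A_2\cup B_2)$ is nonempty (it contains $E(B_1,B_2)$), so the only possible empty pairwise cut is $E(A_1,B_2)$, forcing $\mu(u')\leq|B_1|-1$, and symmetrically $\mu(v')\leq|B_2|-1$, yielding $\mu(u')+\mu(v')\leq|B_1|+|B_2|-2<\mu(u)+\mu(v)$. If instead $E(B_1,B_2)=\varnothing$, then $\mu(u)+\mu(v)=|V(G)|-2$; connectivity of $G$ combined with $E(A_1,A_2)=E(B_1,B_2)=\varnothing$ forces at least one of $E(A_1,B_2),E(A_2,B_1)$ to be nonempty (otherwise $A_1\cup B_1$ would be a connected component of $G$ disjoint from $A_2\cup B_2$). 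A sub-case analysis according to which of $E(A_1,B_2),E(A_2,B_1)$ vanish then bounds $\mu(u')+\mu(v')$ by $|B_1|+|A_2|-2$, $|A_1|+|B_2|-2$, or $0$, each strictly smaller than $|V(G)|-2$.

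The conceptual step is the locality observation of paragraph one; the main technical obstacle is ensuring strict inequality in every branch of the second case, which is precisely where the hypothesis that $G$ is connected (ruling out simultaneous vanishing of all four cross-cuts between $\{A_1,A_2\}$ and $\{B_1,B_2\}$) is indispensable. Everything else is a straightforward double-counting / set-arithmetic bookkeeping on the four disjoint sets $A_1,A_2,B_1,B_2$.
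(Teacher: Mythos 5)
Your proof is correct and follows essentially the same strategy as the paper: observe that the rebranching only alters the local potentials at the two endpoints of $a$, compute $\mu(u)$ and $\mu(v)$ from the hypotheses, and then bound $\mu(u')+\mu(v')$ by analyzing which pairwise cuts among $A_1,A_2,B_1,B_2$ can vanish. The only difference is cosmetic: the paper assumes $\mu(T')\ge\mu(T)$ and drives a contradiction via a WLOG choice, which compresses your explicit case split on $E(B_1,B_2)$ (and then on $E(A_1,B_2),E(A_2,B_1)$) into a shorter chain of inequalities, but the underlying bookkeeping and the use of connectivity to rule out two simultaneously empty cuts are identical.
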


This lemma is visually summarized by:

\begin{figure}[ht]
\begin{center}
\begin{tikzpicture}
\coordinate (a1) at (120:1);
\coordinate (a2) at (-120:1);
\coordinate (b1) at ($(1,0)+(60:1)$);
\coordinate (b2) at ($(1,0)+(-60:1)$);
\draw (120:1) -- (0,0) -- (1,0) node [midway, above] {$T$} -- ++(60:1);
\draw (-120:1) -- (0,0);
\draw (1,0) -- ++(-60:1);
\node at (0,0) [left] {$u$};
\fill (0,0) circle (2pt);
\node at (1,0) [right] {$v$};
\fill (1,0) circle (2pt);
\begin{scope}[purple!50!red]
\node (A1) at (120:1) [circle, left] {$A_1$};
\node (A2) at (-120:1) [circle, left] {$A_2$};
\node (B1) at ($(1,0)+(60:1)$) [circle, right] {$B_1$};
\node (B2) at ($(1,0)+(-60:1)$) [circle, right] {$B_2$};
\draw ($(a1)+(-0.1,0.1)$) -- ($(b1)+(0.1,0.1)$);
\draw ($(a2)+(-0.1,-0.1)$) -- ($(b2)+(0.1,-0.1)$);
\end{scope}
\draw [purple!50!red, densely dashed] (A1.south) -- (A2.north) node [midway, black] {$\varnothing$};

\draw [->] (2,0) -- (2.8,0);

\begin{scope}[xshift = 4cm]
\coordinate (a1) at (120:1);
\coordinate (a2) at (-120:1);
\coordinate (b1) at ($(1,0)+(60:1)$);
\coordinate (b2) at ($(1,0)+(-60:1)$);
\coordinate (u) at (0.5,0.5);
\coordinate (v) at (0.5,-0.5);
\draw (a1) -- (u) -- (v) node [midway, right] {$T'$} -- (a2);
\draw (b1) -- (u);
\draw (b2) -- (v);
\node at (u) [above] {$u'$};
\fill (u) circle (2pt);
\node at (v) [below] {$v'$};
\fill (v) circle (2pt);
\begin{scope}[purple!50!red]
\node (A1) at (a1) [circle, left] {$A_1$};
\node (A2) at (a2) [circle, left] {$A_2$};
\node at (b1) [right] {$B_1$};
\node at (b2) [right] {$B_2$};
\end{scope}
\node at (4,0) {$\Rightarrow \mu (T') < \mu (T)$};
\end{scope}
\end{tikzpicture}
\caption{A  {\color{red} \textemdash} link between $X$ and $Y$ means $E (X,Y) \neq \varnothing$.}\label{pic_visual_lemma}
\end{center}
\end{figure}

\begin{proof}
We use the notations of Figure~\ref{pic_rebranching}. Let us assume by contradiction that $\mu(T') \geq \mu (T)$ after rebranching at edge $a = (u,v)$. The potential $\mu$ being of integral value, this is equivalent to $\mu(T') + 1 - \mu(T) > 0$. Rebranching modifies trees locally, therefore there is a natural map between vertices of $T$ and $T'$ that are away from the rebranching, \ie, $V(T)\smallsetminus \{u,v\}  = V(T')\smallsetminus \{u',v'\}$. Additionally, note that for a vertex $x \in V(T)\smallsetminus \{u,v\} = V(T')\smallsetminus \{u',v'\}$, the partitions of vertices of $G$ induced by the subtrees $(T_i^x)_{i \in \{ 1,2,3 \}}$ in $T$, or induced by the subtrees $({T'}_i^x)_{i \in \{1,2,3\}}$ in $T'$, are the same. In consequence, by definition the potential value $\mu(x)$ in $T$ or $T'$ is unchanged for $x \not \in \{ u,v \}$. It follows that:

\begin{equation*}
\mu(T') + 1 - \mu(T) = \mu(u') + \mu(v') - \mu(u) - \mu(v) +1 > 0,
\end{equation*} 
where $\mu(u),\mu(v)$ are defined in $T$, and $\mu(u'),\mu(v')$ are defined in $T'$.

Now, with the hypothesis $E(A_1,A_2) = \varnothing$, we have that $\mu(u) = |B_1 \cup B_2| - 1 = |B_1| + |B_2| - 1$, and therefore, 

\begin{equation}\label{eq_proof_lem_rebranch_1}
(\mu(v') - |B_2| +1) + (\mu(u') - |B_1| +1) - \mu(v) > 0
\end{equation}

This inequality implies that at least one of $\mu(v') - |B_2| +1$ and $\mu(u') - |B_1| +1$ is (strictly) positive. Without loss of generality, suppose that:

\begin{equation}\label{eq_proof_lem_rebranch_2}
\mu(v') - |B_2| +1 > 0, 
\end{equation} 
the other case being similar by symmetry.

From the hypothesis $E (A_2, B_2) \neq \varnothing$, we deduce that $\mu(v') \in \{|A_2|-1, |B_2|-1, 0\}$. With the inequality (\ref{eq_proof_lem_rebranch_2}) and $B_2 \neq \varnothing$, we conclude that $\mu(v') = |A_2| - 1$, with $|A_2| > |B_1| > 0$, which implies, by definition of $\mu$, that $E(A_1 \cup B_1, B_2) = \varnothing$. 

In consequence, we also have that $E (B_1, B_2) =  \varnothing$, because $E(B_1, B_2) \subset E (A_1 \cup B_1, B_2)$. This in turn involves that $\mu(v) = |A_1| + |A_2| - 1$. Transposing these values of $\mu(v)$ and $\mu(v')$ into Equation~(\ref{eq_proof_lem_rebranch_1}), we get: 

\begin{equation}
\label{eq_proof_lem_rebranch_3}
\mu(u') +1-|B_2|-|B_1| -|A_1| +1 > 0.
\end{equation}

The sets $A_1,B_1,B_2$ are non empty, hence $\mu(u') > 0$ and from the hypothesis $E (A_1, B_1) \neq \varnothing$, we deduce that $\mu(u') \in \{|B_1| -1, |A_1| -1\}$. However, we reach a contradiction since Inequality~(\ref{eq_proof_lem_rebranch_3}) implies that $\mu(u') -(|B_1| -1) > 0$ and $\mu(u') -(|A_1| -1) > 0$. Therefore, we prove $\mu (T) > \mu(T')$.
\end{proof}

\begin{lemma}\label{lem_bl}
Let $G$ be a connected graph, and $D=(T,\phi)$ be a linked carving-decomposition. Consider an edge $s=(u,v)$, yielding a non-bond cut, which is turned into an edge $s'=(u',v')$ by a rebranching: if $w(s) < w(s')$ then rebranching at $s$ yields a linked carving-decomposition.
\end{lemma}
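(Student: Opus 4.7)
The plan is to verify the linking property of $T'$ directly, comparing with that of $T$ pair by pair. Using Figure~\ref{pic_rebranching}'s notation, denote by $A_1, A_2, B_1, B_2 \subseteq V(G)$ the vertex sets displayed by the subtrees $T^{u_1}, T^{u_2}, T^{v_1}, T^{v_2}$, and write $\alpha, \beta, \gamma, \delta, \epsilon$ for the cut sizes $|E(A_1,B_1)|$, $|E(A_2,B_2)|$, $|E(A_1,B_2)|$, $|E(A_2,B_1)|$, $|E(B_1,B_2)|$. Without loss of generality, assume the non-bond condition at $s$ forces $E(A_1,A_2) = \varnothing$, so that $w(s) = \alpha+\gamma+\delta+\beta$ and $w(s') = \gamma+\delta+\epsilon$, and the hypothesis $w(s) < w(s')$ becomes $\alpha+\beta < \epsilon$. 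A key initial observation is that rebranching leaves unchanged both the widths and the displayed vertex sets of every tree-edge of $T'$ other than $s'$: for $e \neq s'$, the two subtrees of $T' \smallsetminus e$ induce the same partition of $V(G)$ as does the corresponding edge of $T$. Hence for pairs $(a,b)$ of $T'$ with $a,b \neq s'$, the value $\mcut(A,B)$ is unchanged, and only the path $P$ in $T$ and $P'$ in $T'$ can differ, by insertion or deletion of the central edge.

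For such pairs of ``old'' edges, I would proceed by case analysis on whether $s \in P$ and whether $s' \in P'$. The four resulting configurations (same path, path shortens by dropping $s$, path lengthens by inserting $s'$, or central edge swap) all resolve via two elementary facts: (i) the $u$-side boundary edges satisfy $w((u,u_1)) = \alpha+\gamma < w(s)$ and $w((u,u_2)) = \delta+\beta < w(s)$ strictly, because $G$ is connected and each of $A_1, A_2, B_1, B_2$ is non-empty (otherwise $A_1$ or $A_2$ would be isolated in $G$); (ii) whenever $s'$ lies on $P'$, it separates the displayed sets $A, B$ of $a, b$, so $w(s') \geq \mcut(A,B)$. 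Combining (i) and (ii) yields $\min_{e \in P'} w(e) = \mcut(A,B)$ in every configuration, as the boundary edges preserved in $P \cap P'$ already realize the minimum while the new edge $s'$, when inserted, is never smaller than $\mcut(A,B)$. Notably, this part of the argument is independent of the hypothesis $w(s) < w(s')$.

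The delicate case is when one of the edges equals $s'$: the displayed side $A_1 \cup B_1$ (or $A_2 \cup B_2$) of $s'$ is a new vertex partition not displayed by any edge of $T$, so linking of $T$ cannot be invoked directly. My approach would be to combine linking of $T$ applied to $(s,b)$, $((u,u_1),b)$, and $((u,u_2),b)$, pinning down $\mcut$ against the partitions $B_1 \cup B_2$, $A_1$, and $A_2$, and then exploit submodularity of $|E(\cdot)|$ to control $\mcut(A_2 \cup B_2, B_b)$. The strict hypothesis $\alpha+\beta < \epsilon$ is precisely what rules out the existence of a separator of $A_2 \cup B_2$ from $B_b$ cutting strictly through $B_1$ whose size would be smaller than both $w((u',u_1)) = \alpha+\gamma$ and $w(s') = \gamma+\delta+\epsilon$; without this, such a separator could break the linking of $T'$ at the pair $(s', (u',u_1))$. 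This third case is the main obstacle of the proof: unlike the previous subcases, it cannot be handled by a local boundary-edge computation but requires a genuine submodularity bookkeeping tying together the five quantities $\alpha, \beta, \gamma, \delta, \epsilon$, and it is the unique place where the strict hypothesis $w(s) < w(s')$ is essential.
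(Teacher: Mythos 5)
Your decomposition into pairs of ``old'' edges versus pairs in which one member is the new edge $s'$ matches the paper's, and your treatment of the old-edge pairs is correct and in fact mildly streamlined: where the paper argues with the non-strict inequality $\wid(a_i) \leq \wid(s)$ and linking of $T$, you observe that connectivity of $G$ (together with $E(A_1,A_2)=\varnothing$) forces the strict inequality $\wid(a_i) < \wid(s)$, which immediately shows that $s$ can never be the unique minimizer on a path, so dropping or swapping $s$ is harmless. Your remark that the hypothesis $\wid(s) < \wid(s')$ plays no role in this part is also exactly right and agrees with the paper.

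The genuine gap is the case where one edge of the pair is $s'$, which you yourself flag as ``the main obstacle'' and then only sketch --- but that case is where the entire content of the lemma lives, and the sketch as written would not close. Two specific problems. First, the displayed sets in your plan are off: if the far edge $b$ lies in the subtree rooted at $u_1$, then linking of $T$ at the pair $((u,u_1),b)$ constrains $\mcut(A_2\cup B_1\cup B_2, B_b)$, not $\mcut(A_1,B_b)$, so the three quantities you would ``pin down'' are $\mcut(B_1\cup B_2, B_b)$, $\mcut(A_2\cup B_1\cup B_2, B_b)$ and $\mcut(A_2, B_b)$, and the target $\mcut(A_2\cup B_2, B_b)$ does not sit between any two of them in a way that submodularity alone obviously exploits. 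Second, the paper's route is structurally different: it argues by contradiction, extracts a small separator $M$ witnessing a failure of linking at $(s',f)$, and then splits on whether $f$ is one of the four edges incident to $s'$ (cases $F=A_i$ and $F=B_i$, each contradicting linking in $T$ of a \emph{nearby} pair --- $(a_2,b_1)$ and $(b_1,b_2)$ respectively, using the inequality $|E(B_1,B_2)| > |E(A_1,B_1)|+|E(A_2,B_2)|$ from $\wid(s')>\wid(s)$) or not, in which last case a submodularity computation shows $|E(X_i\cup M)| < |E(X_i)|$ and \emph{reduces} to the incident cases. Your direct attack on the non-incident case bypasses this reduction, and without carrying it through there is no argument that the linking of $(s',f)$ actually holds; this needs to be filled in for the proof to stand.
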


\begin{proof}
Each path $P'$ in $T'$ is obtained from a path $P$ in $T$ by either replacing $s$ by $s'$ (\eg, from $A_1$ to $B_2$), suppressing $s$ (\eg, from $A_1$ to $B_1$), adding $s'$ (\eg, from $A_1$ to $A_2$), or doing nothing at all (\eg, a path $P'$ not using $s'$ and edges incident to it). Hence the linked property is already satisfied in $T'$ for all paths that are identical in $T$ and $T'$. For the other one, we know that they use $s$ or $s'$ and at least one edge incident to them. We name edges incident to $s$ by the lower-case version of the set that they induce, as summarized in Figure~\ref{pic_notations}:

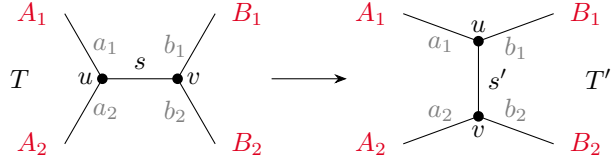
\begin{figure}[ht]
\begin{center}
\begin{tikzpicture}
\coordinate (a1) at (120:1);
\coordinate (a2) at (-120:1);
\coordinate (b1) at ($(1,0)+(60:1)$);
\coordinate (b2) at ($(1,0)+(-60:1)$);
\draw (0,0)  -- (1,0) node [midway, above] {$s$};
\draw (120:1)  -- (0,0) node [midway, right, gray] {$a_1$};
\draw (-120:1) -- (0,0)  node [midway, right, gray] {$a_2$};
\draw (1,0) -- ++(60:1) node [midway, left, gray] {$b_1$};
\draw (1,0) -- ++(-60:1) node [midway, left, gray] {$b_2$};
\node at (0,0) [left] {$u$};
\fill (0,0) circle (2pt);
\node at (1,0) [right] {$v$};
\fill (1,0) circle (2pt);
\begin{scope}[purple!50!red]
\node (A1) at (120:1) [circle, left] {$A_1$};
\node (A2) at (-120:1) [circle, left] {$A_2$};
\node (B1) at ($(1,0)+(60:1)$) [circle, right] {$B_1$};
\node (B2) at ($(1,0)+(-60:1)$) [circle, right] {$B_2$};
\end{scope}
\node at (-1.1,0) {$T$};

\draw [-Stealth] (2.25,0) -- +(1,0);

\begin{scope}[xshift = 4.5cm]
\coordinate (a1) at (120:1);
\coordinate (a2) at (-120:1);
\coordinate (b1) at ($(1,0)+(60:1)$);
\coordinate (b2) at ($(1,0)+(-60:1)$);
\draw (0.5,0.5)  -- (0.5,-0.5) node [midway, right] {$s'$};
\draw (a1)  -- (0.5,0.5) node [midway, below, gray] {$a_1$};
\draw (a2) -- (0.5,-0.5)  node [midway, above, gray] {$a_2$};
\draw (0.5,0.5) -- (b1) node [midway, below, gray] {$b_1$};
\draw (0.5,-0.5) -- (b2) node [midway, above, gray] {$b_2$};
\node at (0.5,0.5) [above] {$u$};
\fill (0.5,0.5) circle (2pt);
\node at (0.5,-0.5) [below] {$v$};
\fill (0.5,-0.5) circle (2pt);
\begin{scope}[purple!50!red]
\node (A1) at (120:1) [circle, left] {$A_1$};
\node (A2) at (-120:1) [circle, left] {$A_2$};
\node (B1) at ($(1,0)+(60:1)$) [circle, right] {$B_1$};
\node (B2) at ($(1,0)+(-60:1)$) [circle, right] {$B_2$};
\end{scope}
\node at (2.1,0) {$T'$};
\end{scope}
\end{tikzpicture}
\caption{Notations}
\label{pic_notations}
\end{center}
\end{figure}

Let $f,h \in E(T)^2$ be such that $P$, the minimal path of $T$ containing $f$ and $h$, is different from $P'$. Recall the notations $F = V(G^f_i)$ and $H = V(G^h_j)$ where $i$ is such that ${T}^f_i$ is the unique tree of $T \smallsetminus {f}$ not containing $h$ and vice versa for $H$ and $j$.

\begin{itemize}
\item If $P'$ was obtained from $P$ by suppressing $s$, then we can assume by symmetry that $F \subset A_i$ and $H \subset B_i$ with $i \in \{ 1, 2\}$, and it follows that $a_i$ is in $P'$. Since $A_i \subset A_1 \cup A_2$ and $E(A_1, A_2) = \varnothing$, then $|E(A_i)| = \wid (a_i) \leq \wid (s) = |E(A_1 \cup A_2)|$ so that there exists $y \in P \smallsetminus \{s\}$ such that $\mcut(F,H) = y$. Thus, $\mcut(F,H) = \min_{p \in P'} \wid (p) = \wid (y)$.
\item If $s$ in $P$ was replaced by $s'$ in order to obtain $P'$, then we can assume by symmetry that $F \subset A_i$ and $H \subset B_j$ where $i \neq j$. Because $(T,\phi)$ is a linked carving-decomposition and $\wid (s) \geq \wid (a_i)$, there exists $y \in P \smallsetminus \{s\}$ such that $\mcut(F,H) = y$. It follows that $\wid (s') \geq \mcut(F,H) = \wid (y)$ where $y$ is also in $P'$.
\item If $P'$ was obtained from $P$ by adding $s'$, then we can assume by symmetry that $F \subset X_1$ and $H \subset X_2$, where $X \in \{ A,B \}$. Since $(T,\phi)$ is a linked carving-decomposition, there exists $y \in P$ such that $\mcut(F,H) = y$ and $y$ also belongs to $P'$: $\mcut(F,H) = \min_{p \in P'} \wid (p) = \wid (y)$.
\end{itemize}

There remains to deal with paths having $s'$ as an extremal edge. Let us assume, by contradiction, that $s'$ is not linked with an edge $f$, and let us call $P$ the minimal path of $T$ containing $s'$ and $f$. First, let us notice that because $w(s') > w(s)$ we have that:

\begin{equation*}
\left\lbrace
\begin{array}{rl}
w(s') & = |E (A_1 \cup B_1)| = |E(B_1, B_2)| + |E (A_1, B_2)| +|E(B_1, A_2)|\\
w(s) & = |E (A_1)| + |E (A_2)| = |E (A_1, B_1)| + |E (A_1, B_2)| + |E (A_2, B_1)| +|E (A_2, B_2)| 
\end{array}
\right.
\end{equation*}

\begin{equation}\label{eq_lem_bl_1}
\text{Hence : } |E(B_1, B_2)|  > |E (A_1, B_1)| +|E (A_2, B_2)|
\end{equation}

\textbullet If $F = A_i$, by symmetry let us assume that $i = 2$. Then, there exists $M$ such that $A_2 \varsubsetneq M \varsubsetneq  A_2 \cup B_2$ and $|E(M)| < \min_{y \in P} w(y) = \min (w(s'), w(a_2))$. Intuitively, the set $M$ allows us to shorten the cut between $A_2$ and $B_2$ (which corresponds to $m_b < b$ and $m_a = c$ on Figure~\ref{pic_proof_lem_bl}). This will lead to a contradiction on $a_2$ and $b_1$ being linked in $T$.

Now notice that $|E (B_1)| = |E (B_1, A_2)| + |E(B_1, B_2)| + |E (B_1,A_1)|$. With Inequality~(\ref{eq_lem_bl_1}), we conclude that $|E (B_1)| - |E (A_2)| = |E (B_1, A_1)| + |E(B_1, B_2)| - |E(A_2, B_2)| > 0$. Hence, $\mcut(B_1, A_2) \leq |E (M)| < w(a_2) < w(s)$ in not reached on $b_1$ either. Thus, $a_2$ and $b_1$ are not linked in $T$, a contradiction. 

\textbullet If $F = B_i$, by symmetry let us assume that $i = 2$. Then, there exists $M$ such that $B_2 \varsubsetneq M \varsubsetneq  A_2 \cup B_2$ and $|E(M)| < \min_{y \in P} w(y) = \min (w(s'), w(b_2))$. Intuitively, we are able to shorten the cuts between $B_2$ and $A_2$ and between $B_1$ and $A_2$, hence they will not be linked in $T$. We have that $|E (M)| < |E (A_1 \cup B_1)|$ and:

\begin{equation*}
\left\lbrace
\begin{array}{rl}
|E (A_1 \cup B_1)| &= |E(B_1, B_2)| + |E (A_1, B_2)| +|E(B_1, A_2)|\\
|E(M)| & = |E(B_1,B_2)| + |E (A_1, B_2)| + |E (M, A_2 \smallsetminus M)| + |E (M \cap A_2, B_1)|
\end{array}
\right.
\end{equation*}
 
\begin{equation}\label{eq_lem_bl_2}
\text{Hence : } |E (M, A_2 \smallsetminus M)| + |E (M \cap A_2, B_1)|< |E (B_1, A_2)| 
\end{equation}

With Inequality~(\ref{eq_lem_bl_2}):

\begin{equation*}
\begin{array}{rl}
|E (B_1 \cup A_2 \smallsetminus M)| & = |E (M, A_2 \smallsetminus M)| + |E(B_1, A_1 \cup B_2)| +|E(M \cap A_2, B_1)| \\ 
&< |E (B_1, A_2)| + |E(B_1, A_1 \cup B_2 )| = |E (B_1)|
\end{array}
\end{equation*}

Thus, $\mcut(B_1, B_2) \leq |E (M)| < \min(w(b_1), w(b_2))$: $b_1$ and $b_2$ are not linked in $T$ which is a contradiction.

\textbullet If $f$ is non-incident to $s'$, denote $X_i$ with $x \in \{A,B\}$ such that $F \subset X_i$. Then, there exists $M$ such that $X_i \varsubsetneq M \varsubsetneq  A_i \cup B_i$ and $|E(M)| < \min_{y \in P} w(y)$ (Figure~\ref{pic_proof_lem_bl}). However, since $s'$ and $x_i$ are linked in $T$, $M \subset X_i$ is impossible: $E (M)$ would separate $F$ and $X_i^c$ and $|E (M)|$ would be reached on an edge of both $P$ and $P'$. Hence, $M \cap Y_i \neq \varnothing$ where $\{X,Y\} = \{A,B\}$.

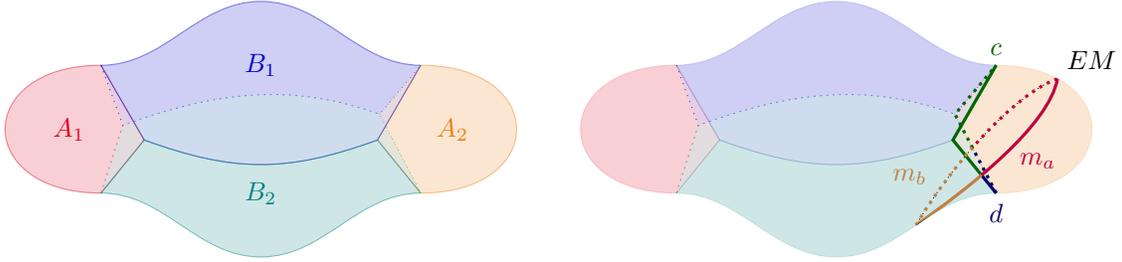
\begin{figure}[ht]
\begin{center}
\begin{tikzpicture}[scale = 0.85]
\def\ls{2} \def\lm{1} \def\li{1} \def\op{0.1}
\coordinate (a1t) at (0,1);
\coordinate (a1b) at ($(a1t)+(0,-2)$);
\coordinate (a2t) at ($(a1t)+(5,0)$);
\coordinate (a2b) at ($(a1b)+(5,0)$);
\coordinate (m1t) at ($(a1t)+(2.5,1)$);
\coordinate (m2b) at ($(a1b)+(2.5,-1)$);
\coordinate (fl) at ($(a1t)+(-60:1.35)$);
\coordinate (bl) at ($(a1t)+(-70:1)$);
\coordinate (fr) at ($(a2t)+(-120:1.35)$);
\coordinate (br) at ($(a2t)+(-130:1)$);

\filldraw [red!50!purple, fill opacity = \op, draw opacity = 0.5] (a1t) .. controls + (180:\ls) and \control{(a1b)}{(180:\ls)} -- (fl) -- (a1t);
\fill [red!50!purple, fill opacity = \op] (a1t) .. controls + (180:\ls) and \control{(a1b)}{(180:\ls)} -- (bl) -- (a1t);
\draw [red!50!purple, dotted, opacity =0.5] (a1t) -- (bl) -- (a1b);

\filldraw [orange!70!brown, fill opacity = \op, draw opacity = 0.5] (a2t) .. controls + (0:\ls) and \control{(a2b)}{(0:\ls)} -- (fr) -- (a2t);
\fill [orange!70!brown, fill opacity = \op] (a2t) .. controls + (0:\ls) and \control{(a2b)}{(0:\ls)} -- (br) -- (a2t);
\draw [orange!70!brown, dotted, opacity =0.5] (a2t) -- (br) -- (a2b);

\filldraw [blue!80!black, fill opacity = \op, draw opacity = 0.5] (a2t) .. controls + (180:\li) and \control{(m1t)}{(0:\lm)} .. controls + (180:\lm) and \control{(a1t)}{(0:\li)} -- (fl) .. controls +(-20:1.5) and \control{(fr)}{(-160:1.5)} -- (a2t);
\fill [blue!80!black, fill opacity = \op] (a2t) .. controls + (180:\li) and \control{(m1t)}{(0:\lm)} .. controls + (180:\lm) and \control{(a1t)}{(0:\li)} -- (bl) .. controls +(20:1.5) and \control{(br)}{(160:1.5)} -- (a2t);
\draw [blue!80!black, dotted, opacity =0.5] (a1t) -- (bl) .. controls +(20:1.5) and \control{(br)}{(160:1.5)} -- (a2t);

\filldraw [green!50!blue, fill opacity = \op, draw opacity = 0.5] (a2b) .. controls + (180:\li) and \control{(m2b)}{(0:\lm)} .. controls + (180:\lm) and \control{(a1b)}{(0:\li)} -- (fl) .. controls +(-20:1.5) and \control{(fr)}{(-160:1.5)} -- (a2b);
\fill [green!50!blue, fill opacity = \op] (a2b) .. controls + (180:\li) and \control{(m2b)}{(0:\lm)} .. controls + (180:\lm) and \control{(a1b)}{(0:\li)} -- (bl) .. controls +(20:1.5) and \control{(br)}{(160:1.5)} -- (a2b);
\draw [green!50!blue, dotted, opacity =0.5] (a1b) -- (bl) .. controls +(20:1.5) and \control{(br)}{(160:1.5)} -- (a2b);

\node [red!50!purple] at (-0.5,0) {$A_1$};
\node [orange!70!brown] at (5.5,0) {$A_2$};
\node [blue!80!black] at (2.5,1) {$B_1$};
\node [green!50!blue] at (2.5,-1) {$B_2$};

\begin{scope}[xshift = 9cm]
\coordinate (a1t) at (0,1);
\coordinate (a1b) at ($(a1t)+(0,-2)$);
\coordinate (a2t) at ($(a1t)+(5,0)$);
\coordinate (a2b) at ($(a1b)+(5,0)$);
\coordinate (m1t) at ($(a1t)+(2.5,1)$);
\coordinate (m2b) at ($(a1b)+(2.5,-1)$);
\coordinate (fl) at ($(a1t)+(-60:1.35)$);
\coordinate (bl) at ($(a1t)+(-70:1)$);
\coordinate (fr) at ($(a2t)+(-120:1.35)$);
\coordinate (br) at ($(a2t)+(-130:1)$);

\filldraw [red!50!purple, fill opacity = \op, draw opacity = 0.2] (a1t) .. controls + (180:\ls) and \control{(a1b)}{(180:\ls)} -- (fl) -- (a1t);
\fill [red!50!purple, fill opacity = \op] (a1t) .. controls + (180:\ls) and \control{(a1b)}{(180:\ls)} -- (bl) -- (a1t);
\draw [red!50!purple, dotted, opacity =0.5] (a1t) -- (bl) -- (a1b);

\filldraw [orange!70!brown, fill opacity = \op, draw opacity = 0.2] (a2t) .. controls + (0:\ls) and \control{(a2b)}{(0:\ls)} -- (fr) -- (a2t);
\fill [orange!70!brown, fill opacity = \op] (a2t) .. controls + (0:\ls) and \control{(a2b)}{(0:\ls)} -- (br) -- (a2t);
\draw [orange!70!brown, dotted, opacity =0.5] (a2t) -- (br) -- (a2b);

\filldraw [blue!80!black, fill opacity = \op, draw opacity = 0.2] (a2t) .. controls + (180:\li) and \control{(m1t)}{(0:\lm)} .. controls + (180:\lm) and \control{(a1t)}{(0:\li)} -- (fl) .. controls +(-20:1.5) and \control{(fr)}{(-160:1.5)} -- (a2t);
\fill [blue!80!black, fill opacity = \op] (a2t) .. controls + (180:\li) and \control{(m1t)}{(0:\lm)} .. controls + (180:\lm) and \control{(a1t)}{(0:\li)} -- (bl) .. controls +(20:1.5) and \control{(br)}{(160:1.5)} -- (a2t);
\draw [blue!80!black, dotted, opacity =0.5] (a1t) -- (bl) .. controls +(20:1.5) and \control{(br)}{(160:1.5)} -- (a2t);

\filldraw [green!50!blue, fill opacity = \op, draw opacity = 0.2] (a2b) .. controls + (180:\li) and \control{(m2b)}{(0:\lm)} .. controls + (180:\lm) and \control{(a1b)}{(0:\li)} -- (fl) .. controls +(-20:1.5) and \control{(fr)}{(-160:1.5)} -- (a2b);
\fill [green!50!blue, fill opacity = \op] (a2b) .. controls + (180:\li) and \control{(m2b)}{(0:\lm)} .. controls + (180:\lm) and \control{(a1b)}{(0:\li)} -- (bl) .. controls +(20:1.5) and \control{(br)}{(160:1.5)} -- (a2b);
\draw [green!50!blue, dotted, opacity =0.5] (a1b) -- (bl) .. controls +(20:1.5) and \control{(br)}{(160:1.5)} -- (a2b);

\path (a2b) .. controls + (180:\li) and \control{(m2b)}{(0:\lm)} coordinate [pos = 0.5] (cb);
\path (a2t) .. controls + (0:\ls) and \control{(a2b)}{(0:\ls)} coordinate [pos = 0.2] (ct);
\draw (cb) .. controls +(35:1) and \control{(ct)}{(-100:0.75)};
\draw [dotted] (cb) .. controls +(60:1) and \control{(ct)}{(-160:0.75)};

\begin{scope}[very thick]
\begin{scope}
\clip (cb) -- ++ (3,0) -- ++(0,3) -- (ct) .. controls +(-100:0.75) and \control{(cb)}{(35:1)};
\draw [blue!40!black] (a2b) -- (fr) -- (a2t);
\end{scope}
\begin{scope}
\clip (cb) -- ++ (3,0) -- ++(0,3) -- (ct) .. controls +(-160:0.75) and \control{(cb)}{(60:1)};
\draw [blue!40!black,dotted] (a2b) -- (br) -- (a2t);
\end{scope}
\node [blue!40!black, below] at (a2b) {$d$};

\begin{scope}
\clip (cb) -- ++ (0,3) -- ++(3,0) -- (ct) .. controls +(-100:0.75) and \control{(cb)}{(35:1)};
\draw [green!40!black] (a2b) -- (fr) -- (a2t);
\end{scope}
\begin{scope}
\clip (cb) -- ++ (0,3) -- ++(3,0) -- (ct) .. controls +(-160:0.75) and \control{(cb)}{(60:1)};
\draw [green!40!black,dotted] (a2b) -- (br) -- (a2t);
\end{scope}
\node [green!40!black, above] at (a2t) {$c$};

\begin{scope}
\clip (a2t) .. controls + (0:\ls) and \control{(a2b)}{(0:\ls)} -- (fr) -- (a2t);
\draw [purple] (cb) .. controls +(35:1) and \control{(ct)}{(-100:0.75)};
\end{scope}
\begin{scope}
\clip (a2t) .. controls + (0:\ls) and \control{(a2b)}{(0:\ls)} -- (br) -- (a2t);
\draw [dotted, purple] (cb) .. controls +(60:1) and \control{(ct)}{(-160:0.75)};
\end{scope}
\node [purple] at (5.65,-0.5) {$m_a$};

\begin{scope}
\clip (a2b) .. controls + (180:\li) and \control{(m2b)}{(0:\lm)} .. controls + (180:\lm) and \control{(a1b)}{(0:\li)} -- (fl) .. controls +(-20:1.5) and \control{(fr)}{(-160:1.5)} -- (a2b);
\draw [brown] (cb) .. controls +(35:1) and \control{(ct)}{(-100:0.75)};
\end{scope}
\begin{scope}
\clip (a2b) .. controls + (180:\li) and \control{(m2b)}{(0:\lm)} .. controls + (180:\lm) and \control{(a1b)}{(0:\li)} -- (bl) .. controls +(20:1.5) and \control{(br)}{(160:1.5)} -- (a2b);
\draw [dotted, brown] (cb) .. controls +(60:1) and \control{(ct)}{(-160:0.75)};
\end{scope}
\node [brown] at (3.65,-0.75) {$m_b$};
\end{scope}
\node at (ct) [above right] {$E M$};
\end{scope}
\end{tikzpicture}
\caption{Visualisation of $M$ and the associated cuts.}
\label{pic_proof_lem_bl}
\end{center}
\end{figure}

Intuitively, in Figure~\ref{pic_proof_lem_bl}, we have that $m_a + m_b < c +d$ so that at least one of $m_a<c$ and $m_b < d$ is true, and the latter is necessarily true because $f$ and $x_i$ are linked in $T$. Then, $|E (M \cup X_i)| < |E(X_i)|$ and it boils down to the points above.

Formally, we have that: 
\begin{equation*}
\left\lbrace
\begin{array}{l}
|E (X_i)| = |E(M \smallsetminus X_i, M \cap X_i)| + |E(X_i \smallsetminus M, X_i^c)| + |E (M \cap X_i , (M \cup X_i)^c)|\\
|E (M)| = |E(X_i \smallsetminus M, M \cap X_i)| + |E(M \smallsetminus X_i, M^c)| + |E (M \cap X_i , (M \cup X_i)^c)|\\
|E (X_i \cap M)| = |E(X_i \smallsetminus M, M \cap X_i)| + |E(M \smallsetminus X_i, M \cap X_i)| + |E (M \cap X_i , (M \cup X_i)^c)|\\
|E (X_i \cup M)| = |E(X_i \smallsetminus M, X_i^c)| + |E(M \smallsetminus X_i, M^c)| + |E (M \cap X_i , (M \cup X_i)^c)|
\end{array}
\right.
\end{equation*}

By definition of $M$, and since $f$ and $x_i$ are linked in $T$, $|E (X_i \cap M)| > |E (M)| \Leftrightarrow |E(M \smallsetminus X_i, M \cap X_i)| > |E(M \smallsetminus X_i, M^c)|$. By adding $|E(X_i \smallsetminus M, X_i^c)| + |E (M \cap X_i , (M \cup X_i)^c)|$ to both sides of the inequality, we obtain $|E (X_i \cup M)| < |E (X_i)|$ and we follow the previous point to show that this is a contradiction. 

Thus, $T'$ is linked.
\end{proof}

By applying Lemma~\ref{lem_branching} and Lemma~\ref{lem_bl} consecutively on the tree of a linked carving-decomposition of minimal width provided by Theorem~\ref{th_link_decomp}, we prove Theorem~\ref{th_bond_link_decomp}. Note that we need to restrict ourselves to $2$-vertex-connected graphs since a vertex whose deletion would render the graph disconnected necessarily creates a non-bond cut on the edge incident to the leaf labeling it.

\begin{proof}[Proof of Theorem~\ref{th_bond_link_decomp}]
Let $G$ be a $2$-vertex-connected graph and $(T, \phi)$ a carving-decomposition of $G$. A vertex $u$ of $T$ is incident to an edge that produces a non-bond cut of $G$ if and only if $u$ is an inner vertex of $T$ such that $\mu(u) > 0$ (it cannot be a leaf). 

Let $(T,\phi)$ be a carving-decomposition of $G$ of width $k$ that we can assume to be linked thanks to Theorem~\ref{th_link_decomp}. Among them, we take a minimum one for the relation $<_w$ defined in Section~\ref{subsec_linked}; there exists some by Lemma~\ref{lem_linked_min} and furthermore among them, we minimise $\mu$.

Let us assume that $\mu(T) > 0$ and let $u$ be an inner vertex of $T$ incident to $e = (u,v)$ such that $G_1^e$ is disconnected and such that $\mu(u)$ is minimum and non-zero. We claim that we locally are in the situation of the left picture of Figure~\ref{pic_rebranching} and will use the associated notations. As $u$ is an inner vertex of $G$, we set $G_1^u = A_1$, $G_2^u = A_2$, and $G_3^u=B$, where $E (A_1, A_2) = \varnothing$ (see Figure below). Since $e$ is not incident to a leaf, we can further refine the splitting around $v$: we set $G_1^v = B_1$, $G_2^v = B_2$, and $G_3^v = A_1 \cup A_2$.

\begin{figure}[ht]
\begin{center}
\begin{tikzpicture}
\coordinate (a1) at (120:1);
\coordinate (a2) at (-120:1);
\coordinate (b1) at ($(1,0)+(60:1)$);
\coordinate (b2) at ($(1,0)+(-60:1)$);
\draw (120:1) -- (0,0) -- (1,0) node [midway, above] {$e$} -- ++(60:1);
\draw (-120:1) -- (0,0);
\draw (1,0) -- ++(-60:1);
\node at (0,0) [left] {$u$};
\fill (0,0) circle (2pt);
\node at (1,0) [right] {$v$};
\fill (1,0) circle (2pt);
\begin{scope}[purple!50!red]
\node (A1) at (120:1) [circle, left] {$A_1$};
\node (A2) at (-120:1) [circle, left] {$A_2$};
\node (B1) at ($(1,0)+(60:1)$) [circle, right] {$B_1$};
\node (B2) at ($(1,0)+(-60:1)$) [circle, right] {$B_2$};
\end{scope}
\draw [purple!50!red, densely dashed] (A1.south) -- (A2.north) node [midway, black] {$\varnothing$};
\end{tikzpicture}
\end{center}
\end{figure} 

Then, the assumption of minimality corresponds to $\mu(u) = |B|-1$ being minimal among inner vertices of $T$ with non-zero $\mu$ value. Thus, we can assume that both $E (B_1, A_1 \cup A_2)$ and $E (B_2, A_1 \cup A_2)$ are non-empty. Indeed, if $E (B_1, A_1 \cup A_2) = \varnothing$, then $\mu(u) = |B|-1 = |B_1|+|B_2|-1 > |B_2|-1 = \mu(v) > 0$, which is a contradiction. Hence, up to switching $B_1$ and $B_2$, we can assume that $E (A_1,B_1) \neq \varnothing$ and $E (A_2,B_2) \neq \varnothing$. Additionally, recall that the case $E(A_i,B) = \varnothing$ is not possible because $G$ is connected. 
  
We now need to check that the current rebranching would yield by Lemma~\ref{lem_branching} a carving-decomposition $(T', \phi)$ with a width less or equal to $k$. Since $T$ and $T'$ would only differ by one edge, we need not check the width of all edges of $T'$ except $s'$. For $s'$, there are in fact two cases, with one of them that may require switching $B_1$ and $B_2$ before applying Lemma~\ref{lem_branching}. However, by definition of $(T,\phi)$, at each inner vertex $v$ such that $\mu(v) > 0$, rebranching at $s = (u,v)$ yields a new edge $s' = (u',v')$ such that $w(s') > w(s)$: indeed, we chose the decomposition minimum for $<_w$ and then for $\mu$. Hence, Lemma~\ref{lem_bl} can be applied since each edge is modified at most once.

\begin{itemize}
\item If $E (A_1, B_2) = \varnothing$ or $E (A_2, B_1) = \varnothing$: let say, by symmetry, that $E (A_1, B_2) = \varnothing$; then $E (A_1 \cup B_1) \subset E (B_1)$ so that $\wid (e') = |E (A_1 \cup B_1)| \leq |E (B_1)| < k$.

\item If $E (A_1, B_2) \neq \varnothing$ and $E (A_2, B_1) \neq \varnothing$. We will show that up to switching $B_1$ and $B_2$, the modification of branching will yield $|\wid (e')| < k$ (notice that in that case, even with the switching, we can still apply Lemma~\ref{lem_branching}).

As $T$ has width less than $k$, each edge incident to $e$ has width less than $k$: 

\begin{equation*}
\left\lbrace 
\begin{array}{l}
|E(A_1,B_1)| + |E(A_2,B_1)| + |E(B_1,B_2)| < k\\
|E(A_1,B_2)| + |E(A_2,B_2)| + |E(B_1,B_2)| < k
\end{array}
\right.
\end{equation*}

Summing these inequalities yields: $(|E (A_1, B_2)| + |E(B_1,B_2)| + |E (A_2, B_1)|) + (|E (A_1, B_1)| + |E (A_2, B_2)| + |E(B_1,B_2)|) < 2k$.
Hence, at least one of $E (A_1 \cup B_1)$ and $E (A_1 \cup B_2)$ has size less than $k$: we chose the minimal one. We may swap $B_1$ and $B_2$ so that $E (A_1 \cup B_1) < k$ is minimum.
\end{itemize}

As said in the first paragraph, as $G$ is $2$-vertex-connected, $T$ is non-bond as long as $\mu(T)>0$. However, we can always apply Lemma~\ref{lem_branching} and Lemma~\ref{lem_bl} to modify $T$ so that $\wid (T)<k$, $T$ is linked, and decrease $\mu(T)$ in the meantime. Furthermore $\mu(T)$ is an integer; hence this process will eventually yield a bond-linked carving-decomposition of width less than $k$.
\end{proof}

\section{Planar graphs with bounded carving-width}
\label{sec_bounded_cw}

The aim of this section is to prove that connected plane graphs of bounded degree are well-quasi-ordered by embedded immersion, \ie, Theorem~\ref{th_emb_immersion_bounded}. 

Since carving-width is bounded, we follow the classical approach of exploiting tree-like decomposition and then leverage an argument of Nash-Williams to exhibit a well-quasi-order. However, to handle properly embedded graphs, we require extra properties from their decompositions: being both linked and \textit{almost bond}. Indeed removing edges of a bond decomposition separates the graph into two subgraphs embedded on two disjoint discs by cut-cycle duality which is a well-suited property for our purpose. However by definition of bond, only graph that are at least $2$-vertex-connected can satisfy it. Hence we first define and establish the existence of decompositions satisfying the property of being nicely embeddable on discs and linked. Then, we will apply Nash-Williams' arguments to our framework to prove Theorem~\ref{th_emb_immersion_bounded}. 

This last definition needs graphs to be embedded in $\Sp^2$, however since their carving-width is bounded, the result on graphs embedded in $\Sp^2$ implies the theorem for graphs embedded in $\R^2$ so that we will focus only on graphs embedded in $\Sp^2$, in the remaining of the section.

\begin{lemma}\label{lem_plane_equiv}
If embedded graphs in $\Sp^2$ with bounded carving-width are well-quasi-ordered by embedded immersion, then embedded graphs in $\R^2$ with bounded carving-width are well-quasi-ordered by embedded immersion.
\end{lemma}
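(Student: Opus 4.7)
The plan is to argue by contradiction: given a hypothetical bad infinite sequence $(G_n)_{n \in \N}$ of graphs embedded in $\R^2$ with $\cwid(G_n) \leq k$ for all $n$, I will lift it into a sequence of $\Sp^2$ embedded graphs of bounded carving-width violating the hypothesis.

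For each $n$, view $G_n$ as embedded in $\Sp^2$ via one-point compactification, and form an auxiliary $\Sp^2$ embedded graph $\hat{G}_n$ by adjoining, inside the outer face of $G_n$, a \emph{marker component} $S_n$ isomorphic to the star $K_{1,k+2}$, with central vertex $v^*_n$ and pendant leaves $u^*_{n,1},\dots,u^*_{n,k+2}$. Since two disjoint components can be combined into a single carving-decomposition via a new tree edge carrying an empty cut, one has $\cwid(\hat{G}_n) \leq \max(\cwid(G_n), k+2) = k+2$, which is still bounded. Applying the hypothesis at bound $k+2$ therefore yields indices $i < j$ and an $\Sp^2$ embedded immersion $\phi : \hat{G}_i \to \hat{G}_j$.

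The next step is to show that $\phi$ respects the marker componentwise. Since $\phi$ is injective on vertices, the images of the $k+2$ leaves of $S_i$ are distinct from $\phi(v^*_i)$, so the $k+2$ paths to which the edges of $S_i$ are sent must leave $\phi(v^*_i)$ along $k+2$ pairwise distinct non-loop edges. In $\hat{G}_j$, the non-loop degree equals $k+2$ at $v^*_j$, equals $1$ at each $u^*_{j,m}$, and is at most $\cwid(G_j) \leq k$ at every vertex of $G_j$; hence only $v^*_j$ meets the requirement, forcing $\phi(v^*_i) = v^*_j$. Each of the $k+2$ paths must then leave $v^*_j$ along an edge toward some $u^*_{j,m}$ and terminate there, since that leaf has no other incident edge. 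Edge-disjointness produces a bijection between the leaves of $S_i$ and those of $S_j$, so $\phi$ maps $S_i$ bijectively onto $S_j$ and restricts to an $\Sp^2$ embedded immersion $\phi|_{G_i} : G_i \to G_j$, realized by operations on $\hat{G}_j$ that never touch $S_j$.

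The main obstacle, and the heart of the argument, is to promote this $\Sp^2$ restriction to an $\R^2$ embedded immersion. Since $S_j$ is untouched and sits in the outer face of $G_j$, the very same sequence of operations is valid in the $\R^2$ embedding of $G_j$ and yields a plane graph $R_0$ for which $R_0 \sqcup S_j$ is $\Sp^2$-equivalent to $\hat{G}_i = G_i \sqcup S_i$. The ambient isotopy witnessing this equivalence sends the component $S_j$ onto the component $S_i$, and in particular sends the face of $R_0$ containing $S_j$ onto the face of $G_i$ containing $S_i$. Picking an interior point $p_j$ of the face of $R_0$ containing $S_j$ and letting $p_i$ be its image under the isotopy, deleting these points and identifying the punctured spheres with $\R^2$ (so that $p_j, p_i$ play the role of the point at infinity) turns the restricted isotopy into a homeomorphism $\R^2 \to \R^2$ sending $R_0$ onto $G_i$. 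Since by construction the markers $S_n$ were placed in the original $\R^2$ outer faces, this homeomorphism is precisely an $\R^2$-equivalence between $R_0$ and $G_i$ with their genuine outer faces, so $G_i$ is an $\R^2$ embedded immersion of $G_j$, contradicting the bad-sequence assumption.
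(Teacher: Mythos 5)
Your proof is correct and takes essentially the same route as the paper's: place a distinguished marker component in the unbounded face of each $G_n$, argue that any $\Sp^2$ embedded immersion between the augmented graphs must carry marker to marker, and then peel off the marker and transfer the ambient isotopy back to $\R^2$. The only substantive difference is the choice of marker and the invariant used to recognize it. You take a star $K_{1,k+2}$ and distinguish it by vertex degree --- its centre has $k+2$ non-loop incident edges, whereas every vertex of $G_n$ has at most $\cwid(G_n)\leq k$ --- while the paper's proof takes a connected graph $U$ of carving-width exceeding $k$ and distinguishes it via the immersion-monotonicity of carving-width (note that the ``$\cwid(U)>1$'' appearing in the paper's proof is a typo for ``$\cwid(U)>k$''). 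Your variant is a shade more elementary: it needs only the observation that carving-width upper-bounds non-loop vertex degree, rather than the less immediate fact that carving-width cannot increase under immersion. Your final step --- choosing matching puncture points $p_i,p_j$ in the face containing the marker and passing to the one-point decompactifications to turn an $\Sp^2$-ambient isotopy into an $\R^2$ one --- makes explicit what the paper's closing sentence asserts in one line, and is a welcome elaboration.
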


\begin{proof}
An embedding in $\Sp^2$ is an embedding in $\R^2$ where a face is choose to be the unbounded one. We show how to mark such a face in this context using a graph with high carving-width.

Let $G_n$ be a family of embedded graphs in $\R^2$ with carving-width bounded by $k \in \N$. Choose $U$, a connected embedded graph in $\R^2$ such that $\cwid (U) > 1$. Set $G'_n = G_n \cup U$ where $U$ is embedded on the unbounded face of $G'_n$. Then, if $G'_i$ is an embedded immersion of $G'_j$, it follows that $G_i$ is an embedded immersion of $G_j$. Indeed the component $U$ of $G'_i$ is necessarily mapped to a component of $G'_j$ with carving-width at least $\cwid(U)$, by construction there is only one such component, which is the copy of $U$ in $G'_j$. Furthermore, no vertices of $G_i$ are mapped to vertices of $U$ in $G'_j$, because embedded immersion map vertices injectively. Then $G_i$ is an embedded immersion of $G_j$.

Every embedding in $\R^2$ induces an embedding in $\Sp^2$ by stereographic projection, for example. Let us denote ${G'}_n^S$ such embedding of $G'_n$ in $\Sp^2$. Let us assume that ${G'}^S_i$ is an embedded immersion of ${G'}^S_j$. An edge $e$ of $G_i$ is incident to the unbounded face of $G_i$ if and only if the associated edge in ${G'}^S_i$ is incident to the face containing $U$ by construction. Hence, the embedded immersion of ${G'}^S_i$ to ${G'}^S_j$ directly translate to an embedded immersion from $G_i$ to $G_j$.
\end{proof}

\subsection{Handling cut-vertices and connected components.}
\label{sec:disc_linked_carving_dec}

Let us consider $G$ a plane graph and $T$ a carving-decomposition of $G$. As seen earlier, if $G$ has a cut-vertex, it cannot have a bond carving-decomposition. However, we need only a weaker version of bond which can be enforced even on graphs admitting cut-vertices. A carving decomposition $T$ has the \emphdef{disc property} if for all $e \in E(T)$, there exists a Jordan curve $\gamma_e$ such that $G^e_1$ and $G^e_2$ lie on different sides of $\gamma_e$. Equivalently, there exists a disc $D_e$ such that $D_e$ is disjoint from $G^e_2$ and contains $G^e_1$. We call a carving-decomposition which is both linked and has the disc property: a \emphdef{disc carving-decomposition}. 

Note that contrary to bond and linked carving-decomposition, the disc property depends on the embedding. Furthermore, by cut-cycle duality, every bond decomposition of a planar connected graph satisfies the disc property for any embedding of the graph. Hence, we already established the existence of disc carving-decomposition for $2$-vertex-connected graphs. To establish the existence of disc carving-decompositions in general, we will first prove the following technical but simple lemmas, allowing us to merge disc carving-decompositions and providing disc carving-decompositions for components of $G$ that are not $2$-vertex-connected.

\begin{lemma}\label{lem_disc_decomp_merge}
Let $G$ be a plane connected graph and $v$ be a cut-vertex of $G$ such that $G = G_1 \cup G_2$ where $V(G_1) \cap V(G_2) = \{ v \}$. If there exist two disc carving-decompositions $(T_1,\phi_1),(T_2,\phi_2)$ of width $k_1,k_2$ of $G_1,G_2$ respectively and a Jordan curve $\gamma$ intersecting only $v$ such that vertices of $G_1 \smallsetminus v$ and $G_2 \smallsetminus v$ lie on different sides of $\gamma$, then there exists a disc carving-decomposition of $G$ of width $\max \{k_1, k_2, |E_G(\{ v \}|) \}$.
\end{lemma}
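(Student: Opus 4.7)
The plan is to construct $T$ by gluing $T_1$ and $T_2$ along the leaf labelled $v$, and then verify in turn the width, the disc property, and the linked property. Concretely, let $\ell_i$ be the leaf of $T_i$ with $\phi_i(v) = \ell_i$, and let $u_i$ be its unique neighbor in $T_i$. Form $T$ by removing $\ell_1$ and $\ell_2$, adding a new inner vertex $w$ and a new leaf $\tilde v$, and inserting the three edges $(w, u_1)$, $(w, u_2)$, $(w, \tilde v)$. The map $\phi$ inherits from $\phi_1$ and $\phi_2$ on the preserved leaves and sends $v$ to $\tilde v$. Because $v$ is a cut-vertex, no edge of $G$ joins $V(G_1) \smallsetminus \{v\}$ to $V(G_2) \smallsetminus \{v\}$. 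It follows that every edge inside the copy of $T_i \smallsetminus \{\ell_i\}$ in $T$ carries exactly the same cut-weight as in $T_i$: augmenting the side containing $v$ by $V(G_{3-i}) \smallsetminus \{v\}$ adds no crossing edges. The three new edges have widths $|E_{G_1}(\{v\})| \le k_1$, $|E_{G_2}(\{v\})| \le k_2$ and $|E_G(\{v\})|$ respectively, whence $\wid(T) = \max(k_1, k_2, |E_G(\{v\})|)$.

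For the disc property, the edges incident to $w$ are handled directly: a small circle around $v$ disjoint from $G \smallsetminus \{v\}$ witnesses $(w, \tilde v)$, while small perturbations of $\gamma$ just off $v$ into the appropriate side yield Jordan curves for $(w, u_1)$ and $(w, u_2)$, using that $G[V(G_1) \smallsetminus \{v\}]$ does not contain $v$ and that $\gamma$ separates $G_1$ from $G_2$ away from $v$. For an edge $e$ of $T_1 \smallsetminus \{\ell_1\}$ with $T_1$-displayed sets $A_1, B_1$ and disc-witness $\gamma_e$, I first use the freedom in choosing $\gamma_e$ to assume that $\gamma_e$ lies in the closed $G_1$-side of $\gamma$ (any portion of $\gamma_e$ in the $G_2$-side may be pushed along $\gamma$, which is disjoint from $G_1 \smallsetminus \{v\}$). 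If $v \in B_1$ then $\gamma_e$ itself already separates the $T$-displayed subgraphs; if $v \in A_1$, I splice $\gamma_e$ near $v$ with an arc slightly inside the $G_2$-side open disc of $\gamma$ that encloses $G_2 \smallsetminus \{v\}$, producing a Jordan curve which separates $G[A_1 \cup V(G_2)]$ from $G[B_1]$. The edges of $T_2 \smallsetminus \{\ell_2\}$ are handled symmetrically.

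For the linked property, the decisive observation is that the cut-vertex $v$ forces the factorisation $\mcut_G(X, Y) = \min(\mcut_{G_1}(X, \{v\}), \mcut_{G_2}(Y, \{v\}))$ whenever $X \subseteq V(G_1) \smallsetminus \{v\}$ and $Y \subseteq V(G_2) \smallsetminus \{v\}$, and that any cut in $G$ separating two subsets of $V(G_1)$ (with $V(G_2)$ appended to one side) equals the corresponding cut in $G_1$. Consider a pair of edges $a, b \in E(T)$ with connecting path $P$. If $a$ and $b$ both lie in the $T_1 \smallsetminus \{\ell_1\}$ part, the $T$-displayed sets differ from their $T_1$-counterparts only by the appended $V(G_2) \smallsetminus \{v\}$ on the $v$-side, so both the relevant minimum $G$-cut and the minimum width along $P$ coincide with the corresponding quantities in $T_1$, and the linked property of $T_1$ transfers. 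The symmetric case handles pairs inside $T_2 \smallsetminus \{\ell_2\}$. If $a \in T_1 \smallsetminus \{\ell_1\}$ and $b \in T_2 \smallsetminus \{\ell_2\}$, the path $P$ uses $(u_1, w)$ and $(w, u_2)$, whose widths match those of $(u_1, \ell_1)$ in $T_1$ and $(u_2, \ell_2)$ in $T_2$; the factorisation of $\mcut_G$ across $v$ combined with the linked properties of $T_1$ and $T_2$ applied respectively to the pairs $(a, (u_1, \ell_1))$ and $(b, (u_2, \ell_2))$ yields the claim. Pairs involving an edge incident to $w$ reduce to the same cases.

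The main obstacle is the disc property in the subcase where $v \in A_1$ for an interior edge of $T_1$: one must carefully splice $\gamma_e$ with a $\gamma$-detour enclosing $G_2 \smallsetminus \{v\}$ so that the resulting curve is a single Jordan curve, and the key topological input is that $\gamma$ meets $G$ only at the point $v$, so that $G_2 \smallsetminus \{v\}$ sits in an open disc disjoint from $G_1$, which both permits the detour and guarantees that the spliced curve remains simple and closed.
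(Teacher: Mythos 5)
Your proposal constructs the same merged tree as the paper and verifies the disc and linked properties via the same two key ideas: isotoping the $T_i$-disc-witnesses off $\gamma$, and factoring minimum cuts in $G$ across the cut-vertex $v$ into minimum cuts in $G_1$ and $G_2$. The one cosmetic divergence is in the disc property, where the paper isotopes $\gamma_e$ to be strictly disjoint from $\gamma$ (after which the $v$-side of $\gamma_e$ automatically absorbs $G_2 \smallsetminus \{v\}$ with no case split), whereas you push only into the closed $G_1$-side and then splice a detour around $G_2 \smallsetminus \{v\}$ when $v$ lies inside $\gamma_e$ --- slightly more work for the same effect.
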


\begin{proof}
We define a carving-decomposition of $G$ by merging $(T_1,\phi_1)$ and $(T_2,\phi_2)$ at their leaves $\ell_1,\ell_2$ labelled by $v$ (see Figure~\ref{pic_disc_merge}). To do so, we identify these leaves to create a new tree $T$ in which these merged leaves become an inner vertex $\ell$. We then add a new edge incident to $\ell$ and connect it to a newly created leaf $\ell'$ of $T$. We define $\phi$ to be equal to $\phi_1$ on $L(T_1) \smallsetminus \{ \ell_1 \}$, equal to $\phi_2$ on $L(T_2) \smallsetminus \{ \ell_2 \}$, and equal to $v$ on $\ell$.

\begin{figure}[ht]
\begin{center}
\begin{tikzpicture}
\def\e{0.05} \def\se{0.5} \def\ses{0.33}
\coordinate (d0) at (0,0);
\coordinate (d1) at (30:\se);
\coordinate (d2) at (150:\se);
\coordinate (d3) at (-90:\se);
\coordinate (d4) at ($(d1)+(90:\se)$);
\coordinate (d5) at ($(d1)+(-30:\se)$);
\coordinate (d6) at ($(d2)+(90:\se)$);
\coordinate (d7) at ($(d2)+(-150:\se)$);
\coordinate (d8) at ($(d7)+(-90:\se)$);
\coordinate (d9) at ($(d7)+(150:\se)$);
\coordinate (d10) at ($(d4)+(30:\ses)$);
\coordinate (d11) at ($(d4)+(150:\ses)$);
\coordinate (d12) at ($(d8)+(-30:\ses)$);
\coordinate (d13) at ($(d8)+(-150:\ses)$);

\coordinate (a0) at (2,0);
\coordinate (a1) at ($(a0)+(00:\se)$);
\coordinate (a2) at ($(a1)+(-60:\se)$);
\coordinate (a3) at ($(a1)+(60:\se)$);
\coordinate (a4) at ($(a3)+(0:\se)$);
\coordinate (a5) at ($(a3)+(120:\se)$);
\coordinate (a6) at ($(a4)+(60:\se)$);
\coordinate (a7) at ($(a4)+(-60:\se)$);
\coordinate (a8) at ($(a7)+(0:\ses)$);
\coordinate (a9) at ($(a7)+(-120:\ses)$);

\draw (d0) -- (d1); 
\draw (d0) -- (d2); 
\draw (d0) -- (d3); 
\draw (d1) -- (d4); 
\draw (d1) -- (d5); 
\draw (d2) -- (d6); 
\draw (d2) -- (d7); 
\draw (d7) -- (d8); 
\draw (d7) -- (d9); 
\draw (d4) -- (d10); 
\draw (d4) -- (d11); 
\draw (d8) -- (d12); 
\draw (d8) -- (d13); 

\begin{scope}[red!70!black]
\draw (a0) -- (a1); 
\draw (a1) -- (a2); 
\draw (a1) -- (a3); 
\draw (a3) -- (a4); 
\draw (a3) -- (a5); 
\draw (a4) -- (a6); 
\draw (a4) -- (a7); 
\draw (a7) -- (a8); 
\draw (a7) -- (a9);
\end{scope}

\node at (d5) [right] {$v$};
\node at (a0) [left] {$v$};
\node at (0,1.45) {$T_1$};
\node at (3,1.45) [red!70!black] {$T_2$};

\foreach \i in {0,...,13}{\fill (d\i) circle (\e cm);};
\foreach \i in {0,...,9}{\fill [red!70!black] (a\i) circle (\e cm);};

\draw [-Stealth] (4.25,0.25) -- +(1,0);

\begin{scope}[xshift=7cm]
\coordinate (d0) at (0,0);
\coordinate (d1) at (30:\se);
\coordinate (d2) at (150:\se);
\coordinate (d3) at (-90:\se);
\coordinate (d4) at ($(d1)+(90:\se)$);
\coordinate (d5) at ($(d1)+(-30:\se)$);
\coordinate (d6) at ($(d2)+(90:\se)$);
\coordinate (d7) at ($(d2)+(-150:\se)$);
\coordinate (d8) at ($(d7)+(-90:\se)$);
\coordinate (d9) at ($(d7)+(150:\se)$);
\coordinate (d10) at ($(d4)+(30:\ses)$);
\coordinate (d11) at ($(d4)+(150:\ses)$);
\coordinate (d12) at ($(d8)+(-30:\ses)$);
\coordinate (d13) at ($(d8)+(-150:\ses)$);

\coordinate (a0) at (d5);
\coordinate (a1) at ($(a0)+(00:\se)$);
\coordinate (a2) at ($(a1)+(-60:\se)$);
\coordinate (a3) at ($(a1)+(60:\se)$);
\coordinate (a4) at ($(a3)+(0:\se)$);
\coordinate (a5) at ($(a3)+(120:\se)$);
\coordinate (a6) at ($(a4)+(60:\se)$);
\coordinate (a7) at ($(a4)+(-60:\se)$);
\coordinate (a8) at ($(a7)+(0:\ses)$);
\coordinate (a9) at ($(a7)+(-120:\ses)$);
\coordinate (a10) at ($(a0)+(-105:\se)$);

\draw (d0) -- (d1); 
\draw (d0) -- (d2); 
\draw (d0) -- (d3); 
\draw (d1) -- (d4); 
\draw (d1) -- (d5); 
\draw (d2) -- (d6); 
\draw (d2) -- (d7); 
\draw (d7) -- (d8); 
\draw (d7) -- (d9); 
\draw (d4) -- (d10); 
\draw (d4) -- (d11); 
\draw (d8) -- (d12); 
\draw (d8) -- (d13); 

\begin{scope}[red!70!black]
\draw (a0) -- (a1); 
\draw (a1) -- (a2); 
\draw (a1) -- (a3); 
\draw (a3) -- (a4); 
\draw (a3) -- (a5); 
\draw (a4) -- (a6); 
\draw (a4) -- (a7); 
\draw (a7) -- (a8); 
\draw (a7) -- (a9);
\end{scope}

\node [red!40!black] at (a10) [below] {$v$};
\node [red!40!black] at ($(a0)+(-145:\se*0.6)$) {$e'$};
\node at (1.25,1.45) {$T$};

\foreach \i in {0,...,4}{\fill (d\i) circle (\e cm);};
\foreach \i in {6,...,13}{\fill (d\i) circle (\e cm);};
\foreach \i in {1,...,9}{\fill [red!70!black] (a\i) circle (\e cm);};
\fill [red!40!black] (a0) circle (\e cm);
\fill [red!40!black] (a10) circle (\e cm);
\draw [red!40!black] (a0) -- (a10);
\end{scope}
\end{tikzpicture}
\caption{Merging two disc carving-decompositions.}
\label{pic_disc_merge}
\end{center}
\end{figure}
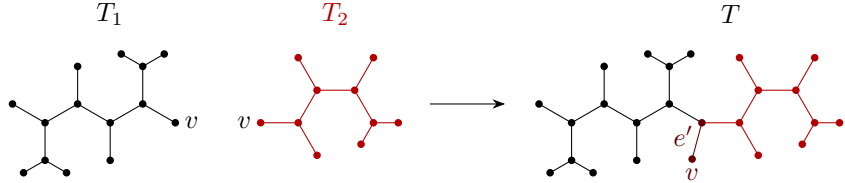

Let $e'$ be the edge incident to $\ell'$, we associate to it a Jordan curve $\gamma_{e'}$, which is the boundary of a disc around $v$. We also associate to each edge incident to $e'$ a small perturbation of $\gamma$ so that $\gamma$ does not intersect $v$, but intersects $G$ exactly once on each edge of either $E_G(G_1)$ or $E_G(G_2)$. 

Let $e$ be any other edge of $T$ and $\gamma_e$ be the associated Jordan curve for the disc property in the stemming $T_i$. By symmetry, let us assume that $e \in E(T_1)$. The curve $\gamma_e$ separates the vertices of $G$ into two sets, $A$ and $B$, that can be different from the sets of vertices that $T(e)$ separates. Indeed, this curve was defined only with respect to one $G_i$; it can contain vertices of the other subgraph. In this case, we isotope $\gamma_e$ by pushing it on the side of $\gamma$ containing $G_1$ so that $\gamma_e$ and $\gamma$ are disjoint, without modifying its intersection with $G_1$ (see Figure~\ref{pic_curve_push}). This is possible since, by the disc property, no parts of $G_1$ one the side of $\gamma$ where we need to push. The result is a curve satisfying the disc property for $e$. By repeating these modifications, we prove that $(T,\phi)$ satisfies the disc property.

\begin{figure}[ht]
\begin{center}
\begin{tikzpicture}
\def\e{0.05}
\foreach \i in {0, ...,7}{\coordinate (a\i) at (45*\i:1.75);};
\coordinate (a5) at (-160:1.75);
\coordinate (a8) at (-40:1.1);
\coordinate (a9) at (140:1.3);

\coordinate (i0) at (0:0.6);
\coordinate (i1) at (90:0.6);
\coordinate (i2) at (180:0.6);
\coordinate (i3) at (270:0.6);
\coordinate (i4) at (1:0);

\begin{scope}[red!70!black]
\draw (i0) -- (i1) -- (i2) -- (i3) -- cycle;
\draw (i4) -- (i0);
\draw (i4) -- (i1);
\draw (i4) -- (i3);
\draw (i0) -- (a8);
\draw (i3) -- (a8);
\end{scope}
\draw (a7) -- (a8);
\draw (a6) -- (a8);
\draw (a0) -- (a1) -- (a2) -- (a3) -- (a5) -- (a6) -- (a7) -- cycle;

\draw [blue!70!black] (a8) .. controls +(45:1.5) and \control{(a9)}{(45:1.75)} .. controls +(-135:1) and \control{(a8)}{(-135:1.5)};

\foreach \i in {0,...,3}{\fill (a\i) circle (\e cm);};
\foreach \i in {5,...,7}{\fill (a\i) circle (\e cm);};
\fill [red!40!black] (a8) circle (\e cm);
\foreach \i in {0, ...,4}{\fill [red!70!black] (i\i) circle (\e cm);};

\node at (a8) [red!40!black, right] {$v$};
\node at (130:0.8) [red!70!black] {$G_2$};
\node at (-30:2) {$G_1$};
\node at (45:1) [blue!80!black] {$\gamma$};

\begin{scope}[xshift = 5cm]
\foreach \i in {0, ...,7}{\coordinate (a\i) at (45*\i:1.75);};
\coordinate (a5) at (-160:1.75);
\coordinate (a8) at (-40:1.1);
\coordinate (a9) at (140:1.3);

\coordinate (i0) at (0:0.6);
\coordinate (i1) at (90:0.6);
\coordinate (i2) at (180:0.6);
\coordinate (i3) at (270:0.6);
\coordinate (i4) at (1:0);

\begin{scope}[opacity = 0.3]
\begin{scope}[red!70!black]
\draw (i0) -- (i1) -- (i2) -- (i3) -- cycle;
\draw (i4) -- (i0);
\draw (i4) -- (i1);
\draw (i4) -- (i3);
\draw (i0) -- (a8);
\draw (i3) -- (a8);
\end{scope}
\draw (a7) -- (a8);
\draw (a6) -- (a8);
\draw (a0) -- (a1) -- (a2) -- (a3) -- (a5) -- (a6) -- (a7) -- cycle;
\end{scope}

\draw [blue!70!black] (a8) .. controls +(45:1.5) and \control{(a9)}{(45:1.75)} .. controls +(-135:1) and \control{(a8)}{(-135:1.5)}; 

\draw [smooth cycle, tension = 0.75, green!50!black] plot coordinates {($(i4)+(-135:0.2)$) ($(a0)+(0:0.2)$) ($(i1)+(110:0.2)$)};

\foreach \i in {0,...,3}{\fill (a\i) circle (\e cm);};
\foreach \i in {5,...,7}{\fill (a\i) circle (\e cm);};
\fill [red!40!black] (a8) circle (\e cm);
\foreach \i in {0, ...,4}{\fill [red!70!black] (i\i) circle (\e cm);};

\node at (35:0.8) [green!50!black] {$\gamma_e$};
\draw [-Stealth] (1.90,0.35) -- +(1,0);
\end{scope}

\begin{scope}[xshift = 9.5cm]
\foreach \i in {0, ...,7}{\coordinate (a\i) at (45*\i:1.75);};
\coordinate (a5) at (-160:1.75);
\coordinate (a8) at (-40:1.1);
\coordinate (a9) at (140:1.3);
\coordinate (i0) at (0:0.6);
\coordinate (i1) at (90:0.6);
\coordinate (i2) at (180:0.6);
\coordinate (i3) at (270:0.6);
\coordinate (i4) at (1:0);

\begin{scope}[opacity = 0.3]
\begin{scope}[red!70!black]
\draw (i0) -- (i1) -- (i2) -- (i3) -- cycle;
\draw (i4) -- (i0);
\draw (i4) -- (i1);
\draw (i4) -- (i3);
\draw (i0) -- (a8);
\draw (i3) -- (a8);
\end{scope}
\draw (a7) -- (a8);
\draw (a6) -- (a8);
\draw (a0) -- (a1) -- (a2) -- (a3) -- (a5) -- (a6) -- (a7) -- cycle;
\end{scope}

\draw [blue!70!black] (a8) .. controls +(45:1.5) and \control{(a9)}{(45:1.75)} .. controls +(-135:1) and \control{(a8)}{(-135:1.5)}; 

\draw [smooth cycle, tension = 0.75, green!50!black] plot coordinates {($(i4)+(-135:0.2)$) ($(a0)+(180:0.7)$) ($(i1)+(110:0.2)$)};
\node at (35:1) [green!50!black] {$\gamma_e$};

\foreach \i in {0,...,3}{\fill (a\i) circle (\e cm);};
\foreach \i in {5,...,7}{\fill (a\i) circle (\e cm);};
\fill [red!40!black] (a8) circle (\e cm);
\foreach \i in {0, ...,4}{\fill [red!70!black] (i\i) circle (\e cm);};
\end{scope}
\end{tikzpicture}
\caption{Modification of $\gamma_e$.}
\label{pic_curve_push}
\end{center}
\end{figure}
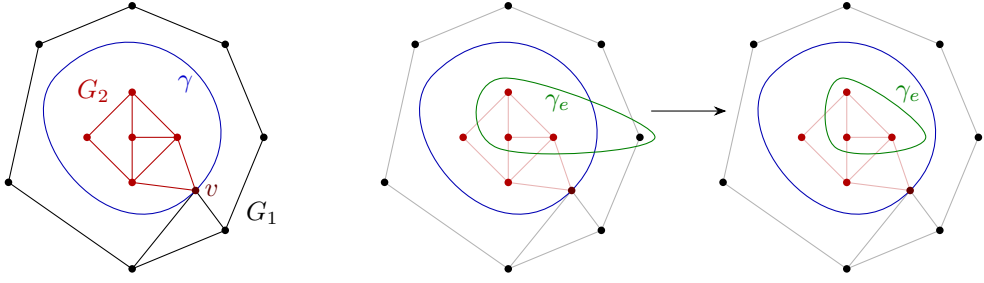

We now want to prove that the carving-decomposition is linked. In this proof, we will be considering cuts and their sizes in $G, G_1,$ and $G_2$. For clarity, for $H \in \{G, G_1, G_2 \}$, we will denote by $\mcut(A,B;H)$, $\mcut(A,B)$ with respect to the graph $H$.

-Now, let $a,b$ be two edges of $T$ that were both in $T_1$ or $T_2$, say $T_1$ by symmetry, and $A,B$ their associated set of vertices displayed in $T_1$. First notice that, for any set of vertices of $X$, $E_{G_1} (X \cap V(G_1)) \subset E_{G} (X)$. Indeed, $e \in E_{G_1} (X \cap V(G_1))$ means that $e$ has an endpoint in $X \cap V(G_1)$ and another in $V(G_1) \smallsetminus X$, thus $e \in E_{G} (X)$. Furthermore, each edge $x$ of $T$ distinct from $e'$ has an equivalent $x_1$ in either $T_1$ or $T_2$, and by construction of the curves, $\gamma_x \cap G = \gamma_{x_1} \cap G_1$ so that $|T_1 (x_1)| = |T(x)|$.

\begin{itemize}
\item If $v$ is in neither $A$ nor $B$ and $X$ is a set of vertices of $G_1$ such that $A \subset X \subset V(G_1) \smallsetminus B$, then either $|E_{G}(X)| = |E_{G_1}(X)|$ or $|E_{G}(X \cup V(G_2))| = |E_{G}(X)|$ depending on whether or not $v$ is in $X$. Hence, $\min_{A \subset X \subset V(G) \smallsetminus B} |E_{G} (X)| \leq \min_{A \subset X \subset V(G_1) \smallsetminus B} |E_{G_1} (X)|$. Now let $X \in \mathcal{P}(V(G))$ be a set of vertices of $G$ containing $A$ but disjoint from $B$: $A \subset X \subset V(G) \smallsetminus B$. Set $X' = X \cap V(G_1)$, then $A \subset X' \subset V(G_1) \smallsetminus B$, hence by the remark above, $|E_{G_1} (X')| \leq |E_G (X)|$ so that $\mcut (A,B;G_1) = \mcut(A,B;G)$.

\item Otherwise, let assume by symmetry that $v \in A$. Then, the associated set of vertices to $a$ in $T$ is $A' = A \cup V(G_2)$ by construction of $T$. It follows that sets of vertices $X$ of $G$ satisfying $A' \subset X \subset V(G) \smallsetminus B$ are in bijection by $X \mapsto X \cap V(G_1)$ with sets of vertices $X'$ of $G_1$ satisfying $A \subset X' \subset V(G_1) \smallsetminus B$. It follows that $\mcut (A,B;G_1) = \mcut(A,B;G)$.
\end{itemize}

Since $(T_1,\phi_1)$ is linked, there is one edge $p_1$ in $P_1$, the minimal path of $T_1$ containing both $a$ and $b$, such that $T_1 (p) = \mcut (A,B;G_1)$. The path $P_1$ has a natural equivalent $P$ in $T$, and $p_1$ has an equivalent $p$ in $P$. Hence, $\mcut(A,B;G) = |T(p)| = \min_{x \in P} |T(x)|$ (no edge has a smaller associated cut on the path $P$ according to the remark preceding this point).

-Let $a$ be an edge of either $T_1$ or $T_2$, say $T_1$ by symmetry, and set $b = e'$. Let $b'$ be the edge incident to $e'$ in the subtree of $T$ stemming from $T_1$. Notice that the edges $e'$ and $b'$ are linked since $\mcut(A,B;G) = |E_G (G_2)| = |T(b')| = \min_{x \in P=\{e', b'\}} |T(x)| \leq |E_G(\{ v \})| |T(e)|$. If $a \neq b'$, then, since $(T_1,\phi_1)$ is linked, there is one edge $p_1$ in $P_1$, the minimal path of $T_1$ containing both $a$ and $b'$, such that $T_1 (p) = \mcut (A,V(G_2);G_1)$. The path $P_1$ has a natural equivalent $P$ in $T$, and $p_1$ has an equivalent $p$ in $P$. 

Let $X \in \mathcal{P} (V(G))$ be a set of vertices containing $A$ but not $v$. Let us assume that there exists a vertex $y \in X \cap V(G_2)$. Then, there is a path $Y$ from $y$ to $v$ in $G_2$ and $Y \cap E_G(X) \neq \varnothing$. Hence, $E_G(X) > E_{G_1} (X \cap V(G_1))$. Thus, $\mcut(A,V(G_2);G) = \mcut(A,\{ v \};G) = \min_{x \in (P \cup {e'})} |T(x)|$ and this is achieved on the edge $p$. 

-Let $a$ be an edge of $T_1$, $b$ be an edge of $T_2$, and $A,B$ be their associated vertex sets with respect to $T$. The minimal cut separating $A$ and $B$ contains only edges of either $G_1$ or $G_2$. Indeed, $v$ belongs to neither $A$ or $B$ since $e'$ is, by construction of $T$ on the tree of $T \smallsetminus a$ containing $b$, and on the tree of $T \smallsetminus b$ containing $a$. If $v$ is not separated by a cut $C$ from both $A$ and $B$, then $A$ and $B$ are not disconnected by $C$. Hence, $v$ needs to be separated from at least one of them, and separating $v$ from either $A$ or $B$ is enough to separate $A$ and $B$ since $v$ is a cut-vertex. Let $C$ be a cut separating $A$ and $B$; then one of $C \cap E(G_1)$ is $C \cap E(G_2)$ separates $A$ and $B$ and is at most as big as $C$. Hence $\mcut(A,B;G) = \min ( \mcut(A,\{ v \};G_1), \mcut(B, \{ v \}; G_2))$. For reasons mentioned above, one edge $p$ of either the path from $e'$ to $a$ or the path from $e'$ to $b$ will realize this minimum. And this edge is not $e'$ since the following edge in each path has a smaller width. Hence the edge $p$ belongs to the path from $a$ to $b$ in $T$, and $a$ and $b$ are linked, and it follows that $(T,\phi)$ is a disc carving-decomposition of $G$.
 
Recall that the width of edges in $T$ is the same as their corresponding edge in $T_2$ or $T_1$. Since the width of $e'$ is $|E_G(\{v\})|$, it follows that the width of $(T,\phi)$ is $\max \{ \cwid (G_1), \cwid (G_2), E_G(\{ v \}) \}$.
\end{proof}

\begin{lemma}\label{lem_tree_linked}
Let $G$ be a $1$-vertex-connected plane graph without non-trivial $2$-vertex-connected component with at least one edge, then $G$ admits a disc carving-decomposition of width $\cwid (G) = \max_{v \in V(G)} |E(\{ v \})|$.
\end{lemma}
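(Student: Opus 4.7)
The hypotheses force $G$ to be a simple tree: every block of $G$ is either an isolated vertex or a bridge (otherwise it would be a $2$-vertex-connected subgraph with at least one edge, which is forbidden), so $G$ is acyclic and has no multi-edge or self-loop; being connected as well, $G$ is a simple tree. For any such $G$ one has
\[
\cwid(G) \;=\; \max_{v \in V(G)} |E_G(\{v\})|,
\]
since the tree-edge of any carving-decomposition incident to the leaf $\phi^{-1}(v)$ displays exactly the set $\{v\}$ and therefore has width $|E_G(\{v\})|$; the matching upper bound will come from the construction below.

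The argument proceeds by induction on $|V(G)|$. The base cases $|V(G)| \leq 2$ are immediate: the trivial carving-tree (a single labeled leaf when $|V(G)|=1$, or a single edge between two labeled leaves when $|V(G)|=2$) has width at most $1$, the disc property is witnessed by the obvious small Jordan curve around one vertex, and the linked condition is vacuous since there is no pair of distinct tree-edges to check.

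For the inductive step, pick a leaf $\ell$ of $G$ and let $v$ be its unique neighbor. Decompose $G$ as $G_1 \cup G_2$ where $G_1$ consists of $\ell$, $v$ and the single edge $(\ell,v)$, and $G_2 = G \smallsetminus \ell$; then $V(G_1) \cap V(G_2) = \{v\}$ and both $G_1$ and $G_2$ are themselves simple trees (or reduce to the base case). Because $\ell$ is a pendant vertex of a plane graph, a Jordan curve $\gamma$ meeting $G$ exactly at $v$ and separating $\ell$ from $V(G_2) \smallsetminus \{v\}$ is obtained by taking the boundary of a sufficiently small closed disc around the edge $(\ell,v)$ and deforming it to pass through $v$; this is routine since the only edge of $G$ incident to $\ell$ is $(\ell,v)$. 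The graph $G_1$ carries the trivial disc carving-decomposition of width $1$, and the induction hypothesis supplies a disc carving-decomposition of $G_2$ of width $\cwid(G_2) \leq \cwid(G)$. Feeding these into Lemma~\ref{lem_disc_decomp_merge} yields a disc carving-decomposition of $G$ of width
\[
\max\bigl\{\,1,\ \cwid(G_2),\ |E_G(\{v\})|\,\bigr\} \;\leq\; \cwid(G),
\]
which therefore equals $\cwid(G)$ by the lower bound above.

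The only points requiring any care are (a) extracting from the block hypothesis that $G$ is a simple tree, and (b) constructing the separating Jordan curve $\gamma$ demanded by Lemma~\ref{lem_disc_decomp_merge}; both are essentially immediate, so I expect the main content of the proof to lie in correctly setting up the leaf-peeling induction and tracking that the widths compose as the $\max$ advertised by Lemma~\ref{lem_disc_decomp_merge}.
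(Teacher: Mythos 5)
Your opening claim --- that the hypotheses force $G$ to be a \emph{simple} tree --- is incorrect. The qualifier ``non-trivial'' in the lemma refers to a component spanning more than one vertex, not to one containing an edge (otherwise the phrase ``with at least one edge'' would be redundant); the paper's own proof immediately restates the hypothesis as ``its only $2$-vertex-connected components are singletons of $V(G)$'' and concludes that $G$ is ``a tree \emph{with self-loops}.'' A self-loop at $v$ spans only the singleton $\{v\}$, hence is trivial in this sense and is permitted. This matters for your construction: if the peeled leaf $\ell$ carries self-loops, your $G_1$ (just the bridge $(\ell,v)$) omits them, so $G_1 \cup G_2 \neq G$ and the hypotheses of Lemma~\ref{lem_disc_decomp_merge} are not met. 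The fix is mild --- fold the self-loops at $\ell$ into $G_1$, observe that they change neither $\cwid(G_1)=1$ nor $|E_G(\{v\})|$ since a self-loop never crosses a cut, and enlarge the base case to a single vertex or single bridge decorated with arbitrarily many self-loops --- but as written the argument does not cover the intended class of graphs.

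Aside from that gap, your route is a genuine variant of the paper's. The paper also peels a leaf $\ell$, but then operates directly on the disc carving-decomposition $(T',\phi')$ of $G\smallsetminus\ell$ supplied by the inductive hypothesis: it subdivides the tree-edge $e'$ incident to the leaf of $T'$ labelling $\ell$'s neighbour, grafts a new leaf labelled $\ell$, and re-verifies the disc and linked conditions by hand (pushing $\gamma_e$ curves and checking the new $\mcut$ values). You instead split $G$ at the cut-vertex $v$ and invoke Lemma~\ref{lem_disc_decomp_merge} to do the assembly, delegating all the disc/linked bookkeeping to that lemma and leaving only the width estimate $\max\{1,\cwid(G_2),|E_G(\{v\})|\}\le\cwid(G)$ to check, which you do correctly given the lower bound $\cwid(G)\ge\max_v|E_G(\{v\})|$. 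This is more modular and reuses existing machinery; the small cost is verifying that $v$ is indeed a cut-vertex of $G$ (true once $|V(G)|\ge 3$, since $v$ then has a neighbour other than $\ell$) and exhibiting the separating Jordan curve $\gamma$, both of which you address. Once the self-loop issue is repaired, the proof goes through.
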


\begin{proof}
If $G$ is a non-trivial $1$-vertex-connected component but its only $2$-vertex-connected components are singletons of $V(G)$, then $G$ is a tree with self-loops. Let us prove the result by induction on the size of the tree.

\textbullet If $G$ is reduced to a single edge, then $T$, also a tree with a single edge, is a disc carving-decomposition of $G$ of width $\cwid (G) = 1$.

\textbullet Suppose by induction that our property is verified on any non-trivial embedded tree with less than $n > 2$ vertices. Let $G$ be an embedded tree with $n$ vertices and $\ell$ be a leaf of $G$ when self-loops are removed. $G' = G \smallsetminus \ell$ admits a disc-carving decomposition $(T',\phi')$ of size $\cwid(G')$. Let $v$ be the leaf of $T$ labeling the neighboring vertex $f$ of $\ell$ in $G$, and $e' = (u,v)$ the edge of $T$ incident to $v$. To obtain a new tree $T$, let us subdivide $e'$ into two edges $(u,u')$ and $(u',v)$ and add the new edge $(u',l)$ to $T'$. Let us set $\phi (x) = \phi' (x)$ if $x \in L(T')$ and $\phi(l) = \ell$ (see Figure~\ref{pic_proof_tree_link} for a visual summary). Then, $(T,\phi)$ is a carving-decomposition of $G$.

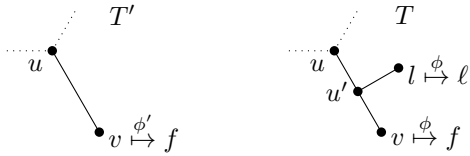
\begin{figure}[ht]
\begin{center}
\begin{tikzpicture}[scale = 1.25]
\def\e{0.05} 
\draw [dotted] (120:1)-- +(180:0.5);
\draw [dotted] (120:1)-- +(60:0.5);
\draw (0,0) -- (120:1);
\fill (0,0) circle (\e cm);
\fill (120:1) circle (\e cm);

\node at (0,0) [right] {$v \overset{\phi'}{\mapsto} f$};
\node at (120:1) [below left] {$u$};
\node at (0.25,1.25) {$T'$};

\begin{scope}[xshift = 3cm]
\draw [dotted] (120:1)-- +(180:0.5);
\draw [dotted] (120:1)-- +(60:0.5);
\draw (0,0) -- (120:1);
\draw (120:0.5) -- +(30:0.5) coordinate (l);
\fill (120:1) circle (\e cm);
\fill (0,0) circle (\e cm);
\fill (120:0.5) circle (\e cm);
\fill (l) circle (\e cm);

\node at (0,0) [right] {$v \overset{\phi}{\mapsto} f$};
\node at (l) [right] {$l \overset{\phi}{\mapsto} \ell$};
\node at (120:1) [below left] {$u$};
\node at (120:0.5) [left] {$u'$};
\node at (0.25,1.25) {$T$};
\end{scope}
\end{tikzpicture}
\caption{Subdivision of $(u,v)$.}
\label{pic_proof_tree_link}
\end{center}
\end{figure}

It is possible to isotope each curve $\gamma_e$ for $e \in E(T')$ so that $\ell$ lies on the proper side without modifying the intersections with $G'$ by pushing along the edge incident to it in $G$ if needed. In that case either $f$ and $v$ lie on different sides, or the edge is $(w,v)$ and we can shrink $\gamma_e$ around $f$. Then, we define $\gamma_\ell$ as a small circle around $\ell$ intersecting its incident edge once. Hence, $(T,\phi)$ satisfies the disc property. 

For $a,b$, two edges of $T$, if none of them is $(w,l)$, it is clear that they are linked since they are linked in $T'$. Otherwise, let us assume that $a = (w,l)$. Then the minimal cut separating $\ell$ from any subset of $V(G')$ is the edge incident to $\ell$, which is reached at $a$. Hence $(T, \phi)$ is linked.

The carving-width of $G$ is lower bounded by the maximal degree among its vertices. Let $e$ be an edge of $T' \smallsetminus (u,v)$, by construction of $T$, $T(e) = T'(e)$. In addition, $\wid ((u',l)) = 1 \leq \wid ((u,u')) = |E_G(\{ v \})| -1 < \wid ((u',v)) = |E_G(\{ v \})|$. If $v$ is a vertex of maximal degree of $G$, then $\cwid (G) = |E_G(\{ v \})|$. Otherwise, $\cwid (G) = \max_{x \in V(G')} |E_{G'}(\{ x \})| = \max_{x \in V(G)} |E_{G}(\{ x \})|$, which concludes this proof.
\end{proof}

We now have a disc carving decompositions of $2$-vertex-connected components, and the $1$-vertex connected one in-between them that we can merge using Lemma~\ref{lem_disc_decomp_merge}.

\begin{proposition}\label{prop_disc_decomp_connected}
Let $G$ be a connected plane graph. Then, there exists a disc carving-decomposition of $G$ of width $\cwid (G)$.
\end{proposition}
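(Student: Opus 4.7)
The plan is to induct on $|V(G)|$, combining Theorem~\ref{th_bond_link_decomp}, cut-cycle duality, and Lemma~\ref{lem_disc_decomp_merge}. In the base case $G$ has no cut-vertex. If $|V(G)| = 1$ the result is immediate; otherwise $G$ is $2$-vertex-connected and Theorem~\ref{th_bond_link_decomp} furnishes a bond-linked carving-decomposition of width $\cwid(G)$. The bond property says that for every tree edge $e$ both $G_1^e$ and $G_2^e$ are connected, and by the cut-cycle duality recalled in Section~\ref{sec_prelim} such a minimal bond cut is carried by a cycle of $G^*$ which is realizable as a Jordan curve $\gamma_e$ in $\Sp^2$ separating $G_1^e$ from $G_2^e$. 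Hence the bond-linked decomposition already has the disc property.

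For the inductive step, suppose $G$ has a cut-vertex $v$, and let $C_1, \ldots, C_r$ (with $r \geq 2$) be the connected components of $G \smallsetminus \{v\}$. In the planar embedding, the edges incident to $v$ split into $r$ pairwise disjoint contiguous arcs around $v$, one per component $C_j$, which is a standard consequence of planarity. Fix any $1 \leq s < r$ and set $G_1 = G[\{v\} \cup C_1 \cup \cdots \cup C_s]$ and $G_2 = G[\{v\} \cup C_{s+1} \cup \cdots \cup C_r]$, where the $C_j$'s are indexed in their cyclic order around $v$. Then $V(G_1) \cap V(G_2) = \{v\}$, $E(G) = E(G_1) \sqcup E(G_2)$, both $|V(G_i)| < |V(G)|$, and the Jordan curve $\gamma$ through $v$ separating the two contiguous families of arcs witnesses that $V(G_1) \smallsetminus \{v\}$ and $V(G_2) \smallsetminus \{v\}$ lie on opposite sides.

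Applying the induction hypothesis to each $G_i$ yields a disc carving-decomposition of width $\cwid(G_i)$, and feeding these into Lemma~\ref{lem_disc_decomp_merge} with $\gamma$ produces a disc carving-decomposition of $G$ of width $\max\{\cwid(G_1), \cwid(G_2), |E_G(\{v\})|\}$. To bound this by $\cwid(G)$, I would note that $\cwid$ is monotone under induced subgraphs: restricting a width-$\cwid(G)$ carving-decomposition of $G$ to $V(G_i)$ and pruning unlabelled leaves only shrinks each cut, so $\cwid(G_i) \leq \cwid(G)$. The bound $|E_G(\{v\})| \leq \cwid(G)$ follows since any carving-decomposition of $G$ has an edge adjacent to the leaf labelled by $v$ of width exactly $|E_G(\{v\})|$.

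The main obstacle is the geometric step of producing the Jordan curve $\gamma$ required by Lemma~\ref{lem_disc_decomp_merge}, which reduces to the planar structural fact that the components of $G \smallsetminus \{v\}$ close up to form pairwise disjoint contiguous arcs around $v$ in the cyclic order at $v$. The width bounds are routine once subgraph monotonicity of $\cwid$ and the leaf-cut bound are observed, and the cut-cycle duality step underlying the $2$-vertex-connected base case is classical.
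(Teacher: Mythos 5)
Your proof is correct, and it takes a genuinely different route from the paper's. The paper builds a global decomposition up front: it identifies the blocks of $G$ (the $2$-vertex-connected components together with the tree-like pieces), obtains disc carving-decompositions for each via Theorem~\ref{th_bond_link_decomp} and Lemma~\ref{lem_tree_linked} respectively, and then iteratively glues them with Lemma~\ref{lem_disc_decomp_merge}, always merging a ``leaf'' of the block-cut structure so that the remaining piece sits in a single face of the accumulated graph. You instead induct directly on $|V(G)|$: either $G$ has no cut-vertex, so it is $2$-vertex-connected and Theorem~\ref{th_bond_link_decomp} applies, or you split $G$ at a cut-vertex $v$ into two smaller induced subgraphs $G_1,G_2$ (grouping the components of $G\smallsetminus v$ consistently with the rotation at $v$), recurse, and glue with Lemma~\ref{lem_disc_decomp_merge}. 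The notable gain is that you bypass Lemma~\ref{lem_tree_linked} entirely: the tree-like pieces, which the paper treats as a separate base case, simply recurse down to single edges, and a single edge is $2$-vertex-connected so the base case handles it. Both arguments rely on essentially the same planar-topological fact; where you invoke that the edges from $v$ to each component $C_j$ form pairwise disjoint contiguous arcs in the rotation at $v$ (and hence that a suitable Jordan curve through $v$ exists), the paper invokes that $D_j\smallsetminus v$ lies in a single face of $D_i$ and that the face boundary can be pushed off to a Jordan curve pinched at $v$. Your width bounds (monotonicity of $\cwid$ under induced subgraphs by unlabelling leaves, and $|E_G(\{v\})|\leq\cwid(G)$ via the leaf-cut at $v$) are exactly the arguments the paper uses as well. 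The one step you label as ``standard'' --- the contiguity of arcs --- does deserve the one-line justification (two edges $e_1,e_3$ to $C_j$ and an interleaved edge $e_2$ to $C_{j'}$ would force $C_{j'}$ to meet both sides of the Jordan cycle $e_1\cup p\cup e_3$, $p\subseteq C_j$), but that is a small presentational point, not a gap; the paper's proof elides a parallel step.
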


\begin{proof}[Proof of Proposition~\ref{prop_disc_decomp_connected}]
Let $D_1,\ldots, D_k$ be the $2$-vertex-connected components of $G$, and $D_{k+1}, \ldots, D_{n}$ the remaining $1$-vertex-connected non-trivial components of $G$ not containing $2$-vertex-connected components of $G$. By definition, $D_1,\ldots,D_n$ is a family of connected components of $G$, covering $G$, where $D_i \cap D_j$ is either empty or a single cut-vertex if $i \neq j$. By Theorem~\ref{th_bond_link_decomp} and Lemma~\ref{lem_tree_linked}, there exists a family of disc carving-decompositions $(T_1, \phi_1), \cdots, (T_n, \phi_n)$ where $(T_i, \phi_i)$ is a disc carving-decomposition of $D_i$ of width $\cwid (D_i)$.

We now iteratively merge these decompositions using Lemma~\ref{lem_disc_decomp_merge}: as long as our family of decompositions has two or more elements, there exists a pair of components $D_i, D_j$ intersecting on a cut-vertex $v$, since $G$ is connected. As $G$ is embedded, $D_j \smallsetminus v$ lies entirely on a face of $D_i$. By pushing slightly the boundary of this face within itself, except at vertex $v$, we obtain a Jordan curve satisfying the condition of Lemma~\ref{lem_disc_decomp_merge}. We apply this lemma to obtain a new disc carving-decomposition $(T',\phi')$ of the merged component $D = D_i \cup D_j$ by which we replace $D_i,D_j$ and $(T_i,\phi_i),(T_j,\phi_j)$ in their respective families. We iterate this process until the remaining component is $G$ and we have a disc carving-decomposition $(T,\phi)$ of $G$.

A direct induction on this process yields that $\wid (T,\phi) = \max \{ \max_{v \in G} |E_G(v)|, \max_{1 \leq i \leq n} \cwid (D_i) \}$. However, any carving-decomposition $(T', \phi')$ of $G$ can be turned into a carving-decomposition of any $D_i$ by unlabeling leaves of $T'$: we define $\psi$ such that $\psi (x) = \phi' (x)$ if $\phi' (x) \in V(D_i)$ and undefined otherwise. Every vertex removed from the labeling this way only decreases the width of the edges by definition. Hence, $\wid (T', \phi') \geq \wid (T',\psi) \geq \cwid (D_i) $. It follows that $\cwid (G) \geq \max_{1 \leq i \leq n} \cwid (D_i)$. Furthermore, $\cwid (G) \geq \max_{x \in V(G)} |E_{G}(\{ x \})|$. Thus $\wid (T,\phi) \leq \cwid (G)$, which implies by definition of carving-width that $\wid (T,\phi) = \cwid (G)$, which concludes the proof.
\end{proof}

Similarly to Proposition~\ref{prop_disc_decomp_connected}, we consider disc carving-decompositions for each connected components and merge them, and prove the existence of disc carving-decomposition for general plane graphs (Proposition~\ref{prop_disc_decomp}).

\begin{proof}[Proof of Proposition~\ref{prop_disc_decomp}]
Let $G_1,\ldots,G_k$ be the connected components of $G$. By Proposition~\ref{prop_disc_decomp_connected}, there exists a family $\mathcal{T}$ of disc carving-decomposition $\mathcal{T} = (T_i,\phi_i)_{1 \leq i \leq k}$ such that $(T_i, G_i)$ is a disc carving-decomposition of $G_i$. 

We now merge the carving-decompositions of $\mathcal{T}$. Let us illustrate the merging of two carving-decompositions $(T_1,\phi_1)$ and $(T_2, \phi_2)$ where the graphs $G_1$ and $G_2$ share a face of $G$. First choose a vertex $v_i$ in each $V(G_i)$ such that the vertices $v_i$ are incident to the face shared by the two graphs $G_1$ and $G_2$, and let $e_i$ be the edge of $G_i$ incident to the leaf of $T_i$ whose label is $v_i$. We subdivide each $e_i$ and connect the vertices of degree $2$ in each $T_i$ by a new edge $e'$ (see Figure~\ref{pic_merge_disco}) and define $\phi$ on $T$ by $\phi(v)$ = $\phi_i(v)$ if $v$ is a leaf of $T_i$. This yields a carving-decomposition $(T,\phi)$ of $G_1 \cup G_2$ and replace $(T_1,\phi_1)$ and $(T_2, \phi_2)$ by $(T, \phi)$ in $\mathcal{T}$. Note that the width of $e'$ is $0$ and that the width of other edges is the same as the one they stem from in each $T_i$. They are separated by a Jordan curve $\gamma$ that we use to modify curves $\gamma_e$ associated to edges of $T$ as in Proposition~\ref{lem_disc_decomp_merge}.

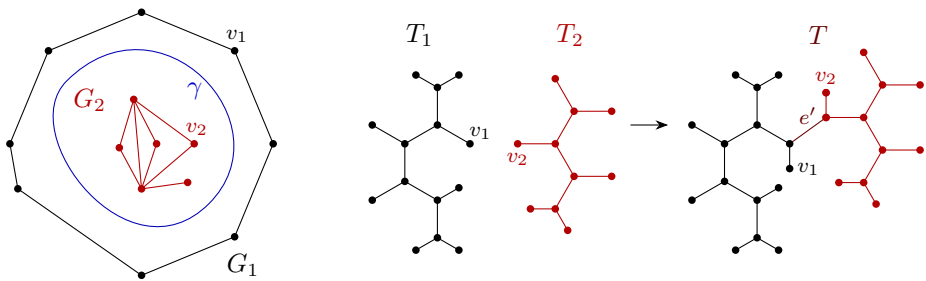
\begin{figure}[ht]
\begin{center}
\begin{tikzpicture}
\def\e{0.05} \def\se{0.5} \def\ses{0.33}
\foreach \i in {0, ...,7}{\coordinate (a\i) at (45*\i:1.75);};
\coordinate (a5) at (-160:1.75);
\coordinate (a8) at (-40:0.8);
\coordinate (a9) at (140:1.3);

\coordinate (i0) at (0:0.7);
\coordinate (i1) at (100:0.6);
\coordinate (i2) at (190:0.3);
\coordinate (i3) at (270:0.6);
\coordinate (i4) at (0.2,0);

\begin{scope}[red!70!black]
\draw (i0) -- (i1) -- (i2) -- (i3) -- cycle;
\draw (i3) -- (i1);
\draw (i4) -- (i1);
\draw (i4) -- (i3);
\draw (i3) -- (a8);
\end{scope}
\draw (a0) -- (a1) -- (a2) -- (a3) -- (a4) -- (a5) -- (a6) -- (a7) -- cycle;

\draw [blue!70!black] ($(a8)+(-45:0.4)$) .. controls +(45:1.5) and \control{(a9)}{(45:1.75)} .. controls +(-135:1) and \control{($(a8)+(-45:0.4)$)}{(-135:1.5)}; 

\foreach \i in {0,...,7}{\fill (a\i) circle (\e cm);};
\foreach \i in {0, ...,4}{\fill [red!70!black] (i\i) circle (\e cm);};
\fill [red!70!black] (a8) circle (\e cm);

\node at ($(i0)+(80:0.2)$) [red!70!black] {\footnotesize $v_2$};
\node at (140:0.9) [red!70!black] {$G_2$};
\node at (-50:2.1) {$G_1$};
\node at ($(a1)+(80:0.2)$) {\footnotesize $v_1$};
\node at (45:1) [blue!80!black] {$\gamma$};

\begin{scope}[xshift = 3.5cm]
\coordinate (d0) at (0,0);
\coordinate (d1) at (30:\se);
\coordinate (d2) at (150:\se);
\coordinate (d3) at (-90:\se);
\coordinate (d4) at ($(d1)+(90:\se)$);
\coordinate (d5) at ($(d1)+(-30:\se)$);
\coordinate (d6) at ($(d3)+(-150:\se)$);
\coordinate (d7) at ($(d3)+(-30:\se)$);
\coordinate (d8) at ($(d7)+(-90:\se)$);
\coordinate (d9) at ($(d7)+(30:\ses)$);
\coordinate (d10) at ($(d4)+(30:\ses)$);
\coordinate (d11) at ($(d4)+(150:\ses)$);
\coordinate (d12) at ($(d8)+(-30:\ses)$);
\coordinate (d13) at ($(d8)+(-150:\ses)$);

\coordinate (a0) at (1.5,0);
\coordinate (a1) at ($(a0)+(00:\se)$);
\coordinate (a2) at ($(a1)+(-60:\se)$);
\coordinate (a3) at ($(a1)+(60:\se)$);
\coordinate (a4) at ($(a3)+(0:\se)$);
\coordinate (a5) at ($(a3)+(120:\se)$);
\coordinate (a6) at ($(a2)+(0:\se)$);
\coordinate (a7) at ($(a2)+(-120:\se)$);
\coordinate (a8) at ($(a7)+(-60:\ses)$);
\coordinate (a9) at ($(a7)+(-180:\ses)$);

\draw (d0) -- (d1); 
\draw (d0) -- (d2); 
\draw (d0) -- (d3); 
\draw (d1) -- (d4); 
\draw (d1) -- (d5); 
\draw (d3) -- (d7); 
\draw (d3) -- (d6); 
\draw (d7) -- (d8); 
\draw (d7) -- (d9); 
\draw (d4) -- (d10); 
\draw (d4) -- (d11); 
\draw (d8) -- (d12); 
\draw (d8) -- (d13); 

\begin{scope}[red!70!black]
\draw (a0) -- (a1); 
\draw (a1) -- (a2); 
\draw (a1) -- (a3); 
\draw (a3) -- (a4); 
\draw (a3) -- (a5); 
\draw (a2) -- (a6); 
\draw (a2) -- (a7); 
\draw (a7) -- (a8); 
\draw (a7) -- (a9);
\end{scope}

\node at ($(d5)+(45:0.2)$) {\footnotesize $v_1$};
\node at (a0) [below, red!70!black] {\footnotesize $v_2$};
\node at (0.2,1.45) {$T_1$};
\node at (2.2,1.45) [red!70!black] {$T_2$};

\foreach \i in {0,...,13}{\fill (d\i) circle (\e cm);};
\foreach \i in {0,...,9}{\fill [red!70!black] (a\i) circle (\e cm);};

\draw [-Stealth] (3,0.25) -- +(0.5,0);

\begin{scope}[xshift=4.25cm]
\coordinate (d0) at (0,0);
\coordinate (d1) at (30:\se);
\coordinate (d2) at (150:\se);
\coordinate (d3) at (-90:\se);
\coordinate (d4) at ($(d1)+(90:\se)$);
\coordinate (d5) at ($(d1)+(-30:\se)$);
\coordinate (d6) at ($(d3)+(-150:\se)$);
\coordinate (d7) at ($(d3)+(-30:\se)$);
\coordinate (d8) at ($(d7)+(-90:\se)$);
\coordinate (d9) at ($(d7)+(30:\ses)$);
\coordinate (d10) at ($(d4)+(30:\ses)$);
\coordinate (d11) at ($(d4)+(150:\ses)$);
\coordinate (d12) at ($(d8)+(-30:\ses)$);
\coordinate (d13) at ($(d8)+(-150:\ses)$);
\coordinate (d14) at ($(d5)+(-90:\ses)$);

\coordinate (a0) at (1.35,0.35);
\coordinate (a1) at ($(a0)+(00:\se)$);
\coordinate (a2) at ($(a1)+(-60:\se)$);
\coordinate (a3) at ($(a1)+(60:\se)$);
\coordinate (a4) at ($(a3)+(0:\se)$);
\coordinate (a5) at ($(a3)+(120:\se)$);
\coordinate (a6) at ($(a2)+(0:\se)$);
\coordinate (a7) at ($(a2)+(-120:\se)$);
\coordinate (a8) at ($(a7)+(-60:\ses)$);
\coordinate (a9) at ($(a7)+(-180:\ses)$);
\coordinate (a10) at ($(a0)+(90:\ses)$);

\draw (d0) -- (d1); 
\draw (d0) -- (d2); 
\draw (d0) -- (d3); 
\draw (d1) -- (d4); 
\draw (d1) -- (d5); 
\draw (d3) -- (d7); 
\draw (d3) -- (d6); 
\draw (d7) -- (d8); 
\draw (d7) -- (d9); 
\draw (d4) -- (d10); 
\draw (d4) -- (d11); 
\draw (d8) -- (d12); 
\draw (d8) -- (d13); 
\draw (d5) -- (d14); 

\begin{scope}[red!70!black]
\draw (a0) -- (a1); 
\draw (a1) -- (a2); 
\draw (a1) -- (a3); 
\draw (a3) -- (a4); 
\draw (a3) -- (a5); 
\draw (a2) -- (a6); 
\draw (a2) -- (a7); 
\draw (a7) -- (a8); 
\draw (a7) -- (a9);
\draw (a0) -- (a10);
\end{scope}
\draw [red!40!black] (a0) -- (d5);

\node [red!70!black] at ($(a10)+(90:0.2)$) {\footnotesize $v_2$};
\node at ($(d14)+(0:0.25)$) {\footnotesize $v_1$};
\node [red!40!black] at ($(a0)!0.5!(d5)+(90:0.2)$) {\footnotesize $e'$};
\node [red!40!black] at (1.25,1.45) {$T$};

\foreach \i in {0,...,14}{\fill (d\i) circle (\e cm);};
\foreach \i in {0,...,10}{\fill [red!70!black] (a\i) circle (\e cm);};
\fill [red!70!black] (a0) circle (\e cm);
\fill [red!70!black] (a10) circle (\e cm);
\end{scope}
\end{scope}
\end{tikzpicture}
\caption{Merging of two disc carving-decompositions of a non-connected graph.}
\label{pic_merge_disco}
\end{center}
\end{figure}

We iteratively merge decompositions $(T_i, \phi_i)$ and $(T_j, \phi_j)$ whenever it is possible, \ie, when the graphs that label share a face. Note that there always exist a pair satisfying this property as long as there are two decompositions in $\mathcal{T}$.

Let $a,b$ be a pair of edges of $T$. If $a,b$ comes from the same $T_i$ then there is a natural bijection between cuts separating $A$ and $B$ in $G$ and those in $G_i$. It follows that they are linked in $T$ since they are linked in the original $T_i$. If they are not from the same $T_i$, their corresponding $A$ and $B$ are in two distinct connected components and the minimal cut separating them is empty. Furthermore, by construction, an edge merging two trees $T_i$ and $T_j$ is on the minimal path connecting them, its width is $0$. Hence, $T$ is a disc carving-decomposition.
\end{proof}

\subsection{Order on edges.}\label{subsec_order_edge}
Let $(G_n)_{n \in \N}$ be a family of plane graphs. Thanks to Proposition~\ref{prop_disc_decomp} we can define, for all $n \in \N$, a disc carving-decomposition $T_n$ of $G_n$. Furthermore, we assume that each tree $T_n$ is rooted at some unlabelled leaf called the root $r(T_n)$. If $T$ has no unlabelled leaf in the decomposition, we can add an unlabelled leaf by adding a vertex in the middle of an edge and connecting this vertex by an edge to a new leaf which will be the root ; note that adding such a root yields a disc carving-decomposition. Let $a,b$ be two edges of $T \in \T$. The edge $a$ is said to be an \emphdef{ancestor} of $b$ if $a$ lies on the unique path from $r(T)$ to $b$ and if for all edges $e$ between $a$ and $b$ on this path $\wid (e) \geq \wid (a) = \wid (b)$; in such a case, $b$ is called a \emphdef{successor} of $a$. We define $\T = \bigcup_{n \in \N} T_n$ as the forest made of all the trees $T_n$.

Thanks to our assumption, for each edge $e$ of a carving-decomposition $(T, \phi)$ of a graph $G$, $T \smallsetminus e$ consists of two subtrees of $T$, only one of which, written $T^e$, does not contain the root $r(T)$. We then define $G^e$ to be the embedded subgraph of $G$ induced by the vertices labelled by the leaves of $T^e$, \ie, $G^e \colon= G[\phi^{-1}( L(T^e))]$. 

We want to keep track of how edges of $T(e)$ leave $G^e$ in the embedding. Hence, for each $e \in E(T)$, we define the \emphdef{leaving graph} $\tilde{G^e}$, which is an embedded graph summarizing this information. Since $T$ satisfies the disc property, 

there exists a Jordan curve $\gamma_e$ such that $G^e$ lies on one side of $\gamma_e$ and intersect $G$ exactly on $T(e)$. Up to isotopy, we can assume that each edge $x$ of $T(e)$ intersects $\gamma_e$ exactly once. The side containing $G^e$ is a disc that we call $D_e$. We subdivide each edge $f$ of $T(e)$ by introducing a new vertex $s_x$ at the point of intersection with $\gamma_e$. The embedded graph induced by $V(G^e) \cup_{x \in T(e)} \{ s_x \}$ on $D_e$ is $\tilde{G}^e$ (see Figure~\ref{pic_ex_leaving_graph}). This way, $\tilde{G}^e$ consists of $G^e \cap D_e$, $(s_x)_{x \in T(e)}$, embedded on $D_e$.

\begin{figure}[ht]
\begin{center}
\begin{tikzpicture}
\clip (-0.55,-1.2) rectangle (13.35,3);
\begin{scope}
\def\e{0.075}
\coordinate (l1) at (0,0);
\coordinate (l2) at ($(l1)+(-30:1)$);
\coordinate (l3) at ($(l1)+(30:1)$);

\coordinate (r1) at (2,-0.5);
\coordinate (r2) at ($(r1)+(0.75,0)$);
\coordinate (r3) at ($(r1)+(0.75,0.75)$);
\coordinate (r4) at ($(r1)+(0,0.75)$);

\coordinate (t1) at (1,1.5);

\coordinate (t2) at ($(t1)+(0.9,0)$);
\coordinate (t3) at ($(t2)+(0.75,0)$);
\coordinate (t4) at ($(t3)+(0.75,0.75)$);
\coordinate (t5) at ($(t1)+(60:1.5)$);
\coordinate (t6) at ($(t5)+(0.75,0)$);
\coordinate (t7) at ($(t1)+(0.45,0)$);

\coordinate (c) at (1.5,2);

\begin{scope}[red!50!purple, opacity = 0.7]
\draw  (l1) --(l2) -- (l3) --cycle;
\draw (l2) -- (r1);
\draw (l3) -- (r4);
\draw (l2) -- (r4);
\draw (r1) --(r2) -- (r3) -- (r4) --cycle; 
\end{scope}
\begin{scope}[opacity = 0.5]
\draw (l1) -- (t1);
\draw (l3) -- (t1);
\draw (l3) -- (t7);
\draw (r4) -- (t2);
\draw (r3) -- (t3);
\draw (r3) -- (t4);

\end{scope}
\begin{scope}[thick]
\draw (t5) -- (t6);
\draw (t2) -- (t3);
\draw (t5) -- (t1);
\draw (t5) -- (t2);
\draw (t5) -- (t3);
\draw (t6) -- (t4);
\draw (t3) -- (t4);
\end{scope}

\draw [green!40!black] (-0.25,0) .. controls +(90:1.5) and \control{(3.25,-0.2)}{(90:1.5)} .. controls + (-90:1.15) and \control{(-0.25,0)}{(-90:1.75)};

\node at (3.1,1.15) [above right, opacity = 0.5] {$T(e)$};
\foreach \i in {1,2,3}{\fill [red!50!purple] (l\i) circle (\e cm);};
\foreach \i in {1,2,3,4}{\fill [red!50!purple] (r\i) circle (\e cm);};
\foreach \i in {1,2,3,4,5,6,7}{\fill (t\i) circle (\e cm);};

\node [green!40!black, right] at (3.25,-0.2) {$\gamma_e$};
\node [red!50!purple] at (0.35,-0.5) {$G^e$};
\node at (0.25,1.75) {\Large $G$};

\draw [-Stealth] (3.65,0.5) -- +(1,0);
\end{scope}

\begin{scope}[xshift = 5.25cm]
\def\e{0.075}
\coordinate (l1) at (0,0);
\coordinate (l2) at ($(l1)+(-30:1)$);
\coordinate (l3) at ($(l1)+(30:1)$);

\coordinate (r1) at (2,-0.5);
\coordinate (r2) at ($(r1)+(0.75,0)$);
\coordinate (r3) at ($(r1)+(0.75,0.75)$);
\coordinate (r4) at ($(r1)+(0,0.75)$);

\coordinate (t1) at (1,1.5);

\coordinate (t2) at ($(t1)+(0.9,0)$);
\coordinate (t3) at ($(t2)+(0.75,0)$);
\coordinate (t4) at ($(t3)+(0.75,0.75)$);
\coordinate (t5) at ($(t1)+(60:1.5)$);
\coordinate (t6) at ($(t5)+(0.75,0)$);
\coordinate (t7) at ($(t1)+(0.45,0)$);

\coordinate (c) at (1.5,2);

\begin{scope}[red!50!purple, opacity = 0.7]
\draw  (l1) --(l2) -- (l3) --cycle;
\draw (l2) -- (r1);
\draw (l3) -- (r4);
\draw (l2) -- (r4);
\draw (r1) --(r2) -- (r3) -- (r4) --cycle; 
\end{scope}
\draw [name path =e1, opacity = 0.5] (l1) -- (t1);
\draw [name path =e2, opacity = 0.5] (l3) -- (t1);
\draw [name path =e3, opacity = 0.5] (l3) -- (t7);
\draw [name path =e4, opacity = 0.5] (r4) -- (t2);
\draw [name path =e5, opacity = 0.5] (r3) -- (t3);
\draw [name path =e6, opacity = 0.5] (r3) -- (t4);
\begin{scope}[thick]
\draw (t5) -- (t6);
\draw (t2) -- (t3);
\draw (t5) -- (t1);
\draw (t5) -- (t2);
\draw (t5) -- (t3);
\draw (t6) -- (t4);
\draw (t3) -- (t4);
\end{scope}

\draw [green!40!black, name path =blob] (-0.25,0) .. controls +(90:1.5) and \control{(3.25,-0.2)}{(90:1.5)} .. controls + (-90:1.15) and \control{(-0.25,0)}{(-90:1.75)};

\foreach \i in {1,...,6}{
\path [name intersections={of=blob and e\i,total=\tot}]
\foreach \s in {1,...,\tot}{coordinate (n\i) at (intersection-\s)}; 
\fill [green!40!black] (n\i) circle (\e cm);
};

\foreach \i in {1,2,3}{\fill [red!50!purple] (l\i) circle (\e cm);};
\foreach \i in {1,2,3,4}{\fill [red!50!purple] (r\i) circle (\e cm);};
\foreach \i in {1,2,3,4,5,6,7}{\fill (t\i) circle (\e cm);};

\draw [-Stealth] (3.45,0.5) -- +(1,0);
\end{scope}

\begin{scope}[xshift = 10cm]
\def\e{0.075}
\coordinate (l1) at (0,0);
\coordinate (l2) at ($(l1)+(-30:1)$);
\coordinate (l3) at ($(l1)+(30:1)$);

\coordinate (r1) at (2,-0.5);
\coordinate (r2) at ($(r1)+(0.75,0)$);
\coordinate (r3) at ($(r1)+(0.75,0.75)$);
\coordinate (r4) at ($(r1)+(0,0.75)$);

\coordinate (t1) at (1,1.5);

\coordinate (t2) at ($(t1)+(0.9,0)$);
\coordinate (t3) at ($(t2)+(0.75,0)$);
\coordinate (t4) at ($(t3)+(0.75,0.75)$);
\coordinate (t5) at ($(t1)+(60:1.5)$);
\coordinate (t6) at ($(t5)+(0.75,0)$);
\coordinate (t7) at ($(t1)+(0.45,0)$);

\coordinate (c) at (1.5,2);

\begin{scope}
\clip (-0.25,0) .. controls +(90:1.5) and \control{(3.25,-0.2)}{(90:1.5)} .. controls + (-90:1.15) and \control{(-0.25,0)}{(-90:1.75)};
\begin{scope}[red!50!purple, opacity = 0.7]
\draw  (l1) --(l2) -- (l3) --cycle;
\draw (l2) -- (r1);
\draw (l3) -- (r4);
\draw (l2) -- (r4);
\draw (r1) --(r2) -- (r3) -- (r4) --cycle; 
\end{scope}
\draw [name path =e1, opacity = 0.5] (l1) -- (t1);
\draw [name path =e2, opacity = 0.5] (l3) -- (t1);
\draw [name path =e3, opacity = 0.5] (l3) -- (t7);
\draw [name path =e4, opacity = 0.5] (r4) -- (t2);
\draw [name path =e5, opacity = 0.5] (r3) -- (t3);
\draw [name path =e6, opacity = 0.5] (r3) -- (t4);
\begin{scope}[thick]
\draw (t5) -- (t6);
\draw (t2) -- (t3);
\draw (t5) -- (t1);
\draw (t5) -- (t2);
\draw (t5) -- (t3);
\draw (t6) -- (t4);
\draw (t3) -- (t4);
\end{scope}
\end{scope}

\path [name path =e1, opacity = 0.5] (l1) -- (t1);
\path [name path =e2, opacity = 0.5] (l3) -- (t1);
\path [name path =e3, opacity = 0.5] (l3) -- (t7);
\path [name path =e4, opacity = 0.5] (r4) -- (t2);
\path [name path =e5, opacity = 0.5] (r3) -- (t3);
\path [name path =e6, opacity = 0.5] (r3) -- (t4);

\filldraw [fill opacity = 0.1,green!40!black, name path =blob] (-0.25,0) .. controls +(90:1.5) and \control{(3.25,-0.2)}{(90:1.5)} .. controls + (-90:1.15) and \control{(-0.25,0)}{(-90:1.75)};

\foreach \i in {1,...,6}{
\path [name intersections={of=blob and e\i,total=\tot}]
\foreach \s in {1,...,\tot}{coordinate (n\i) at (intersection-\s)}; 
\fill [green!40!black] (n\i) circle (\e cm);
};
\node [green!40!black] at (1.5,-0.8) {$D_e$};

\foreach \i in {1,2,3}{\fill [red!50!purple] (l\i) circle (\e cm);};
\foreach \i in {1,2,3,4}{\fill [red!50!purple] (r\i) circle (\e cm);};

\node at (2.25,1.5) {\Large $\tilde{G}^{e}$};
\end{scope}
\end{tikzpicture}
\caption{How to obtain $\tilde{G}^e$ from $G$, the vertices $(s_x)_{e \in T(e)}$ are in green on $\partial D_e$.}
\label{pic_ex_leaving_graph}
\end{center}
\end{figure}
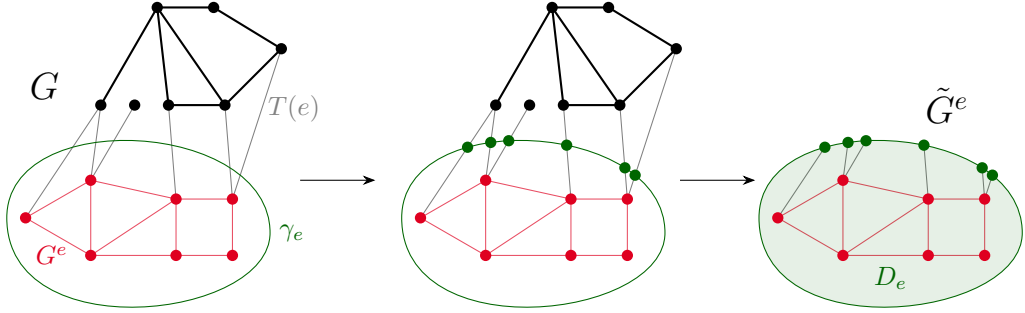 

We can define a quasi-order on the set $\mathcal{E} = \cup_{T \in \T}{E(T)}$ of all edges of $\T$: for $e,f \in \mathcal{E}^2$, $e \preccurlyeq f$ if $\tilde{G}^e$ is an embedded immersion of $\tilde{G}^f$ and the image of $(s_x)_{x \in T(e)}$ in the embedded immersion is $(s_x)_{x \in T(f)}$. Lemma~\ref{lem_ancestry_cut} will then enforce that leaving graphs associated to tree edges are in relation if one edge is an ancestor of the other in our decompositions. To establish it, we first need to understand how to make a family of paths around a vertex tangent.

Consider a family of directed embedded paths $P = \{p_1, \ldots, p_k\}$, where $p_i$ is a sequence of edges from vertex $v_{i,0}$ to vertex $v_{i,{\ell_i}}$. If two paths $p_i,p_j$ are transverse at a vertex $v$, it is possible to swap the two subpaths of $p_i$ and $p_j$ starting at $v$, so that $p_i$ now goes from $v_{i,0}$ to $v_{j,{\ell_j}}$ and $p_i$ now goes from $v_{j,0}$ to $v_{i,{\ell_i}}$. After this operation, called \emphdef{path-swap}, $p_i$ and $p_j$ are tangent. 

It can happen that after a path-swap, a path $p$ involved in the swap ends up traversing a same vertex twice (for example, a path-swap between the red and purple paths in Figure~\ref{pic_proof_ancestry_2}). In this case, a self-loop appears in $p$, which we simply remove. After any number of path-swaps, the family $P$ joins $\{ v_{0,0}, \ldots, v_{k,0}\}$ to $\{ v_{0,{\ell_0}}, \ldots, v_{0,{\ell_k}}\}$ by $k$ paths, which may have possibly lost some edges in the process.

\begin{lemma}\label{lem_path_swaps}
Let $P$ be a family of embedded directed paths $\{p_1, \ldots, p_k\}$ and a vertex $v$. There exists a sequence of path-swaps transforming $\{p_1, \ldots, p_k\}$ to a tangent family of paths at $v$.
\end{lemma}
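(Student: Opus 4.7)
The plan is to introduce a discrete potential $I(v)$, the number of unordered pairs $\{p_i,p_j\}$ of paths of $P$ whose edges incident to $v$ interlace around $v$ in the sense of condition (2) of the transversality definition. We focus on interlacings because the other transversality condition --- the coexistence at $v$ of a path having $v$ as interior vertex with one having $v$ as endpoint --- is preserved by path-swaps, hence it must already be absent for the family to admit such a reduction. This is the setting in which the lemma is invoked in later sections.

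To set up the reduction, picture the edges incident to $v$ as cyclically ordered points on the boundary $\partial D_v$ of a small disk around $v$, using the embedding of $G$. Each path traversing $v$ as inner vertex is represented by a chord of $D_v$ joining its in-edge point to its out-edge point, and each path having $v$ as endpoint by a pendant. Two interior paths interlace at $v$ iff their chords cross in $D_v$. A path-swap applied to an interlaced pair $\{p_i,p_j\}$ with crossing chords $\chi_i,\chi_j$ replaces them by the unique non-crossing pairing $\chi_i',\chi_j'$ of the same four endpoints on $\partial D_v$.

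The central claim is that each such swap strictly decreases $I(v)$. The crossing $\chi_i\cap\chi_j$ disappears, accounting for a decrease of at least one. For any third path $p_k$ with chord $\chi_k$, the classical uncrossing inequality for chord diagrams gives that $\chi_k$ crosses $\{\chi_i',\chi_j'\}$ at most as often as $\{\chi_i,\chi_j\}$; endpoint-paths at $v$ correspond to pendants and contribute no crossings that can change. Self-loops possibly created at other vertices during the swap are removed as prescribed in the text preceding the lemma; this only shortens paths and cannot introduce new interlacings at $v$. Since $I(v)$ is a non-negative integer strictly decreasing with each swap, the procedure terminates at $I(v)=0$ in finitely many steps, yielding a tangent family at $v$.

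The main obstacle is a uniform verification of the uncrossing inequality. It follows from a case analysis on the cyclic positions of $\chi_k$'s endpoints relative to the four endpoints of $\chi_i,\chi_j$ on $\partial D_v$: in each of the possible configurations, the unique non-crossing pairing on these four points has at most as many crossings with $\chi_k$ as the crossed pairing, with strict decrease by two exactly when both endpoints of $\chi_k$ lie in opposite arcs separated by $\chi_i,\chi_j$.
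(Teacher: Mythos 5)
Your proof is correct and takes a genuinely different route from the paper. The paper encodes the cyclic order of edges at $v$ as a circular word in the letters $i,i^{-1}$ (in- and out-edges of $p_i$) and observes that a path-swap exchanges two $\cdot^{-1}$ letters; it then argues directly that the word can be emptied: find a position where $i$ is immediately followed by some $j^{-1}$, swap $j^{-1}$ with $i^{-1}$ to create a cancelling $ii^{-1}$, cancel, and repeat, so the word length strictly drops. You instead define a potential $I(v)$ counting interlaced pairs at $v$ and show that uncrossing an interlaced pair never increases the number of crossings of a third chord with the modified pair, so $I(v)$ strictly decreases. Both arguments are sound. What each buys: the paper's word-reduction argument avoids the case analysis on a third chord entirely, but requires seeing that the right combinatorial encoding exists; your potential-function argument is more local and transparently respects the constraint that swaps are only performed on genuinely transverse pairs, at the cost of the uncrossing verification. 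Your handling of the ancillary points (that inner-vs-endpoint transversality at $v$ is invariant under swaps at $v$ and so must be excluded a priori; that self-loop removal either leaves the chord diagram at $v$ unchanged or deletes a chord, and hence cannot raise $I(v)$) is careful and matches the intended scope of the lemma.

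Two small imprecisions worth tidying, neither fatal. First, there are two non-crossing pairings of the four boundary points, not one; the path-swap selects the specific pairing that keeps each in-edge with its original path and exchanges the out-edges (this one is always non-crossing when the starting pair crosses, since each path's in-edge becomes adjacent on $\partial D_v$ to the other path's out-edge). Your uncrossing verification should be carried out for that particular pairing. Second, the parenthetical characterization of when the drop with a third chord is exactly two is not quite right: there are two pairs of ``opposite'' arcs determined by $\chi_i,\chi_j$, and for the path-swap-selected uncrossing only one of them produces a drop of two, while the other yields no change. This is harmless since all you need for termination is the non-strict inequality together with the disappearance of the $\chi_i$--$\chi_j$ crossing.
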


\begin{proof}
Since paths not containing $v$ do not matter for the result, assume that every one of these paths uses $v$. As the paths are embedded, we can consider the circular sequence $s$ of indices of paths around $v$, where each incoming edge of $p^i$ is numbered $i$, and each outgoing edge of $p^i$ is numbered $i^{-1}$ (see Figure~\ref{pic_ex_order_swap}).

\begin{figure}[ht]
\begin{center}
\begin{tikzpicture}
\foreach \i in {1,...,8}{\coordinate (c\i) at (360/8*\i:1);};

\draw [thick, -Stealth, red!50!purple] (c4) -- (0,0) -- (c1);
\node at (c4) [red!50!purple, left] {\large $1$};
\draw [thick,-Stealth, green!50!purple] (c7) -- (0,0) -- (c8);
\node at (c7) [green!50!purple, below right] {\large $2$};
\draw [thick,-Stealth, blue!80!black] (c6) -- (0,0) -- (c3);
\node at (c6) [blue!80!black, below] {\large $3$};
\draw [thick,-Stealth] (c2) -- (0,0) -- (c5);
\node at (c2) [above] {\large $4$};

\fill (0,0) circle (0.1 cm);
\node at (0,-1.75) {$1 4^{-1} 3 2 2^{-1} 1^{-1} 4 3^{-1}$};

\begin{scope}[xshift=4.5cm]
\foreach \i in {1,...,8}{\coordinate (c\i) at (360/8*\i:1);};

\draw [thick, -Stealth, red!50!purple] (c4) -- (0,0) -- (c5);
\node at (c4) [red!50!purple, left] {\large $1$};
\draw [thick,-Stealth, green!50!purple] (c7) -- (0,0) -- (c8);
\node at (c7) [green!50!purple, below right] {\large $2$};
\draw [thick,-Stealth, blue!80!black] (c6) -- (0,0) -- (c1);
\node at (c6) [blue!80!black, below] {\large $3$};
\draw [thick,-Stealth] (c2) -- (0,0) -- (c3);
\node at (c2) [above] {\large $4$};

\fill (0,0) circle (0.1 cm);
\node at (0,-1.75) {$1 1^{-1} 3 2 2^{-1} 3^{-1} 4 4^{-1}$};
\end{scope}
\end{tikzpicture}
\caption{Order around $v$ and the result of the swaps.}
\label{pic_ex_order_swap}
\end{center}
\end{figure}
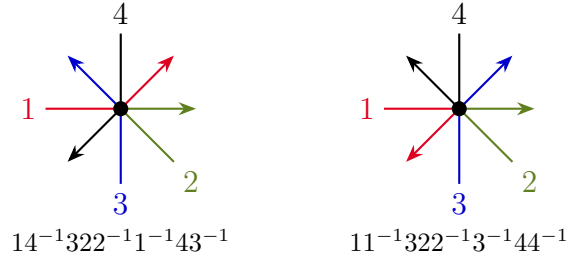 

In $s$, two letters are interlaced (\ie, $i^{\pm 1}j^{\pm 1}i^{\mp 1}j^{\mp 1}$, where $x^{\pm}$ and $x^{\mp}$ have opposite exponents) if and only if their pair of associated paths are transverse at $v$. Two opposite and adjacent letters cancel out like in standard words: indeed, the path associated to this letters is tangent with no other one; it will not be swapped with another path at $v$. Finally, the path-swap operation corresponds to exchanging the place in $s$ of two $\cdot^{-1}$ letters. To prove our lemma, we need to prove that we can bring $s$ to an empty word by iterating path-swaps.

Let us consider any $i$ in $s$ that is followed by some $j^{-1}$ where $i \neq j$. Such a letter necessarily exists as long as $s$ is non-empty. Indeed, $s$ is equivalent up to circular rotations, and each letter in $s$ also has its opposite in $s$. Simply swap $j^{-1}$ and $i^{-1}$, which cancels $ii^{-1}$, and repeat. 
\end{proof}

\begin{lemma}\label{lem_ancestry_cut}
Let $G$ be a plane graph, $(T,\phi)$ a disc carving-decomposition of $G$, and $a,b \in E(T)^2$ such that $a$ is an ancestor of $b$. Then $b \preccurlyeq a$.
\end{lemma}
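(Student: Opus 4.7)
My plan is to use the linked property of $(T,\phi)$ together with Menger's theorem to construct a system of $k$ edge-disjoint paths in $\tilde{G}^a$ that realizes the desired embedded immersion. Set $k := \wid(a) = \wid(b)$; since $a$ is an ancestor of $b$, every edge on the tree path from $a$ to $b$ has width at least $k$, so the linked property of the disc carving-decomposition forces
\[
\mcut\bigl(V(G)\smallsetminus V(G^a),\, V(G^b)\bigr) \;=\; k.
\]
Menger's theorem then yields $k$ edge-disjoint paths $P_1,\dots,P_k$ from $V(G)\smallsetminus V(G^a)$ to $V(G^b)$ in $G$. A parity-and-counting argument shows that each $P_i$ crosses each of the size-$k$ cuts $T(a)$ and $T(b)$ an odd number of times, and edge-disjointness combined with $|T(a)|=|T(b)|=k$ forces this number to be exactly one; so each $P_i$ uses a unique $x_i \in T(a)$ and a unique $y_i \in T(b)$, inducing a bijection $y_i \leftrightarrow x_i$.

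Next, I let $Q_i$ denote the portion of $P_i$ from $s_{x_i}$ to the endpoint of $y_i$ in $V(G^b)$. Because $P_i$ crosses $\gamma_b$ only at $y_i$, the interior of $Q_i$ lies in $V(G^a)\smallsetminus V(G^b)$ and $Q_i$ uses no edge of $E(G^b)$. I then define $\phi\colon \tilde{G}^b \to \tilde{G}^a$ by the inclusion on $V(G^b)$, by $s_{y_i}\mapsto s_{x_i}$, by the identity on edges of $G^b$ (viewed as length-one paths), and by sending the half-edge of $y_i$ in $\tilde{G}^b$ to the path $Q_i$ in $\tilde{G}^a$. Vertex-injectivity, edge-disjointness of the image paths, and preservation of the circular orders at every $v \in V(G^b)$ (the local embeddings in $D_b$ and $D_a$ coincide around $v$, and the first edge of each image path at $v$ is precisely the edge $e \in E(G^b)$ itself or the edge $y_i \in T(b)$) follow directly from the construction.

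The only delicate point is the tangency of the image family, as required by Lemma~\ref{lem_equiv_plan_imm} to upgrade the ordered immersion into an embedded one. Tangency is immediate at every boundary vertex $s_{x_i}$ and at every $v \in V(G^b)$, since each path in the family contributes at most one edge there, so neither interlacing nor the inner-vs-endpoint condition can occur. Potential interlacings therefore only arise at inner vertices of the $Q_i$'s lying in $V(G^a)\smallsetminus V(G^b)$. I resolve these by iterating Lemma~\ref{lem_path_swaps} at such vertices, using the total marker-word length summed over all vertices as a strictly decreasing potential. Crucially, a swap at $v$ only transposes the endpoints of two $Q_i$'s within $V(G^b)$ (and any self-loop subsequently removed lies entirely in $V(G^a)\smallsetminus V(G^b)$), so the bijection $\{s_{y_i}\}\to\{s_{x_i}\}$ is preserved up to relabeling, and edge-disjointness is never violated. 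The main obstacle is precisely this convergence analysis: one must verify that the swap process terminates and yields a family that is simultaneously tangent at every vertex, which I expect follows from the strict monotonicity of the marker-word potential together with the observation that a swap at $v$ only relabels markers at any other vertex $u$, thereby preserving the unordered interlacing structure at $u$.
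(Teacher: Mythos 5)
Your proposal is correct and takes essentially the same approach as the paper: the linked property gives $\mcut = k$, Menger's theorem yields $k$ edge-disjoint paths each of which must cross $T(a)$ and $T(b)$ exactly once by edge-disjointness, and path-swaps (Lemma~\ref{lem_path_swaps}) iron out the residual transversality so that Lemma~\ref{lem_equiv_plan_imm} applies. The paper is slightly cleaner on the two points you gloss over: it applies Menger inside the intermediate graph $H = G \smallsetminus E(G^a_2) \smallsetminus E(G^b)$ between the endpoint sets $\mathcal{F}_a$ and $\mathcal{F}_b$, which makes your parity-and-counting truncation unnecessary, and for the termination of the swap process it decrements the number $m$ of transverse vertices (rather than a summed marker-word length), justified by the observation that a swap at $v$ only relabels markers at any other vertex $u$ and that self-loop removal can only decrease transversality there — exactly the invariance you were groping toward, but packaged so that the potential is manifestly monotone and reaches zero precisely when the family is tangent everywhere.
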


\begin{proof}
The $\preccurlyeq$ relation between $b$ and $a$ can be defined since the carving-decomposition satisfies the disc property. We denote $k = \wid (a) = \wid (b)$. Let assume that $G^a_1 = G^a$ so that $A = V(G^a_2)$ and $B = V(G^b)$ with respect to the linked definition of $A$ and $B$. Let us call $P$ the minimal path of $T$ containing both $a$ and $b$. We define $H = G \smallsetminus E(G^a_2) \smallsetminus E(G^b)$, in other words, the edges of $G$ between $T(a)$ and $T(b)$ (where $T(a)$ and $T(b)$ are included). By construction, there are two disjoint families $\mathcal{F}_a = V(H) \cap V(A)$ and $\mathcal{F}_b = V(H) \cap V(B)$, where $\mathcal{F}_a$ consists of the endpoints of $T(a)$ in $G^a_2$ and $\mathcal{F}_b$ consists of the endpoints of $T(b)$ in $G^b$. 

\begin{figure}[ht]
\begin{center}
\begin{tikzpicture}
\def\e{0.06} \def\el{0.1}  \def\ex{0.2}
\begin{scope}
\coordinate (a0) at (0,0);
\coordinate (a1) at (0.5,-0.75);
\coordinate (a2) at (1.5,-0.25);
\coordinate (a3) at (1,1);
\coordinate (a4) at (1.2,-1);
\coordinate (a5) at (2,-0.8);

\coordinate (b0) at (2,1.75);
\coordinate (b1) at (2.1,0.9);
\coordinate (b2) at (2.4,0.5);
\coordinate (b3) at (3,0);
\coordinate (b4) at ($(b0)+(-110:1)$);
\coordinate (b5) at ($(b0)+(15:1.6)$);
\coordinate (b6) at ($(b0)+(-10:1.2)$);
\coordinate (b7) at ($(b1)+(25:0.5)$);
\coordinate (b8) at ($(b2)+(45:0.85)$);
\coordinate (b9) at ($(b3)+(80:0.7)$);
\coordinate (b10) at ($(b3)+(40:1.5)$);

\coordinate (c0) at ($(b5)+(40:1)$);
\coordinate (c1) at ($(b5)+(0:0.75)$);
\coordinate (c2) at ($(b10)+(40:1)$);
\coordinate (c3) at ($(b10)+(60:0.8)$);

\begin{scope}
\draw (a0) -- (a1) -- (a2) -- (a3) -- cycle;
\draw (a1) -- (a4) -- (a2);
\draw (a2) -- (a5) -- (a4);
\end{scope}
\begin{scope}
\draw (a3) -- (b0);
\draw (a2) -- (b1);
\draw (a2) -- (b2) -- (b8);
\draw (a5) -- (b3) -- (a2);
\draw (b8) -- (b10);
\draw (b0) -- (b1) -- (b2);
\draw (b0) -- (b4) -- (b1);
\draw (b5) -- (b6);
\draw (b0) -- (b5);
\draw (b0) -- (b6) -- (b8);
\draw (b1) -- (b7) -- (b8) -- (b9) -- (b10);
\draw (b2) -- (b3);
\draw (b2) -- (b7);
\draw (b7) -- (b6);
\draw (b2) -- (b9) -- (b3) -- (b10) -- (b5);
\end{scope}
\begin{scope}
\draw (c1) -- (c0) -- (c2) -- (c3) -- (c1);
\end{scope}
\begin{scope}
\draw (b10) -- (c1);
\draw (b5) --(c0);
\draw (b5) --(c1);
\draw (b10) --(c2);
\draw (b10) --(c3);
\end{scope}

\foreach \i in {0,...,5}{\fill (a\i) circle (\e cm);};
\foreach \i in {0,...,10}{\fill (b\i) circle (\e cm);};
\foreach \i in {0,...,3}{\fill (c\i) circle (\e cm);};

\draw [red!50!purple, opacity = 0.6] (1,2) .. controls +(-80:1) and \control{(3,-1)}{(150:1)} node [right, opacity = 1] {$T(b)$};
\draw [blue!80!black, opacity = 0.6] (3,2.5) .. controls +(-70:0.75) and \control{(4.4,0.45)}{(160:0.75)} node [right, opacity = 1] {$T(a)$};
\node at ($(b0)+(110:1)$) {$G$};
\end{scope}

\draw [-Stealth] (5.2,0) -- +(1,0);

\begin{scope}[xshift = 7cm]
\coordinate (a0) at (0,0);
\coordinate (a1) at (0.5,-0.75);
\coordinate (a2) at (1.5,-0.25);
\coordinate (a3) at (1,1);
\coordinate (a4) at (1.2,-1);
\coordinate (a5) at (2,-0.8);

\coordinate (b0) at (2,1.75);
\coordinate (b1) at (2.1,0.9);
\coordinate (b2) at (2.4,0.5);
\coordinate (b3) at (3,0);
\coordinate (b4) at ($(b0)+(-110:1)$);
\coordinate (b5) at ($(b0)+(15:1.6)$);
\coordinate (b6) at ($(b0)+(-10:1.2)$);
\coordinate (b7) at ($(b1)+(25:0.5)$);
\coordinate (b8) at ($(b2)+(45:0.85)$);
\coordinate (b9) at ($(b3)+(80:0.7)$);
\coordinate (b10) at ($(b3)+(40:1.5)$);

\coordinate (c0) at ($(b5)+(40:1)$);
\coordinate (c1) at ($(b5)+(0:0.75)$);
\coordinate (c2) at ($(b10)+(40:1)$);
\coordinate (c3) at ($(b10)+(60:0.8)$);

\def\op{0.4}
\draw [blue, smooth cycle, tension = 0.75, opacity = \op] plot coordinates {($(b5)+(-180:\el)$) ($(c0)+(80:\ex)$) ($(c2)+(-20:1.5*\el)$) ($(b10)+(-100:\el)$)};
\draw [red!50!purple, smooth cycle, tension = 0.75, opacity = \op] plot coordinates {($(a0)+(150:\el)$) ($(a3)+(90:\el)$) ($(a2)+(30:\el)$) ($(a5)+(-30:2.5*\el)$) ($(a4)+(-90:\el)$) ($(a1)+(180:2*\el)$)};
\draw [smooth cycle, tension = 0.5, opacity = \op] plot coordinates {($(a3)+(150:\ex)$) ($(a2)+(-160:\ex)$) ($(a5)+(-90:\el)$) ($(b10)+(-45:\ex)$) ($(b5)+(60:\ex)$) ($(b0)+(110:\el)$)};

\begin{scope}[thick]
\begin{scope}[red!50!purple]
\draw (a0) -- (a1) -- (a2) -- (a3) -- cycle;
\draw (a1) -- (a4) -- (a2);
\draw (a2) -- (a5) -- (a4);
\end{scope}
\begin{scope}
\draw (a3) -- (b0);
\draw (a2) -- (b1);
\draw (a2) -- (b2) -- (b8);
\draw (a5) -- (b3) -- (a2);
\draw (b0) -- (b1) -- (b2);
\draw (b0) -- (b4) -- (b1);
\draw (b0) -- (b6) -- (b8);
\draw (b1) -- (b7) -- (b8);
\draw (b2) -- (b3);
\draw (b2) -- (b7);
\draw (b7) -- (b6);
\draw (b2) -- (b9);
\end{scope}
\begin{scope}[blue!80!black]
\draw (c1) -- (c0) -- (c2) -- (c3) -- (c1);
\draw (b10) -- (c1);
\draw (b5) --(c0);
\draw (b5) --(c1);
\draw (b10) --(c2);
\draw (b10) --(c3);
\draw (b10) -- (b5);
\end{scope}

\begin{scope}
\draw (b9) -- (b10);
\draw (b3) -- (b10);
\draw (b5) -- (b6);
\draw (b0) -- (b5);
\draw (b8) -- (b10);
\end{scope}
\end{scope}

\foreach \i in {0,...,5}{\fill (a\i) circle (\e cm);};
\foreach \i in {0,...,10}{\fill (b\i) circle (\e cm);};
\foreach \i in {0,...,3}{\fill (c\i) circle (\e cm);};

\node at ($(a0)+(120:0.5)$) [red!50!purple] {$B$};
\node at ($(c2)+(-50:0.35)$) [blue!80!black] {$A$};
\node at ($(b0)+(90:0.5)$) {$H$};
\end{scope}
\end{tikzpicture}
\caption{Definition of $A$, $B$, and $H$ from $a$ and $b$.}
\label{pic_proof_ancestry}
\end{center}
\end{figure}
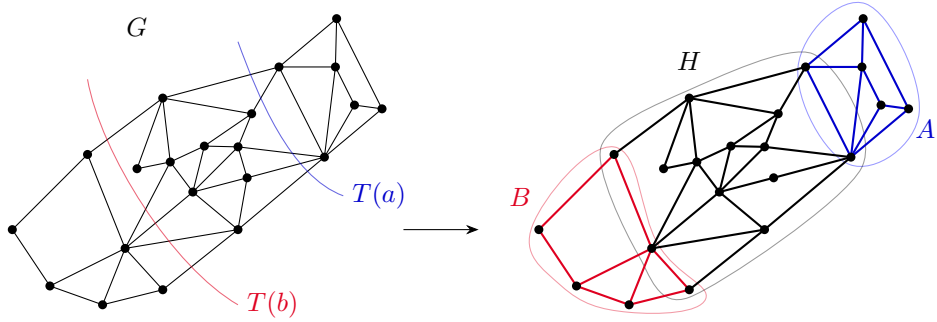 

By construction, every cut of $H$ separating $\mathcal{F}_a$ and $\mathcal{F}_b$ is a cut separating $A$ and $B$ so that $\mcut (A,B) \leq \mcut (\mathcal{F}_a, \mathcal{F}_b)$. However, since $(T,\phi)$ is linked, $\mcut (A,B) = \min_{e \in P} \wid (e)$. As $a$ is an ancestor of $b$, $\wid (b) = \wid (a) = k \leq \min_{e \in P} \wid (e)$. Hence, $k \leq \mcut (\mathcal{F}_a, \mathcal{F}_b)$ and by Menger's theorem ~\cite{Menger_paths} applied to the associated abstract graphs, there exists a family $P$ of $k$ edge-disjoint paths from $\mathcal{F}_a$ to $\mathcal{F}_b$. As $H$ is embedded, the paths are also embedded.

The family $P$ may contain transverse paths, since all edges of $T(a)$ and $T(b)$ are used by these paths, the only transverse vertices are the ones for which edges around them are interlaces. Let $m$ be the number of transverse vertices of $\mathcal{F}$. A path-swap at $v$ does not affect the transversality of paths at any other vertex of $H$, unless it involves the suppression of edges from some path, in which case the number of transversal paths can only decrease at any other vertex. Hence, we can apply the sequence of path-swaps from Lemma~\ref{lem_path_swaps} to reduce $m$ by one (see Figure~\ref{pic_proof_ancestry_2} for a visual description of the process).

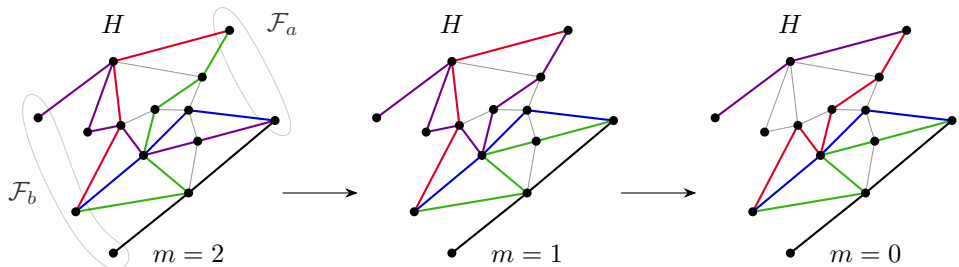
\begin{figure}[ht]
\begin{center}
\begin{tikzpicture}
\def\e{0.06} 
\begin{scope}
\coordinate (a2) at (1.5,-0.25);
\coordinate (a3) at (1,1);
\coordinate (a5) at (2,-0.8);
\coordinate (b0) at (2,1.75);
\coordinate (b1) at (2.1,0.9);
\coordinate (b2) at (2.4,0.5);
\coordinate (b3) at (3,0);
\coordinate (b4) at ($(b0)+(-110:1)$);
\coordinate (b5) at ($(b0)+(15:1.6)$);
\coordinate (b6) at ($(b0)+(-10:1.2)$);
\coordinate (b7) at ($(b1)+(25:0.5)$);
\coordinate (b8) at ($(b2)+(45:0.85)$);
\coordinate (b9) at ($(b3)+(80:0.7)$);
\coordinate (b10) at ($(b3)+(40:1.5)$);

\begin{scope}[opacity = 0.4]
\draw (b6) -- (b8);
\draw (b0) -- (b6);
\draw (b8) -- (b9);
\draw (b7) -- (b8);
\draw (b9) -- (b3);
\draw (b1) -- (b7);
\end{scope}
\begin{scope}[thick]
\begin{scope}
\draw (a5) -- (b3);
\draw (b3) -- (b10);
\end{scope}
\begin{scope}[red!50!purple]
\draw (b0) -- (b1);
\draw (a2) -- (b1);
\draw (b0) -- (b5);
\end{scope}
\begin{scope}[blue!80!black]
\draw (a2) -- (b2);
\draw (b8) -- (b10);
\draw (b2) -- (b8);
\end{scope}
\begin{scope}[green!70!purple]
\draw (b3) -- (a2);
\draw (b2) -- (b3);
\draw (b2) -- (b7);
\draw (b7) -- (b6);
\draw (b5) -- (b6);
\end{scope}
\begin{scope}[purple!60!blue]
\draw (a3) -- (b0);
\draw (b0) -- (b4);
\draw (b1) -- (b2);
\draw (b2) -- (b9); 
\draw (b9) -- (b10);
\draw (b4) -- (b1);
\end{scope}
\end{scope}

\foreach \i in {2,3,5}{\fill (a\i) circle (\e cm);};
\foreach \i in {0,...,10}{\fill (b\i) circle (\e cm);};

\draw [smooth cycle, tension = 0.75, opacity = 0.2] plot coordinates {($(a3)+(60:0.2)$) ($(a3)+(180:0.2)$) ($(a2)+(180:0.2)$) ($(a5)+(-120:0.2)$) ($(a5)+(0:0.2)$) ($(a2)+(0:0.2)$)};
\draw [smooth cycle, tension = 0.75, opacity = 0.2] plot coordinates {($(b5)+(60:0.2)$) ($(b5)+(180:0.2)$) ($(b10)+(180:0.2)$) ($(b10)+(0:0.2)$)};

\node at (0.8,0) [opacity = 0.8] {$\mathcal{F}_b$};
\node at (4.25,2.25) [opacity = 0.8] {$\mathcal{F}_a$};
\node at ($(b0)+(90:0.5)$) {$H$};
\node at (3,-0.8) {$m=2$};

\draw [-Stealth] (4.25,0) -- +(1,0);
\end{scope}

\begin{scope}[xshift =4.5cm]
\coordinate (a2) at (1.5,-0.25);
\coordinate (a3) at (1,1);
\coordinate (a5) at (2,-0.8);
\coordinate (b0) at (2,1.75);
\coordinate (b1) at (2.1,0.9);
\coordinate (b2) at (2.4,0.5);
\coordinate (b3) at (3,0);
\coordinate (b4) at ($(b0)+(-110:1)$);
\coordinate (b5) at ($(b0)+(15:1.6)$);
\coordinate (b6) at ($(b0)+(-10:1.2)$);
\coordinate (b7) at ($(b1)+(25:0.5)$);
\coordinate (b8) at ($(b2)+(45:0.85)$);
\coordinate (b9) at ($(b3)+(80:0.7)$);
\coordinate (b10) at ($(b3)+(40:1.5)$); 

\begin{scope}[opacity = 0.4]
\draw (b6) -- (b8);
\draw (b0) -- (b6);
\draw (b8) -- (b9);
\draw (b7) -- (b8);
\draw (b9) -- (b3);
\draw (b1) -- (b7);
\end{scope}
\begin{scope}[thick]
\begin{scope}
\draw (a5) -- (b3);
\draw (b3) -- (b10);
\end{scope}
\begin{scope}[red!50!purple]
\draw (b0) -- (b1);
\draw (a2) -- (b1);
\draw (b0) -- (b5);
\end{scope}
\begin{scope}[blue!80!black]
\draw (a2) -- (b2);
\draw (b8) -- (b10);
\draw (b2) -- (b8);
\end{scope}
\begin{scope}[green!70!purple]
\draw (b3) -- (a2);
\draw (b2) -- (b3);
\draw (b2) -- (b9); 
\draw (b9) -- (b10);
\end{scope}
\begin{scope}[purple!60!blue]
\draw (a3) -- (b0);
\draw (b0) -- (b4);
\draw (b1) -- (b2);
\draw (b4) -- (b1);
\draw (b2) -- (b7);
\draw (b7) -- (b6);
\draw (b5) -- (b6);
\end{scope}
\end{scope}

\foreach \i in {2,3,5}{\fill (a\i) circle (\e cm);};
\foreach \i in {0,...,10}{\fill (b\i) circle (\e cm);};

\node at (3,-0.8) {$m=1$};
\node at ($(b0)+(90:0.5)$) {$H$};
\draw [-Stealth] (4.25,0) -- +(1,0);
\end{scope}

\begin{scope}[xshift = 9cm]
\coordinate (a2) at (1.5,-0.25);
\coordinate (a3) at (1,1);
\coordinate (a5) at (2,-0.8);
\coordinate (b0) at (2,1.75);
\coordinate (b1) at (2.1,0.9);
\coordinate (b2) at (2.4,0.5);
\coordinate (b3) at (3,0);
\coordinate (b4) at ($(b0)+(-110:1)$);
\coordinate (b5) at ($(b0)+(15:1.6)$);
\coordinate (b6) at ($(b0)+(-10:1.2)$);
\coordinate (b7) at ($(b1)+(25:0.5)$);
\coordinate (b8) at ($(b2)+(45:0.85)$);
\coordinate (b9) at ($(b3)+(80:0.7)$);
\coordinate (b10) at ($(b3)+(40:1.5)$); 

\begin{scope}[opacity = 0.4]
\draw (b6) -- (b8);
\draw (b0) -- (b6);
\draw (b8) -- (b9);
\draw (b7) -- (b8);
\draw (b9) -- (b3);
\draw (b1) -- (b7);
\draw (b0) -- (b1);
\draw (b4) -- (b1);
\draw (b0) -- (b4);
\end{scope}
\begin{scope}[thick]
\begin{scope}
\draw (a5) -- (b3);
\draw (b3) -- (b10);
\end{scope}
\begin{scope}[red!50!purple]
\draw (a2) -- (b1);
\draw (b1) -- (b2);
\draw (b2) -- (b7);
\draw (b7) -- (b6);
\draw (b5) -- (b6);
\end{scope}
\begin{scope}[blue!80!black]
\draw (a2) -- (b2);
\draw (b8) -- (b10);
\draw (b2) -- (b8);
\end{scope}
\begin{scope}[green!70!purple]
\draw (b3) -- (a2);
\draw (b2) -- (b3);
\draw (b2) -- (b9); 
\draw (b9) -- (b10);
\end{scope}
\begin{scope}[purple!60!blue]
\draw (a3) -- (b0);
\draw (b0) -- (b5);
\end{scope}
\end{scope}

\foreach \i in {2,3,5}{\fill (a\i) circle (\e cm);};
\foreach \i in {0,...,10}{\fill (b\i) circle (\e cm);};

\node at (3,-0.8) {$m=0$};
\node at ($(b0)+(90:0.5)$) {$H$};
\end{scope}
\end{tikzpicture}
\caption{An example of a sequence of path-swapping on a graph.}
\label{pic_proof_ancestry_2}
\end{center}
\end{figure} 

After this process, $P$ is a family of edge-disjoint embedded paths connecting vertices from $\mathcal{F}_b$ to vertices of $\mathcal{F}_a$, where the first edge of each path belongs to $T(b)$ and the last one to $T(a)$, which have same size. Hence, each path starts on $G_b$, goes through the edge $x \in T(b)$ whose subdivision yields $c_x$ in $\tilde{G^b}$, $H$, then the edge $y \in T(a)$ whose subdivision yields $c_y$ in $\tilde{G^a}$ and ends on $\mathcal{F}_a$. This way, we associate to each vertex $c_x$ induced by $T(b)$, a vertex $c_y$ induced by $T(a)$, and the subpath $P_x$ of $P$ starting with $x$ and ending at the vertex $c_y$ on the last one of its edge. The path $P_x$ is the image of the edge incident to $c_x$ in $\tilde{G^b}$.

By the identity, $G^b$ is obviously an embedded immersion minor of $G^a$. By construction the family $(P_x)_{x \in T(b)}$ is tangent and respects the order at their starting point in $G^b$ so that, by Lemma~\ref{lem_equiv_plan_imm}, $\tilde{G}^b$ is an embedded immersion of $\tilde{G}^a$, \ie, $b \preccurlyeq a$.
\end{proof}

\subsection{Nash-Williams' argument.}
\label{sec:nash_williams_argument}

A \emphdef{forest} is a family or union of rooted trees (the root induces an orientation on edges which will be used for the definition of ancestor in this context). We denote by $\mathcal{E}(\mathcal{T}) = \bigcup_{T \in \mathcal{T}} E(T)$ the set of edges in this forest. A \emphdef{subforest} $\mathcal{F}'$ of $\mathcal{F}$ is a forest where each tree $T'$ is a subtree of $T$ which is rooted at the only incoming edge. A subtree induced by an edge $e \in E(T)$ is the maximal subtree of $T$ whose root is $e$. For each edge $e \in E(T)$, the set of children of $e$ is the set $\text{ch}_{\mathcal{T}}(e)$ of edges of $T$ whose tail is the head of $e$, \ie, the edges which follow $e$ for the order induced by the root.

The following is a lemma on trees where there is a quasi-order on the set of edges. The concepts behind the proof of the following proposition stem from~\cite{nash-williams_kruskal} and we adapt them in a similar way as~\cite{Geelen_wqo_branchwidth}. For convenience, we will prove it in a more general context than cubic trees. 

\begin{lemma}{Lemma on trees \cite[(3.1)]{Geelen_wqo_branchwidth}}\label{lem_lem_trees}
Let $\mathcal{T}$ be a family of trees, $n \in \N$ an integer, $w$ a map $\mathcal{E} (\mathcal{T}) \rightarrow \N$, and $\preccurlyeq$ a quasi-order on $\mathcal{E}(\mathcal{T})$ with no infinite descending sequence such that for all pairs of edges $a,b$, if $a$ is an ancestor of $b$ then $b \preccurlyeq a$. Then:

\begin{center}
\begin{tabular}{ c c}
\begin{tabular}{c}
If $\mathcal{E}(\mathcal{T})$ is not well-quasi-ordered by $\preccurlyeq$ then, there exists an antichain $(e_n)_{n \in \N}$ such that\\
the set $\cup_{n \in \N} ~\text{ch}_{\mathcal{T}} (e_n)$ is well-quasi-ordered by $\preccurlyeq$.
\end{tabular}
& (\hypertarget{prop_lem_trees}{S})
\end{tabular}
\end{center}

\end{lemma}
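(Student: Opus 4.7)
The approach is the classical minimal-bad-sequence argument of Nash-Williams adapted to forests with an edge quasi-order. Suppose for contradiction that $\mathcal{E}(\mathcal{T})$ is not well-quasi-ordered, so there is an infinite \emph{bad} sequence (a sequence $(x_n)$ with $x_i \not\preccurlyeq x_j$ for all $i<j$). I would first construct a \emph{minimal} bad sequence $(e_n)_{n\in\N}$ inductively: having chosen $e_0,\dots,e_{n-1}$, pick $e_n$ so as to minimize $w(e_n)$ among all $e \in \mathcal{E}(\mathcal{T})$ such that $(e_0,\dots,e_{n-1},e)$ extends to a bad sequence. Since $w$ takes values in the well-founded $\N$ and the candidate set is nonempty at each step, such a minimum exists.

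Next I would extract an antichain subsequence of $(e_n)$. For each pair $\{i<j\}$, color it by whether $e_j\preccurlyeq e_i$ holds or not (the opposite inequality $e_i\preccurlyeq e_j$ is forbidden by badness). Applying the infinite Ramsey theorem gives an infinite monochromatic subsequence: the color ``$e_j\preccurlyeq e_i$'' would yield an infinite $\preccurlyeq$-descending chain, which is forbidden by hypothesis, so the opposite color occurs and we obtain an infinite antichain $(e_{\sigma(n)})_{n\in\N}$.

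The crux is to show that $C := \bigcup_{n\in\N} \text{ch}_{\mathcal{T}}(e_n)$ is well-quasi-ordered by $\preccurlyeq$; the conclusion then follows since $\bigcup_n \text{ch}_{\mathcal{T}}(e_{\sigma(n)})$ is a subset of $C$. Assume towards contradiction that $(f_k)$ is a bad sequence in $C$, write $f_k \in \text{ch}_{\mathcal{T}}(e_{n_k})$, and after passing to a subsequence assume $n_0 = \min_k n_k =: m$. I would form the hybrid sequence
\[
S = (e_0,\dots,e_{m-1},f_0,f_1,f_2,\dots)
\]
and verify that $S$ is bad: both the prefix and the tail are bad by construction, and for a mixed pair $e_i \preccurlyeq f_k$ with $i<m \leq n_k$, the edge $e_{n_k}$ is an ancestor of $f_k$, so the hypothesis gives $f_k \preccurlyeq e_{n_k}$ and transitivity yields $e_i \preccurlyeq e_{n_k}$, contradicting the badness of $(e_n)$. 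But $f_0$ is a proper descendant of $e_m$, so $w(f_0) < w(e_m)$ (the $w$ from the statement being, as is standard in Nash-Williams arguments, a well-founded size measure that strictly decreases along descent), which contradicts the minimal choice of $e_m$.

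The main obstacle is that the minimal-bad-sequence contradiction only closes if $w$ strictly decreases from an edge to its children, so $w$ has to be instantiated as such a size function (e.g., the number of descendants below the edge); the hypothesis that $b\preccurlyeq a$ whenever $a$ is an ancestor of $b$ is the only bridge between the tree structure and $\preccurlyeq$, and it enters precisely in ruling out mixed pair comparisons when gluing the prefix of $(e_n)$ to the bad sequence coming from the children.
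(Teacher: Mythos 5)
There is a genuine gap, and it stems from misreading the role of $w$ and the definition of ``ancestor.'' In this paper, $w$ is an arbitrary map $\mathcal{E}(\mathcal{T}) \to \N$ (in the application it is the width function, which is \emph{bounded} by the carving-width $k$ and need not decrease along the tree), and ``$a$ is an ancestor of $b$'' is \emph{not} the tree-theoretic notion: it additionally requires $\wid(a) = \wid(b)$ and $\wid(e) \geq \wid(a)$ for every edge $e$ strictly between them. Your argument breaks in two places because of this. First, when verifying that the hybrid sequence $S$ is bad, you invoke ``$e_{n_k}$ is an ancestor of $f_k$'' because $f_k \in \mathrm{ch}_{\mathcal{T}}(e_{n_k})$, but a parent is an ancestor in this paper's sense only when $\wid(e_{n_k}) = \wid(f_k)$, which is not guaranteed; so $f_k \preccurlyeq e_{n_k}$ is not available. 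Second, the closing contradiction ``$w(f_0) < w(e_m)$'' assumes $w$ strictly decreases along descent, which is not a hypothesis and is false in the paper's application. You cannot fix this by re-instantiating $w$ as a size function either: if $w$ strictly decreases along descent, the ancestor relation becomes empty (it requires equality of $w$-values), the hypothesis linking ancestry to $\preccurlyeq$ becomes vacuous, and the lemma becomes false.

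The paper's proof takes a different route precisely to avoid needing $w$ to decrease. It minimizes a counterexample over the \emph{range} of $w$ rather than over individual $w$-values; it then shows that the set $K$ of edges with minimal $w$-value is not well-quasi-ordered (by considering the subforest with $K$ removed, which by range-minimality satisfies property~(\hyperlink{prop_lem_trees}{S})); and it builds a carefully chosen antichain $(a_n)$ in $K$ with respect to the ancestry relation. The key observation your proof lacks is that for edges of minimal $w$-value $k$, any descendant (in the tree sense) that also has $w$-value $k$ is automatically a successor in the paper's weighted sense, because every intermediate edge has $w \geq k$ by minimality of $k$. This is the only way the tree structure, the weight $w$, and the hypothesis on $\preccurlyeq$ are made to interact. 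The final contradiction then comes from a second antichain $(s_n)$ of $w$-value $k$ found inside the children's subforest, which, spliced with a prefix of $(a_n)$, violates the construction of $(a_n)$. Your minimal-bad-sequence plus Ramsey scaffold is a reasonable starting instinct, but it needs to be reorganized around the minimal $w$-value level set as in the paper (or in Geelen--Gerards--Whittle) to close.
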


\begin{proof}
Let us assume by contraction that this lemma is false. Let $\mathcal{T}$ be a forest not satisfying this lemma, such that $w$ has minimal range and $\mathcal{E}(T)$ is not well-quasi-ordered by $\preccurlyeq$. In particular, for each value $k$ in its range there is an infinite number of edges with image $k$ (otherwise a finite number of trees could be deleted). 

Let $k$ be the minimum value of $w$ in its range. Then the set $K = w^{-1}(k)$ of edges labelled by $k$ is not well-quasi-ordered by $\preccurlyeq$. Indeed, if this was the case, let us consider $\mathcal{T}' = \mathcal{T} \smallsetminus K$, the subforest of $\mathcal{T}$ where edges labelled $k$ have been removed. By minimality of the range of $w$, $\mathcal{T}$ must satisfy property~(\hyperlink{prop_lem_trees}{S}). Either it is well-quasi-ordered by $\preccurlyeq$, which is absurd because by adding back $K$, $\mathcal{T}$ would still be well-quasi-ordered. Or, there exists an antichain $(e_n)_{n \in \N} \in \mathcal{E}(\mathcal{T}')^\N$ such that the set $\cup_{n \in \N} ~\text{ch}_{\mathcal{T}'} (e_n)$ is well-quasi-ordered by $\preccurlyeq$. It follows that $\cup_{n \in \N} ~\text{ch}_{\mathcal{T}'} (e_n) \cup K$ is well-quasi-ordered. This last set contains $\cup_{n \in \N} ~\text{ch}_{\mathcal{T}} (e_n)$, which is hence well-quasi-ordered as well; this contradicts property~(\hyperlink{prop_lem_trees}{S}), hence the set of edges labelled by $k$ is not well-quasi-ordered by $\preccurlyeq$.

We now construct a minimal antichain $(a_n)_{n \in \N} \in K^{\N}$ with respect to the ancestry relation. Note that, in our framework, since $k$ is the minimal value taken by $w$ on $\mathcal{E}(\mathcal{T})$, if $\wid (e)=k$ then, every edge $f$ in the subtree induced by $e$ such that $\wid (f) = k$ is a successor of $e$. We proceed by induction and first choose $a_1$ such that $a_1$ is the first element of an antichain, but none of its successor is. Then if the first $k-1$ elements are fixed, we choose $a_k$ so that $a_i \not \preccurlyeq a_k$ for $i < k$, $(a_i)_{1 \leq i \leq k}$ can be extended to an infinite antichain, and none of its successors can. Such elements exist since $K$ contains antichains, and if an element does not satisfy the successor condition, replace it by one of its successors which are in finite number since there is no infinite descending sequence.

By property~(\hyperlink{prop_lem_trees}{S}), $\cup_{n \in \N} ~\text{ch}(a_n)$ is not well-quasi-ordered. Let $\mathcal{A}$ be the subforest of $\T$ induced by $(\text{ch}_{\mathcal{A}} (a_n))_{n \in \N}$. Since any antichain of $\mathcal{A}$ is an antichain of $\T$, and $\text{ch}_{\T} |_{\mathcal{E}(\mathcal{A})} = \text{ch}_{\mathcal{A}}$, it follows that $\mathcal{A}$ also contradicts property (\hyperlink{prop_lem_trees}{S}). Hence, $K \cap \mathcal{E}(\mathcal{A})$ is not well-quasi-ordered and we can find among it another antichain $(s_n)_{n \in \N} \in (K \cap \mathcal{E}(\mathcal{A}))^\N$. 

By construction, each edge $s_n$ is the successor of exactly one edge $a_{p(n)}$. Let $N$ be such that $p(N)$ is minimal and $n > N$. Since $a_{p(n)}$ is an ancestor of $s_n$, we have $s_n \preccurlyeq a_{p(n)}$. It follows that no $a_i$ satisfies $a_i \preccurlyeq s_n$ for $i < p(N)$, by transitivity of $\preccurlyeq$. Thus, $a_1,\ldots, a_{p(N)-1}, s_{N}, s_{N+1}, \ldots$ is an antichain, which contradicts the construction of $a_{p(N)}$ and concludes the proof.
\end{proof}

Usually the proof of analogues of Theorem~\ref{th_emb_immersion_bounded} exploits an antichain yielded by Lemma~\ref{lem_lem_trees} to construct an edge $\preccurlyeq$ to another in the antichain, which leads to a contradiction. That construction is the crux relying on the finiteness of the number of ways to merge two ``abstract leaving graphs''. Here, our leaving graphs are much more constrained by the embedding, which modifies this part of the proof: it is harder to assemble the graphs but easier to find the ones to assemble.

\begin{proof}[Proof of Theorem~\ref{th_emb_immersion_bounded}]
Let $(G_n)_{n \in \N}$ be a sequence of plane graphs such that the carving-width of each graph $G_n$ is bounded by $k \in \N$. For each $n \in \N$, thanks to Proposition~\ref{prop_disc_decomp}, we choose a disc carving-decomposition $T_n$ of $G_n$ of carving-width $\cwid (G_n)$ and set the forest $\mathcal{T} = \cup_{n \in \N} T_n$. Let us assume by contradiction that $(G_n)_{n \in \N}$ is not well-quasi-ordered by embedded immersion.

It follows that the set of roots of $\mathcal{T}$ is an antichain for $\preccurlyeq$ and $\mathcal{E} (\mathcal{T})$ is not well-quasi-ordered by $\preccurlyeq$. By Lemma~\ref{lem_ancestry_cut}, if two edges $a,b$ of $\mathcal{T}$ are such that $a$ is an ancestor of $b$, then $b \preccurlyeq a$. Since the number of graphs on a fixed vertex is finite and there is a finite number of embeddings of a planar graph up to isotopy, there cannot be an infinite descending sequence of graphs for $\preccurlyeq$. Hence, as $w$ is bounded by $k$ on $\mathcal{E}(\mathcal{T})$, we can apply Lemma~\ref{lem_lem_trees} and exhibit an antichain $(a_n)_{n \in \N}$ such that $\cup_{n \in \N} ~\text{ch}_{\mathcal{T}} (a_n)$ is well-quasi-ordered by $\preccurlyeq$.

Since our trees are cubic, each set $\text{ch} (a_n)$ either consists of two edges, a left child $\ell_n$ and a right child $r_n$, or is empty. However, the edges incident to leaves are well-quasi-ordered since the number of leaving graphs for such edges are in finite number (they have at most $k$ edges). Hence, there is at most a finite number of edges incident to leaves in $(a_n)_{n_\N}$ and each sequence of $(\ell_n)_{n \in \N}, (r_n)_{n \in \N}$ is infinite and well-quasi-ordered by $\preccurlyeq$. 

Up to extracting subsequences of $(a_i)_{n \in \N}$ and hereby $(\ell_n)_{n \in \N}, (r_n)_{n \in \N}$, we can assume that $\forall i \in \N, \ell_i \preccurlyeq \ell_{i+1}$, $\forall i \in \N, r_i \preccurlyeq r_{i+1}$, and $\forall i \in \N, \wid (\ell_i) = \wid (r_{i+1})$, $\wid (\ell_i) = \wid (r_{i+1})$, and $\wid (a_i) = \wid (a_{i+1})$ (each one of these properties may require an extraction). In other words, the sequences $(\ell_n)_{n \in \N}$ and $(r_n)_{n \in \N}$ are ascending chains, and each sequence has a unique value by $w$. For simplicity of notations, we consider a subset of the sequence $(G_n)_{n \in \N}$ and index it such that we can design by $G_{i}$ the graph of $(G_n)_{n \in \N}$ such that $a_i$ is an edge of a carving-decomposition of $G_{i}$.

We now exploit our embedded structure. Let us set $i \in \N$. By the disc property, there exists a Jordan curve $\gamma_{a_i}$ in the plane such that $\gamma_{a_i}$ only meets $G_{i}$ once on the interior of each edge of $T_{i}(a_i)$. Furthermore, up to switching orientation, $\gamma_{a_i}$ intersects $T_{i}(a_i)$. Similarly, we define $\gamma_{\ell_i}$ and $\gamma_{r_i}$, which intersect $T_{i}(\ell_i)$ and $T_{i}(r_i)$. Since our decompositions satisfy the disc property, vertices of $G_{i}^{\ell_i}$ and $G_{i}^{r_i}$ both lie on the same side of $\gamma_{a_i}$. Furthermore, it is possible to isotope $\gamma_{\ell_i}$, $\gamma_{r_i}$, and $\gamma_{a_i}$ so that each point of any $\gamma$ curve belongs to another one. Their union forms a $\theta$-curve as depicted in Figure~\ref{pic_def_theta_curve}. By comparing the endpoint in $G_{i}^{\ell_i}$ and the order around it, it is possible to establish a bijection between edges met along $\gamma_{\ell_i}$ with respect to $G^{a_i}$ and edges met along $\gamma_{\ell_i}$ with respect to $G^{a_i}$. This bijection can then be extended to an injection to edges met by $\theta_i$. The same arguments apply for $\theta_i$ and $\gamma_{r_i}$. In the following we will use these bijections and map to $\theta$-curves without emphasizing it.

\begin{figure}[ht]
\begin{center}
\begin{tikzpicture}
\clip (-0.5,-1.7) rectangle (10.5,2.3);
\def\e{0.075}
\coordinate (l1) at (0,0);
\coordinate (l2) at ($(l1)+(-30:1)$);
\coordinate (l3) at ($(l1)+(30:1)$);

\coordinate (r1) at (2,-0.5);
\coordinate (r2) at ($(r1)+(0.75,0)$);
\coordinate (r3) at ($(r1)+(0.75,0.75)$);
\coordinate (r4) at ($(r1)+(0,0.75)$);

\coordinate (c) at (1.5,2);

\draw [red!50!purple, opacity = 0.7] (l1) --(l2) -- (l3) --cycle;
\draw [blue!80!black, opacity = 0.7] (r1) --(r2) -- (r3) -- (r4) --cycle;

\begin{scope}
\clip (1.45,-0.15) circle (1.9 cm and 1.2 cm);
\draw [opacity = 0.5, , name path=e1] (l1) -- ++(80:1) .. controls +(80:0.5) and \control{(c)}{(-150:0.5)};
\draw [opacity = 0.5, , name path=e2] (l3) -- ++(80:0.75) .. controls +(80:0.25) and \control{(c)}{(-120:0.25)};
\draw [opacity = 0.5, , name path=e3] (r3) -- ++(110:1) .. controls +(110:0.5) and \control{(c)}{(-40:0.25)};
\draw [opacity = 0.5, , name path=e4] (r3) -- ++(130:1) .. controls +(130:0.75) and \control{(c)}{(-60:0.5)};
\draw [opacity = 0.5, , name path=e5] (r4) -- ++(110:1) .. controls +(110:0.25) and \control{(c)}{(-80:0.25)};
\draw [opacity = 0.5, , name path=e6] (l2) -- (r1);
\draw [opacity = 0.5, , name path=e7] (l3) -- (r4);
\draw  [opacity = 0.5, , name path=e8](l2) -- (r4);
\end{scope}
\path [opacity = 0.5, , name path=e1] (l1) -- ++(80:1) .. controls +(80:0.5) and \control{(c)}{(-150:0.5)};
\path [opacity = 0.5, , name path=e2] (l3) -- ++(80:0.75) .. controls +(80:0.25) and \control{(c)}{(-120:0.25)};
\path [opacity = 0.5, , name path=e3] (r3) -- ++(110:1) .. controls +(110:0.5) and \control{(c)}{(-40:0.25)};
\path [opacity = 0.5, , name path=e4] (r3) -- ++(130:1) .. controls +(130:0.75) and \control{(c)}{(-60:0.5)};
\path [opacity = 0.5, , name path=e5] (r4) -- ++(110:1) .. controls +(110:0.25) and \control{(c)}{(-80:0.25)};
\path [opacity = 0.5, , name path=e6] (l2) -- (r1);
\path [opacity = 0.5, , name path=e7] (l3) -- (r4);
\path  [opacity = 0.5, , name path=e8](l2) -- (r4);

\foreach \i in {1,2,3}{\fill [red!50!purple] (l\i) circle (\e cm);};
\foreach \i in {1,2,3,4}{\fill [blue!80!black] (r\i) circle (\e cm);};

\draw [thick, name path=blob] (1.45,-0.15) circle (1.9 cm and 1.2 cm);
\draw [thick, name path=vert] (1.45,1.05) -- (1.45,-1.35);

\foreach \i in {1,...,5}{
\path [name intersections={of=blob and e\i,total=\tot}]
\foreach \s in {1,...,\tot}{coordinate (n\i) at (intersection-\s)}; 
\fill [black!50] (n\i) circle (\e cm);
};

\foreach \i in {6,7,8}{
\path [name intersections={of=vert and e\i,total=\tot}]
\foreach \s in {1,...,\tot}{coordinate (n\i) at (intersection-\s)}; 
\fill [black!50] (n\i) circle (\e cm);
};

\node at (0.9,-0.9) [red!50!purple] {$G_{i}^{\ell_i}$};
\node at (2.1,-0.9) [blue!80!black] {$G_{i}^{r_i}$};
\node at (3.3,-1) {$\theta_i$};
\node at (1.5,1.5) {\Large $\tilde{G}_{i}^{a_i}$};

\begin{scope}[xshift = 8cm, yshift = -0.15cm]
\coordinate (c) at (0,2.25);

\filldraw [red!50!purple, fill opacity = 0.1] (-0.8,0.63) arc (90:270:1 and 0.63) -- cycle;
\filldraw [blue!80!black, fill opacity = 0.1] (0.8,0.63) arc (90:-90:1 and 0.63) -- cycle;

\begin{scope}[opacity = 0.5]
\clip (0,0) circle (2.4 cm and 1.39cm);
\draw (-0.25,0.4) -- (0.25,0.6);
\draw (-0.25,0.4) -- (0.25,0.1);
\draw (-0.25,-0.3) -- (0.25,-0.3);
\draw (-0.8,0.63) -- (-0.6,0.5);
\draw [dotted] (-0.6,0.5) -- (-0.25,0.4);
\draw (-0.8,-0.2) -- (-0.6,-0.2);
\draw [dotted] (-0.6,-0.2) -- (-0.25,-0.3);
\draw (-0.8,-0.2) -- (-0.6,0);
\draw [dotted] (-0.6,0) -- (-0.25,0.4);

\draw (0.8,0.4) -- (0.6,0.5);
\draw [dotted] (0.6,0.5) -- (0.25,0.6);
\draw (0.8,0.4) -- (0.6,0.2);
\draw [dotted] (0.6,0.2) -- (0.25,0.1);
\draw (0.8,-0.3) -- (0.6,-0.3);
\draw [dotted] (0.6,-0.3) -- (0.25,-0.3);

\draw (-0.8,0.63) -- (-0.75,0.83);
\draw [dotted] (0-0.75,0.83) -- (-0.7,1.09);
\draw (-0.7,1.09) -- +(80:0.5) .. controls +(80:0.4) and \control{(c)}{(-150:0.4)};

\path (-0.8,0.63) arc (90:120:1 and 0.63) coordinate (d1);
\draw (d1) -- +(80:0.2) coordinate (d2);
\draw [dotted] (d2) -- (-1.2,1) coordinate (d3);
\draw (d3) -- +(80:0.5) .. controls +(80:0.4) and \control{(c)}{(-170:0.4)};

\path (0.8,0.63) arc (90:80:1 and 0.63) coordinate (d1);
\draw (d1) -- +(100:0.2) coordinate (d2);
\draw [dotted] (d2) -- +(110:0.25) coordinate (d3);
\draw (d3) -- +(110:0.4) .. controls +(110:0.3) and \control{(c)}{(-40:0.4)};

\path (0.8,0.63) arc (90:55:1 and 0.63) coordinate (d1);
\draw (d1) -- +(110:0.25) coordinate (d2);
\draw [dotted] (d2) -- +(110:0.25) coordinate (d3);
\draw (d3) -- +(110:0.4) .. controls +(110:0.3) and \control{(c)}{(-30:0.4)};

\path (0.8,0.63) arc (90:55:1 and 0.63) coordinate (d1);
\draw (d1) -- +(70:0.25) coordinate (d2);
\draw [dotted] (d2) -- +(100:0.2) coordinate (d3);
\draw (d3) -- +(100:0.4) .. controls +(100:0.4) and \control{(c)}{(-10:0.4)};
\end{scope}

\path [opacity = 0.5, , name path=e1] (-0.7,1.09) -- +(80:0.5) .. controls +(80:0.4) and \control{(c)}{(-150:0.4)};

\path (-0.8,0.63) arc (90:120:1 and 0.63) coordinate (d1);
\path (d1) -- +(80:0.2) coordinate (d2);
\path [dotted] (d2) -- (-1.2,1) coordinate (d3);
\path [opacity = 0.5, , name path=e2] (d3) -- +(80:0.5) .. controls +(80:0.4) and \control{(c)}{(-170:0.4)};

\path (0.8,0.63) arc (90:80:1 and 0.63) coordinate (d1);
\path (d1) -- +(100:0.2) coordinate (d2);
\path [dotted] (d2) -- +(110:0.25) coordinate (d3);
\path [opacity = 0.5, , name path=e3] (d3) -- +(110:0.4) .. controls +(110:0.3) and \control{(c)}{(-40:0.4)};

\path (0.8,0.63) arc (90:55:1 and 0.63) coordinate (d1);
\path (d1) -- +(110:0.25) coordinate (d2);
\path [dotted] (d2) -- +(110:0.25) coordinate (d3);
\path [opacity = 0.5, , name path=e4] (d3) -- +(110:0.4) .. controls +(110:0.3) and \control{(c)}{(-30:0.4)};

\path (0.8,0.63) arc (90:55:1 and 0.63) coordinate (d1);
\path (d1) -- +(70:0.25) coordinate (d2);
\path[dotted] (d2) -- +(100:0.2) coordinate (d3);
\path [opacity = 0.5, , name path=e5] (d3) -- +(100:0.4) .. controls +(100:0.4) and \control{(c)}{(-10:0.4)};
\path [opacity = 0.5, , name path=e6] (-0.25,0.4) -- (0.25,0.6);
\path [opacity = 0.5, , name path=e7] (-0.25,0.4) -- (0.25,0.1);
\path  [opacity = 0.5, , name path=e8] (-0.25,-0.3) -- (0.25,-0.3);

\foreach \i in {1,2,3}{\fill [red!50!purple] (l\i) circle (\e cm);};
\foreach \i in {1,2,3,4}{\fill [blue!80!black] (r\i) circle (\e cm);};

\draw [thick, name path=blob] (0,0) circle (2.4 cm and 1.39cm);
\draw [thick, name path=vert] (0,1.39) -- (0,-1.39);

\foreach \i in {1,...,5}{
\path [name intersections={of=blob and e\i,total=\tot}]
\foreach \s in {1,...,\tot}{coordinate (n\i) at (intersection-\s)}; 
\fill [black!50] (n\i) circle (\e cm);
};

\foreach \i in {6,7,8}{
\path [name intersections={of=vert and e\i,total=\tot}]
\foreach \s in {1,...,\tot}{coordinate (n\i) at (intersection-\s)}; 
\fill [black!50] (n\i) circle (\e cm);
};

\node at (-1.25,0) [red!50!purple] {$G_{i}^{\ell_i}$};
\node at (1.25,0) [blue!80!black] {$G_{i}^{r_i}$};
\node at (-0.5,-1) [red!50!purple] {$G_{j}^{\ell_j}$};
\node at (0.5,-1) [blue!80!black] {$G_{j}^{r_j}$};
\node at (2.3,-1) {$\theta_j$};
\node at (0,1.9) {\Large $\tilde{G}_{j}^{a_j}$};
\end{scope}
\end{tikzpicture}
\caption{Left: Definition of $\theta_i$ from $\tilde{G}_{i}^{a_i}$. Right: Visualisation of $\tilde{G}_{i}^{a_i}$ as an embedded immersion of $\tilde{G}_{j}^{a_j}$.}
\label{pic_def_theta_curve}
\end{center}
\end{figure}
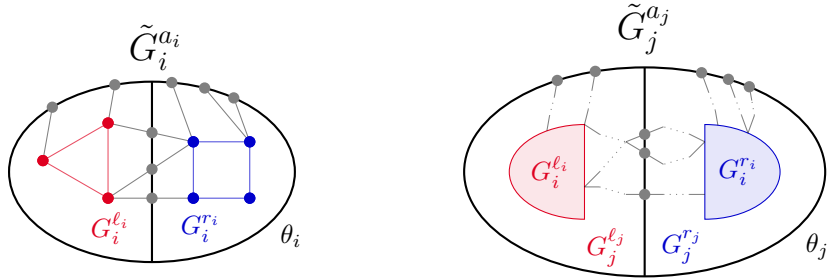 
 
Let us set $j > i$, then we have two embedded immersions from $\tilde{G}_{i}^{\ell_i}$ to $\tilde{G}_{j}^{\ell_j}$ and from $\tilde{G}_{i}^{r_i}$ to $\tilde{G}_{j}^{r_j}$. The sizes of cuts $|T_{i}(r_i)|$, $|T_{i}(\ell_i)|$, and $|T_{i}(a_i)|$ determine the sizes of $|T_{i}(\ell_i) \cap T_{i}(r_i)|$, $|T_{i}(a_i) \cap T_{i}(r_i)|$, and $|T_{i}(\ell_i) \cap T_{i}(a_i)|$. Since $|T_{i}(x_i)| = |T_{j}(x_j)|$ by construction for $x \in \{ \ell, r, a\}$, we conclude that the sizes of intersections of pairs of cuts also match between $\tilde{G}_{i}^i$ and $\tilde{G}_{j}^j$, \ie, $|T_{i}(x_i) \cap T_{i}(y_i)| = |T_{j}(x_j) \cap T_{j}(y_j)|$ for $x,y \in \{ \ell, r, a \}^2$. 

By Lemma~\ref{lem_equiv_plan_imm}, we have ordered immersions $\phi_\ell, \phi_r$ where the images of edges are families of tangent paths $P_{\ell,j}$ and $P_{r,j}$. For each edge $e$ in $T_{j}(\ell_j) \cap T_{j}(r_j)$, that is the edges met on the vertical part of $\theta_j$, there exists a path $p_{\ell,j}$ of the families yielded by Lemma~\ref{lem_equiv_plan_imm} within $G^{\ell_j}_{j}$ ending with $e$. Similarly, we have a path $p_{r,j}$ within $G^{r_j}_{j}$ reaching the same edge so that we can merge these paths to get a path $p_e$. These paths are tangent since their inner vertices are also inner vertices in $P_{\ell,j}$ and $P_{r,j}$, they are also tangent paths in $G_{j}^j$, this proves that $G_{i}^i$ is an embedded immersion of $G_{j}^j$.

Then, we identify the non-merged vertices $s_x$ of $\tilde{G}_{j}^{\ell_j}$ with their equivalent among the remaining non-merged vertices $s_y$ of $\tilde{G}_{j}^{a_j}$ on the proper arc of $\theta_j$. We do similarly with the ones remaining in $\tilde{G}_{j}^{r_j}$. We then have edge-disjoint paths joining $G^{a_j}_j$ to vertices $s_x$ on $\partial D_{a_j}$ which respect the order around the starting vertices of $G^{a_j}_j$. By Lemma~\ref{lem_equiv_plan_imm} $\tilde{G}_{i}^{a_i}$ is an embedded immersion of $\tilde{G}_{j}^{a_j}$. Hence, $a_i \preccurlyeq a_j$, which is a contradiction and concludes our proof.
\end{proof}

\section{Embedded graph minor}
\label{sec_em_gb}

The aim of this sections is to prove that plane graphs are well-quasi-ordered by embedded minor, \ie, Theorem~\ref{th_embedded_gm}. 

A way to prove this theorem for the abstract setting is to prove that planar graphs with bounded width (either branchwidth or treewidth) are well-quasi-ordered by exploiting branch-decompositions, which is similar to what we did in Section~\ref{sec_bounded_cw}. For graphs with high treewidth, it can be done by proving two properties. The first is that any plane graph can be obtained as a minor of a big enough grid~\cite{Graph_Minors_VII, Robertson_excluding}. The second is that we can find grids of arbitrary size in a family of planar graphs with unbounded treewidth {\cite{Graph_Minors_XI}}.

Adapting our techniques from Section~\ref{sec_bounded_cw} to prove directly that plane graphs with bounded branch-width (respectively treewidth) are well-quasi-ordered by the embedded minor relation fails due to the fact that a cut induced by a branch-decomposition (respectively tree-decomposition) is a vertex-cut, \ie, a set of vertices whose removal renders the graph disconnected. In consequence, although it is possible to prove the existence of bond-linked branch-decomposition, the cut-cycle duality of cuts produced by carving-decomposition does not readily imply the disc property. We circumvent this issue by considering an oriented version of the embedded immersion applied to the medial digraph in Section~\ref{subsec_im_gm} and apply the methods that we developed earlier.

The case of unbounded width plane graphs follows from adaptations of existing theorem in the literature. First of all, it is quite convincing to see that plane graphs can be obtained as embedded minors of a grid: consider an embedding of a planar graph $G$, split vertices into trees of degree less than $4$, and push each edge on a thin enough grid. Contracting back the trees and deleting the unused part of the grid is the embedded minor. Furthermore, it is already known that walls can be found as subgraph of a graph with high width, and they have few distinct embeddings in $\R^2$. Hence, finding embedded grids as embedded minor require little extra work.

\subsection{Bounded branch-width}
\label{subsec_im_gm}

The goal of this section is to prove that plane graphs with bounded branch-width are well-quasi-ordered by embedded minors, \ie, Proposition~\ref{prop_gm_bounded_bw}. 

\begin{proposition}\label{prop_gm_bounded_bw}
Plane graphs with bounded branch-width are well-quasi-ordered by embedded minors.
\end{proposition}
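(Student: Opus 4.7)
The plan is to reduce embedded minors in a plane graph $G$ to embedded directed immersions in its plane directed medial graph $DM(G)$, and then invoke (a directed analogue of) Theorem~\ref{th_emb_immersion_bounded}. First I would recall the construction of the plane directed medial graph: put one vertex on each edge of $G$, connect consecutive vertices lying on the boundary of a common face of $G$, and orient each resulting edge so that, around each face of $G$, the arcs form a directed cycle whose orientation alternates between clockwise and counter-clockwise for the primal vs.\ dual cycles. In this way $DM(G)$ is a plane $4$-regular digraph whose faces inherit a two-coloring from $V(G)$ and $F(G)$.

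The core step, which is also the main technical obstacle, is to prove the correspondence
\[
H \text{ is an embedded minor of } G \iff DM(H) \text{ is an embedded directed immersion of } DM(G).
\]
I would prove this by showing that each elementary embedded operation on $G$ corresponds, via $DM$, to a controlled sequence of directed lifts and deletions on $DM(G)$. Concretely: an embedded edge deletion of $e$ in $G$ removes the vertex $v_e$ of $DM(G)$, so its four incident arcs must be recombined using two embedded directed lifts at $v_e$, the choice of lifts being forced by the directed structure (we may only merge an incoming arc with an outgoing one). An embedded edge contraction of $e$, via planar duality, is dual to an edge deletion in $G^*$ and therefore translates symmetrically to two embedded directed lifts at $v_e$ pairing the other two pairs of opposite arcs. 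Vertex deletion in $G$ is realized by a cascade of such operations on all $DM(G)$-vertices lying on incident edges. Using the combinatorial characterization of embedded directed immersion provided by Lemma~\ref{lem_equiv_plan_imm_dir} (tangent families of directed paths respecting the rotation system), I would verify that both directions hold: a sequence of embedded edge deletions/contractions in $G$ reaching $H$ produces a tangent family of directed paths in $DM(G)$ with image $DM(H)$; conversely, a tangent family realizing an embedded directed immersion $DM(H) \hookrightarrow DM(G)$ partitions the arcs of $DM(G)$ in a way which, face-by-face, encodes a valid sequence of embedded contractions, deletions, and vertex deletions producing $H$ from $G$.

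Next, I would use the classical Seymour--Thomas relationship between branch-width and the carving-width of the medial graph: there is an absolute constant $c$ with $\cwid(DM(G)) \leq c \cdot \bwid(G)$. (This is immediate from the construction since each mixed search/branch decomposition of $G$ transports to a carving decomposition of $DM(G)$ by sending each edge of $G$ to the leaf of $DM(G)$ labelling the corresponding medial vertex; the width bound follows from the $4$-regularity of $DM(G)$.) Consequently, if $(G_n)_{n\in\N}$ is a family of plane graphs of branch-width at most $k$, then $(DM(G_n))_{n\in\N}$ is a family of plane digraphs of carving-width at most $ck$.

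Finally, I would apply the directed variant of Theorem~\ref{th_emb_immersion_bounded} to $(DM(G_n))_{n\in\N}$: the proof of that theorem goes through verbatim in the directed setting, since Proposition~\ref{prop_disc_decomp} and the linked decomposition argument are purely combinatorial/topological, and the path-swapping argument of Lemma~\ref{lem_path_swaps} preserves directedness (a swap between two directed paths produces two directed paths, and self-loop removal is directed-invariant), so Lemma~\ref{lem_equiv_plan_imm_dir} can replace Lemma~\ref{lem_equiv_plan_imm} throughout Sections~\ref{subsec_order_edge}--\ref{sec:nash_williams_argument}. This yields an increasing pair $DM(G_i) \leq DM(G_j)$ in the embedded directed immersion order, and pulling back through the correspondence established above yields $G_i \leq G_j$ as embedded minors. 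The hard part is really the correspondence step, since one must carefully identify which pairs of arcs at a medial vertex are forced to merge in each of the two ``primal'' vs.\ ``dual'' interpretations and check that arbitrary sequences of operations on $G$ and directed tangent families on $DM(G)$ indeed match up bijectively up to the obvious equivalences.
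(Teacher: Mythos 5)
Your proposal matches the paper's proof strategy exactly: translate embedded minor to embedded directed immersion on medial digraphs (the paper's Lemma~\ref{lem_equiv_emb_minors}), use the Seymour--Thomas identity $\cwid(M(G)) = 2\bwid(G)$ (the paper's Proposition~\ref{prop_bw_cw}) to transfer the width bound, and then run the bounded-carving-width argument of Theorem~\ref{th_emb_immersion_bounded} in the directed setting (the paper's Proposition~\ref{prop_imm_emb_dirmed}). You also correctly identify the correspondence lemma as the technically delicate part.

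One point in your plan is stated more optimistically than it should be. You claim the proof of Theorem~\ref{th_emb_immersion_bounded} goes through ``verbatim'' in the directed setting with Lemma~\ref{lem_path_swaps} applied to directed paths. In fact the paper does not prove a directed analogue for arbitrary plane digraphs, and the Menger step is where a naive directed version would break: carving-decomposition cuts are undirected, and for a general digraph the directed min-cut between $\mathcal{F}_a$ and $\mathcal{F}_b$ can be much larger than the undirected one. What saves the argument is the special structure of medial digraphs: they are Eulerian (in-degree equals out-degree at every vertex) and the rotation at each vertex alternates in/out. This gives two things you should make explicit. First, every cut is balanced, so the $2k$ undirected edge-disjoint paths from Menger can be taken directed, $k$ in each direction. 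Second, the alternating in/out rotation at each degree-$4$ vertex means a directed path through a vertex must use one incoming arc and one outgoing arc, and since these come in alternating pairs, two edge-disjoint directed paths through the same vertex can never interlace; tangency is automatic. The path-swap machinery of Lemma~\ref{lem_path_swaps}, which you invoke, is therefore not needed at all in the directed medial setting (and indeed the paper's Lemma~\ref{lem_ancestry_cut_dig} does not use it). So your route is correct, but the directed ancestry lemma should be stated and proved specifically for plane medial digraphs, citing the Eulerian and alternating-rotation properties rather than appealing to a generic directed transfer.
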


Similarly to last Section, with Lemma~\ref{lem_plane_equiv}, focusing on graphs embedded in $\Sp^2$ is enough.

\paragraph*{Embedded minor and embedded directed immersions.}
To prove Proposition~\ref{prop_gm_bounded_bw}, we first connect embedded minor with embedded minors via medial graphs. This idea is not new, it was already introduced in \cite{Medina_wqolink}, but Proposition 2.4 used there was in fact not proven for the embedded setting. Our new setting, which considers embeddings carefully allow us to prove this proposition. Medial graphs were introduced by Seymour and Thomas in 1994~\cite{Seymour_Ratcatcher} alongside with carving-width. These two objects play a major role in this theory since the famous ratcatcher algorithm developed there, which computes the branch-width of planar graphs in polynomial time; boils down to computing the carving-width of the associated medial graph. Indeed:

\begin{proposition}{\cite[(7.2)]{Seymour_Ratcatcher}}\label{prop_bw_cw}
Let $G$ be a plane graph such that $|E(G)| \geq 2$, and $M$ be its medial graph. Then, $2 \bwid(G) = \cwid (G)$.
\end{proposition}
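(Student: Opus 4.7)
My plan is to exploit the fact that $V(M) = E(G)$, so any binary tree whose leaves are injectively labelled by $E(G)$ serves simultaneously as a branch-decomposition of $G$ and as a carving-decomposition of $M$. For such a tree $T$ and a tree-edge $a$ inducing a bipartition $E(G) = E_1 \sqcup E_2$, let $w_B(a)$ denote the branch-width at $a$, i.e.\ the number of vertices of $G$ incident to edges in both $E_1$ and $E_2$, and let $w_C(a)$ denote the size of the $M$-cut between $E_1$ and $E_2$. The proof reduces to comparing these two quantities across all trees.

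First I would establish the pointwise inequality $w_C(a) \geq 2\,w_B(a)$. Fix a vertex $v$ shared between $E_1$ and $E_2$; the $\deg_G(v)$ edges around $v$ are cyclically ordered in the embedding, and each consecutive pair corresponds to an edge of $M$ incident to both of them. Colouring these edges by their side of the bipartition, both colours appear, so the number of colour changes around the cycle is even and at least $2$, and each such change contributes a distinct edge of $M$ crossing the cut. Summing over shared vertices gives the inequality, with equality exactly when each shared vertex sees a single cyclic arc of each colour. Applied to an optimal carving-decomposition of $M$, reinterpreted as a branch-decomposition of $G$, this yields $\cwid(M) \geq 2\,\bwid(G)$.

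For the reverse inequality $\cwid(M) \leq 2\,\bwid(G)$, I would use the existence of an optimal branch-decomposition of $G$ that is a \emph{sphere-cut} decomposition: every cut is realised by a noose---a simple closed curve in the plane, disjoint from $E(G)$ except at shared vertices, which it traverses exactly once, separating $E_1$ and $E_2$ into the two closed discs it bounds. In such a decomposition, each shared vertex sees exactly one cyclic arc of each colour, so $w_C(a) = 2\,w_B(a)$ at every tree-edge, and the same tree gives a carving-decomposition of $M$ of width $2\,\bwid(G)$.

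The main obstacle is establishing the existence of an optimal sphere-cut decomposition. This is the technical heart of Seymour and Thomas's original ratcatcher argument, and is analogous in flavour to our bond-linked construction in Section~\ref{sec_bond_linked}: start from any optimal branch-decomposition and use local rebranching or uncrossing moves, controlled by an integer potential measuring how far each cut is from being a noose, to convert every cut into a noose without increasing the width. Planarity and the cycle-cut duality of the plane embedding of $G$ are the essential ingredients that allow these uncrossings to be performed.
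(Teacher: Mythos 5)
The paper does not prove this proposition; it is cited verbatim from Seymour and Thomas, so there is no internal proof to compare against. Evaluating your sketch on its own merits, the forward direction is correct and cleanly argued: each edge of $M$ corresponds to a corner of $G$, i.e.\ an unordered pair of $G$-edges consecutive in the rotation at some vertex $v$, and an $M$-edge lies in the cut exactly when its two ends receive different colours; since each corner belongs to a unique vertex, $w_C(a)$ is exactly the total number of colour changes, and every shared vertex contributes an even number of them that is at least two, giving $w_C(a) \geq 2w_B(a)$ pointwise and hence $\cwid(M) \geq 2\bwid(G)$.

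The reverse inequality is where your plan has a genuine gap. You reduce it to the existence of a width-optimal sphere-cut branch-decomposition of $G$, which would indeed suffice. But that existence statement is, in the standard literature, \emph{derived from} the equality $\cwid(M) = 2\bwid(G)$ together with the existence of width-optimal bond carving-decompositions of $M$: a bond cut of $M$ transfers by cut--cycle duality to a noose of $G$. Invoking it here risks circularity unless you supply an independent proof. Your proposed fix --- a direct potential-decreasing rebranching in the spirit of Section~\ref{sec_bond_linked} --- is plausible in outline, but it is stronger than what the bond-linked construction delivers: the noose condition is a global planarity constraint (the cut curve must close up into a single Jordan curve meeting $G$ only at shared vertices, once each), not merely the local requirement that each shared vertex see two colour changes, and a rebranching argument would have to control both simultaneously. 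Without those details, the reverse inequality currently rests on an unproved lemma of difficulty comparable to the proposition itself.
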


The \emphdef{medial graph} $M(G)$ of a cellularly embedded graph $G$ is an embedded $4$-regular graph on vertex set $E(G)$, where two vertex $e,e' \in V(M(G))$ are adjacent if $e,e'$ are consecutive on the boundary of some face $F$. A vertex $v_e$ of $M(G)$ can be placed on the relative interior of the embedded edge $e$, this vertex can be connected by a an arc to its neighbors in each face incident to $e$. An edge forming a self loop is incident to itself on the boundary of faces bounded by the self-loop. In addition, the two edges $e,e'$ of a face $F$ bounded by exactly two edges are considered consecutive twice, \ie, there are two edges between $v_e$ and $v_{e'}$ on $F$. Choosing the arcs connecting the vertices of $M(G)$ disjoint on each face yields an embedding of $M(G)$ associated to $G$ which is unique, see Figure~\ref{pic_def_medial} for a directed version of the medial graph.

\begin{figure}
\begin{center}
\begin{tikzpicture}
\clip (-1.45,-1.85) rectangle (9.5,3);
\def\e{0.05}

\coordinate (c1) at (0,0);
\coordinate (c2) at ($(c1)+(-20:1)$);
\coordinate (c3) at ($(c2)+(-40:1)$);
\coordinate (c4) at ($(c1)+(70:1)$);
\coordinate (c5) at ($(c4)+(-20:1)$);
\coordinate (c6) at ($(c5)+(20:1)$);
\coordinate (c7) at ($(c4)+(100:1)$);
\coordinate (c8) at ($(c7)+(0:1)$);

\draw (c1) -- (c2);
\draw (c2) -- (c3);
\draw (c1) -- (c4);
\draw (c4) -- (c5);
\draw (c2) -- (c5);
\draw (c5) -- (c6);
\draw (c4) -- (c7);
\draw (c5) -- (c8);
\draw (c4) -- (c8);
\draw (c7) -- (c8);

\draw (c1) .. controls +(-110:1.25) and \control{(c3)}{(-130:1.25)};
\draw (c1) .. controls +(160:2) and \control{(c7)}{(180:2)};
\draw (c4) .. controls +(135:1.25) and \control{(c4)}{(-135:1.25)};
\draw (c3) .. controls +(50:2.5) and \control{(c8)}{(0:3)};
\draw (c7) .. controls +(90:1) and \control{(c8)}{(90:1)};

\foreach \i in {1,...,8}{\fill (c\i) circle (\e cm);};
\node at (-0.4,-0.3) {$G$};

\draw [-Stealth] (3,0)  -- +(1,0);

\begin{scope}[xshift = 6cm]
\def\op{0.4}
\coordinate (c1) at (0,0);
\coordinate (c2) at ($(c1)+(-20:1)$);
\coordinate (c3) at ($(c2)+(-60:1)$);
\coordinate (c4) at ($(c1)+(70:1)$);
\coordinate (c5) at ($(c4)+(-20:1)$);
\coordinate (c6) at ($(c5)+(20:1)$);
\coordinate (c7) at ($(c4)+(100:1)$);
\coordinate (c8) at ($(c7)+(0:1)$);

\draw [opacity = \op] (c1) -- (c2) coordinate [pos = 0.5] (m1);
\draw [opacity = \op] (c2) -- (c3) coordinate [pos = 0.5] (m2);
\draw [opacity = \op] (c1) -- (c4) coordinate [pos = 0.5] (m3);
\draw [opacity = \op] (c4) -- (c5) coordinate [pos = 0.5] (m4);
\draw [opacity = \op] (c2) -- (c5) coordinate [pos = 0.5] (m5);
\draw [opacity = \op] (c5) -- (c6) coordinate [pos = 0.5] (m6);
\draw [opacity = \op] (c4) -- (c7) coordinate [pos = 0.5] (m7);
\draw [opacity = \op] (c5) -- (c8) coordinate [pos = 0.5] (m8);
\draw [opacity = \op] (c4) -- (c8) coordinate [pos = 0.5] (m9);
\draw [opacity = \op] (c7) -- (c8) coordinate [pos = 0.5] (m10);
\draw [opacity = \op] (c1) .. controls +(-110:1.25) and \control{(c3)}{(-130:1.25)} coordinate [pos = 0.5] (m11);
\draw [opacity = \op] (c1) .. controls +(160:2) and \control{(c7)}{(180:2)} coordinate [pos = 0.5] (m12);
\draw [opacity = \op] (c4) .. controls +(135:1.25) and \control{(c4)}{(-135:1.25)} coordinate [pos = 0.5] (c9) coordinate [pos = 0.5] (m13);
\draw [opacity = \op] (c3) .. controls +(50:2.5) and \control{(c8)}{(0:3)} coordinate [pos = 0.5] (m14);
\draw [opacity = \op] (c7) .. controls +(90:1) and \control{(c8)}{(90:1)}  coordinate [pos = 0.5] (m15);

\foreach \i in {1,...,8}{\fill [black!50] (c\i) circle (\e cm);};

\begin{scope}[blue!40!black]
\draw (m1) -- (m3) node [midway, sloped] {$\arrowIn$};
\draw (m3) -- (m4) node [midway, sloped] {$\arrowOut$};
\draw (m4) -- (m5) node [midway, sloped] {$\arrowOut$};
\draw (m5) -- (m1) node [midway, sloped] {$\arrowIn$};
\draw (m1) -- (m2) node [midway, sloped] {$\arrowOut$};
\draw (m2) -- (m11) node [midway, sloped] {$\arrowIn$};
\draw (m11) -- (m1) node [midway, sloped] {$\arrowOut$};
\draw (m7) -- (m9) node [midway, sloped] {$\arrowIn$};
\draw (m9) -- (m10) node [midway, sloped] {$\arrowOut$};
\draw (m10) -- (m7) node [midway, sloped] {$\arrowOuts{1.25}$};
\draw (m4) -- (m8) node [midway, sloped] {$\arrowIns{1.25}$};
\draw (m8) -- (m9) node [midway, sloped] {$\arrowOut$};
\draw (m9) -- (m4) node [midway, sloped] {$\arrowIn$};
\draw (m13) .. controls +(30:0.5) and \control{(m13)}{(-30:0.5)} node [pos = 0.1, sloped] {$\arrowOut$};
\draw (m13) .. controls +(170:0.5) and \control{(m7)}{(-170:0.3)} node [midway, sloped] {$\arrowIns{1.25}$};
\draw (m7) .. controls +(150:0.75) and \control{(m12)}{(40:0.75)} node [midway, sloped] {$\arrowOut$};
\draw (m12).. controls + (-50:0.75) and \control{(m3)}{(-150:0.75)} node [midway, sloped] {$\arrowIn$};
\draw (m3) .. controls +(170:0.3) and \control{(m13)}{(-170:0.5)} node [midway, sloped] {$\arrowOuts{1.25}$};
\draw (m10) .. controls +(135:0.35) and \control{(m15)}{(-135:0.35)} node [midway, sloped] {$\arrowOut$};
\draw (m15) .. controls +(-45:0.35) and \control{(m10)}{(45:0.35)} node [midway, sloped] {$\arrowOut$};
\draw (m2) -- (m5) node [midway, sloped] {$\arrowIns{1.25}$};
\draw (m5) -- (m6) node [midway, sloped] {$\arrowOuts{1.5}$};
\draw (m6) .. controls +(-10:1) and \control{(m6)}{(50:1)} node [pos = 0.8, sloped] {$\arrowIn$};
\draw (m6) -- (m8) node [midway, sloped] {$\arrowIns{1.5}$};
\draw (m8) .. controls + (60:0.75) and \control{(m14)}{(130:0.75)} node [midway, sloped] {$\arrowOut$};
\draw (m14) .. controls +(-120:1) and \control{(m2)}{(40:0.5)} node [midway, sloped] {$\arrowIns{1.25}$};
\draw (m12) .. controls +(-135:1) and \control{(m11)}{(-180:1)} node [midway, sloped] {$\arrowOuts{1.25}$};
\draw (m11) .. controls +(-60:1.5) and \control{(m14)}{(-70:1.5)} node [midway, sloped] {$\arrowOuts{1.25}$};
\draw (m14) .. controls +(45:1) and \control{(m15)}{(70:1)} node [midway, sloped] {$\arrowIns{1.25}$};
\draw (m15) .. controls +(110:1) and \control{(m12)}{(135:1)} node [midway, sloped] {$\arrowIns{1.25}$};
\end{scope}

\begin{scope}[blue!40!black, opacity = 0.125]
\fill (m1) -- (m2) -- (m5) -- cycle;
\fill (m4) -- (m5) -- (m6) -- (m8) -- cycle;
\fill (m6) .. controls +(-10:1) and \control{(m6)}{(50:1)};
\fill [even odd rule] (m3) .. controls +(170:0.3) and \control{(m13)}{(-170:0.5)} .. controls +(30:0.5) and \control{(m13)}{(-30:0.5)} .. controls +(170:0.5) and \control{(m7)}{(-170:0.3)} -- (m9) -- (m4) -- cycle;
\fill (m14) .. controls +(45:1) and \control{(m15)}{(70:1)} .. controls +(-45:0.35) and \control{(m10)}{(45:0.35)} -- (m9) -- (m8) .. controls + (60:0.75) and \control{(m14)}{(130:0.75)};
\fill (m15) .. controls +(110:1) and \control{(m12)}{(135:1)} .. controls + (40:0.75) and \control{(m7)}{(150:0.75)} -- (m10) .. controls + (135:0.35) and \control{(m15)}{(-135:0.35)};
\fill (m12) .. controls +(-135:1) and \control{(m11)}{(-180:1)} -- (m1) -- (m3) .. controls +(-150:0.75) and \control{(m12)}{(-50:0.75)};
\fill (m11) .. controls +(-60:1.5) and \control{(m14)}{(-70:1.5)} .. controls +(-120:1) and \control{(m2)}{(40:0.5)} -- (m11);
\end{scope}

\foreach \i in {1,...,15}{\fill [black] (m\i) circle (\e*1.5 cm); \fill [blue!80!black] (m\i) circle (\e cm);};

\node [black!50] at (-0.4,-0.3) {$G$};
\node [blue!40!black] at (-1.5,-1.2) {$DM(G)$};
\end{scope}
\end{tikzpicture}
\caption{The medial digraph $\dirmed(G)$ of a plane graph $G$, and a checkerboard colouring of its faces.}
\label{pic_def_medial}
\end{center}
\end{figure}
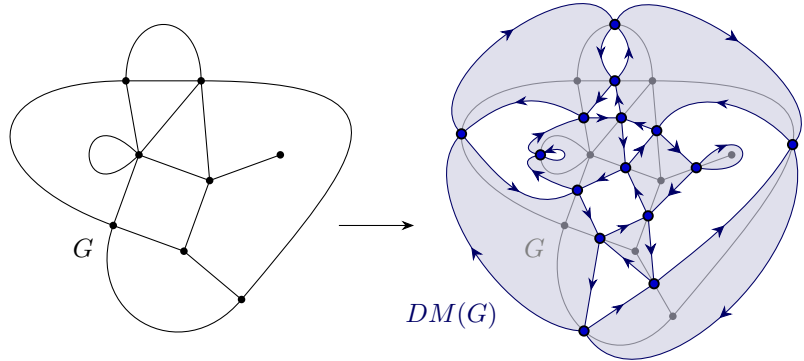

Note that since $G$ is plane, $M(G)$ is plane as well. Furthermore, overlapping the two embeddings spotlights that faces of $M(G)$ are bipartitioned as in a checkerboard colouring: either the face is disjoint from $G$, or it contains exactly a vertex from $G$. 

To establish a link between embedded minor and embedded immersion via graph minor, one needs to further restrain the operations allowed. Indeed, deleting an edge of the medial graph does not translate as an embedded operation on the associated plane graph. However, performing two embedded lifts on a vertex does. The additional restriction comes from \cite{Medina_wqolink}: we orient the medial graphs and consider embedded immersion for digraph. 

The \emphdef{medial digraph} of a graph $G$ is $\dirmed(G)$, the plane medial graph of $G$ where edges along the boundary of faces containing vertices of $G$ are oriented clockwise. Since these faces form a checkerboard colouring of faces of $M(G)$, the edges along boundaries of faces of $M(G)$ not containing vertices of $G$ are oriented counter-clockwise as illustrated in Figure~\ref{pic_def_medial}. The medial digraph $\dirmed(G)$ is then a $2$-regular digraph: each vertex is incident to two incoming edges and two outgoing edges. In addition, they alternate by construction.

\begin{lemma}\label{lem_equiv_emb_minors}
Let $H$ and $G$ be two plane graphs. Then $H$ is an embedded minor of $G$ if and only if $\dirmed(H)$ is an embedded directed immersion of $\dirmed(G)$
\end{lemma}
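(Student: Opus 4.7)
The plan is to translate each primal operation on $G$ and each embedded directed lift on $\dirmed(G)$ into the other, using the local structure at each medial vertex $v_e$. For an edge $e=(u,v)$ of $G$, the four edges incident to $v_e$ in $\dirmed(G)$ alternate cyclically as in/out/in/out, with two of them bounding the primal-$u$ face (carrying the clockwise primal-$u$ cycle through $v_e$, one incoming and one outgoing) and the other two bounding the primal-$v$ face (likewise). Consequently there are exactly two head-to-tail pairings of these four edges into valid embedded directed lifts at $v_e$: resolution~(i) pairs in-$u$ with out-$u$ and in-$v$ with out-$v$, preserving the primal cycles and merging the two dual faces of $M(G)$ adjacent to $v_e$; resolution~(ii) pairs in-$u$ with out-$v$ and in-$v$ with out-$u$, merging the two primal cycles into one. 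Both can be realised by disjoint arcs in a small disc around $v_e$, hence both are embedded.

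For the forward direction, I would argue that each elementary embedded minor operation on $G$ is realised by such resolutions in $\dirmed(G)$. Deleting $e$ in $G$ merges the two faces adjacent to $e$, which are precisely the two dual faces at $v_e$: performing resolution~(i) at $v_e$ therefore yields $\dirmed(G \smallsetminus e)$ (with $v_e$ becoming isolated and then deleted). Contracting $e$ merges $u$ and $v$ into a single vertex of the primal graph, matching the merger of primal-$u$ and primal-$v$ cycles achieved by resolution~(ii), and hence produces $\dirmed(G/e)$. Deleting a vertex $u$ of $G$ is equivalent to deleting every incident edge, and is realised by performing resolution~(i) at every $v_e$ with $e \ni u$; the primal-$u$ cycle loses all its vertices and degenerates to a vertex-free closed curve, which vanishes as a graph structure.

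For the backward direction, I would analyse an embedded directed immersion $\phi : \dirmed(H) \to \dirmed(G)$ locally. Because both digraphs are $2$-in, $2$-out regular and $\phi$ is vertex-injective, at each image vertex $\phi(v_e)$ all four incident edges of $\dirmed(G)$ must serve as starting edges of the four path-images of the four edges at $v_e$, so no deletion occurs at image vertices. By Lemma~\ref{lem_equiv_plan_imm_dir}, the family of path-images is tangent and directed; combined with the $2$-in/$2$-out local structure, this forces at each non-image vertex $v_{e'}$ traversed internally by paths a pairing of its four edges that is precisely one of resolutions~(i) or~(ii), which I interpret respectively as deleting or contracting $e'$ in $G$. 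A non-image vertex whose incident edges are all deleted by the immersion can only arise coherently when an entire primal cycle (say around $u$) is absent from $\dirmed(H)$, which I interpret as deleting $u$ in $G$. Applying the induced primal operations to $G$ yields a plane graph whose medial digraph equals $\dirmed(H)$, and hence a plane graph equivalent to $H$ via the standard recovery of a plane graph from its medial digraph.

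The main obstacle is formalising the last sentence of the backward argument: one must track carefully that any pattern of edge deletions in the immersion factors coherently into removal of entire primal cycles, that the multiset of induced primal operations can be realised in a consistent order (for instance, all contractions before all deletions) without interference, and that the inverse medial construction recovers a plane graph uniquely up to plane equivalence. The checkerboard colouring of faces of $M(G)$ and the $2$-regularity of $\dirmed(G)$ are the structural tools that drive all these verifications.
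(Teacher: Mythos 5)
Your forward direction coincides with the paper's: each of your resolutions~(i) and~(ii) is what the paper calls a \emph{bilift} (two embedded directed lifts at one medial vertex), realising edge deletion and edge contraction respectively, and vertex deletion is a sequence of such bilifts. Up to phrasing, this is the same argument.

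The backward direction has a genuine gap. At a non-image vertex of $\dirmed(G)$, the four incident arcs split between path-images and deletions in exactly one of three ways, and the paper lists them explicitly: (a)~all four used by paths, giving a bilift; (b)~two used, two deleted; (c)~all four deleted. You treat~(a) via resolutions~(i)/(ii) and~(c) via a claimed ``absent primal cycle,'' but you never address~(b) at all, and you defer both~(b) and~(c) to the hoped-for statement that ``any pattern of edge deletions factors into removal of entire primal cycles.'' That claim needs justification and, as stated, I do not believe it holds: the deleted edges form an Eulerian subgraph of $\dirmed(G)$, but at a non-image vertex the deleted in/out pair can be a \emph{cross-face} pair (the two complementary pairings at a medial vertex are either both within a primal face or both across it), so a cycle of deleted edges can switch between the $u$-boundary and the $v$-boundary and is not forced to be a union of primal cycles. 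The paper never needs such a factorisation. Instead, it works directly with the Eulerian components $C_i$ of deleted edges: trace $C_i$ by a chain of directed lifts until it becomes a self-loop at a vertex where two other arcs survive, then perform one last bilift to absorb the loop into a surviving path; the loop is isotopically trivial, so the result is plane-equivalent to $\dirmed(H)$. That ``collapse and absorb the deleted cycle'' mechanism is exactly what your proposal is missing; without it, cases~(b) and~(c) do not reduce to primal operations and the backward implication does not close.
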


\begin{proof}
A \emphdef{bilift} is performing two directed embedded lifts on the same vertex of $(G)$. First not that a bilift is equivalent to deleting the matching edge of $G$ (if the edges merged follow the faces containing $V(G)$), or contracting the matching edge of $G$ (if the edges merged follow the faces not containing $V(G)$), as pictured in Figure~\ref{pic_proof_equiv_emb_minor}. Deleting a non isolated vertex of $G$ can be done by first deleting all incident edges but one, that we contract afterwards. Deleting an entire connected component of one the graph is equivalent to deleting the entire matching component.

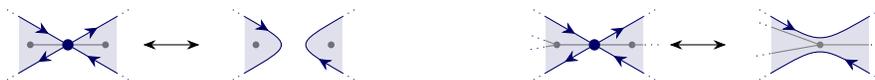
\begin{figure}[ht]
\begin{center}
\begin{tikzpicture}
\def\e{0.045} \def\eb{0.075}
\draw [black!50] (0,0) -- (1,0);
\fill [black!50] (0,0) circle (\e cm);
\fill [black!50] (1,0) circle (\e cm);

\begin{scope}[blue!40!black]
\draw (0.5,0) -- ($(0.5,0)+(150:0.75)$) node [midway, sloped] {$\arrowIns{1.25}$};
\draw (0.5,0) -- ($(0.5,0)+(-150:0.75)$) node [midway, sloped] {$\arrowOuts{1.25}$};
\draw (0.5,0) -- ($(0.5,0)+(30:0.75)$) node [midway, sloped] {$\arrowIns{1.25}$};
\draw (0.5,0) -- ($(0.5,0)+(-30:0.75)$) node [midway, sloped] {$\arrowOuts{1.25}$};
\draw [dotted] ($(0.5,0)+(150:0.75)$) -- ++(150:0.25);
\draw [dotted] ($(0.5,0)+(-150:0.75)$) -- ++(-150:0.25);
\draw [dotted] ($(0.5,0)+(30:0.75)$) -- ++(30:0.25);
\draw [dotted] ($(0.5,0)+(-30:0.75)$) -- ++(-30:0.25);
\fill (0.5,0) circle (\eb cm);
\fill [opacity=0.125] ($(0.5,0)+(150:0.75)$) -- (0.5,0) -- ($(0.5,0)+(-150:0.75)$) -- cycle;
\fill [opacity=0.125] ($(0.5,0)+(30:0.75)$) -- (0.5,0) -- ($(0.5,0)+(-30:0.75)$) -- cycle;
\end{scope}

\draw [Stealth-Stealth] (1.5,0) -- ++(0.75,0);

\begin{scope}[xshift = 3cm]
\fill [black!50] (0,0) circle (\e cm);
\fill [black!50] (1,0) circle (\e cm);

\begin{scope}[blue!40!black]
\draw ($(0.5,0)+(150:0.75)$) .. controls +(-30:0.75) and \control{($(0.5,0)+(-150:0.75)$)}{(30:0.75)} node [pos = 0.15, sloped] {$\arrowIns{1.25}$};
\draw [dotted] ($(0.5,0)+(150:0.75)$) -- ++(150:0.25);
\draw [dotted] ($(0.5,0)+(-150:0.75)$) -- ++(-150:0.25);
\draw [dotted] ($(0.5,0)+(30:0.75)$) -- ++(30:0.25);
\draw [dotted] ($(0.5,0)+(-30:0.75)$) -- ++(-30:0.25);
\draw ($(0.5,0)+(30:0.75)$) .. controls +(-150:0.75) and \control{($(0.5,0)+(-30:0.75)$)}{(150:0.75)} node [pos = 0.85, sloped] {$\arrowOuts{1.25}$};

\fill [opacity=0.125] ($(0.5,0)+(150:0.75)$) .. controls +(-30:0.75) and \control{($(0.5,0)+(-150:0.75)$)}{(30:0.75)};
\fill [opacity=0.125] ($(0.5,0)+(30:0.75)$) .. controls +(-150:0.75) and \control{($(0.5,0)+(-30:0.75)$)}{(150:0.75)};
\end{scope}
\end{scope}

\begin{scope}[xshift = 7cm]
\draw [black!50] (0,0) -- (1,0);
\fill [black!50] (0,0) circle (\e cm);
\fill [black!50] (1,0) circle (\e cm);

\begin{scope}[blue!40!black]
\draw (0.5,0) -- ($(0.5,0)+(150:0.75)$) node [pos = 0.5, sloped] {$\arrowIns{1.25}$};
\draw (0.5,0) -- ($(0.5,0)+(-150:0.75)$) node [pos = 0.5, sloped] {$\arrowOuts{1.25}$};
\draw (0.5,0) -- ($(0.5,0)+(30:0.75)$) node [pos = 0.5, sloped] {$\arrowIns{1.25}$};
\draw (0.5,0) -- ($(0.5,0)+(-30:0.75)$) node [pos = 0.5, sloped] {$\arrowOuts{1.25}$};
\draw [dotted] ($(0.5,0)+(150:0.75)$) -- ++(150:0.25);
\draw [dotted] ($(0.5,0)+(-150:0.75)$) -- ++(-150:0.25);
\draw [dotted] ($(0.5,0)+(30:0.75)$) -- ++(30:0.25);
\draw [dotted] ($(0.5,0)+(-30:0.75)$) -- ++(-30:0.25);
\fill (0.5,0) circle (\eb cm);
\fill [opacity=0.125] ($(0.5,0)+(150:0.75)$) -- (0.5,0) -- ($(0.5,0)+(-150:0.75)$) -- cycle;
\fill [opacity=0.125] ($(0.5,0)+(30:0.75)$) -- (0.5,0) -- ($(0.5,0)+(-30:0.75)$) -- cycle;

\begin{scope}[black!50]
\clip ($(0.5,0)+(-150:0.75)$) -- (0.5,0) -- ($(0.5,0)+(150:0.75)$);
\draw (0,0) -- ++(160:1);
\draw (0,0) -- ++(-170:1);
\end{scope}
\begin{scope}[even odd rule]
\clip ($(0.5,0)+(-150:1)$) rectangle ($(0.5,0)+(30:1)$) ($(0.5,0)+(-150:0.75)$) -- (0.5,0) -- ($(0.5,0)+(150:0.75)$);
\draw [dotted] (0,0) -- ++(160:1);
\draw [dotted] (0,0) -- ++(-170:1);
\end{scope}

\begin{scope}[black!50]
\clip ($(0.5,0)+(-30:0.75)$) -- (0.5,0) -- ($(0.5,0)+(30:0.75)$);
\draw (1,0) -- ++(0:1);
\end{scope}
\begin{scope}[even odd rule]
\clip ($(0.5,0)+(-150:1)$) rectangle ($(0.5,0)+(30:1)$) ($(0.5,0)+(-30:0.75)$) -- (0.5,0) -- ($(0.5,0)+(30:0.75)$);
\draw [dotted] (1,0) -- ++(0:1);
\end{scope}
\end{scope}

\draw [Stealth-Stealth] (1.5,0) -- ++(0.75,0);

\begin{scope}[xshift = 3cm]
\begin{scope}[blue!40!black]
\draw ($(0.5,0)+(150:0.75)$) .. controls +(-30:0.75) and \control{($(0.5,0)+(30:0.75)$)}{(-150:0.75)} node [pos = 0.15, sloped] {$\arrowIns{1.25}$};
\draw ($(0.5,0)+(-150:0.75)$) .. controls +(30:0.75) and \control{($(0.5,0)+(-30:0.75)$)}{(150:0.75)} node [pos = 0.85, sloped] {$\arrowOuts{1.25}$};
\draw [dotted] ($(0.5,0)+(150:0.75)$) -- ++(150:0.25);
\draw [dotted] ($(0.5,0)+(-150:0.75)$) -- ++(-150:0.25);
\draw [dotted] ($(0.5,0)+(30:0.75)$) -- ++(30:0.25);
\draw [dotted] ($(0.5,0)+(-30:0.75)$) -- ++(-30:0.25);

\fill [opacity = 0.125] ($(0.5,0)+(150:0.75)$) .. controls +(-30:0.75) and \control{($(0.5,0)+(30:0.75)$)}{(-150:0.75)} -- ($(0.5,0)+(-30:0.75)$) .. controls +(150:0.75) and \control{($(0.5,0)+(-150:0.75)$)}{(30:0.75)} -- cycle;
\end{scope}

\begin{scope}[black!50]
\clip ($(0.5,0)+(-150:0.75)$) -- (0.5,0) -- ($(0.5,0)+(150:0.75)$);
\draw (0.5,0) -- ++(160:2);
\draw (0.5,0) -- ++(-170:2);
\end{scope}
\begin{scope}[even odd rule]
\clip ($(0.5,0)+(-150:1)$) rectangle ($(0.5,0)+(30:1)$) ($(0.5,0)+(-150:0.75)$) -- (0.5,0) -- ($(0.5,0)+(150:0.75)$);
\draw [dotted] (0.5,0) -- ++(160:2);
\draw [dotted] (0.5,0) -- ++(-170:2);
\end{scope}

\begin{scope}[black!50]
\clip ($(0.5,0)+(-30:0.75)$) -- (0.5,0) -- ($(0.5,0)+(30:0.75)$);
\draw (0.5,0) -- ++(0:2);
\end{scope}
\begin{scope}[even odd rule]
\clip ($(0.5,0)+(-150:1)$) rectangle ($(0.5,0)+(30:1)$) ($(0.5,0)+(-30:0.75)$) -- (0.5,0) -- ($(0.5,0)+(30:0.75)$);
\draw [dotted] (0.5,0) -- ++(0:2);
\end{scope}

\fill [black!50] (0.5,0) circle (\e cm);
\end{scope}
\end{scope}
\end{tikzpicture}
\caption{Deleting or contracting a, edge of $G$ is equivalent to $2$ embedded directed lift on $DM(G)$.}
\label{pic_proof_equiv_emb_minor}
\end{center}
\end{figure}

Hence, there remains to prove that embedded directed immersions on medial digraphs can always be realized by bilifts only and deleting connected components. This is in fact mentioned in \cite[Proposition 1.2]{Hannie_PhD}. Let us consider a connected medial digraph $M_1$ which is a directed embedded immersion of a connected medial digraph $M_2$. 

Since both graphs are directed $2$-regular graphs in which edges alternate, vertices of $M_2$ that are images of vertices of $M_1$ are exactly the vertices not taking part in any bilifts and for which no incident edges are deleted. The others take part in one of a bilift, a directed lift and two edge-deletions, or $4$ edge-deletions. Hence, by following edges to be deleted in $M_2$, we identify Eulerian components (components where vertices have as much incoming edges as outgoing ones) $C_1,\ldots,C_k$ which form the subgraph to delete in $M_2$.

\begin{figure}[ht]
\begin{center}
\begin{tikzpicture}[scale = 0.9]
\def\e{0.05} \def\l{0.4} \def\lb{1}
\coordinate (i1) at (0,0);
\coordinate (i2) at (1,0);
\coordinate (i3) at (1,1);
\coordinate (i4) at (0,1);
\coordinate (c1) at (-0.75,0.5);
\coordinate (c2) at (0.5,-0.75);
\coordinate (c3) at (1.75,0.5);
\coordinate (c4) at (0.5,1.75);

\foreach \i in {1,2,3,4}{\node [circle, inner sep = \e cm, fill] (ni\i) at (i\i) {};};
\foreach \i in {1,2,3,4}{\node [circle, inner sep = \e cm, rotate = 45, fill] (no\i) at (c\i) {};};

\draw (c1) .. controls +(-45:\l) and \control{(i1)}{(180:\l)} node [midway, sloped] {$\arrowIns{1.5}$};
\draw (i1) -- (i2) node [pos = 0.5, sloped] {$\arrowOuts{1}$};
\draw (i2) .. controls +(0:\l) and \control{(c3)}{(-135:\l)} node [midway, sloped] {$\arrowIns{1.5}$};
\draw (c3) .. controls +(135:\l) and \control{(i3)}{(0:\l)} node [midway, sloped] {$\arrowOuts{1.5}$};
\draw (i3) -- (i4) node [pos = 0.5, sloped] {$\arrowIns{1}$};
\draw (i4) .. controls +(180:\l) and \control{(c1)}{(45:\l)} node [midway, sloped] {$\arrowOuts{1.5}$};

\draw (c2) .. controls +(45:\l) and \control{(i2)}{(-90:\l)} node [midway, sloped] {$\arrowIns{1.5}$};
\draw (i2) -- (i3) node [pos = 0.5, sloped] {$\arrowOuts{1}$};
\draw (i3) .. controls +(90:\l) and \control{(c4)}{(-45:\l)} node [midway, sloped] {$\arrowOuts{1.5}$};
\draw (c4) .. controls +(-135:\l) and \control{(i4)}{(90:\l)} node [midway, sloped] {$\arrowOuts{1.5}$};
\draw (i4) -- (i1) node [pos = 0.5, sloped] {$\arrowOuts{1}$};
\draw (i1) .. controls +(-90:\l) and \control{(c2)}{(135:\l)} node [midway, sloped] {$\arrowIns{1.5}$};

\draw (c1) .. controls +(-135:\lb) and \control{(c1)}{(135:\lb)} node [pos = 0.9, sloped] {$\arrowOuts{1.25}$};
\draw (c2) .. controls +(-45:\lb) and \control{(c2)}{(-135:\lb)} node [pos = 0.9, sloped] {$\arrowOuts{1.25}$};
\draw (c3) .. controls +(45:\lb) and \control{(c3)}{(-45:\lb)} node [pos = 0.9, sloped] {$\arrowIns{1.25}$};
\draw (c4) .. controls +(135:\lb) and \control{(c4)}{(45:\lb)} node [pos = 0.9, sloped] {$\arrowIns{1.25}$};

\draw [black!60, -Stealth] (2.75,0.5) -- ++(1,0);

\begin{scope}[xshift = 5.5cm]
\def\e{0.065} \def\l{0.3} \def\lb{1} \def\ls{0.06}
\coordinate (i1) at (0,0);
\coordinate (i2) at (1,0);
\coordinate (i3) at (1,1);
\coordinate (i4) at (0,1);
\coordinate (c1) at (-0.75,0.5);
\coordinate (c2) at (0.5,-0.75);
\coordinate (c3) at (1.75,0.5);
\coordinate (c4) at (0.5,1.75);

\foreach \i in {1,2,3,4}{\node [circle, inner sep = \e cm] (ni\i) at (i\i) {};};
\foreach \i in {1,2,3,4}{\node [circle, inner sep = \e cm, rotate = 45] (nc\i) at (c\i) {};};

\draw (c1) .. controls +(-45:\l) and \control{(ni1.west)}{(180:\l)} node [midway, sloped] {$\arrowIns{1.5}$};
\draw (ni1.east) -- (ni2.west) node [pos = 0.5, sloped] {$\arrowOuts{1}$};
\draw (ni2.east) .. controls +(0:\l) and \control{(c3)}{(-135:\l)} node [midway, sloped] {$\arrowIns{1.5}$};
\draw (c3) .. controls +(135:\l) and \control{(ni3.east)}{(0:\l)} node [midway, sloped] {$\arrowOuts{1.5}$};
\draw (ni3.west) -- (ni4.east) node [pos = 0.5, sloped] {$\arrowIns{1}$};
\draw (ni4.west) .. controls +(180:\l) and \control{(c1)}{(45:\l)} node [midway, sloped] {$\arrowOuts{1.5}$};

\draw (nc2.east) .. controls +(45:\l) and \control{(ni2.south)}{(-90:\l)} node [midway, sloped] {$\arrowIns{1.5}$};
\draw (ni2.north) -- (ni3.south) node [pos = 0.5, sloped] {$\arrowOuts{1}$};
\draw (ni3.north) .. controls +(90:\l) and \control{(nc4.south)}{(-45:\l)} node [midway, sloped] {$\arrowOuts{1.5}$};
\draw (nc4.west) .. controls +(-135:\l) and \control{(ni4.north)}{(90:\l)} node [midway, sloped] {$\arrowOuts{1.5}$};
\draw (ni4.south) -- (ni1.north) node [pos = 0.5, sloped] {$\arrowOuts{1}$};
\draw (ni1.south) .. controls +(-90:\l) and \control{(nc2.north)}{(135:\l)} node [midway, sloped] {$\arrowIns{1.5}$};

\draw (c1) .. controls +(-135:\lb) and \control{(c1)}{(135:\lb)} node [pos = 0.9, sloped] {$\arrowOuts{1.25}$};
\draw (nc2.south) .. controls +(-45:\lb) and \control{(nc2.west)}{(-135:\lb)} node [pos = 0.9, sloped] {$\arrowOuts{1.25}$};
\draw (c3) .. controls +(45:\lb) and \control{(c3)}{(-45:\lb)} node [pos = 0.9, sloped] {$\arrowIns{1.25}$};
\draw (nc4.north) .. controls +(135:\lb) and \control{(nc4.east)}{(45:\lb)} node [pos = 0.9, sloped] {$\arrowIns{1.25}$};

\draw (ni1.west) .. controls +(0:\ls) and \control{(ni1.south)}{(90:\ls)};
\draw (ni1.east) .. controls +(180:\ls) and \control{(ni1.north)}{(-90:\ls)};
\draw (ni2.east) .. controls +(180:\ls) and \control{(ni2.south)}{(90:\ls)};
\draw (ni2.west) .. controls +(0:\ls) and \control{(ni2.north)}{(-90:\ls)};
\draw (ni3.east) .. controls +(180:\ls) and \control{(ni3.south)}{(90:\ls)};
\draw (ni3.west) .. controls +(0:\ls) and \control{(ni3.north)}{(-90:\ls)};
\draw (ni4.east) .. controls +(180:\ls) and \control{(ni4.south)}{(90:\ls)};
\draw (ni4.west) .. controls +(0:\ls) and \control{(ni4.north)}{(-90:\ls)};
\draw (nc2.east) .. controls +(-135:\ls) and \control{(nc2.south)}{(135:\ls)};
\draw (nc2.west) .. controls +(45:\ls) and \control{(nc2.north)}{(-45:\ls)};
\draw (nc4.east) .. controls +(-135:\ls) and \control{(nc4.south)}{(135:\ls)};
\draw (nc4.west) .. controls +(45:\ls) and \control{(nc4.north)}{(-45:\ls)};

\fill (c1) circle (\e cm);
\fill (c3) circle (\e cm);

\draw [black!60, -Stealth] (2.75,0.5) -- ++(1,0);
\end{scope}

\begin{scope}[xshift = 11cm]
\def\e{0.065}
\coordinate (c1) at (-0.75,0.5);
\coordinate (c3) at (1.75,0.5);

\draw (c1) .. controls +(-45:1) and \control{(c3)}{(-135:1)} node [pos = 0.5, sloped] {$\arrowIns{1.25}$};
\draw (c1) .. controls +(45:1) and \control{(c3)}{(135:1)} node [pos = 0.5, sloped] {$\arrowOuts{1.25}$};

\draw (c1) .. controls +(-135:\lb) and \control{(c1)}{(135:\lb)} node [pos = 0.9, sloped] {$\arrowOuts{1.25}$};
\draw (c3) .. controls +(45:\lb) and \control{(c3)}{(-45:\lb)} node [pos = 0.9, sloped] {$\arrowIns{1.25}$};
\fill (c1) circle (\e cm);
\fill (c3) circle (\e cm);
\end{scope}
\end{tikzpicture}
\caption{Embedded directed immersions in medial graphs can be realized by bilifts.}
\label{pic_med_bilift}
\end{center}
\end{figure}
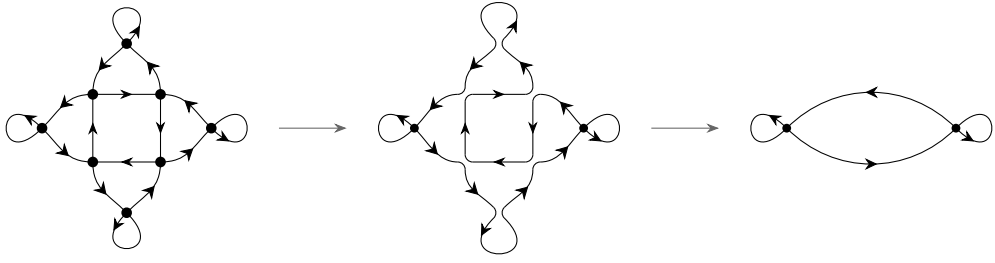

First perform all bilifts on vertices for which all incident edges are part of paths that are images of edges of $M_1$, this does not affect the $C_i$'s. Then, let us consider a component $C_i$, it is a directed cycle by construction. Since $M_2$ is connected, $C_i$ necessarily has a vertex $v$ with exactly one incoming edge and one outgoing one. Follow this cycle starting at $v$ and perform the appropriate embedded directed lift. After this, $C_i$ is a self-loop at $v$. Performing the proper bilift on $v$ merges it with the two other edges incident to $v$: $C_i$ was ``deleted'' with embedded directed lifts only. After all components were removed via this process, no vertices of edge to delete remain, and all vertices of $M_2$ are of degree $4$, $M_2$ is now equivalent to $M_1$ (see Figure~\ref{pic_med_bilift} for example). All vertices which are not images of vertices of $M_1$ where part of two embedded direct lifts.
\end{proof}

Thus, by Lemma~\ref{lem_equiv_emb_minors} and Proposition~\ref{prop_bw_cw}, proving Proposition~\ref{prop_gm_bounded_bw} boils down to proving:

\begin{proposition}\label{prop_imm_emb_dirmed}
Plane medial digraphs are well-quasi ordered by embedded directed immersion.
\end{proposition}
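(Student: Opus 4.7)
The proof of Proposition~\ref{prop_imm_emb_dirmed} closely follows the blueprint of Theorem~\ref{th_emb_immersion_bounded}, replacing embedded immersion by embedded directed immersion throughout and appealing to Lemma~\ref{lem_equiv_plan_imm_dir} in place of Lemma~\ref{lem_equiv_plan_imm}. In the bounded branch-width context of this section, Proposition~\ref{prop_bw_cw} ensures that the medial digraphs under consideration have bounded carving-width (of their underlying $4$-regular plane graph), so Proposition~\ref{prop_disc_decomp} yields disc carving-decompositions of bounded width, which we root as in Section~\ref{subsec_order_edge}.

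Next, we adapt the definition of the leaving graph: $\tilde{M}^e$ is the plane digraph on $D_e$ obtained by keeping the edge directions after subdivision, so that each boundary vertex $s_x$ is a leaf carrying a single directed edge whose orientation is inherited from $x$. We set $e \preccurlyeq f$ when $\tilde{M}^e$ is an embedded directed immersion of $\tilde{M}^f$ sending the vertices $\{s_x\}_{x \in T(e)}$ injectively into $\{s_y\}_{y \in T(f)}$ and preserving their local in/out orientation.

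The crux of the argument is the directed analog of Lemma~\ref{lem_ancestry_cut}: if $a$ is an ancestor of $b$, then $b \preccurlyeq a$. As in the undirected case, we aim to exhibit $k = \wid(a) = \wid(b)$ edge-disjoint directed paths in the intermediate region $H$ linking $\mathcal{F}_b$ to $\mathcal{F}_a$ with endpoint orientations prescribed by $T(b)$ and $T(a)$, and then apply Lemma~\ref{lem_path_swaps} (whose proof is already phrased for directed in/out labels) to make this family tangent. The main obstacle is that plain Menger only delivers edge-disjoint paths in the undirected sense. To upgrade to directed paths we exploit two features of plane medial digraphs. First, any Jordan curve of $\Sp^2$ crosses each directed strand (a maximal directed walk, which is a closed Eulerian walk in the medial digraph) an even number of times with zero signed flux, so each of the cuts $T(a)$ and $T(b)$ contains exactly $k/2$ edges in each orientation (and $k$ is even). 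Second, every interior vertex of $H$ retains its $2$-in-$2$-out structure from the medial digraph, so $H$ is balanced at its interior vertices. A directed max-flow argument on $H$, performed separately for each orientation class, then yields $k/2$ edge-disjoint directed paths from the out-edges of $T(b)$ to the out-edges of $T(a)$, and symmetrically $k/2$ directed paths from the in-edges of $T(a)$ to the in-edges of $T(b)$; since each cut edge is oriented in one direction only, the two families are globally edge-disjoint and together form the required $k$ directed paths.

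Once the directed ancestry lemma is established, the remainder of the proof transfers verbatim from Theorem~\ref{th_emb_immersion_bounded}. Lemma~\ref{lem_lem_trees}, applied to the quasi-order $\preccurlyeq$ with weight $\wid$ on the forest of rooted carving-decompositions, produces (assuming for contradiction that well-quasi-order fails) an antichain $(a_n)_{n \in \N}$ whose children $(\ell_n)_{n \in \N}$ and $(r_n)_{n \in \N}$ are well-quasi-ordered. For $i < j$ in this antichain, the $\theta$-curve obtained by gluing $\gamma_{a_i}, \gamma_{\ell_i}, \gamma_{r_i}$, and its analog at $j$, allows the merging of the directed tangent families in the left and right subgraphs into a single directed tangent family; Lemma~\ref{lem_equiv_plan_imm_dir} then yields $a_i \preccurlyeq a_j$, contradicting the antichain property and concluding the proof.
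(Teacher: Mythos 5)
Your proposal follows the paper's blueprint quite closely: disc carving-decompositions (Proposition~\ref{prop_disc_decomp}), a directed ancestry lemma, the Lemma on trees, and the $\theta$-curve gluing. Two points deserve attention, one cosmetic and one a genuine gap.

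The cosmetic point: you route the tangency argument through Lemma~\ref{lem_path_swaps}, but in the directed medial setting this is unnecessary. At any vertex of a medial digraph the four incident edges alternate in/out in cyclic order, so any two edge-disjoint directed paths through that vertex each consume one in-edge and one out-edge; the two resulting pairs are necessarily non-interlaced (cyclic order $AABB$ or $ABBA$, never $ABAB$). The paper exploits this to conclude the Menger family is automatically tangent with no swaps needed. Your version is not wrong, merely longer.

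The genuine gap is in your directed Menger step. You run max-flow ``separately for each orientation class'' to produce $k/2$ paths from $T(b)$-sources to $T(a)$-sinks and $k/2$ paths the other way, and then assert the two families ``are globally edge-disjoint'' because ``each cut edge is oriented in one direction only.'' That observation only addresses the boundary edges of $T(a)$ and $T(b)$; it says nothing about interior edges of the intermediate region $H$, and two independent max-flow computations can certainly reuse an interior edge. To close this, you need either a single application of directed Menger from the full set of $2k$ sources of $H$ to its $2k$ sinks (using the balanced-degree structure to show the directed min-cut is at least $2k$, then the paths are automatically mutually edge-disjoint), or a residual-graph argument: find the first family, delete its edges, and argue via the balanced-cut property that the directed min-cut in the other direction is still at least $k/2$ in what remains. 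As written, the disjointness claim is not justified.
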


To prove this we exploit discs decompositions of Section~\ref{sec_bounded_cw}, and follow the same proof method, \ie, a Nash-Williams' argument on the order put on edges of disc carving decompositions. In the following we essentially redefine this order and reprove Lemma~\ref{lem_ancestry_cut} and Theorem~\ref{th_emb_immersion_bounded} in this context.

\paragraph*{Order on edges.}
The size of a cut in the directed and undirected context is the same, as well as satisfying the disc property. A disc carving decomposition for a graph $M(G)$ is also a disc carving decomposition of its directed version $\dirmed(G)$. 

Let $(G_n)_{n \in \N}$ be a family of plane medial digraphs. Thanks to Proposition~\ref{prop_disc_decomp}, for all $n \in \N$, there exists $T_n$ a disc carving-decomposition of $G_n$ that we can assume rooted at some unlabelled edge $r(T_n)$. We define $\T = \bigcup_{n \in \N} T_n$ as the forest made of all the trees $T_n$. 

Let $G$ be a plane digraph. For each $e \in E(T)$, the \emphdef{leaving digraph} $\tilde{G^e}$, is the directed version of the leaving graph defined in Section~\ref{sec_bounded_cw}: by the disc property there exist a disc $D_e$ intersecting $G$ on $G^e \cup T(e)$ which intersects each edge of $T(e)$ once on $\partial D_e$. We subdivide each edge $f$ of $T(e)$ at the place of intersection with $\partial D_e$ with a vertex $s_x$. The embedded digraph induced by $V(G^e) \cup_{x \in T(e)} \{ s_x \}$ on $D_e$ is $\tilde{G}^e$, see Figure~\ref{pic_ex_leaving_digraph} for an example. It is a digraph embedded on $D_e$ such that vertices on the inside of $D_e$ are $2$-regular and edges around them alternate, and vertices on the boundary have only one incident edge.

\begin{figure}[ht]
\begin{center}
\begin{tikzpicture}
\clip (-1.5,-2.35) rectangle (9.75,3);
\def\e{0.05}
\coordinate (c1) at (0,0);
\coordinate (c2) at ($(c1)+(-20:1)$);
\coordinate (c3) at ($(c2)+(-60:1)$);
\coordinate (c4) at ($(c1)+(70:1)$);
\coordinate (c5) at ($(c4)+(-20:1)$);
\coordinate (c6) at ($(c5)+(20:1)$);
\coordinate (c7) at ($(c4)+(100:1)$);
\coordinate (c8) at ($(c7)+(0:1)$);

\path (c1) -- (c2) coordinate [pos = 0.5] (m1);
\path (c2) -- (c3) coordinate [pos = 0.5] (m2);
\path (c1) -- (c4) coordinate [pos = 0.5] (m3);
\path (c4) -- (c5) coordinate [pos = 0.5] (m4);
\path (c2) -- (c5) coordinate [pos = 0.5] (m5);
\path (c4) -- (c7) coordinate [pos = 0.5] (m7);
\path (c5) -- (c8) coordinate [pos = 0.5] (m8);
\path (c4) -- (c8) coordinate [pos = 0.5] (m9);
\path (c7) -- (c8) coordinate [pos = 0.5] (m10);
\path (c1) .. controls +(-110:1.25) and \control{(c3)}{(-130:1.25)} coordinate [pos = 0.5] (m11);
\path (c1) .. controls +(160:1.5) and \control{(c7)}{(180:1.5)} coordinate [pos = 0.5] (m12);
\path (c3) .. controls +(50:1.5) and \control{(c8)}{(0:2.5)} coordinate [pos = 0.5] (m14);
\path (c7) .. controls +(90:1) and \control{(c8)}{(90:1)}  coordinate [pos = 0.5] (m15);

\begin{scope}[blue!40!black]
\draw (m1) -- (m3) node [midway, sloped] {$\arrowIns{1.25}$};
\draw (m3) -- (m4) node [midway, sloped] {$\arrowOut$};
\draw (m4) -- (m5) node [pos = 0.6, sloped] {$\arrowOuts{1.25}$};
\draw (m5) -- (m1) node [midway, sloped] {$\arrowIns{1.25}$};
\draw (m1) -- (m2) node [midway, sloped] {$\arrowOuts{1.5}$};
\draw (m2) -- (m11) node [midway, sloped] {$\arrowIns{1.4}$};
\draw (m11) -- (m1) node [midway, sloped] {$\arrowOut$};
\draw (m7) -- (m9) node [midway, sloped] {$\arrowIn$};
\draw (m9) -- (m10) node [midway, sloped] {$\arrowOut$};
\draw (m10) -- (m7) node [midway, sloped] {$\arrowOuts{1.5}$};
\draw (m4) -- (m8) node [midway, sloped] {$\arrowIns{1.25}$};
\draw (m8) -- (m9) node [midway, sloped] {$\arrowOuts{1.25}$};
\draw (m9) -- (m4) node [midway, sloped] {$\arrowIn$};
\draw (m7) .. controls +(150:0.75) and \control{(m12)}{(40:0.75)} node [midway, sloped] {$\arrowOuts{1.25}$};
\draw (m12).. controls + (-50:0.75) and \control{(m3)}{(-150:0.75)} node [midway, sloped] {$\arrowIns{1.5}$};
\draw (m3) .. controls +(170:0.3) and \control{(m7)}{(-170:0.5)} node [midway, sloped] {$\arrowOuts{1.25}$};
\draw (m10) .. controls +(135:0.35) and \control{(m15)}{(-135:0.35)} node [midway, sloped] {$\arrowOut$};
\draw (m15) .. controls +(-45:0.35) and \control{(m10)}{(45:0.35)} node [midway, sloped] {$\arrowOut$};
\draw (m2) -- (m5) node [midway, sloped] {$\arrowIns{1.25}$};
\draw (m5) .. controls +(40:0.65) and \control{(m8)}{(-40:0.65)} node [midway, sloped] {$\arrowOuts{1.5}$};
\draw (m8) .. controls + (60:0.75) and \control{(m14)}{(130:0.75)} node [midway, sloped] {$\arrowOuts{1.25}$};
\draw (m14) .. controls +(-120:1) and \control{(m2)}{(40:0.5)} node [midway, sloped] {$\arrowIns{1.5}$};
\draw (m12) .. controls +(-135:1) and \control{(m11)}{(-180:1)} node [pos = 0.6, sloped] {$\arrowOuts{1.5}$};
\draw (m11) .. controls +(-60:1.5) and \control{(m14)}{(-70:1.5)} node [midway, sloped] {$\arrowOuts{1.35}$};
\draw (m14) .. controls +(45:1) and \control{(m15)}{(70:1)} node [midway, sloped] {$\arrowIns{1.5}$};
\draw (m15) .. controls +(110:1) and \control{(m12)}{(135:1)} node [midway, sloped] {$\arrowIns{1.5}$};
\end{scope}

\draw [green!40!black] (-1.25,-1.25) .. controls +(90:1.5) and \control{(1.25,0.75)}{(-150:1.5)} .. controls +(30:1) and \control{(2.675,1.5)}{(180:0.75)} .. controls + (0:0.75) and \control{(3,-1)}{(90:1)} .. controls +(-90:1.5) and \control{(-1.25,-1.25)}{(-90:1.5)};

\foreach \i in {1,...,5}{\fill [black] (m\i) circle (\e*1.5 cm); \fill [blue!80!black] (m\i) circle (\e cm);};
\foreach \i in {7,...,12}{\fill [black] (m\i) circle (\e*1.5 cm); \fill [blue!80!black] (m\i) circle (\e cm);};
\foreach \i in {14,15}{\fill [black] (m\i) circle (\e*1.5 cm); \fill [blue!80!black] (m\i) circle (\e cm);};

\node [green!30!black] at (3.3,-1) {$\gamma_e$};
\node [blue!40!black] at (-0.25,2.25) {$G$};

\draw [-Stealth] (3.75,0) -- +(1,0);

\begin{scope}[xshift=6.5cm, yshift = 0.5cm]
\def\e{0.05}
\coordinate (c1) at (0,0);
\coordinate (c2) at ($(c1)+(-20:1)$);
\coordinate (c3) at ($(c2)+(-60:1)$);
\coordinate (c4) at ($(c1)+(70:1)$);
\coordinate (c5) at ($(c4)+(-20:1)$);
\coordinate (c6) at ($(c5)+(20:1)$);
\coordinate (c7) at ($(c4)+(100:1)$);
\coordinate (c8) at ($(c7)+(0:1)$);

\path (c1) -- (c2) coordinate [pos = 0.5] (m1);
\path (c2) -- (c3) coordinate [pos = 0.5] (m2);
\path (c1) -- (c4) coordinate [pos = 0.5] (m3);
\path (c4) -- (c5) coordinate [pos = 0.5] (m4);
\path (c2) -- (c5) coordinate [pos = 0.5] (m5);
\path (c4) -- (c7) coordinate [pos = 0.5] (m7);
\path (c5) -- (c8) coordinate [pos = 0.5] (m8);
\path (c4) -- (c8) coordinate [pos = 0.5] (m9);
\path (c7) -- (c8) coordinate [pos = 0.5] (m10);
\path (c1) .. controls +(-110:1.25) and \control{(c3)}{(-130:1.25)} coordinate [pos = 0.5] (m11);
\path (c1) .. controls +(160:1.5) and \control{(c7)}{(180:1.5)} coordinate [pos = 0.5] (m12);
\path (c3) .. controls +(50:1.5) and \control{(c8)}{(0:2.5)} coordinate [pos = 0.5] (m14);
\path (c7) .. controls +(90:1) and \control{(c8)}{(90:1)}  coordinate [pos = 0.5] (m15);

\begin{scope}[blue!40!black]
\clip (-1.25,-1.25) .. controls +(90:1.5) and \control{(1.25,0.75)}{(-150:1.5)} .. controls +(30:1) and \control{(2.675,1.5)}{(180:0.75)} .. controls + (0:0.75) and \control{(3,-1)}{(90:1)} .. controls +(-90:1.5) and \control{(-1.25,-1.25)}{(-90:1.5)};
\draw (m1) -- (m3) node [pos =0.4, sloped] {$\arrowIns{1.25}$};
\draw (m3) -- (m4) node [midway, sloped] {$\arrowOut$};
\draw (m4) -- (m5) node [pos = 0.6, sloped] {$\arrowOuts{1.25}$};
\draw (m5) -- (m1) node [midway, sloped] {$\arrowIns{1.25}$};
\draw (m1) -- (m2) node [midway, sloped] {$\arrowOuts{1.5}$};
\draw (m2) -- (m11) node [midway, sloped] {$\arrowIns{1.4}$};
\draw (m11) -- (m1) node [midway, sloped] {$\arrowOut$};
\draw (m7) -- (m9) node [midway, sloped] {$\arrowIn$};
\draw (m9) -- (m10) node [midway, sloped] {$\arrowOut$};
\draw (m10) -- (m7) node [midway, sloped] {$\arrowOuts{1.25}$};
\draw (m4) -- (m8) node [midway, sloped] {$\arrowIns{1.25}$};
\draw (m8) -- (m9) node [midway, sloped] {$\arrowOut$};
\draw (m9) -- (m4) node [midway, sloped] {$\arrowIn$};
\draw (m7) .. controls +(150:0.75) and \control{(m12)}{(40:0.75)} node [midway, sloped] {$\arrowOut$};
\draw (m12).. controls + (-50:0.75) and \control{(m3)}{(-150:0.75)} node [midway, sloped] {$\arrowIn$};
\draw (m3) .. controls +(170:0.3) and \control{(m7)}{(-170:0.5)} node [midway, sloped] {$\arrowOuts{1.25}$};
\draw (m10) .. controls +(135:0.35) and \control{(m15)}{(-135:0.35)} node [midway, sloped] {$\arrowOut$};
\draw (m15) .. controls +(-45:0.35) and \control{(m10)}{(45:0.35)} node [midway, sloped] {$\arrowOut$};
\draw (m2) -- (m5) node [midway, sloped] {$\arrowIns{1.25}$};
\draw (m5) .. controls +(40:0.65) and \control{(m8)}{(-40:0.65)} node [pos=0.35, sloped] {$\arrowOuts{1.5}$};
\draw (m8) .. controls + (60:0.75) and \control{(m14)}{(130:0.75)} node [pos = 0.8, sloped] {$\arrowOuts{1.5}$};
\draw (m14) .. controls +(-120:1) and \control{(m2)}{(40:0.5)} node [midway, sloped] {$\arrowIns{1.5}$};
\draw (m12) .. controls +(-135:1) and \control{(m11)}{(-180:1)} node [pos = 0.6, sloped] {$\arrowOuts{1.5}$};
\draw (m11) .. controls +(-60:1.5) and \control{(m14)}{(-70:1.5)} node [midway, sloped] {$\arrowOuts{1.35}$};
\draw (m14) .. controls +(45:1) and \control{(m15)}{(70:1)} node [pos = 0.15, sloped] {$\arrowIns{1.15}$};
\draw (m15) .. controls +(110:1) and \control{(m12)}{(135:1)} node [midway, sloped] {$\arrowIns{1.25}$};
\end{scope}

\path [name path =e1] (m4) -- (m5);
\path [name path =e2] (m1) -- (m3);
\path [name path =e3] (m12) .. controls +(-135:1) and \control{(m11)}{(-180:1)};
\path [name path =e4] (m14) .. controls +(45:1) and \control{(m15)}{(70:1)};
\path [name path =e5] (m5) .. controls +(40:0.65) and \control{(m8)}{(-40:0.65)};
\path [name path =e6] (m8) .. controls + (60:0.75) and \control{(m14)}{(130:0.75)};

\draw [green!40!black, name path= blob] (-1.25,-1.25) .. controls +(90:1.5) and \control{(1.25,0.75)}{(-150:1.5)} .. controls +(30:1) and \control{(2.675,1.5)}{(180:0.75)} .. controls + (0:0.75) and \control{(3,-1)}{(90:1)} .. controls +(-90:1.5) and \control{(-1.25,-1.25)}{(-90:1.5)};

\foreach \i in {1,...,6}{
\path [name intersections={of=blob and e\i,total=\tot}]
\foreach \s in {1,...,\tot}{coordinate (n\i) at (intersection-\s)}; 
\fill [green!30!black] (n\i) circle (\e*1.35 cm);
};

\fill [opacity =0.1, green] (-1.25,-1.25) .. controls +(90:1.5) and \control{(1.25,0.75)}{(-150:1.5)} .. controls +(30:1) and \control{(2.675,1.5)}{(180:0.75)} .. controls + (0:0.75) and \control{(3,-1)}{(90:1)} .. controls +(-90:1.5) and \control{(-1.25,-1.25)}{(-90:1.5)};

\foreach \i in {1,2,5,11,14}{\fill [black] (m\i) circle (\e*1.5 cm); \fill [blue!80!black] (m\i) circle (\e cm);};

\node [green!30!black] at (-0.7,-1.5) {$D_e$};
\node [blue!40!black] at (-0.25,0.5) {$\tilde{G^e}$};
\end{scope}
\end{tikzpicture}
\caption{How to obtain $\tilde{G}^e$ from $G$, the vertices $(s_x)_{e \in T(e)}$ are in green on $\partial D_e$.}
\label{pic_ex_leaving_digraph}
\end{center}
\end{figure}
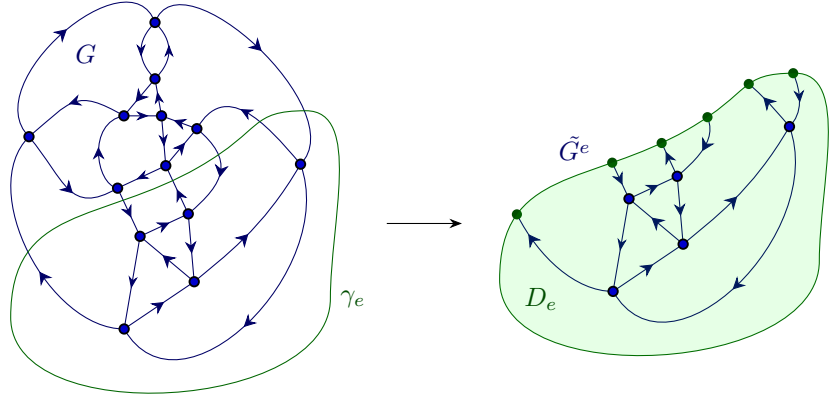 

We define a quasi-order on the set $\mathcal{E} = \cup_{T \in \T}{E(T)}$ of all edges of $\T$: for $e,f \in \mathcal{E}^2$, $e \preccurlyeq f$ if $\tilde{G}^e$ is an embedded directed immersion of $\tilde{G}^f$ and the image of $(s_x)_{x \in T(e)}$ in the embedded immersion is $(s_x)_{x \in T(f)}$. 

We now prove the equivalent of Lemma~\ref{lem_ancestry_cut} for medial digraphs, this relies essentially on a version of Menger's theorem for digraphs.

\begin{lemma}\label{lem_ancestry_cut_dig}
Let $G$ be a plane medial digraph, $(T,\phi)$ a disc carving-decomposition of $G$, and $a,b \in E(T)^2$ such that $a$ is an ancestor of $b$. Then $b \preccurlyeq a$.
\end{lemma}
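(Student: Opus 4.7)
The plan is to mirror the strategy of Lemma~\ref{lem_ancestry_cut} verbatim in its gross structure, replacing the appeal to (undirected) Menger's theorem with a directed analogue and leveraging the extra arithmetic forced by the medial digraph structure. I would begin with the identical setup: assume $G_1^a = G^a$, write $A = V(G_2^a)$, $B = V(G^b)$, $H = G \smallsetminus E(G_2^a) \smallsetminus E(G^b)$, let $\mathcal{F}_a, \mathcal{F}_b$ be the endpoints of $T(a), T(b)$ lying in $H$, and set $k = \wid(a) = \wid(b)$. The target is, as before, a family of $k$ edge-disjoint paths running in the annular region between the disc-property curves $\gamma_b \subset \gamma_a$, then pass to a tangent family via Lemma~\ref{lem_path_swaps} and invoke Lemma~\ref{lem_equiv_plan_imm_dir}. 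The key new ingredient is that every directed path we extract must agree with the orientations, and the bijection between $(s_x)_{x \in T(b)}$ and $(s_y)_{y \in T(a)}$ must match ``tail-to-head'' stubs.

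The crucial arithmetic fact is the balanced-cut property of plane medial digraphs: since $\deg^+(v) = \deg^-(v) = 2$ at every vertex, the identity $\sum_{v \in X}(\deg^+(v) - \deg^-(v)) = 0$ forces, for every subset $X \subseteq V(G)$, that the number of edges directed from $X$ to $X^c$ equals the number from $X^c$ to $X$. In particular, both $T(a)$ and $T(b)$ split into $k/2$ outgoing and $k/2$ incoming edges (with respect to $A$ and $B$ respectively). Combined with $\mcut(A,B) \geq \min_{e \in P} \wid(e) \geq k$ from the linked and ancestor hypotheses, this yields that the minimum directed edge-cut from $B$ to $A$ is exactly $k/2$: a matching directed cut is provided by the half of $T(a)$ oriented from $V(G^a)$ into $A$, while any directed cut corresponds to an undirected cut of size at least $k$, whose ``forward'' half has at least $k/2$ edges by balance. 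Symmetrically, the minimum directed cut from $A$ to $B$ is $k/2$, and directed Menger furnishes $k/2$ edge-disjoint directed paths in each direction.

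The main obstacle will be routing both flows simultaneously as a single edge-disjoint family of $k$ directed paths. I would contract $A$ to a vertex $a^*$ and $B$ to a vertex $b^*$ inside $H \cup T(a) \cup T(b)$ to obtain a digraph $\bar H$. Internal vertices remain $\deg^+ = \deg^- = 2$ by the medial property, and the balanced-cut observation gives $\deg^+(a^*) = \deg^-(a^*) = \deg^+(b^*) = \deg^-(b^*) = k/2$, so $\bar H$ is Eulerian. Any Eulerian digraph decomposes into edge-disjoint directed closed walks, and by iteratively splicing two walks meeting at $a^*$ or $b^*$ (the standard ``transition swap'' at a $4$-valent vertex) I can arrange that no walk contains a $b^*$-to-$b^*$ or $a^*$-to-$a^*$ sub-segment: whenever such a redundant segment appears, exchanging its exiting edge with the exiting edge of an adjacent visit strictly decreases the total number of redundant segments while preserving the Eulerian decomposition. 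The segments between consecutive visits to $\{a^*,b^*\}$ in the resulting decomposition are exactly $k/2$ directed $b^*$-to-$a^*$ paths and $k/2$ directed $a^*$-to-$b^*$ paths, each using one edge of $T(a)$ and one of $T(b)$ with matching orientation, pairwise edge-disjoint by construction.

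Lifting this family back to $H$ and applying Lemma~\ref{lem_path_swaps} iteratively (which operates on directed paths, and each swap either strictly lowers the count of transverse vertices or removes a loop without introducing new transversalities) produces a tangent family of directed paths in $\tilde G^a$ starting at vertices of $G^b$ and ending at the boundary subdivisions $s_y$ for $y \in T(a)$, with the bijection $(s_x)_{x \in T(b)} \leftrightarrow (s_y)_{y \in T(a)}$ preserving orientations. Combined with the identity immersion $G^b \hookrightarrow G^a$ on the vertices of $B \subseteq V(G^a)$, Lemma~\ref{lem_equiv_plan_imm_dir} then exhibits $\tilde G^b$ as an embedded directed immersion of $\tilde G^a$, which is precisely $b \preccurlyeq a$.
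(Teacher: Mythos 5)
Your proposal follows the same road map as the paper's proof: set up $\mathcal{F}_a$, $\mathcal{F}_b$, $H$, observe that cuts in a plane medial digraph are balanced, invoke directed Menger to extract a family of directed paths across the annulus, and conclude via Lemma~\ref{lem_equiv_plan_imm_dir}. Two of your intermediate steps, however, do not hold up as written.

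The Eulerian transition-swap argument is not correct as stated. Swapping exit edges at $a^*$ or $b^*$ only reorganizes which closed walk contains which segment; it leaves the segments themselves — the edge sequences between consecutive visits to $\{a^*, b^*\}$ — completely untouched, so it cannot destroy a bad $a^*$-to-$a^*$ or $b^*$-to-$b^*$ sub-segment. A swap at a shared interior vertex of an $a^*$-loop and a $b^*$-loop would fix those two segments, but nothing forces two bad loops to share a vertex, so the iteration you describe need not terminate at zero. (Also $a^*$ and $b^*$ are $k$-valent, not $4$-valent, so the ``standard transition swap'' picture does not apply there.) A cleaner route, and one that works in any balanced digraph: directed Menger gives $k/2$ edge-disjoint directed paths $b^*\to a^*$; after deleting their edges, $a^*$ is a pure source of out-degree $k/2$, $b^*$ a pure sink of in-degree $k/2$, and every interior vertex remains balanced, so a flow decomposition of the residual digraph produces $k/2$ edge-disjoint directed paths $a^*\to b^*$ that are disjoint from the first batch.

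The second, more conceptual point is that the paper does not call on Lemma~\ref{lem_path_swaps} at this stage, and you should not either: around every vertex of a plane medial digraph the four edges alternate in/out, which forces any two edge-disjoint directed paths passing through a common interior vertex to be automatically tangent there. (Whichever of the two outgoing edges a path takes, its in/out pair is non-interlaced with the other path's pair; interlacing would require two same-orientation edges to be adjacent in the cyclic order, which the alternation forbids.) This automatic tangency is precisely the observation that makes the directed case go through without path-swaps, and it is the only reason your invocation of Lemma~\ref{lem_path_swaps} is harmless — it is vacuous. Were it not vacuous it would be positively dangerous: a path-swap at a vertex shared by a $b^*\to a^*$ path and an $a^*\to b^*$ path turns them into an $a^*\to a^*$ and a $b^*\to b^*$ path, wrecking the direction count the whole argument rests on. Both issues are repairable without changing your high-level plan, but as written the proof does not close.
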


\begin{proof}
Let $D_e$ be the disc associated to an edge $e$ of $T$, \ie, the component delimited by $\gamma_e$ containing $G^e$. The curve $\gamma_ e$ intersects incoming edges, \ie, edges of $T(e)$ with head in $D_e$ and outgoing one when their tail is there instead. In between two consecutive edges, $\gamma$ lies on the face bounded by these two edges. However, since they belong to the boundary of the same face, which are oriented consistently in that regard by definition of plane medial digraph, one of them is incoming and the other one is outgoing. Hence, edges met by $\gamma$ alternate (see $\gamma_e$ on Figure~\ref{pic_ex_leaving_digraph} for example). It follows that cuts of $G$ are of even size, and contain as many incoming edges with respect to $T(e)$ as outgoing ones.

The $\preccurlyeq$ relation between $b$ and $a$ can be defined since the carving-decomposition satisfies the disc property. We denote $2k = \wid (a) = \wid (b)$. Let assume that $G^a_1 = G^a$ so that $A = V(G^a_2)$ and $B = V(G^b)$ with respect to the linked definition of $A$ and $B$. Let us call $P$ the minimal path of $T$ containing both $a$ and $b$. We define $H = G \smallsetminus E(G^a_2) \smallsetminus E(G^b)$, in other words, the edges of $G$ between $T(a)$ and $T(b)$ (where $T(a)$ and $T(b)$ are included). By construction, there are two disjoint families $\mathcal{F}_a = V(H) \cap V(A)$ and $\mathcal{F}_b = V(H) \cap V(B)$, where $\mathcal{F}_a$ consists of the endpoints of $T(a)$ in $G^a_2$ and $\mathcal{F}_b$ consists of the endpoints of $T(b)$ in $G^b$. 

\begin{figure}[ht]
\begin{center}
\begin{tikzpicture}
\clip (-1.5,-2.35) rectangle (9.75,3.5);
\def\e{0.05}
\coordinate (c1) at (0,0);
\coordinate (c2) at ($(c1)+(-20:1)$);
\coordinate (c3) at ($(c2)+(-60:1)$);
\coordinate (c4) at ($(c1)+(70:1)$);
\coordinate (c5) at ($(c4)+(-20:1)$);
\coordinate (c6) at ($(c5)+(20:1)$);
\coordinate (c7) at ($(c4)+(100:1)$);
\coordinate (c8) at ($(c7)+(0:1)$);

\path (c1) -- (c2) coordinate [pos = 0.5] (m1);
\path (c2) -- (c3) coordinate [pos = 0.5] (m2);
\path (c1) -- (c4) coordinate [pos = 0.5] (m3);
\path (c4) -- (c5) coordinate [pos = 0.5] (m4);
\path (c2) -- (c5) coordinate [pos = 0.5] (m5);
\path (c4) -- (c7) coordinate [pos = 0.5] (m7);
\path (c5) -- (c8) coordinate [pos = 0.5] (m8);
\path (c4) -- (c8) coordinate [pos = 0.5] (m9);
\path (c7) -- (c8) coordinate [pos = 0.5] (m10);
\path (c1) .. controls +(-110:1.25) and \control{(c3)}{(-130:1.25)} coordinate [pos = 0.5] (m11);
\path (c1) .. controls +(160:1.5) and \control{(c7)}{(180:1.5)} coordinate [pos = 0.5] (m12);
\path (c3) .. controls +(50:1.5) and \control{(c8)}{(0:2.5)} coordinate [pos = 0.5] (m14);
\path (c7) .. controls +(90:1) and \control{(c8)}{(90:1)}  coordinate [pos = 0.5] (m15);

\begin{scope}[blue!40!black]
\draw (m1) -- (m3) node [midway, sloped] {$\arrowIns{1.25}$};
\draw (m3) -- (m4) node [midway, sloped] {$\arrowOut$};
\draw (m4) -- (m5) node [pos = 0.6, sloped] {$\arrowOuts{1.25}$};
\draw (m5) -- (m1) node [midway, sloped] {$\arrowIns{1.25}$};
\draw (m1) -- (m2) node [midway, sloped] {$\arrowOuts{1.5}$};
\draw (m2) -- (m11) node [midway, sloped] {$\arrowIns{1.4}$};
\draw (m11) -- (m1) node [midway, sloped] {$\arrowOut$};
\draw (m7) -- (m9) node [midway, sloped] {$\arrowIn$};
\draw (m9) -- (m10) node [midway, sloped] {$\arrowOut$};
\draw (m10) -- (m7) node [midway, sloped] {$\arrowOuts{1.5}$};
\draw (m4) -- (m8) node [midway, sloped] {$\arrowIns{1.25}$};
\draw (m8) -- (m9) node [midway, sloped] {$\arrowOuts{1.25}$};
\draw (m9) -- (m4) node [midway, sloped] {$\arrowIn$};
\draw (m7) .. controls +(150:0.75) and \control{(m12)}{(40:0.75)} node [midway, sloped] {$\arrowOuts{1.25}$};
\draw (m12).. controls + (-50:0.75) and \control{(m3)}{(-150:0.75)} node [midway, sloped] {$\arrowIns{1.5}$};
\draw (m3) .. controls +(170:0.3) and \control{(m7)}{(-170:0.5)} node [midway, sloped] {$\arrowOuts{1.25}$};
\draw (m10) .. controls +(135:0.35) and \control{(m15)}{(-135:0.35)} node [midway, sloped] {$\arrowOut$};
\draw (m15) .. controls +(-45:0.35) and \control{(m10)}{(45:0.35)} node [midway, sloped] {$\arrowOut$};
\draw (m2) -- (m5) node [midway, sloped] {$\arrowIns{1.25}$};
\draw (m5) .. controls +(40:0.65) and \control{(m8)}{(-40:0.65)} node [midway, sloped] {$\arrowOuts{1.5}$};
\draw (m8) .. controls + (60:0.75) and \control{(m14)}{(130:0.75)} node [midway, sloped] {$\arrowOuts{1.25}$};
\draw (m14) .. controls +(-120:1) and \control{(m2)}{(40:0.5)} node [midway, sloped] {$\arrowIns{1.5}$};
\draw (m12) .. controls +(-135:1) and \control{(m11)}{(-180:1)} node [pos = 0.6, sloped] {$\arrowOuts{1.5}$};
\draw (m11) .. controls +(-60:1.5) and \control{(m14)}{(-70:1.5)} node [midway, sloped] {$\arrowOuts{1.35}$};
\draw (m14) .. controls +(45:1) and \control{(m15)}{(70:1)} node [midway, sloped] {$\arrowIns{1.5}$};
\draw (m15) .. controls +(110:1) and \control{(m12)}{(135:1)} node [midway, sloped] {$\arrowIns{1.5}$};
\end{scope}

\draw [green!40!black] (-1.25,-1.25) .. controls +(90:1.5) and \control{(1.25,0.75)}{(-150:1.5)} .. controls +(30:1) and \control{(2.675,1.5)}{(180:0.75)} .. controls + (0:0.75) and \control{(3,-1)}{(90:1)} .. controls +(-90:1.5) and \control{(-1.25,-1.25)}{(-90:1.5)};

\draw [red!50!purple] (-1.25,1.25) .. controls +(-90:0.5) and \control{(-0.75,1.25)}{(-90:0.5)} .. controls +(90:1) and \control{(1,2.1)}{(180:1)} .. controls +(0:0.5) and \control{(1,3.25)}{(0:0.5)} .. controls +(180:1) and \control{(-1.25,1.25)}{(90:1.5)};

\foreach \i in {1,...,5}{\fill [black] (m\i) circle (\e*1.5 cm); \fill [blue!80!black] (m\i) circle (\e cm);};
\foreach \i in {7,...,12}{\fill [black] (m\i) circle (\e*1.5 cm); \fill [blue!80!black] (m\i) circle (\e cm);};
\foreach \i in {14,15}{\fill [black] (m\i) circle (\e*1.5 cm); \fill [blue!80!black] (m\i) circle (\e cm);};

\node [red!50!purple] at (-1.2,2.5) {$\gamma_a$};
\node [green!30!black] at (3.3,-1) {$\gamma_b$};
\node [blue!40!black] at (2.5,3) {$G$};

\draw [-Stealth] (3.75,0) -- +(1,0);

\begin{scope}[xshift=6.5cm, yshift = 0.25cm]
\coordinate (c1) at (0,0);
\coordinate (c2) at ($(c1)+(-20:1)$);
\coordinate (c3) at ($(c2)+(-60:1)$);
\coordinate (c4) at ($(c1)+(70:1)$);
\coordinate (c5) at ($(c4)+(-20:1)$);
\coordinate (c6) at ($(c5)+(20:1)$);
\coordinate (c7) at ($(c4)+(100:1)$);
\coordinate (c8) at ($(c7)+(0:1)$);

\path (c1) -- (c2) coordinate [pos = 0.5] (m1);
\path (c2) -- (c3) coordinate [pos = 0.5] (m2);
\path (c1) -- (c4) coordinate [pos = 0.5] (m3);
\path (c4) -- (c5) coordinate [pos = 0.5] (m4);
\path (c2) -- (c5) coordinate [pos = 0.5] (m5);
\path (c4) -- (c7) coordinate [pos = 0.5] (m7);
\path (c5) -- (c8) coordinate [pos = 0.5] (m8);
\path (c4) -- (c8) coordinate [pos = 0.5] (m9);
\path (c7) -- (c8) coordinate [pos = 0.5] (m10);
\path (c1) .. controls +(-110:1.25) and \control{(c3)}{(-130:1.25)} coordinate [pos = 0.5] (m11);
\path (c1) .. controls +(160:1.5) and \control{(c7)}{(180:1.5)} coordinate [pos = 0.5] (m12);
\path (c3) .. controls +(50:1.5) and \control{(c8)}{(0:2.5)} coordinate [pos = 0.5] (m14);
\path (c7) .. controls +(90:1) and \control{(c8)}{(90:1)}  coordinate [pos = 0.5] (m15);

\begin{scope}[blue!40!black]
\draw (m1) -- (m3) node [midway, sloped] {$\arrowIns{1.25}$};
\draw (m3) -- (m4) node [midway, sloped] {$\arrowOut$};
\draw (m4) -- (m5) node [pos = 0.6, sloped] {$\arrowOuts{1.25}$};
\draw (m7) -- (m9) node [midway, sloped] {$\arrowIn$};
\draw (m9) -- (m10) node [midway, sloped] {$\arrowOut$};
\draw (m10) -- (m7) node [midway, sloped] {$\arrowOuts{1.5}$};
\draw (m4) -- (m8) node [midway, sloped] {$\arrowIns{1.25}$};
\draw (m8) -- (m9) node [midway, sloped] {$\arrowOuts{1.25}$};
\draw (m9) -- (m4) node [midway, sloped] {$\arrowIn$};
\draw (m7) .. controls +(150:0.75) and \control{(m12)}{(40:0.75)} node [midway, sloped] {$\arrowOuts{1.25}$};
\draw (m12).. controls + (-50:0.75) and \control{(m3)}{(-150:0.75)} node [midway, sloped] {$\arrowIns{1.5}$};
\draw (m3) .. controls +(170:0.3) and \control{(m7)}{(-170:0.5)} node [midway, sloped] {$\arrowOuts{1.25}$};
\draw (m10) .. controls +(135:0.35) and \control{(m15)}{(-135:0.35)} node [midway, sloped] {$\arrowOut$};
\draw (m15) .. controls +(-45:0.35) and \control{(m10)}{(45:0.35)} node [midway, sloped] {$\arrowOut$};
\draw (m5) .. controls +(40:0.65) and \control{(m8)}{(-40:0.65)} node [midway, sloped] {$\arrowOuts{1.5}$};
\draw (m8) .. controls + (60:0.75) and \control{(m14)}{(130:0.75)} node [midway, sloped] {$\arrowOuts{1.25}$};
\draw (m12) .. controls +(-135:1) and \control{(m11)}{(-180:1)} node [pos = 0.6, sloped] {$\arrowOuts{1.5}$};
\draw (m14) .. controls +(45:1) and \control{(m15)}{(70:1)} node [midway, sloped] {$\arrowIns{1.5}$};
\end{scope}

\draw [green!40!black, smooth cycle, tension = 0.75] plot coordinates {($(m11)+(135:0.2)$) ($(m1)+(150:0.2)$) ($(m5)+(90:0.15)$) ($(m14)+(60:0.2)$) ($(m14)+(-60:0.2)$) ($(m5)+(-90:0.2)$) ($(m1)+(-80:0.4)$) ($(m11)+(-45:0.2)$)};

\draw [red!50!purple] (-1.25,1.25) .. controls +(-90:0.5) and \control{(-0.75,1.25)}{(-90:0.5)} .. controls +(90:1) and \control{(0.7,2.5)}{(180:1)} .. controls +(0:0.5) and \control{(0.65,3)}{(0:0.5)} .. controls +(180:1) and \control{(-1.25,1.25)}{(90:1.5)};

\foreach \i in {1,3,4,5}{\fill [black] (m\i) circle (\e*1.5 cm); \fill [blue!80!black] (m\i) circle (\e cm);};
\foreach \i in {7,...,12}{\fill [black] (m\i) circle (\e*1.5 cm); \fill [blue!80!black] (m\i) circle (\e cm);};
\foreach \i in {14,15}{\fill [black] (m\i) circle (\e*1.5 cm); \fill [blue!80!black] (m\i) circle (\e cm);};

\node [red!50!purple] at (-1.2,2.65) {$\mathcal{F}_a$};
\node [green!30!black] at (1,-1) {$\mathcal{F}_b$};
\node [blue!40!black] at (2.5,2.5) {$H$};
\end{scope}
\end{tikzpicture}
\caption{Definition of $\mathcal{F}_a$, $\mathcal{F}_b$, and $H$ from $a$ and $b$.}
\label{pic_proof_ancestry_dir}
\end{center}
\end{figure}
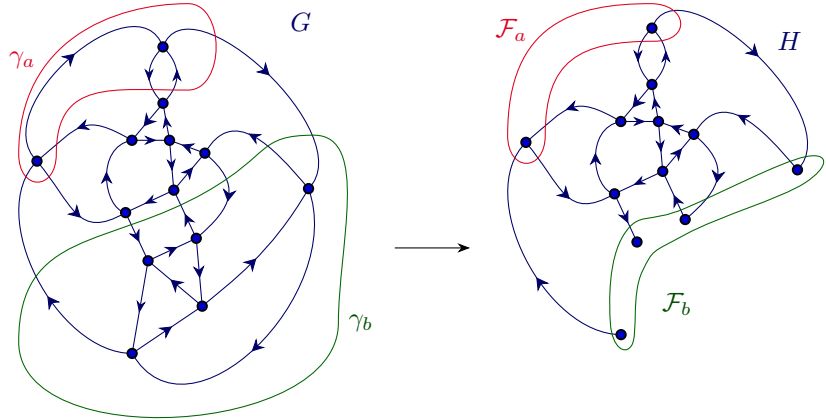

By construction, every cut of $H$ separating $\mathcal{F}_a$ and $\mathcal{F}_b$ is a cut separating $A$ and $B$ so that $\mcut (A,B) \leq \mcut (\mathcal{F}_a, \mathcal{F}_b)$. However, since $(T,\phi)$ is linked, $\mcut (A,B) = \min_{e \in P} \wid (e)$. As $a$ is an ancestor of $b$, $\wid (b) = \wid (a) = 2k \leq \min_{e \in P} \wid (e)$. Hence, $2k \leq \mcut (\mathcal{F}_a, \mathcal{F}_b)$ and by Menger's theorem on digraphs, there exists a family $P$ of $2k$ edge-disjoint paths between $\mathcal{F}_a$ and $\mathcal{F}_b$ in the abstract digraph associated to $G$. Since all cuts have as many incoming edges as outgoing ones, we have in fact $k$ paths from $\mathcal{F}_a$ to $\mathcal{F}_b$ and $k$ paths from $\mathcal{F}_b$ to $\mathcal{F}_a$.

Since $G$ is embedded, $P$ is a family of edge-disjoint direct embedded paths connecting vertices from $\mathcal{F}_b$ and $\mathcal{F}_a$ where the first edge of each path belongs to one of $T(b)$ and $T(a)$ and the last one belongs to the other cut. Hence, each path starts in $\mathcal{F}_i$, goes through the edge $x \in T(i)$ whose subdivision yields $c_x$ in $\tilde{G^i}$, $H$, then the edge $y \in T(j)$ whose subdivision yields $c_y$ in $\tilde{G^j}$ and ends on $\mathcal{F}_j$ for $\{ i,j \} = \{ a,b \}$. This way, we associate to each vertex $c_x$ incident to an incoming edge, a vertex $c_y$ induced by $T(a)$, and the subpath $P_x$ of $P$ starting with $x$ and ending at the vertex $c_y$ on the last one of its edge. We associate to each vertex $c_x$ incident to an outgoing edge, a vertex $c_y$ induced by $T(a)$, and the subpath $P_x$ of $P$ starting with $c_y$ and ending with the edge $x$.

By the identity, $G^b$ is obviously an embedded immersion minor of $G^a$. By definition of medial digraph the family of directed path $(P_x)_{x \in T(b)}$ is tangent and by construction, they respect the order at their starting point in $G^b$ so that, by Lemma~\ref{lem_equiv_plan_imm_dir} $\tilde{G}^b$ is an embedded directed immersion of $\tilde{G}^a$, \ie, $b \preccurlyeq a$.
\end{proof}

\paragraph*{Nash-Williams' argument}
The proof of Proposition~\ref{prop_imm_emb_dirmed} is the very same as the proof of Theorem~\ref{th_emb_immersion_bounded} with directed paths which are handled through Lemma~\ref{lem_ancestry_cut_dig}. Again, gluing leaving digraphs is easier than in the abstract case since the embedding is much more constraint.

\begin{proof}[Proof of Proposition~\ref{prop_imm_emb_dirmed}]
Let $(G_n)_{n \in \N}$ be a sequence of medial digraphs such that their carving-width is bounded by $k \in \N$. For each $n \in \N$, thanks to Proposition~\ref{prop_disc_decomp}, we choose a disc carving-decomposition $T_n$ of $G_n$ of carving-width $\cwid (G_n)$ and set the forest $\mathcal{T} = \cup_{n \in \N} T_n$. Let us assume by contradiction that $(G_n)_{n \in \N}$ is not well-quasi-ordered by embedded directed immersion.

Hence, the set of roots of $\mathcal{T}$ is an antichain for $\preccurlyeq$ and $\mathcal{E} (\mathcal{T})$ is not well-quasi-ordered by $\preccurlyeq$. By Lemma~\ref{lem_ancestry_cut_dig}, for a pair of edges of $\mathcal{T}$ if an edge $a$ is an ancestor of $b$, then $b \preccurlyeq a$. Since the number of graphs on a fixed vertex is finite and there is a finite number of embeddings of a planar graph up to isotopy, there cannot be an infinite descending sequence of graphs for $\preccurlyeq$. Hence, as $w$ is bounded by $k$ on $\mathcal{E}(\mathcal{T})$, we can apply Lemma~\ref{lem_lem_trees} and exhibit an antichain $(a_n)_{n \in \N}$ such that $\cup_{n \in \N} ~\text{ch}_{\mathcal{T}} (a_n)$ is well-quasi-ordered by $\preccurlyeq$.

Since our trees are cubic, each set $\text{ch} (a_n)$ either consists of two edges, a left child $\ell_n$ and a right child $r_n$, or is empty. However, the edges incident to leaves are well-quasi-ordered since the number of leaving graphs for such edges are in finite number (they have at most $k$ edges). Hence, there is at most a finite number of edges incident to leaves in $(a_n)_{n_\N}$ and each sequence of $(\ell_n)_{n \in \N}, (r_n)_{n \in \N}$ is infinite and well-quasi-ordered by $\preccurlyeq$. 

Up to extracting subsequences of $(a_i)_{n \in \N}$ and hereby $(\ell_n)_{n \in \N}, (r_n)_{n \in \N}$, we can assume that $\forall i \in \N, \ell_i \preccurlyeq \ell_{i+1}$, $\forall i \in \N, r_i \preccurlyeq r_{i+1}$, and $\forall i \in \N, \wid (\ell_i) = \wid (r_{i+1})$, $\wid (\ell_i) = \wid (r_{i+1})$, and $\wid (a_i) = \wid (a_{i+1})$ (each one of these properties may require an extraction). In other words, the sequences $(\ell_n)_{n \in \N}$ and $(r_n)_{n \in \N}$ are ascending chains, and each sequence has a unique value by $w$. For simplicity of notations, we consider a subset of the sequence $(G_n)_{n \in \N}$ and index it such that we can design by $G_{i}$ the graph of $(G_n)_{n \in \N}$ such that $a_i$ is an edge of a carving-decomposition of $G_{i}$.

We now exploit our embedded structure. Let us set $i \in \N$. By the disc property, there exists a Jordan curve $\gamma_{a_i}$ in the plane such that $\gamma_{a_i}$ only meets $G_{i}$ once on the interior of each edge of $T_{i}(a_i)$. Furthermore, up to switching orientation, $\gamma_{a_i}$ intersects $T_{i}(a_i)$. Similarly, we define $\gamma_{\ell_i}$ and $\gamma_{r_i}$, which intersect $T_{i}(\ell_i)$ and $T_{i}(r_i)$. Since our decompositions satisfy the disc property, vertices of $G_{i}^{\ell_i}$ and $G_{i}^{r_i}$ both lie on the same side of $\gamma_{a_i}$. Furthermore, it is possible to isotope $\gamma_{\ell_i}$, $\gamma_{r_i}$, and $\gamma_{a_i}$ so that each point of any $\gamma$ curve belongs to another one. Their union forms a $\theta$-curve as depicted in Figure~\ref{pic_def_theta_curve_dir}. By comparing the endpoint in $G_{i}^{\ell_i}$ and the order around it, it is possible to establish a bijection between edges met along $\gamma_{\ell_i}$ with respect to $G^{a_i}$ and edges met along $\gamma_{\ell_i}$ with respect to $G^{a_i}$. This bijection can then be extended to an injection to edges met by $\theta_i$. The same arguments apply for $\theta_i$ and $\gamma_{r_i}$. In the following we will use these bijection and map to $\theta$-curves without emphasising it.

\begin{figure}[ht]
\begin{center}
\begin{tikzpicture}
\clip (-0.5,-1.7) rectangle (10.5,2.3);
\def\e{0.075}

\coordinate (cl) at (0.5,-0.15);
\coordinate (cr) at (2.4,-0.15);
\coordinate (l2) at ($(cl)+(-45:0.5)$);
\coordinate (l3) at ($(cl)+(70:0.5)$);
\coordinate (r4) at ($(cr)+(80:0.5)$);
\coordinate (r3) at ($(cr)+(130:0.5)$);
\coordinate (r2) at ($(cr)+(170:0.5)$);
\coordinate (r1) at ($(cr)+(-130:0.5)$);
\coordinate (c) at (1.5,2);

\filldraw [red!50!purple, fill opacity = 0.1] (cl) circle (0.5 cm);
\filldraw [blue!80!black, fill opacity = 0.1] (cr) circle (0.5 cm);

\begin{scope}
\clip (1.45,-0.15) circle (1.9 cm and 1.2 cm);
\draw [black!50] (l3) -- ++(80:0.75) node [pos = 0.4, sloped] {$\arrowIns{1.25}$};
\draw [black!50] (r3) -- ++(90:1) node [pos = 0.4, sloped] {$\arrowIns{1.25}$};
\draw [black!50] (r3) -- ++(120:1) node [pos = 0.4, sloped] {$\arrowIns{1.5}$};
\draw [black!50] (r4) -- ++(80:1) node [pos = 0.2, sloped] {$\arrowOuts{1.25}$};
\draw [black!50] (l2) -- (r1) node [pos = 0.3, sloped] {$\arrowOuts{1.25}$};
\draw [black!50] (l3) -- (r2) node [pos = 0.3, sloped] {$\arrowOuts{1.25}$};
\draw [black!50] (l2) -- (r2) node [pos = 0.75, sloped] {$\arrowIns{1.25}$};
\end{scope}
\path [name path=e2] (l3) -- ++(80:0.75);
\path [name path=e3] (r3) -- ++(90:1);
\path [name path=e4] (r3) -- ++(120:1);
\path [name path=e5] (r4) -- ++(80:1);
\path [name path=e6] (l2) -- (r1);
\path [name path=e7] (l3) -- (r2);
\path [name path=e8] (l2) -- (r2);

\foreach \i in {2,3}{\fill [red!50!purple] (l\i) circle (\e cm);};
\foreach \i in {1,2,3,4}{\fill [blue!80!black] (r\i) circle (\e cm);};

\draw [thick, green!40!black, name path=blob] (1.45,-0.15) circle (1.9 cm and 1.2 cm);
\draw [thick, green!40!black, name path=vert] (1.45,1.05) -- (1.45,-1.35);

\foreach \i in {2,3,4,5}{
\path [name intersections={of=blob and e\i,total=\tot}]
\foreach \s in {1,...,\tot}{coordinate (n\i) at (intersection-\s)}; 
\fill [green!40!black] (n\i) circle (\e*0.75 cm);
};

\foreach \i in {6,7,8}{
\path [name intersections={of=vert and e\i,total=\tot}]
\foreach \s in {1,...,\tot}{coordinate (n\i) at (intersection-\s)}; 
\fill [green!40!black] (n\i) circle (\e*0.75 cm);
};

\node at (cl) [red!50!purple] {$G_{i}^{\ell_i}$};
\node at (cr) [blue!80!black] {$G_{i}^{r_i}$};
\node [green!40!black] at (3.3,-1) {$\theta_i$};
\node at (1.5,1.5) {\Large $\tilde{G}_{i}^{a_i}$};

\begin{scope}[xshift = 8cm, yshift = -0.15cm]
\filldraw [red!50!purple, fill opacity = 0.1] (-0.8,0.63) arc (90:270:1 and 0.63) -- cycle;
\filldraw [blue!80!black, fill opacity = 0.1] (0.8,0.63) arc (90:-90:1 and 0.63) -- cycle;

\begin{scope}[black!50]
\clip (0,0) circle (2.4 cm and 1.39cm);
\draw (-0.25,0.4) -- (0.25,0.6) node [pos = 0.75, sloped] {$\arrowOuts{0.65}$};
\draw (-0.25,0.4) -- (0.25,0.1) node [pos = 0.25, sloped] {$\arrowIns{0.65}$};
\draw (-0.25,-0.3) -- (0.25,-0.3) node [pos = 0.75, sloped] {$\arrowOuts{0.65}$};
\draw (-0.8,0.63) -- (-0.6,0.5) node [pos = 0.5, sloped] {$\arrowOuts{0.65}$};
\draw [dotted, thick] (-0.6,0.5) -- (-0.25,0.4);
\draw (-0.8,-0.2) -- (-0.6,-0.2) node [pos = 0.5, sloped] {$\arrowOuts{0.65}$};
\draw [dotted, thick] (-0.6,-0.2) -- (-0.25,-0.3);
\draw (-0.8,-0.2) -- (-0.6,0) node [pos = 0.65, sloped] {$\arrowIns{0.65}$};
\draw [dotted, thick] (-0.6,0) -- (-0.25,0.4);

\draw (0.8,0.4) -- (0.6,0.5) node [pos = 0.5, sloped] {$\arrowOuts{0.65}$};
\draw [dotted, thick] (0.6,0.5) -- (0.25,0.6);
\draw (0.8,0.4) -- (0.6,0.2) node [pos = 0.65, sloped] {$\arrowIns{0.65}$};
\draw [dotted, thick] (0.6,0.2) -- (0.25,0.1);
\draw (0.8,-0.3) -- (0.6,-0.3) node [pos = 0.5, sloped] {$\arrowOuts{0.65}$};
\draw [dotted, thick] (0.6,-0.3) -- (0.25,-0.3);

\draw (-0.8,0.63) -- (-0.75,0.83) node [pos = 0.65, sloped] {$\arrowIns{0.65}$};
\draw [dotted, thick] (0-0.75,0.83) -- (-0.7,1.09);
\draw (-0.7,1.09) -- +(80:0.5) node [pos = 0.25, sloped] {$\arrowIns{0.65}$};

\path (0.8,0.63) arc (90:80:1 and 0.63) coordinate (d1);
\draw (d1) -- +(100:0.2) coordinate (d2) node [pos = 0.5, sloped] {$\arrowIns{0.65}$};
\draw [dotted, thick] (d2) -- +(110:0.25) coordinate (d3);
\draw (d3) -- +(110:0.4) node [pos = 0.2, sloped] {$\arrowIns{0.65}$};

\path (0.8,0.63) arc (90:80:1 and 0.63) coordinate (d1);
\draw (d1) -- +(40:0.25) coordinate (d2) node [pos = 0.75, sloped] {$\arrowIns{0.65}$};
\draw [dotted, thick] (d2) -- +(90:0.25) coordinate (d3);
\draw (d3) -- +(110:0.4)  node [pos = 0.25, sloped] {$\arrowOuts{0.65}$};

\path (0.8,0.63) arc (90:55:1 and 0.63) coordinate (d1);
\draw (d1) -- +(70:0.25) coordinate (d2) node [pos = 0.5, sloped] {$\arrowOuts{0.65}$};
\draw [dotted, thick] (d2) -- +(100:0.2) coordinate (d3);
\draw (d3) -- +(100:0.4) node [pos = 0.15, sloped] {$\arrowIns{0.65}$};
\end{scope}

\path [name path=e1] (-0.7,1.09) -- +(80:0.5);

\path (0.8,0.63) arc (90:80:1 and 0.63) coordinate (d1);
\path (d1) -- +(100:0.2) coordinate (d2);
\path (d2) -- +(110:0.25) coordinate (d3);
\path [name path=e3] (d3) -- +(110:0.4);

\path (0.8,0.63) arc (90:80:1 and 0.63) coordinate (d1);
\path (d1) -- +(40:0.25) coordinate (d2);
\path(d2) -- +(90:0.25) coordinate (d3);
\path [name path=e4] (d3) -- +(110:0.4);

\path (0.8,0.63) arc (90:55:1 and 0.63) coordinate (d1);
\path (d1) -- +(70:0.25) coordinate (d2);
\path (d2) -- +(100:0.2) coordinate (d3);
\path [name path=e5] (d3) -- +(100:0.4);
\path [name path=e6] (-0.25,0.4) -- (0.25,0.6);
\path [name path=e7] (-0.25,0.4) -- (0.25,0.1);
\path [name path=e8] (-0.25,-0.3) -- (0.25,-0.3);

\foreach \i in {1,2,3,4}{\fill [blue!80!black] (r\i) circle (\e cm);};

\draw [thick, green!40!black, name path=blob] (0,0) circle (2.4 cm and 1.39cm);
\draw [thick, green!40!black, name path=vert] (0,1.39) -- (0,-1.39);

\foreach \i in {1,3,4,5}{
\path [name intersections={of=blob and e\i,total=\tot}]
\foreach \s in {1,...,\tot}{coordinate (n\i) at (intersection-\s)}; 
\fill [green!40!black] (n\i) circle (\e*0.75 cm);
};

\foreach \i in {6,7,8}{
\path [name intersections={of=vert and e\i,total=\tot}]
\foreach \s in {1,...,\tot}{coordinate (n\i) at (intersection-\s)}; 
\fill [green!40!black] (n\i) circle (\e*0.75 cm);
};

\node at (-1.25,0) [red!50!purple] {$G_{i}^{\ell_i}$};
\node at (1.25,0) [blue!80!black] {$G_{i}^{r_i}$};
\node at (-0.5,-1) [red!50!purple] {$G_{j}^{\ell_j}$};
\node at (0.5,-1) [blue!80!black] {$G_{j}^{r_j}$};
\node [green!40!black] at (2.3,-1) {$\theta_j$};
\node at (0,1.9) {\Large $\tilde{G}_{j}^{a_j}$};
\end{scope}
\end{tikzpicture}
\caption{Left: Definition of $\theta_i$ from $\tilde{G}_{i}^{a_i}$. Right: Visualisation of $\tilde{G}_{i}^{a_i}$ as an embedded immersion of $\tilde{G}_{j}^{a_j}$.}
\label{pic_def_theta_curve_dir}
\end{center}
\end{figure} 
 
Let us set $j > i$, then we have two embedded immersions from $\tilde{G}_{i}^{\ell_i}$ to $\tilde{G}_{j}^{\ell_j}$ and from $\tilde{G}_{i}^{r_i}$ to $\tilde{G}_{j}^{r_j}$. The sizes of cuts $|T_{i}(r_i)|$, $|T_{i}(\ell_i)|$, and $|T_{i}(a_i)|$ determine the sizes of $|T_{i}(\ell_i) \cap T_{i}(r_i)|$, $|T_{i}(a_i) \cap T_{i}(r_i)|$, and $|T_{i}(\ell_i) \cap T_{i}(a_i)|$. Since $|T_{i}(x_i)| = |T_{j}(x_j)|$ by construction for $x \in \{ \ell, r, a\}$, we conclude that the sizes of intersections of pairs of cuts also match between $\tilde{G}_{i}^i$ and $\tilde{G}_{j}^j$: $|T_{i}(x_i) \cap T_{i}(y_i)| = |T_{j}(x_j) \cap T_{j}(y_j)|$ for $x,y \in \{ \ell, r, a \}^2$. 

By Lemma~\ref{lem_equiv_plan_imm_dir}, we have ordered immersions $\phi_\ell, \phi_r$ where the images of edges are families of tangent directed paths $P_{\ell,j}$ and $P_{r,j}$. For each edge $e$ in $T_{j}(\ell_j) \cap T_{j}(r_j)$, that is the edges met on the vertical part of $\theta_j$, there exists a path $p_{\ell,j}$ of the families yielded by Lemma~\ref{lem_equiv_plan_imm_dir} within $G^{\ell_j}_{j}$ ending with $e$. Similarly, we have a path $p_{r,j}$ within $G^{r_j}_{j}$ reaching the same edge so that we can merge these paths to get a path $p_e$. Indeed, a path leaving $D_{\ell_j}$ at edge $p$ is a path incoming $D_{r_j}$ and vice versa, so we can merge them.

These directed paths are tangent since their inner vertices are also inner vertices in $P_{\ell,j}$ and $P_{r,j}$, they are also tangent paths in $G_{j}^j$, this proves that $G_{i}^i$ is an embedded immersion of $G_{j}^j$. Then, we identify the non-merged vertices $s_x$ of $\tilde{G}_{j}^{\ell_j}$ with their equivalent among the remaining non-merged vertices $s_y$ of $\tilde{G}_{j}^{a_j}$ on the proper arc of $\theta_j$. We do similarly with the ones remaining in $\tilde{G}_{j}^{r_j}$. We then have edge-disjoint directed paths joining $G^{a_j}_j$ to vertices $s_x$ on $\partial D_{a_j}$ which respect the order around the starting vertices of $G^{a_j}_j$. By Lemma~\ref{lem_equiv_plan_imm_dir} $\tilde{G}_{i}^{a_i}$ is an embedded directed immersion of $\tilde{G}_{j}^{a_j}$. Hence, $a_i \preccurlyeq a_j$, which is a contradiction and concludes our proof.
\end{proof}

\subsection{Unbounded branch-width}
\label{subsec_high_bw}

The aim of this sections is to prove that plane graphs with unbounded branch-width are well-quasi-ordered by embedded minor. We do so by proving first that any plane graph can be obtained as embedded minor of a big enough embedded grid: Proposition~\ref{prop_gm_grid}. Then, we prove that if the graphs have unbounded branch-width, we can always find an embedded grid as embedded minor of a graph of the family via Proposition~\ref{prop_find_grid}. 

So for a family of plane graphs, either the branch-width of graphs of this family is bounded and handled by Proposition~\ref{prop_gm_bounded_bw}, or it is not bounded and arbitrarily large grids can be found as embedded minors of members of the family by Proposition~\ref{prop_find_grid}. Hence, any graph of the family is in relation with a graph of the family with high enough branch-width by Proposition~\ref{prop_gm_grid}. 

\begin{proposition}\label{prop_gm_grid}
Let $G$ be a plane graph. Then, there exists $n \in \N$ such that $G$ is an embedded minor of the $n \times n$ plane grid.
\end{proposition}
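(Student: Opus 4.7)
The plan is to realize $G$ inside a sufficiently fine grid by routing each vertex of $G$ through a small grid subtree and each edge of $G$ through a grid path, then to contract and delete the unused parts.

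First, I would fix a piecewise-linear embedding of $G$ in $\mathbb{R}^2$, which exists by Fáry's theorem (or simply by a PL approximation of the given embedding). Let $\Delta$ be the maximum degree of $G$. Choose $\delta > 0$ small enough that:
\begin{itemize}
\item the open discs $D_v$ of radius $5\delta$ around the vertices of $G$ are pairwise disjoint and each $D_v$ meets only the edges incident to $v$;
\item outside these discs, the edges are pairwise at distance at least $5\delta$ and each is a polygonal curve whose segments have length at least $10\delta$;
\item inside each $D_v$, the incident half-edges are straight segments emanating from $v$ at pairwise angular separation at least $\pi/\Delta$.
\end{itemize}
Then I would overlay an $n \times n$ grid of mesh $\delta$ (translated generically so no vertex of $G$ and no bend of an edge lies on a grid line), with $n$ large enough to contain the whole embedding in its interior.

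Next, for each vertex $v \in V(G)$ of degree $d_v$, I would construct a branch subtree $T_v$ of the grid inside $D_v$ with exactly $d_v$ leaves, arranged so that the cyclic order of the leaves around $\partial D_v$ matches the cyclic order of the edges around $v$ in the embedding of $G$. Concretely, one can take a small axis-aligned square of grid cells centered near $v$ and pick $d_v$ grid points on its boundary, one per incident edge, in the correct cyclic order; any spanning tree of the grid inside this square that contains these $d_v$ points as leaves works. For each edge $e = (u,v)$ of $G$, I would construct a grid path $P_e$ from the leaf of $T_u$ dedicated to $e$ to the leaf of $T_v$ dedicated to $e$, routed in a $\delta$-neighbourhood of the polygonal curve representing $e$; by the mesh choice, these paths can be made pairwise internally disjoint and disjoint from every branch tree $T_w$ with $w \notin \{u,v\}$.

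Let $H$ be the $n \times n$ grid. I would now carry out the following embedded operations on $H$. First, delete every vertex and every edge of $H$ that lies outside $\bigcup_v T_v \cup \bigcup_e P_e$; these are embedded vertex and edge deletions. Next, contract each branch tree $T_v$ to a point by a sequence of embedded edge contractions performed inside $D_v$: at each step one picks a small disc around an edge of $T_v$ meeting the remaining graph only along that edge, which is possible because $T_v$ lies in the interior of $D_v$ and the incident paths $P_e$ leave $D_v$ monotonically through the designated leaves. Finally, contract each subdivided path $P_e$ down to a single edge by repeatedly contracting an interior edge of $P_e$ inside a small disc disjoint from the rest of the graph. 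The resulting embedded graph is equivalent (as a plane graph) to $G$, because the cyclic order of edges around each contracted $T_v$ is exactly the cyclic order of the leaves of $T_v$, which was chosen to match the cyclic order around $v$ in $G$.

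The main obstacle is the cyclic-order constraint in the construction of the branch trees: since the grid has rigid $4$-regular local structure, one has to verify that for any prescribed cyclic sequence of $d_v$ directions one can realize a grid subtree inside $D_v$ with leaves attaining exactly that cyclic sequence. This is straightforward combinatorially once the grid is fine enough relative to $\Delta$ (one may for instance place the leaves on the boundary of a $(d_v+1)\times(d_v+1)$ grid square and connect them to the center along non-crossing staircase paths), but it is the only step where the particular geometry of the grid plays a role; everything else is bookkeeping of embedded minor operations.
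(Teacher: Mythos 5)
Your proposal is correct, but it takes a genuinely different route from the paper. You use the direct geometric argument: take a PL drawing of $G$, overlay a sufficiently fine grid, inflate each vertex to a small grid subtree (a ``branch set'') with leaves in the correct cyclic order, route each edge through a disjoint grid path, then delete the rest and contract. The paper, by contrast, formalizes the construction of Robertson and Seymour from \cite{Robertson_excluding}: it first reduces to the case of simple triangulated plane graphs without separating triangles (via Whitney's theorem these are Hamiltonian), then uses a Hamiltonian cycle $v_1, \dots, v_n$ to carve out the $n$ branch sets $X_k$ along the diagonal of the $n \times n$ grid, routing the edges of the ``inside'' subgraph $H_A$ above the diagonal and those of $H_B$ below. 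Notably, the opening of Section~6 of the paper explicitly describes your approach as the ``quite convincing'' intuitive picture but then chooses to formalize the Hamiltonian route instead, likely because it recycles an existing citable construction and dovetails with Lemma~\ref{lem_uniq_grid} (uniqueness of grid embeddings) for the unbounded-face bookkeeping.

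On the merits: your approach is more elementary and transparently embedding-aware (the equivalence-as-plane-graph falls out because you overlay the grid on the actual drawing of $G$, preserving both the rotation system and the unbounded face, and even the nesting of connected components if $G$ is disconnected), whereas the paper's approach inherits the burden of the reductions to simple triangulated Hamiltonian graphs, the equivalence between branch-width/tree-width, and the case analysis over which grid face is chosen as unbounded. The step you flag as the ``main obstacle'' --- realizing any cyclic order of $d_v$ leaves on a grid subtree --- is indeed the only place the grid's rigidity matters, and your proposed fix (place the leaves on the boundary of a $(d_v+1)\times(d_v+1)$ grid square in the desired cyclic order and take a spanning path along that boundary, minus one edge between two consecutive leaves) works cleanly: after contracting that boundary path to a point, the $d_v$ paths attach in precisely the boundary cyclic order. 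One small caveat worth spelling out: you invoke F\'ary's theorem, which only applies to \emph{simple} graphs, whereas the paper allows loops and multi-edges; your parenthetical ``or simply by a PL approximation of the given embedding'' is the right escape, and should be the primary phrasing since the branch-set construction then absorbs loops (a path from $T_v$ back to $T_v$) and parallel edges with no extra work.
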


\begin{proposition}\label{prop_find_grid}
For all $n \in \N$, there exists $k \in \N$ such that if $G$ is a plane graph with branch-width at least $k$, then the $n \times n$ plane grid $G_{n,n}$ is an embedded minor of $G$.
\end{proposition}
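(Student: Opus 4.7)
The plan is to reduce the embedded statement to the classical (abstract) grid/wall theorem, and then use the rigidity of planar embeddings of walls to upgrade ``minor'' to ``embedded minor''. Recall that a \emph{wall} of size $m$ is the standard subcubic subdivision of the $m\times m$ grid, whose vertices all have degree $2$ or $3$. The classical grid theorem of Robertson--Seymour~\cite{Graph_Minors_XI}, in its wall-as-topological-minor form, says that for every $m$ there exists $k=k(m)$ such that any graph of branch-width at least $k$ contains an $m \times m$ wall $W$ as a topological minor, i.e.\ a subdivision of $W$ appears as a subgraph of $G$. This is what I would invoke as the combinatorial input.

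Given a plane graph $G$ of branch-width at least $k(m)$, I would extract a subdivided wall $W' \subseteq G$ of size $m$, and give $W'$ the embedding inherited from $G$. The first step is then to argue that, for $m$ large enough, this inherited embedding of $W'$ coincides (as a plane graph, up to the outer-face choice and orientation-reversal) with the standard plane embedding of the $m\times m$ wall. A wall is $2$-vertex-connected but not $3$-vertex-connected, so Whitney's uniqueness theorem does not apply directly; however, the only freedom in the embedding of a $2$-connected planar graph is the flipping of $2$-separations. For an $m \times m$ wall, each such flip changes at most a bounded-size portion of the embedding, so at the cost of replacing $m$ by some $m' = \Theta(m)$ one can find inside $W'$ a sub-wall of size, say, $n$, whose inherited embedding from $G$ is equivalent, as a plane graph, to the standard embedding of the $n\times n$ wall. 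I would formalise this by a counting/pigeonhole argument on the bricks of $W'$, or by appealing to results of the literature characterising planar embeddings of walls.

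Once a standardly-embedded sub-wall of size $n$ has been located in $G$, producing an embedded $n\times n$ grid minor is essentially combinatorial: the sub-wall is a topological minor, so each of its edges corresponds to an internally disjoint path of $G$, and each ``branch vertex'' of the wall is a vertex of $G$. I would first delete from $G$ all vertices and edges that are not used by the sub-wall, and then perform embedded edge contractions to eliminate the degree-$2$ vertices of the subdivision, turning the sub-wall into a genuine $n \times n$ wall embedded in $G$ in the standard fashion; finally, using embedded contractions of horizontal rungs, the wall collapses onto an $\lfloor n/2 \rfloor \times \lfloor n/2 \rfloor$ grid with the standard plane embedding. All of these are explicit embedded operations (vertex/edge deletions and contractions of edges along small discs), so the result is genuinely an embedded minor. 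Adjusting the constants (take $k = k(Cn)$ for a suitable absolute constant $C$ absorbing the wall-to-grid loss and the ``flipping'' loss) yields the proposition.

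The main obstacle I expect is the second step: controlling the inherited embedding of the wall. In the abstract setting the wall is found as a topological minor with no control on how it sits in $G$, so the embedding of $W'$ can a priori differ wildly from the standard one via $2$-flips and outer-face choices. The key is that these deviations are very local compared to the global combinatorial structure of a large wall, so one can always recover a sub-wall of size $n$ whose embedding is standard. Making this precise---either by a direct geometric argument about Jordan curves separating bricks of the wall, or by invoking an existing embedded wall theorem from the surface-embedded graph-minor literature---is where the care is required; once it is done, the passage from embedded wall to embedded grid is routine.
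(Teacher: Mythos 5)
Your overall route matches the paper's: extract a subdivided wall (Proposition~\ref{prop_find_wall}, via the grid/wall theorem), contract it to a plane wall and then to a plane grid via embedded operations, and adjust constants to handle the unbounded-face ambiguity. But the ``main obstacle'' you anticipate in your second step is not actually an obstacle, and your proposed fix (a pigeonhole argument over bricks, or locating a sub-wall with standard embedding) is both unnecessary and underspecified.

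The key point you are missing is that the $n\times n$ grid -- and, by the same argument, the wall -- has a \emph{unique} embedding on $\Sp^2$. This is Lemma~\ref{lem_uniq_grid} in the paper, and its proof is elementary: for any internal $4$-cycle (or brick-cycle, for the wall), the graph minus that cycle is still connected, so the cycle cannot separate two non-empty parts and must bound a face; the outer cycle bounds a face for the same reason; and since every edge is accounted for exactly twice among these cycles, that is the entire face structure. There is no freedom from $2$-separations here: the only $2$-cuts in a wall are the pairs of neighbours of a degree-$2$ vertex, and ``flipping'' a single vertex does nothing. So once you have a subdivided wall as a subgraph and perform your embedded deletions and contractions, you automatically get \emph{the} embedding of the wall (resp.\ grid) on $\Sp^2$; you do not need to search for a well-embedded sub-wall. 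The only residual issue is which face of that spherical embedding was the unbounded face of $G$, and the paper resolves it crisply: if it is a unit square rather than the high-degree face, one of the four axis-aligned quarters of the $2n\times 2n$ grid is a standard plane $n\times n$ grid, so start with $2n$. Replacing your vague counting step by Lemma~\ref{lem_uniq_grid} and the quarter trick closes the gap and gives the paper's argument.
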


As said before, everything needed here is fairly intuitive and already proven elsewhere, we need only check that the minor defined there can be realized as embedded minor.

\paragraph*{Finding a grid as embedded minor.}
The \emphdef{abstract grid $n \times m$} is the graph with $nm$ vertices $v_{i,j}$ where $(v_{i,j},v_{i',j'})$ is an edge of this grid if and only if $|i' - i| + |j - j'| = 1$. The abstract grid has a unique embedding in $\Sp^2$:

\begin{lemma}\label{lem_uniq_grid}
For all $n \in \N$, there is a unique embedding of $n \times n$ grid in $\Sp^2$.
\end{lemma}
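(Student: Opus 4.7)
The plan is to apply Whitney's classical theorem, which asserts that any two embeddings in $\Sp^2$ of a $2$-vertex-connected planar graph are related by a finite sequence of \emph{Whitney twists} at $2$-separations $\{u,v\}$. Such a twist splits $G$ into two pieces $G_1, G_2$ meeting only in $\{u,v\}$ and, within one piece, reverses the cyclic suborder of that piece's edges at $u$ and at $v$. For $n = 1$ the grid is a single vertex and the statement is trivial; for $n \geq 2$ the grid $G_{n,n}$ is $2$-vertex-connected because any two of its vertices lie on a common cycle formed from a horizontal and a vertical path through the grid.

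The heart of the argument will be the classification of $2$-separations of $G_{n,n}$. I would prove that every such $\{u,v\}$ consists of the two neighbors of some corner $c$ of the grid, and that $G_{n,n} - \{u,v\}$ then has exactly two components, $\{c\}$ and the rest. Writing $u = (a_1, b_1)$ and $v = (a_2, b_2)$, the removal of $\{u,v\}$ damages at most two rows and two columns, so at least $n-2$ of each remain intact. Any vertex not lying in both a damaged row and a damaged column therefore sits on an intact row or column and is connected through the intact ``cross'' formed by these surviving lines. The at-most-four vertices lying in both a damaged row and a damaged column include $u$ and $v$ themselves; checking the remaining at-most-two candidates case by case shows that such a vertex can be cut off from the cross only when it is a degree-$2$ corner whose two neighbors are precisely $u$ and $v$.

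Given the classification, every Whitney twist takes place at a corner separation where the small piece $G_1$ consists of the corner $c$ together with its two edges $cu$ and $cv$, contributing a single $G_1$-edge at each of $u$ and $v$. Reversing a cyclic suborder of length one is the identity, so the twist leaves the rotation system unchanged and the embedding is preserved. Whitney's theorem then concludes that all embeddings of $G_{n,n}$ in $\Sp^2$ are equivalent. The main obstacle will be the classification of $2$-separations; once that is in place, the rest of the argument is a direct application of Whitney's theorem.
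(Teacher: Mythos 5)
Your argument is correct in outline but follows a genuinely different route from the paper's. The paper argues directly about faces: it shows that each unit square cycle $C_{i,j}$ and the outer cycle $C$ must bound a face in any embedding, because otherwise the Jordan curve theorem would put grid vertices on both sides of the cycle, contradicting the connectedness of $G \smallsetminus V(C_{i,j})$ (resp.\ $G \smallsetminus V(C)$); since every edge lies on exactly two of these cycles, they constitute the entire face set and the embedding is determined. You instead invoke Whitney's flip theorem for $2$-connected planar graphs and classify the $2$-separations of $G_{n,n}$, observing that they are all corner separations whose small side is a single degree-$2$ vertex, so every flip is the identity on rotation systems. Both approaches are valid. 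The paper's is more self-contained (only the Jordan curve theorem is used), while yours offloads work to a substantial external theorem but arguably gives a cleaner conceptual explanation: the grid is not $3$-connected, but its $2$-cuts are all degenerate. Your intact-cross argument for the classification is sound; one detail worth stating explicitly is that the two candidate vertices $(a_1,b_2)$ and $(a_2,b_1)$, when distinct from $u$ and $v$, are never adjacent to each other (adjacency would force $a_1=a_2$ or $b_1=b_2$, collapsing them onto $u$ or $v$), so any disconnected component really is a singleton and not a two-vertex piece, which is what you need to conclude that every $2$-separation has the claimed corner form.
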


\begin{proof}
The proof is trivial for $n=0$ and $n=1$. Let $G$ be the abstract grid of size $n \times n$. If $n \geq 2$, for $1 \leq i,j < n$, let $C_{i,j}$ be the cycle $G[\{v_{i,j}, v_{i+1,j}, v_{i+1,j+1} , v_{i,j+1}\}]$ and $C$ be the cycle $G[\{v_{i,j} | \{i,j\} \cap \{1,n\} \neq \varnothing \}]$. 

Let us assume by contradiction that there exists an embedding $\bar{G}$ of $G$ such that $C_{i,j}$ is not the boundary of a face of $\bar{G}$. Since $C_{i,j}$ is a cycle which is not the boundary of a face, there should be at least a vertex on both sides of its embedding (it cannot be only an edge). Hence, $G \smallsetminus C_{i,j}$ is disconnected since there is no path connecting the vertices on different sides of $C_{i,j}$. This is absurd since $G \smallsetminus C_{i,j}$ is connected for $n > 2$.

Similarly, the graph $G \smallsetminus C$ is an $(n-2) \times (n-2)$ grid, which is connected so that any embedding of $G$ has $C$ as the boundary of a face. 

Each edge is present twice in $C \cup_{1 \leq i,j < n} C_{i,j}$ so that all faces of every embedding of $G$ are the same: $G$ has a unique embedding on $\Sp^2$.
\end{proof}

A \emphdef{wall} is similar to a grid except that vertical edges are placed on staggered rows instead of being aligned, it is a $3$-regular graph. Formally, the wall of \emphdef{height} $n$ for $n$ even is defined as the graph on vertices $\{ v_{i,j} |1 \leq i \leq n+1, 1 \leq j \leq 2n+2, (i,j) \not \in \{(1,2n+2),(n+1,1)\} \}$ where vertices $v_{i,j}$ and $v_{i',j'}$ are adjacent if $i = i'$ and $|j - j'| = 1$ or if $j = j'$, $|i - i'| = 1$, and $\min (i,i') + j$ is even. For $i,j$ such that $i+j$ is even and $i < n, j \leq 2n$, the \emphdef{brick-cycle} at $v_{i,j}$ is the cycle $v_{i,j} \rightarrow v_{i,j+1} \rightarrow v_{i,j+2} \rightarrow v_{i+1,j+2} \rightarrow v_{i+1,j+1} \rightarrow v_{i+1,j} \rightarrow v_{i,j}$. The same proof as Lemma~\ref{lem_uniq_grid} shows that brick-cycles are necessarily boundaries of faces of any embedding of a wall in $\Sp^2$, as well as a remaining cycle bounding a face of high degree.

The \emphdef{plane grid} $G_{n,n}$ is the embedding of the abstract $n \times n$ grid where the high-degree face of the embedding is the unbounded face of the embedding. Similarly, the \emphdef{plane wall} $W_{n}$ of height $n$ is the embedding of the wall of height $n$ where the high-degree face of the embedding is the unbounded face of the embedding. See for example the plane grid and the plane wall on Figure~\ref{pic_ex_wall_graph}. A \emphdef{subdivided wall} is a wall on which any number of edge subdivisions has been done.

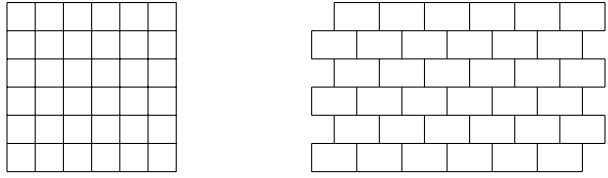
\begin{figure}[ht]
\begin{center}
\begin{tikzpicture}[scale =0.75]
\def\p{6} \def\xst{0.4} \def\yst{0.5} \def\e{0.1}
\modulo{\p+1}{\p+2}{\pp}
\modulo{2*\p+1}{2*\p+3}{\q}
\modulo{2*\p+2}{2*\p+3}{\qq}

\foreach \i in {1,...,\pp}{
\foreach \j in {1,...,\qq}{
\coordinate (c-\i-\j) at (\j*\xst,\i*\yst);
};};

\begin{scope}
\draw (c-1-1) -- (c-1-\q);
\foreach \i in {2,...,\p}{\draw (c-\i-1) -- (c-\i-\qq);};
\draw (c-\pp-2) -- (c-\pp-\qq);
\foreach \i in {1,3,...,\p}{
\foreach \j in {1,3,...,\q}{
\modulo{\i+1}{\q}{\ii}
\draw (c-\i-\j) -- (c-\ii-\j);};
};
\foreach \i in {2,4,...,\p}{
\foreach \j in {2,4,...,\qq}{
\modulo{\i+1}{\q}{\ii}
\draw (c-\i-\j) -- (c-\ii-\j);};
};
\end{scope}

\draw [scale =0.5] (-10,1) grid +(6,6); 
\end{tikzpicture}
\caption{Left: the plane grid $G_{7,7}$. Right: the plane wall $W_6$.}
\label{pic_ex_wall_graph}
\end{center}
\end{figure}

It is quite well-known that branch-width is equivalent to tree-width \cite[(5.1)]{Graph_Minors_X}, in fact for any graph $G$, $\bwid(G) < \text{tw} (G)$. Hence we will be able to use the following proposition which results from \cite{Graph_Minors_V} and its formulated as such in \cite{Chudnovsky_immersion_grid}:

\begin{proposition}{\cite[(1.3)]{Chudnovsky_immersion_grid}}\label{prop_find_wall}
For all $n \geq 2$ there exists $k$ such that every graph with tree-width at least $k$ contains a subdivided wall of height $h$.
\end{proposition}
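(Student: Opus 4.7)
The plan is to deduce Proposition~\ref{prop_find_wall} from the Robertson--Seymour Grid Minor Theorem, which states that for every $r$ there exists $k(r)$ such that every graph of tree-width at least $k(r)$ contains the $r \times r$ grid as a minor. This is the main deep input and I would invoke it as a black box from \cite{Graph_Minors_V}; everything else is bookkeeping.

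First, I would reduce the problem to finding a wall minor. The wall $W_h$ of height $h$ embeds as a subgraph of the plane grid $G_{n,n}$ for $n$ of order roughly $2h$: one simply keeps all horizontal edges of the grid and, on each row, deletes every second vertical edge in a staggered pattern. Thus a wall minor is obtained from a grid minor essentially for free. So choose $n = O(h)$ large enough that $W_h$ is a subgraph (and hence a minor) of $G_{n,n}$, and apply the Grid Minor Theorem to obtain $k$ such that $\text{tw}(G) \geq k$ forces $G_{n,n}$, and therefore $W_h$, to be a minor of $G$.

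Next, I would upgrade the wall minor into a subdivided wall \emph{subgraph} of $G$, exploiting the fact that $W_h$ is subcubic. Fix a minor model $(B_v)_{v \in V(W_h)}$, where each $B_v$ is a connected subgraph of $G$ with prescribed attachment vertices $a_v, b_v, c_v \in V(B_v)$ (at most three of them, one for each edge of $W_h$ incident to $v$), and for each edge $uv \in E(W_h)$ a chosen edge $e_{uv} \in E(G)$ between the corresponding attachment vertices in $B_u$ and $B_v$. Inside each $B_v$, replace $B_v$ by a minimal connected subgraph $S_v \subseteq B_v$ containing its attachment vertices. Such a minimal Steiner graph on at most three terminals is a tree whose internal vertices all have degree at most $3$, so it is either a subdivided path or a subdivided claw. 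Taking the union $\bigcup_v S_v \cup \{ e_{uv} : uv \in E(W_h) \}$ then yields a subgraph of $G$ isomorphic to a subdivision of $W_h$.

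The main obstacle is of course the Grid Minor Theorem, but since we allow ourselves to cite it, the only nontrivial remaining step is the subcubic upgrade. That step is standard and uses nothing beyond the fact that minimal Steiner subtrees on three terminals are topologically $Y$- or path-shaped, so the resulting subgraph carries exactly the combinatorial structure of a subdivision of $W_h$. In particular no attention to the embedding of $G$ is needed here, since the statement concerns abstract tree-width and abstract subdivided walls; the embedded refinements needed for Proposition~\ref{prop_gm_grid} are handled separately in the following subsection.
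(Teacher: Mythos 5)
The paper does not actually prove Proposition~\ref{prop_find_wall}; it is cited verbatim from Chudnovsky et al.\ (and traced back to Robertson--Seymour's Graph Minors V), so there is no in-paper argument to compare against. Your sketch is the standard derivation of the wall theorem from the Grid Minor Theorem, and it is correct.

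The one place that deserves slightly more care than your one-line justification is the upgrade from a wall minor to a subdivided-wall subgraph. You correctly invoke that a minimal connected subgraph of $B_v$ spanning at most three attachment vertices is a tree with at most three leaves, hence has at most one vertex of degree $3$. But you should also check that the construction still works when two (or all three) of the prescribed attachment vertices in $B_v$ coincide: in that case $S_v$ degenerates to a path or a single vertex, and the coinciding attachment vertex then acquires degree $3$ from the external edges $e_{uv}$; it becomes the branch vertex and the edge-paths out of it are disjoint, so the union is still a subdivision of $W_h$. With that edge case noted, the argument is complete. Your parenthetical observation that this proposition is purely about abstract graphs, with the embedded upgrade deferred to Proposition~\ref{prop_gm_grid} and the surrounding discussion, matches the structure of the paper.
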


It remains to explain of to get a plane grid out of such subdivided wall obtained as a subgraph.

\begin{proof}{Proof of Proposition~\ref{prop_find_grid}}
By Proposition~\ref{prop_find_grid} and equivalence between branchwidth and treewidth, there exists $k$ such that if $G$ has branch-width $k$ or more, then $G$ contains a subdivision $W$ of the Wall of height $2n$ as a subgraph. We delete every edge of $G \smallsetminus W$ so that we end up with an embedding of $W$. Then, we perform embedded contraction to have an embedding of the wall instead of one of the subdivided wall. Then, we perform embedded contractions on each edge $(v_{i,2j},v_{i,2j+1})$ for $1 \leq i \leq n+1$ and $1 \leq j \leq n+1$, so that $G$ is now an embedded grid of size $2n \times 2n$.

By Lemma~\ref{lem_uniq_grid}, the embedding of $G$ is the unique embedding of $G$ on $\Sp^2$ where one face $F$ has been chosen as the unbounded face. If $F$ is the high degree face of $G$, the proof is done. Otherwise, if $F$ is incident to $v_{i,j}$ and $v_{i+1, j+1}$, then the $4$ quarters $G[\{ v_{i',j'} | 1 \leq i' \leq i, 1 \leq j' \leq j\} ]$, $G[\{ v_{i',j'} | i < i' \leq 2n, 1 \leq j' \leq j\} ]$, $G[\{ v_{i',j'} | 1 \leq i' \leq i, j < j' \leq 2n\} ]$, and $G[\{ v_{i',j'} | i < i' \leq 2n, j < j' \leq 2n \} ]$   are plane grids as illustrated in Figure~\ref{pic_grid_emb_grid}.

\begin{figure}[ht]
\begin{center}
\begin{tikzpicture}[scale = 0.65]
\draw (0,0) grid (5,5);
\fill [opacity = 0.2] (3,0) rectangle +(1,1);
\fill [opacity = 0.2] (2,1) rectangle +(1,1);
\fill [opacity = 0.2] (4,1) rectangle +(1,1);
\fill [opacity = 0.2] (3,2) rectangle +(1,1);
\fill [opacity = 0.2, red!50!purple] (1,0) rectangle +(1,1);
\fill [opacity = 0.2, red!50!purple] (1,2) rectangle +(1,1);
\fill [opacity = 0.2, red!50!purple] (2,3) rectangle +(1,1);
\fill [opacity = 0.2, red!50!purple] (4,3) rectangle +(1,1);
\fill [opacity = 0.2, green!70!purple] (0,1) rectangle +(1,1);
\fill [opacity = 0.2, green!70!purple] (0,3) rectangle +(1,1);
\fill [opacity = 0.2, green!70!purple] (1,4) rectangle +(1,1);
\fill [opacity = 0.2, green!70!purple] (3,4) rectangle +(1,1);

\fill [blue!80!black] (3.5,1.5) circle (0.05cm);
\node at (2.5,-1) {$\Sp^2$};

\draw [-Stealth, thin] (5.5,2.5) -- +(1,0);

\begin{scope}[xshift = 7cm]
\coordinate (c31) at (0,0);
\coordinate (c41) at (5,0);
\coordinate (c32) at (0,5);
\coordinate (c42) at (5,5);
\coordinate (c20) at ($(0,0)+(60:1)+(30:1)$);
\coordinate (c50) at ($(c41)+(120:1)+(150:1)$);
\coordinate (c51) at ($(c41)+(120:1)$);
\coordinate (c40) at ($(c41)+(150:1)$);
\coordinate (c21) at ($(0,0)+(60:1)$);
\coordinate (c30) at ($(0,0)+(30:1)$);
\coordinate (c11) at ($(c21)+(60:0.75)$);
\coordinate (c10) at ($(c11)+(30:0.75)$);
\coordinate (c23) at ($(c32)+(-60:1)+(-30:1)$);
\coordinate (c22) at ($(c32)+(-60:1)$);
\coordinate (c33) at ($(c32)+(-30:1)$);
\coordinate (c12) at ($(c22)+(-60:0.75)$);
\coordinate (c13) at ($(c12)+(-30:0.75)$);
\coordinate (c34) at ($(c33)+(-30:0.75)$);
\coordinate (c24) at ($(c34)+(-60:0.75)$);
\coordinate (c52) at ($(c42)+(-120:1)$);
\coordinate (c43) at ($(c42)+(-150:1)$);
\coordinate (c53) at ($(c43)+(-120:1)$);
\coordinate (c44) at ($(c43)+(-150:0.75)$);
\coordinate (c54) at ($(c43)+(-120:0.75)+(-150:0.75)$);
\coordinate (c14) at ($(c23)+(-30:0.5)+(-60:0.5)$);
\coordinate (c35) at ($(c34)+(-30:0.5)$);
\coordinate (c45) at ($(c44)+(-150:0.5)$);
\coordinate (c55) at ($(c44)+(-150:0.5)+(-120:0.5)$);
\coordinate (c05) at ($(c14)+(-60:0.375)+(-30:0.375)$);
\coordinate (c04) at ($(c14)+(-60:0.5)$);
\coordinate (c01) at ($(c11)+(60:0.5)$);
\coordinate (c00) at ($(c01)+(30:0.5)$);
\coordinate (c02) at ($(c12)+(-60:0.5)$);
\coordinate (c03) at ($(c12)+(-60:0.5)+(-30:0.5)$);
\coordinate (c15) at ($(c14)+(-30:0.5)$);
\coordinate (c25) at ($(c34)+(-30:0.5)+(-60:0.5)$);

\fill [opacity = 0.2] (0,0) -- ($(0,0)+(60:1)$) -- ($(0,5)+(-60:1)$) -- (0,5) -- (0,0);
\fill [opacity = 0.2] (0,0) -- ($(0,0)+(30:1)$) -- ($(5,0)+(150:1)$) -- (5,0) -- (0,0);
\fill [opacity = 0.2] (5,0) -- ($(5,0)+(120:1)$) -- ($(5,5)+(-120:1)$) -- (5,5) -- (5,0);
\fill [opacity = 0.2] (0,5) -- ($(0,5)+(-30:1)$) -- ($(5,5)+(-150:1)$) -- (5,5) -- (0,5);
\fill [opacity = 0.2, red!50!purple] (c10) -- (c20) -- (c21) -- (c11) -- (c10);
\fill [opacity = 0.2, red!50!purple] (c12) -- (c22) -- (c23) -- (c13) -- (c12);
\fill [opacity = 0.2, red!50!purple] (c33) -- (c34) -- (c24) -- (c23) -- (c33);
\fill [opacity = 0.2, red!50!purple] (c43) -- (c53) -- (c54) -- (c44) -- (c43);
\fill [opacity = 0.2, green!70!purple] (c01) -- (c11) -- (c12) -- (c02) -- (c01);
\fill [opacity = 0.2, green!70!purple] (c34) -- (c44) -- (c45) -- (c35) -- (c34);
\fill [opacity = 0.2, green!70!purple] (c03) -- (c13) -- (c14) -- (c04) -- (c03);
\fill [opacity = 0.2, green!70!purple] (c14) -- (c24) -- (c25) -- (c15) -- (c14);

\foreach \i in {0,...,4}{
\foreach \j in {0,...,4}{
\modulo{\i+1}{6}{\ii}
\modulo{\j+1}{6}{\jj}
\draw (c\i\j) -- (c\ii\j);
\draw (c\i\j) -- (c\i\jj);
};};
\draw (c50) -- (c51) -- (c52)  -- (c53)  -- (c54) -- (c55) -- (c45) -- (c35) -- (c25) -- (c15) -- (c05);

\draw [-Stealth, thin] (6,2.5) -- +(1,0);

\node at (2.5,-1) {$\R^2$};
\fill [blue] (5.5, 3.5) circle (0.075);
\end{scope}

\begin{scope}[xshift = 13.5cm]
\draw (1,1) grid (4,4);

\fill [opacity = 0.2, green!70!purple] (3,2) rectangle +(1,1);
\fill [opacity = 0.2, green!70!purple] (2,1) rectangle +(1,1);
\fill [opacity = 0.2, red!50!purple] (2,3) rectangle +(1,1);
\fill [opacity = 0.2, red!50!purple] (1,2) rectangle +(1,1);

\fill [blue!80!black] (4.5,2.5) circle (0.075cm);
\node at (2.5,-1) {$\R^2$};
\end{scope}
\end{tikzpicture}
\caption{The $5 \times 5$ grid embedded in $\Sp^2$, and its embedding in $\R^2$ where the chosen unbounded face is the face of $\Sp^2$ containing the blue dot.}
\label{pic_grid_emb_grid}
\end{center}
\end{figure}
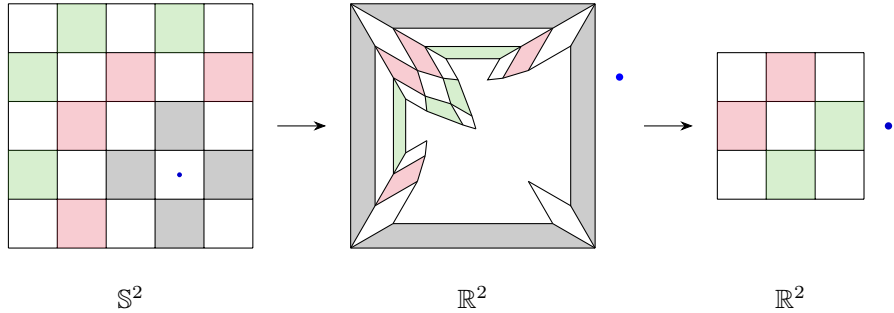

Then, the quarter containing the more vertices contains the plane grid $G_{n,n}$, removing everything outside of is the final step to obtain $G_{n,n}$ as embedded minor of $G$.
\end{proof}

\paragraph*{Embedded minor in a plane grid.}
We detail here a construction of \cite{Robertson_excluding} and remark that it is well adapted for embedded minors we claim no novelty for neither the methods nor the constructions.

\begin{lemma}{\cite[(1.3)]{Robertson_excluding}}\label{lem_hamil_embed}
Let $H$ be a simple planar graph embedded in $\Sp^2$ with $n$ vertices, then $H$ is an embedded minor of the unique embedding of $G_{n,n}$ in the sphere.
\end{lemma}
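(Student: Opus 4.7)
The plan is to realise $H$ as an embedded minor of $G_{n,n}$ by the standard Robertson--Seymour construction of ``branch sets and routed edge-paths'', and then observe that every minor operation used is in fact one of the embedded operations defined in Section~\ref{sec_minor_def}. I would begin with a fixed planar embedding of $H$ on $\Sp^2$, pick an outer face, and invoke a straight-line planar drawing theorem (de~Fraysseix--Pach--Pollack, or Schnyder) to place the vertices of $H$ on distinct integer points of the $n\times n$ grid in such a way that the induced combinatorial embedding, i.e., the cyclic order of edges around each vertex and the face structure, agrees with the prescribed one.

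Next I would convert this drawing into a subgraph of $G_{n,n}$. For every $v\in V(H)$, choose a small grid-aligned closed disc $D_v$ centered at the image of $v$, so that the $D_v$'s are pairwise disjoint, and inside $D_v$ take a connected grid subgraph $B_v$ (the branch set) that reaches the boundary $\partial D_v$ at exactly $\deg(v)$ points, one per incident edge, in the cyclic order imposed by the embedding. For each edge $e=uv$ of $H$, replace the straight segment between $u$ and $v$ by an axis-parallel grid path $P_e$ in $G_{n,n}\setminus \bigcup_w \operatorname{int}(D_w)$ joining the appropriate points of $\partial D_u$ and $\partial D_v$; these paths can be routed to be pairwise vertex-disjoint (except possibly at their endpoints on the $B_v$'s) because the drawings of the edges of $H$ were pairwise disjoint and the discs $D_v$ are small. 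Deleting every vertex and every edge of $G_{n,n}$ that lies neither in some $B_v$ nor in some $P_e$ is a sequence of embedded vertex and edge deletions.

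Finally, for each $v\in V(H)$ I contract $B_v$ to a single point inside the disc $D_v$. Each such contraction is a sequence of embedded edge contractions in the sense of Section~\ref{sec_minor_def}, because $D_v$ meets the remaining embedded graph only in $B_v$ together with the initial segments of the $P_e$'s incident to $v$, so at every step there is a disc containing the edge to be contracted and meeting the rest of the graph only at that edge. After all these contractions are performed, each path $P_e$ is collapsed to a single edge between the points obtained from $B_u$ and $B_v$, and the resulting plane graph is equivalent to $H$, since the cyclic orders around every vertex and the face structure have been preserved throughout.

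The main obstacle is the very first step: fitting a drawing of $H$ \emph{and} enough slack for the disjoint orthogonal routing of edges inside exactly the $n\times n$ grid. This is precisely the content of Robertson--Seymour's argument in~\cite{Robertson_excluding}, and it relies on a sharp straight-line planar embedding theorem; once one has such a drawing the remaining verification is routine, as every operation used above is directly visible as an embedded deletion or contraction of Section~\ref{sec_minor_def}.
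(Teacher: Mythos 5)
Your outline splits into two parts: (a) realise $H$ as a branch-set/path system in the $n\times n$ grid, and (b) check that deleting the unused part of the grid and contracting each branch set are embedded operations. Part (b) is fine and is essentially what the paper does as well. Part (a), however, is where the real content lies, and there is a genuine gap: you explicitly punt on it (``the main obstacle is the very first step...''), and the way you propose to close it is not what Robertson--Seymour actually do and will not give the stated $n\times n$ bound.

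Concretely, the straight-line grid embedding theorems you invoke (de Fraysseix--Pach--Pollack, Schnyder) place the \emph{vertices} of $H$ on an $O(n)\times O(n)$ integer grid with edges drawn as arbitrary straight segments, not as paths in the grid graph. Converting such a drawing into a subgraph of $G_{n,n}$ requires routing every edge as a grid path, vertex-disjoint from all other routed edges and from all other branch sets $B_v$. On the \emph{same} $n\times n$ grid this is impossible in general: two straight edges may pass within sub-unit distance of one another, so no two disjoint grid paths of $G_{n,n}$ can separate them; and a vertex of degree $d$ needs a branch set with $d$ grid ports, which eats up rows/columns not accounted for in the $(n-2)\times(n-2)$ vertex placement. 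Orthogonal-routing results that do produce grid drawings need the grid to be refined, i.e.\ of side length a larger multiple of $n$, which at best proves the weaker Proposition~\ref{prop_gm_grid} but not the $n\times n$ statement of this lemma. Your appeal to Robertson--Seymour as ``relying on a sharp straight-line planar embedding theorem'' is a misattribution: their argument (and the paper's) does not use straight-line drawings at all. Instead, one first triangulates $H$ and picks a Hamiltonian cycle $C = v_1,\dots,v_n$ (Whitney/Tutte gives one after eliminating separating triangles, which is handled in the proof of Proposition~\ref{prop_gm_grid}); then the $k$-th branch set $X_k$ is defined \emph{combinatorially} as the union of a horizontal and a vertical segment of $G_{n,n}$ through $v_{k,k}$, with extents determined by the indices $i(k),i'(k),j(k),j'(k)$ of the extreme neighbours of $v_k$ inside and outside $C$. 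Disjointness of the $X_k$ and adjacency realisation then follow from planarity of the inside/outside edge sets, with no geometric routing needed, and the size is exactly $n\times n$ because the branch sets live on the diagonal. If you want to pursue your route you should drop the $n\times n$ claim and prove only Proposition~\ref{prop_gm_grid}, but then you would still need to supply an actual routing argument (e.g.\ via an orthogonal drawing theorem with an explicit grid size) rather than assert disjointness from ``the discs are small.''
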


\begin{proof}
We detail the construction of \cite{Robertson_excluding}. One can first add edges to $H$ so that $H$ is triangulated, \ie, each face has degree $3$. Such edges can be deleted in the end after all the contraction operations. Let us first chose an Hamiltonian cycle $C$ of $H$, \ie, a cycle $C$ passing exactly through each vertex of $H$. Let us number these vertices $\{v_1, \ldots, v_n\}$ by picking a first one and following the order along $C$. Since $V(C) = V(H)$, and $C$ is embedded, every edge of $H$ is either on the bounded face $A$ of $C$, the unbounded face $B$ of $C$, or is an edge of $C$. Let us define $H_A$ and $H_B$, the subgraphs induced by $A \cup C$ and $B \cup C$ respectively, \ie, $H_A = H \cap (C \cup A)$ and $H_B = H \cap (C \cup B)$.

Set $G = G_{n,n}$, the $n \times n$ plane grid. Let us define subgraphs of $G$ thanks to the quantities:
\begin{center}
$
\left\lbrace
\begin{array}{rl}
\text{For } 1 < k < n, & i(k) = \min \{1 \leq i \leq n | (v_k, v_i) \in E(H_A) \}\\
\text{For } 1 < k < n, & j(k) = \min \{1 \leq j \leq n | (v_k, v_j) \in E(H_B) \}\\
\text{For } 1 < k < n, & j'(k) = \max \{1 \leq j \leq n | (v_k, v_j) \in E(H_A) \}\\
\text{For } 1 < k < n, & i'(k) = \max \{1 \leq i \leq n | (v_k, v_j) \in E(H_B) \}\\
\end{array}
\right.
$
\end{center}

Then, for $1<k<n$, $X_k = G[ \{v_{i,k} | i(k) < i < i'(k) \} \cup \{v_{k,j} | j(k) < j < j'(k) \} ]$, $X_1 = G[ \{v_{i,1} | 1 \leq i < n \} \cup \{v_{1,j} | 1 \leq i < n \} ]$, and $X_n = G[ \{v_{i,n} | 1 \leq i \leq n \} \cup \{v_{n,j} | 1 \leq i \leq n \} ]$. See each black component centred at $v_{k,k}$ in Figure~\ref{pic_ex_constr_emb_grid} for illustration.

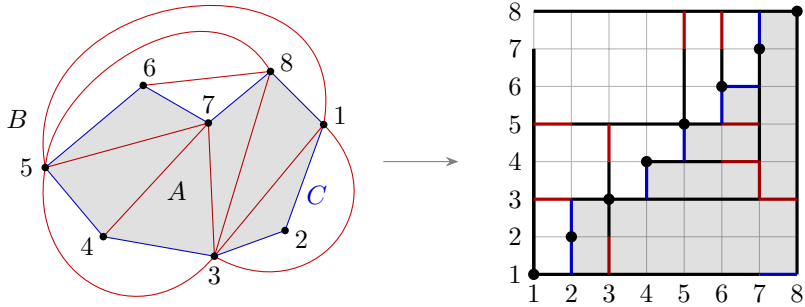
\begin{figure}[ht]
\begin{center}
\begin{tikzpicture}
\def\e{0.05}
\coordinate (c1) at (4,3.2);
\coordinate (c2) at ($(c1)+(-45:1)$);
\coordinate (c3) at ($(c2)+(-110:1.5)$);
\coordinate (c4) at ($(c3)+(-160:1)$);
\coordinate (c5) at ($(c4)+(170:1.5)$);
\coordinate (c6) at ($(c5)+(130:1.2)$);
\coordinate (c7) at ($(c6)+(40:1.7)$);
\coordinate (c8) at ($(c7)+(-30:1)$);

\foreach \i in {1,...,7}{
\modulo{\i + 1}{9}{\ii} 
\draw [blue!80!black] (c\i) -- (c\ii);
};
\draw [blue!80!black] (c1) -- (c8);

\begin{scope}[red!80!black]
\draw (c1) -- (c4);
\draw (c2) -- (c4);
\draw (c4) -- (c8);
\draw (c5) -- (c8);
\draw (c6) -- (c8);

\draw (c1) -- (c7);
\draw (c1) .. controls +(120:1.5) and \control{(c6)}{(85:1.25)}; 
\draw (c2) .. controls +(80:2.5) and \control{(c6)}{(100:2.5)}; 
\draw (c2) .. controls +(-45:1.5) and \control{(c4)}{(-30:1.5)}; 
\draw (c4) .. controls +(-130:1.5) and \control{(c6)}{(-100:1.5)}; 
\end{scope}

\fill [opacity = 0.125] (c1) -- (c2) -- (c3) -- (c4) -- (c5) -- (c6) -- (c7) -- (c8) -- cycle;

\node at ($(c1)+(30:0.25)$) {$8$};
\node at ($(c2)+(30:0.25)$) {$1$};
\node at ($(c3)+(-30:0.25)$) {$2$};
\node at ($(c4)+(-90:0.25)$) {$3$};
\node at ($(c5)+(-150:0.25)$) {$4$};
\node at ($(c6)+(180:0.25)$) {$5$};
\node at ($(c7)+(70:0.25)$) {$6$};
\node at ($(c8)+(90:0.25)$) {$7$};

\node [blue!80!black] at ($(c3)+(70:0.5)+(0:0.25)$) {$C$};
\node at ($(c4)+(120:1)$) {$A$};
\node at ($(c6)+(120:0.75)$) {$B$};

\foreach \i in {1,...,8}{\fill (c\i) circle (\e cm);};

\draw [-Stealth, black!50] (5.5,2) -- +(1,0);

\begin{scope}[xshift = 7cm, scale = 0.5]
\def\e{0.15}
\draw [opacity=0.3] (1,1) grid (8,8);

\begin{scope}[very thick]
\draw (1,1) -- (1,7);
\draw (1,1) -- (7,1);
\draw (2,3) -- (7,3);
\draw (4,4) -- (6,4);
\draw (5,5) -- (5,7);
\draw (1,1) -- (1,7);
\draw (3,2) -- (3,4);
\draw (2,5) -- (6,5);
\draw (6,6) -- (6,7);
\draw (7,4) -- (7,7);
\draw (1,8) -- (8,8);
\draw (8,1) -- (8,8);
\end{scope}

\begin{scope}[very thick, red!70!black]
\draw (1,3) -- +(1,0);
\draw (1,5) -- +(1,0);
\draw (7,3) -- +(1,0);
\draw (6,4) -- +(1,0);
\draw (6,5) -- +(1,0);
\draw (3,1) -- +(0,1);
\draw (3,4) -- +(0,1);
\draw (7,3) -- +(0,1);
\draw (5,7) -- +(0,1);
\draw (6,7) -- +(0,1);
\end{scope}

\begin{scope}[very thick, blue!80!black]
\draw (7,1) -- +(1,0);
\draw (6,6) -- +(1,0);
\draw (2,1) -- +(0,1);
\draw (2,2) -- +(0,1);
\draw (4,3) -- +(0,1);
\draw (5,4) -- +(0,1);
\draw (6,5) -- +(0,1);
\draw (7,7) -- +(0,1);
\end{scope}

\fill [opacity = 0.125] (2,1) -- (2,3) -- (4,3) -- (4,4) -- (5,4) -- (5,5) -- (6,5) -- (6,6) -- (7,6) -- (7,8) -- (8,8) -- (8,1) -- cycle;

\foreach \i in {1,...,8}{\fill (\i,\i) circle (\e cm);};
\foreach \i in {1,...,8}{\node at (\i,0.5) {$\i$};};
\foreach \i in {1,...,8}{\node at (0.5,\i) {$\i$};};
\end{scope}
\end{tikzpicture}
\caption{Construction of the embedding of a plane Hamiltonian graph.}
\label{pic_ex_constr_emb_grid}
\end{center}
\end{figure}

It follows directly from the definitions that each $X_i$ contains $v_{i,i}$. Furthermore, $X_k$ crosses $X_\ell$ with $k < \ell$ if and only if $j'(k) > i(\ell)$ or $i'(k) > j(\ell)$. However, this would imply the existence of edges crossing themselves in $A$ and $B$ respectively, which is absurd since the graph is embedded. Hence the $X_i$'s are disjoint (they are disjoint from $X_1$ and $X_n$ by definition). 

If an edge $(k,\ell)$ with $k < \ell$ exists in $A$, then $k \leq j'(\ell)$ and $i(\ell) \leq k$. Since $H$ is triangulated, one of these inequalities must be strict. Indeed, the faces incident to $(k, \ell)$ are bounded by another edge with either $k$ or $\ell$ as an endpoint and one of these edges will imply that $i(\ell) < k$ or $k < j'(\ell)$. It follows that one of $(v_{k,\ell-1}, v_{k,\ell})$ or $(v_{k,\ell}, v_{k+1,\ell})$ is incident to both $X_k$ and $X_\ell$. Furthermore, $H$ is simple so that these edges cannot appear between $k$ and $k+1$. Hence such edges are to the right of the boundary of the diagonal faces $v_{i,i}, v_{i+1,i}, v_{i+1,i+1}, v_{i,i+1},$ of $G_{n,n}$. Similarly, if an edge $(k,\ell)$ with $k < \ell$ exists in $B$, then one of $(v_{\ell,k+1}, v_{k,\ell})$ or $(v_{\ell,k-1}, v_{\ell, k})$ is incident to both $X_k$ and $X_\ell$ on the left of the diagonal faces of $G_{n,n}$ (see Figure~\ref{pic_ex_constr_emb_grid} for example).

Similar arguments prove that we can always connect $X_i$ and $X_{i+1}$  for $i < n$ on the boundary of diagonal faces. By joining $X_n$ and $X_1$ with $(v_{1,n-1}, v_{1,n})$, we enclose all edges stemming from $A$ by the cycle stemming from $H$ and the sets $X_i$. In addition, edges of $B$ lie outside of this cycle. Hence, by deleting all unused edges of $G_{n,n}$ and contracting all sets $X_i$ to vertices, we obtain an embedding of $H$ with all edges on the proper side of $C$, and in the proper order by construction (they appear in the proper order along each $X_i$). Hence, $H$ is an embedded minor of $G_{n,n}$.
\end{proof}

\begin{proof}{Proof of Proposition~\ref{prop_gm_grid}}
The construction of Lemma~\ref{lem_hamil_embed} can be translated to the context of embedding in $\R^2$ but then, the unbounded face of the embedded minor is the one incident to the edge $(1,n)$. Hence, if $H$ is a plane Hamiltonian graph with an edge of a Hamiltonian cycle incident to the unbounded face, we know that it is an embedded minor of a plane grid. In fact, this is easy to reduce any plane Hamiltonian graph to this case by replacing a vertex incident to the unbounded face by an edge taking part in the Hamiltonian cycle. And preserve the property when the graph is triangulated.

Hence, there remains to show that a plane graph $G$ can be obtained as an embedded minor of a Hamiltonian plane graph. Again, this is explained in \cite{Robertson_excluding}. Whitney's theorem \cite{Whitney_hamiltonian} ensures that if a simple triangulated graph has no separating triangle, then it is Hamiltonian. There is a finite number of them; subdividing an edge of such triangle and connecting the newly created vertex to the two non-neighboring vertices of incident faces (see Figure~\ref{pic_proof_trig_removal}) reduces the number of separating triangles by at least one. A finite number of such operation allows to remove all separating triangles and thus obtain a triangulated Hamiltonian graph. These operation can be reversed by edge deletion and contraction so that $G$ is an embedded minor of this graph.

\begin{figure}[ht]
\begin{center}
\begin{tikzpicture}
\def\e{0.05}
\coordinate (c0) at (-30:1);
\coordinate (c1) at (90:1);
\coordinate (c2) at (-150:1);
\coordinate (c3) at (0,0);
\coordinate (c4) at ($(c0)+(90:1)$);

\draw [red!50!purple] (c1) -- (c2);
\draw (c1) -- (c3);
\draw (c1) -- (c4);
\draw [red!50!purple] (c1) -- (c0);
\draw (c3) -- (c0);
\draw (c0) -- (c4);
\draw [red!50!purple] (c0) -- (c2);
\node [red!50!purple] at (30:0.65) {$e$};

\foreach \i in {0,...,4}{\fill (c\i) circle (\e cm);}

\draw [black!50,-Stealth] (1.5,0) -- +(0.75,0);

\begin{scope}[xshift = 3.5cm]
\coordinate (c0) at (-30:1);
\coordinate (c1) at (90:1);
\coordinate (c2) at (-150:1);
\coordinate (c3) at (0,0);
\coordinate (c4) at ($(c0)+(90:1)$);
\coordinate (c5) at ($(c1)!0.5!(c0)$);

\draw (c1) -- (c2);
\draw (c1) -- (c3);
\draw (c1) -- (c4);
\draw (c1) -- (c0);
\draw (c3) -- (c0);
\draw (c0) -- (c4);
\draw (c0) -- (c2);
\draw (c3) -- (c5) --(c4);

\foreach \i in {0,...,4}{\fill (c\i) circle (\e cm);}
\fill [green!50!black] (c5) circle (\e cm);
\end{scope}
\end{tikzpicture}
\caption{Removing a separating triangle.}
\label{pic_proof_trig_removal}
\end{center}
\end{figure}
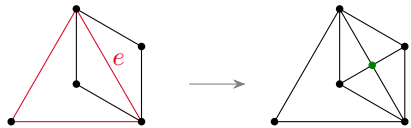

Finally, if $G$ is non-simple, subdivide it enough so that it becomes simple before applying the constructions above. This concludes this proof.
\end{proof}

\subparagraph*{Acknowledgements.}
This project was supported by the ANR project AlgoKnot: ANR-20-CE48-0007. Corentin Lunel was partially supported by the GA\v{C}R grant 25-16847S. The authors want to thank Pierre Dehornoy and Arnaud de Mesmay for introducing them to this problem.

\bibliographystyle{plainurl}
\bibliography{biblio}

\end{document}